\newtheorem{theorem}{Theorem}[section]
\newtheorem{lemma}{Lemma}[section]
\newtheorem{remark}{Remark}[section]
\newtheorem{definition}{Definition}[section]
\newtheorem{proposition}{Proposition}[section]
\newtheorem{example}{Example}[section]
\newtheorem{problem}{Problem}[section]
\newtheorem{fact}{Fact}
\newtheorem{claim}{Claim}
\newcommand {\qbar} {{\overline{q}}}
\newcommand {\qI} {{q:=|0\rangle}}
\newcommand {\qU} {{\overline{q}:=U[\overline{q}]}}
\newcommand {\cD } {{\mathcal{D}}}
\newcommand {\cP } {{\mathcal{P}}}
\newcommand {\cH } {{\mathcal{H}}}
\newcommand {\cM } {{\mathcal{M}}}
\newcommand {\cV } {{\mathcal{V}}}
\newcommand {\cU } {{\mathcal{U}}}
\newcommand {\cX } {{\mathcal{X}}}
\newcommand {\cE } {{\mathcal{E}}}
\newcommand {\cF } {{\mathcal{F}}}
\newcommand {\cI } {{\mathcal{I}}}
\newcommand {\bi } {{\boldsymbol{i}}}
\newcommand {\id } {{I}}
\newcommand {\bD} {{\mathbf{D}}}
\newcommand{\Mexist}{{\downarrow}}
\newcommand {\Ldoms } {{\mathsf{dom}}}
\newcommand {\Ldom }[1] {{\mathsf{dom}\!\left(#1\right)}}
\newcommand {\Lfree }[1] {{\mathsf{free}{\left(#1\right)}}}
\newcommand {\dom }[1] {{\mathsf{dom}\!\left(#1\right)}}
\newcommand {\free }[1] {{\mathsf{free}\left(#1\right)}}
\newcommand {\ptype} {{\mathrm{type}}}
\newcommand {\res} {{\mathrm{Res}}}
\newcommand {\LTypeEs } {{\mathsf{dom}}}
\newcommand {\LTypeE }[1] {{\mathsf{dom}\!\left(#1\right)}}
\newcommand {\LTypeFs } {{\mathsf{free}}}
\newcommand {\LTypeF }[1] {{\mathsf{free}{\left(#1\right)}}}
\newcommand {\types } {{\mathsf{dom}}}
\newcommand {\type }[1] {{\mathsf{dom}\!\left(#1\right)}}
\newcommand {\typef }[1] {{\mathsf{free}\left(#1\right)}}
\newcommand {\emptyprog} {{\mathbf{E}}}
\newcommand {\rt }[2] {{\left.{#1}\right|_{#2}}}
\newcommand {\tr } {{\mathrm{tr}}}
\newcommand {\vars } {\mathbf{V}}
\newcommand {\V }[1] {{\mathsf{free}{\left(#1\right)}}}
\newcommand {\Vs } {{\mathsf{free}}}
\newcommand {\var } {\mathsf{var}}
\newcommand {\unia } {{\mathbf{U}}}
\newcommand {\prog } {{\mathbf{C}}}
\newcommand {\sepimp} {\mathrel{-\mkern-6mu*}}
\newcommand {\sem}[1] {\llbracket#1\rrbracket}
\newcommand {\AP} {{\mathcal{AP}}}
\newcommand {\spa } {{\rm span}}
\newcommand {\supp } {{\mathrm{supp}}}
\newcommand {\ol}[1] {{\overline{#1}}}
\newcommand{\sd}{\diamond}%
\newcommand {\sdimp} {\mathrel{-\mkern-2.5mu\diamond}}
\newcommand {\qmimp} {\mathrel{-\mkern-6mu?}}
\def\>{\ensuremath{\rangle}}
\def\<{\ensuremath{\langle}}
\newcommand {\swap} {\mathrm{SWAP}}
\newcommand {\perm} {\mathbf{Perm}}
\newcommand{\qvee}{\sqcup}
\newcommand{\qwedge}{\sqcap}
\newcommand{\qbigvee}{\bigsqcup}
\newcommand{\proj}{{\mathrm{proj}}}
\DeclareRobustCommand{\gimp}{% 
	\mathbin{\ooalign{$\rightarrow$\cr\hss\raisebox{1ex}{\scalebox{.5}{G}}\hss}}}
\DeclareRobustCommand{\gimplr}{% 
	\mathbin{\ooalign{$\leftrightarrow$\cr\hss\raisebox{1ex}{\scalebox{.5}{G}}\hss}}}
\let\NAT@parse\undefined
\begin{document}

	\title{A Quantum Interpretation of Bunched Logic for Quantum Separation Logic}
	%\title{Quantum Separation Logic: \textit{Scalable Verification of Quantum Algorithms and Security Protocols}}         %% [Short Title] is optional;
	
	\author{%
		\IEEEauthorblockN{Li Zhou}
		\IEEEauthorblockA{Max Planck Institute for Security and Privacy}
		\and
		\IEEEauthorblockN{Gilles Barthe}
		\IEEEauthorblockA{\qquad \quad Max Planck Institute for Security and Privacy;\qquad\quad\\
			IMDEA Software Institute}
		\and
		\IEEEauthorblockN{\quad\ \  Justin Hsu }
		\IEEEauthorblockA{\qquad\qquad\quad\ \  University of Wisconsin--Madison \qquad\qquad}
		\and
		\IEEEauthorblockN{\ \, Mingsheng Ying}
		\IEEEauthorblockA{\ \, University of Technology Sydney;\\
			  \qquad\ \, Institute of Software, Chinese Academy of Sciences;\qquad \\
			\ \, Tsinghua University}
		\and
		\IEEEauthorblockN{\ Nengkun Yu}
		\IEEEauthorblockA{\qquad\qquad\quad\ \ \ University of Technology Sydney\qquad\qquad\quad\ \  }
	}

	\maketitle	
	
	\thispagestyle{plain}
	\pagestyle{plain}
	
	\begin{abstract}
    We propose a model of the substructural logic of Bunched Implications (BI) that is suitable for reasoning about quantum states. In our model, the separating conjunction of BI describes separable quantum states. We develop a program logic where pre- and post-conditions are BI formulas describing quantum states---the program logic can be seen as a counterpart of separation logic for imperative quantum programs. We exercise the logic for proving the security of quantum one-time pad and secret sharing, and we show how the program logic can be used to discover a flaw in Google Cirq’s tutorial on the Variational Quantum Algorithm (VQA).
	\end{abstract}

	\section{Introduction}
  The logic of Bunched Implications (BI) of O'Hearn and Pym~\cite{OP99,Pym02,POY04}
        is a substructural logic that features resource-aware
        connectives. One such connective is $*$, known as separating
        conjunction: informally, an assertion $\phi * \psi$ holds with
        respect to a resource $R$ if the resource $R$ can be split
        into resources $R'$ and $R''$ such that $\phi$ holds with
        respect to $R'$ and $\psi$ holds with respect to $R''$. This
        interpretation is particularly well suited for reasoning about
        programs in settings where computations can have interfering
        effects. In particular, BI has found success as an assertion
        language for Separation Logic~\cite{Rey02,ORY01,IO01}, a
        program logic for reasoning about programs with mutable
        state, and Concurrent Separation Logic~\cite{OHe07,Bro07}, a program
        logic for reasoning about shared-memory concurrent processes.

        Recent works seek to extend the separation logic framework beyond
        memory-manipulating programs by considering other notions of resources
        and other models of computation. Broadly speaking, separation logics are
        a good fit whenever programs manipulate resources in a \emph{local}
        fashion: that is, there is a natural notion of two resources being
        \emph{separate}, and a program can operate on the first resource without
        affecting the second. This idea underlies recent separation logics for
        probabilistic programs, where separation is probabilistic
        independence~\cite{BHL19}.

        Quantum computation is another domain where the ideas of separation
        logic seem relevant. Recent work~\cite{CoeckeFS16,Doc19} suggests that
        reasoning about resources (in particular, entanglement -- a resource
        unique in the quantum world) can bring similar benefits to quantum
        computing and communications. Motivated by this broad perspective, we
        propose a quantum model of BI and develop a novel separation logic for
        quantum programs. Our development is guided by concrete examples of
        quantum algorithms and security protocols.

	%\textbf{Motivating examples} \gb{We should rephrase the text before
	%technical contributions, making it higher level. Or we  could drop
%	this text. In general the text about all the languages is not needed
%	for LICS. Otoh the related work should comment on Docherty and
%	on Coecke perhaps. And we should  say that these works fall short
%	of providing a program logic}

	% As an active research field, quantum programming has gained more and more attention in recent years after two decades of development. Following early works on quantum programming language design \cite{SZ00,Sab03,Ome03,Sel04b,AG05}, several mature quantum programming platforms have been announced, including Scaffold \cite{Scaffold}, Quipper \cite{Quipper}, QWIRE \cite{Qwire}, Microsoft's LIQUi$|\>$ \cite{Liqui} and Q\# \cite{Qsharp}, Google's Cirq \cite{Cirq}, IBM’s Qiskit \cite{Qiskit}. At the same time, various analysis, verification, testing and debugging methodologies and techniques for quantum programs have been developed. In particular, 
	% quantum Hoare logic and related quantum predicate transformer semantics have also been intensively studied and successfully applied to the verification of relatively small quantum programs and quantum cryptographic protocols \cite{Ak05, DP06, BS04, BS06, BJ04, CMS06, Kak09, Ran16, FDJ07, YDF10, Ying11, Ying16, BHY19, Unr19a, Unr19b, ZYY19}. 
	
	%In particular, proposed by \cite{}, quantum effect (observable) is frequently used as predicates \cite{} and more recently \cite{ZYY19, } found using projection as assertion benefits testing and debugging.
	
	{\vskip 3pt}
	
	\textbf{Motivating Local Reasoning for Quantum Programs}:
        Quantum Machine Learning \cite{BWP17, BVM20} and VQAs
(Variational Quantum Algorithms) \cite{PMS14,MRB16} are new classes of quantum algorithms that emerged in recent years as a leading application of quantum computing. These algorithms solve problems by training parameterized quantum circuits. The trained circuits are usually very large
        in terms of both their size and
        the required quantum resources, i.e., the number of involved quantum bits   (qubits). 
        %\ny{Classical algorithms do not use qubits.}. 
        This makes them particularly challenging to verify
        with existing techniques such as quantum Hoare logic~\cite{Ying11,ZYY19} and verification based on operational semantics~\cite{HRS20}, since the dimension of  the matrices used to represent assertions   increases
        exponentially w.r.t.\, the number of qubits.
        %\gb{what is the notion of complexity?} %\lz{the size and representation of (some) assertions, mostly the calculation needed for predicate transformer.} 
        %\jh{cite
         % quantum Hoare logic? (no need for a list, 1-2 original/most
          %relevant)}
        Fortunately, these algorithms can benefit from local reasoning,  since each of their operations is performed locally on a  small number of qubits. Consider for instance the quantum circuit shown in
        Figure \ref{fig VQA2}, that implements a VQA circuit with
        $2 \times 2$ grid qubits. Instead of reasoning about the
circuit as a whole, we would like to reason about sub-circuits
        ${\rm ProcC}(1), {\rm ProcC}(2), {\rm Proc(R)}(1), {\rm
          ProcR}(2)$ separately, and then combine the results 
        to establish the correctness of the whole program. This is
        precisely the kind of reasoning enabled by Quantum Separation Logic (QSL for short).

  % Two significant application areas of quantum computing, namely Quantum Machine
  % Learning \cite{BWP17, BVM20} and VQAs (Variational Quantum Algorithms)
  % \cite{PMS14,MRB16} applied to Quantum Chemistry \cite{CRO19}, have been
  % emerging rapidly in the last few years. Quantum algorithms in these areas are
  % much larger than the previous ones in terms of both its size and the required
  % quantum resources, i.e. the number of involved qubits. It is hard to verify
  % these large quantum algorithms with existing techniques such as quantum Hoare
  % logic~\cite{??}, due to two of the most notable phenomena in the quantum
  % world---superposition and entanglement. For example, as the number of quantum
  % bits (qubits) increases, the complexity of the assertions needed to describe
  % the quantum state can increase exponentially. 
  % %
  % \jh{cite quantum Hoare logic? (no need for a list, 1-2 original/most relevant)}
  % %
	
  %       {\vskip 3pt}
	
  % \textbf{Local Reasoning: Developing Separation Logic for Quantum Programs}. A
  % typical feature of VQAs and quantum machine learning algorithms is that
  % although the algorithm may be able to manipulate a large number of qubits,
  % each operation in it is locally performed on a small number of qubits.
  % Therefore, one can hope that the basic ideas of local reasoning from
  % separation logic \cite{IO01,Bro07,OHe07,Rey02} can be applied to
  % verify these large, quantum algorithms.
	\begin{figure}
	\includegraphics[width=\linewidth]{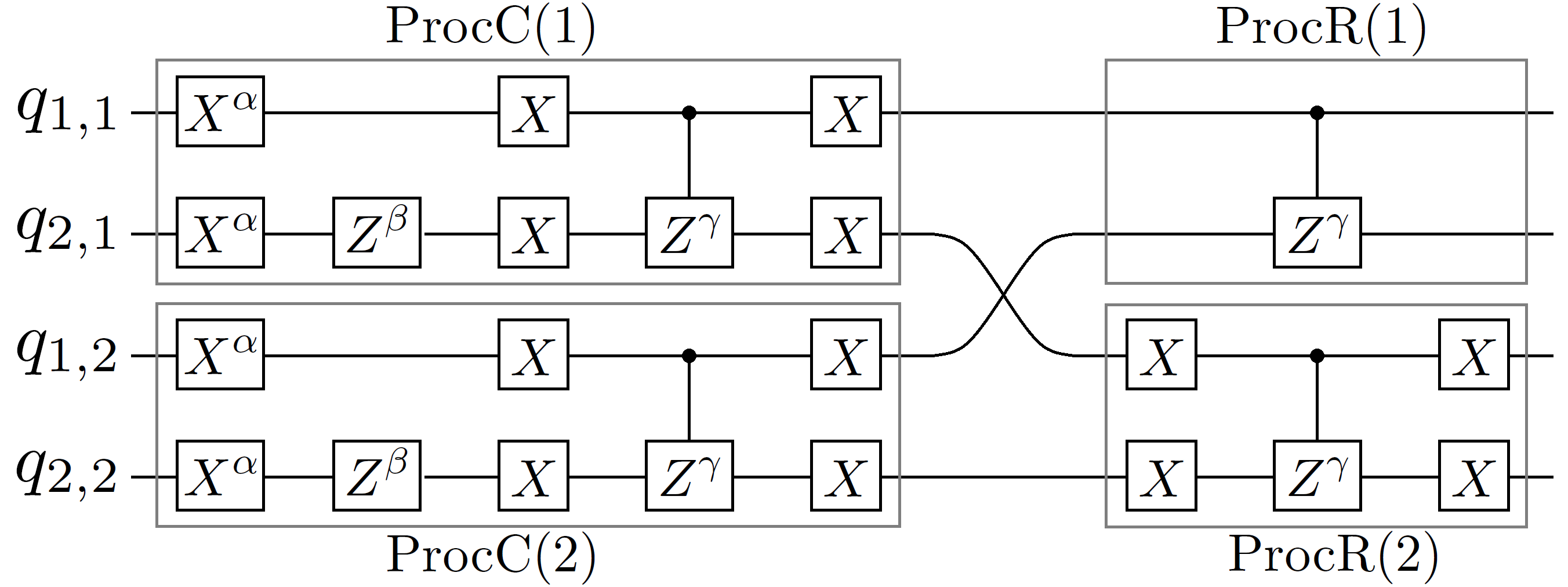}
		\caption{VQA(2) with parameters taken in Sec. \ref{sec: VQA in QSL}.}
		\label{fig VQA2}
	\end{figure}
	%	for example, consider a Hamiltonian  -- an operator which specifies the time-evolution of a quantum system -- of the form:
	%	\begin{equation}\label{Ham-Intro}H = \sum_{i,\alpha}h_\alpha^i\sigma_\alpha^i + \sum_{i,j,\alpha,\beta}h_{\alpha\beta}^{ij}\sigma_\alpha^i\sigma_\beta^j+\cdots\end{equation}
	%	where $h$'s are real numbers, and $\sigma_\alpha^i,\sigma_\beta^j$denote Pauli operators acting on qubits $i$ and $j$. Note that each term, say $h_{\alpha\beta}^{ij}\sigma_\alpha^i\sigma_\beta^j$, only contains a small number of qubits $i, j$, although the number of qubits involved in $H$ can be very larger. 
	%	Accordingly, a VQA for finding an approximation of its ground energy and ground state employs a series of local quantum gates. 
   % To sketch an example, let us consider a simplified version of a VQA circuit
   % with $2\times 2$ grid qubits, shown in Figure \ref{fig VQA2}. Instead of
   % reasoning about the circuit as a whole, we can divide it into 4 blocks ${\rm
   % ProcC}(1), {\rm ProcC}(2), {\rm Proc(R)}(1), {\rm ProcR}(2)$, reason about
   % each block separately, and then combine the results together to establish
   % correctness of the whole program.
   %      %is an essential property to make a language goes well beyond toy algorithms and has been discussed and implemented in almost all platforms mentioned above. However, how to formally verify useful properties of a large quantum program remains unknown and challenge.  
   %      The aim of this paper is exactly to develop a quantum separation logic (QSL for short) that enables local reasoning for scalable verification of quantum programs. 
	
       {\vskip 3pt}
	
	\textbf{Technical Challenges and Contributions}:
	%Starting from separation logic \cite{IO01,Bro07,OHe07}, we generalize the core idea -- separating conjunction and frame rule, to quantum scenario. Unlike the pointer separation logic and concurrent separation logic, we use quantum variables (stacks) as resources, similar to the way proposed by \cite{BHL19} which has already shown the effectiveness of reasoning about independence for probabilistic programs.
  QSL will be developed by first developing a model of BI, where formulas
  describe quantum states and then building a separation logic using these
  assertions as pre- and post-conditions, introducing proof rules to
  reason about quantum programs.
	
	\textbf{\emph{BI and Its Quantum Interpretation}}: To characterize the properties of quantum systems, we first identify a quantum interpretation of BI appropriate for our target applications. %We consider quantum states over specific registers as resources. The partial order $\preceq$ between  quantum states $\rho,\sigma$ is naturally defined: $\rho\preceq\sigma$ if and only if $\rho$ describes a subsystem of $\sigma$. This is similar to the partial order used in \cite{BHL19}. 
	% The interpretation of atomic propositions comes from the concrete applications: projections are suitable for qualitative analysis \cite{ZYY19}, and uniformity plays essential role in proving security\cite{Unr19b}. Several different interpretations of separating conjunction in probabilistic separation logic have been proposed, e.g. as non-deterministic and probabilistic choice \cite{TH19}, expectation \cite{BKK19} and independence \cite{BHL19}.
  We choose to interpret our separating conjunction $\ast$ as separability
  of quantum states. Roughly speaking, $\phi_1\ast\phi_2$ holds in a quantum state $\rho$ if $\rho$ can be factored into two quantum states $\rho_1$ and $\rho_2$ over disjoint registers satisfying $\phi_1$ and $\phi_2$, respectively. 
	%As pointed out above, our second separating conjunction $\sd$ will be used in the presence of entanglement. A quantum state $\rho$ satisfies $\phi_1\sd\phi_2$ if its restriction over two disjoint subsystems satisfy $\phi_1$ and $\phi_2$, respectively. Here, the two subsystems can be entangled in $\rho$. 
	%As usual, the interpretations of spatial implications $\sepimp$ and $\sdimp$ are given through a Galois-style connection withe the corresponding conjunctions. 

  \textbf{\emph{Proof System for Program Logic}}: next, we define a
  program logic for a quantum \textbf{while}-language~\cite{Ying11}
  (for simplicity, we do not consider classical variables). Our
  language follows the ``classical control and quantum data'' 
  paradigm. We develop a set of proof rules that are effective for
  verifying quantum programs over a large set of qubits. Our proof
  system has several novel ingredients:
	%, the quantum ${\bf while}$-language \cite{Ying11}. All rules can be divided into the following three categories:
	\begin{enumerate}
		%\item \textbf{Modification} on BI formulas. Due to the non-cloning of quantum information, the role of assignments has to be played by initialization and unitary transformations together in quantum programming. 
		%Unfortunately, the inference rules for them in quantum Hoare logic \cite{Ying11} cannot be directly adopted in QSL because they are not consistent with the separating conjunctions $\ast$; for example, initialization on some register will make it independent from the rest registers. 
		
        \item \textbf{Modification} on BI formulas.  The basic rule
          for assignments in classical program logics is defined using
          the syntactic notion of substitution. Due to the non-cloning
          law of quantum information, the role of assignments has to
          be played by initialization $q:=|0\>$ and unitary
          transformations $\qbar := U[\qbar]$, and inference rules for
          these operations involve a quantum operation (e.g.,
          \cite{Ying11}, \cite{ZYY19}). Unfortunately, the rules for
          initialization and unitary transformations are not simple
          adaptations of the rule for assignment, because a quantum
          generalization of substitution is not straightforward. For
          atomic predicates, substitutions are not always defined. For
          composite formulas, a straightforward definition of
          substitution is too weak for applications.
      % However, this idea cannot be simply extended to local reasoning using QSL
      % because a unitary transformation on multiple variables can generate
      % (non-classical) correlation between them.
      %
      % \jh{is the problem that generalizing the rule is not sound, or too weak
      % (not complete), or is there just no way to generalize the rule since there is no standard way to define quantum substitution?} \lz{1. for atomic propositions, there is no way to generalize substitution, i.e., unitary should be involved, some of the substitutions do not exist; 2. for composite formulas, substitution is too weak for our application.}
      %
      We overcome this hurdle
      by introducing a \emph{modification} operation for atomic formulas (see
      Definition \ref{def sub atomic prop}), which is essentially a quantum version of substitution. Extending this operation to composite formulas requires some care (see Definition \ref{def sub 2-BI form}).    
		%unlike classical assignment that only changes one variable at a time, unitary transformation can modify the states of multiple qubits simultaneously; 
	%	2. the ordinary substitution for composite formulas -- substitution for composite formulas is the composition of substitutions for each subformula -- is too weak in practice. 
		
    \item 
      \textbf{Frame rule}: The frame rule is one of the most characteristic
      structural rules in separation logic. QSL also enjoys a frame rule
      \textsc{Frame} that is similar in spirit to frame rules from standard
      separation logics, but our new interpretation of separating conjunction
      means that the meaning of the rule is different. Furthermore, the frame
      rule can be generalized slightly: even if the standard side condition for
      frame rules does not hold, the frame rule still applies if the
      post-condition is a \emph{supported} assertion---a concept first proposed
      by Reynolds~\cite{Rey08} in the context of standard separation logic. This
      extra bit of freedom seems to be particular to the quantum setting, and we
      crucially use this feature when using the frame rule to establish
      uniformity.
      %
      % \jh{I don't understand the end of the previous sentence.}
      % \lz{it is quite strange. Generally, the side condition to make
      %   $\{\phi\ast\mu\}C\{\psi\ast\mu\}$ sound is
      %   $\free{\psi}\cup\var(C)\subseteq\free{\phi}$ which is easy to
      %   understand, but if $\psi$ is SP, then this condition is not needed; in
      %   other words, if $\{\phi\}C\{\psi\}$ is valid, then the input on
      %   $\free{\psi}-\free{\phi}$ is useless, you can derive Frame even if
      %   $\free{\psi}\supseteq\free{\phi}$, and this is what I mean for
      %   independence, the output of program is independent from the input on
      %   $\free{\psi}-\free{\phi}$.}
      %   \jh{This seems too hard to explain in the intro. Two options: either
      %   drop the entire sentence, or just say that the side conditions for our
      % frame rule are different than in existing frame rules, due to features of
    % the quantum setting.}
      %   \lz{I prefer the second, if this won't make reviewers annoyed.}
      %
      The soundness proof of our quantum frame rule requires a nontrivial
      calculation based on \emph{purification}, a fundamental technique used in
      quantum information for transforming mixed states to pure states by
      introducing reference systems~\cite{NC00}.
      %
      %\jh{Cite}
      %
		%in particular, we find the concept of supported assertion (SP) proposed in \cite{Rey08}, i.e., a formula whose interpretation has the least element, useful and can be used as a new side-condition. More precisely, if $\{\phi\}\prog\{\psi\}$ is valid and $\psi\in{\rm SP}$, then a highly nontrivial calculation shows that $\prog$ can de-correlate the quantum subsystem $\Lfree {\psi}$ from the subsystem $\Lfree {\mu}$; that is, after executing $\prog$, $\Lfree {\psi}$ becomes independent from $\Lfree {\mu}$, provided that the input satisfies $\phi\ast\mu$.
		
	\item 
		Reasoning about \textbf{entangled predicates}:\footnote{Here, entangled predicates refer to the projections that cannot be factored as a product of projections of its local constituents.} The structural rules \textsc{Frame} and \textsc{Const} enable us to lift local reasoning to global correctness of quantum algorithms \textit{only} when no entanglement occur in the pre- and post-conditions.
      %
      % \jh{Why does \textsc{Const} require no entanglement? It's regular conjunction, not separating conjunction.}
      % \lz{Const is not helpful for deriving entangled predicates. Suppose you have $\{P\}C\{Q\}$ and using Const you derive $\{P\wedge R\}C\{Q\wedge R\}$, here $P,Q,R$ are projections, using Proposition \ref{prop axiom projection} (2), you can only conclude $\{P\otimes R\}C\{Q\otimes R\}$, but both $P\otimes R$ and $Q\otimes R$ are not entangled, they can be written in a tensor product. Note that entangled predicates and entangled states are different.}
      % \jh{OK, that's fine. This will be a common confusion, so it's worth
      % explaining slower when we introduce entangled predicates later.}
      %
      However, entangled predicates play an essential role in revealing the non-local (global) properties of a composite quantum system; for instance, some entangled predicates are used when reasoning about the (in)correctness of VQA (see Sec. \ref{sec: VQA in QSL}).
		With the help of auxiliary variables, we set up a new rule \textsc{UnCR} which enables us to prove the correctness of large quantum algorithms with respect to entangled pre- and post-conditions.
		Intuitively, when the program (as the principal system) combined with auxiliary variables (as ancillary systems), modification can be used to create (mathematically rather than physically) entanglement and rule \textsc{UnCR} is used to preserve correctness under the modification on the auxiliary variables in the pre- and post-conditions (but not in the program). The key idea behind was first proposed in \cite{YZL18} for reasoning about parallel quantum programs;  \textsc{UnCR} is its generalization tailored for our purpose.		
		%\todo{ref to VQA example, why we need entangled predicates} An idea for reasoning about parallel quantum programs with entangled pre- or post-conditions was proposed in \cite{YZL18}. We are able to generalise this idea to QSL and combine it with \textsc{Frame} and \textsc{Const} so that the correctness of large quantum algorithms with respect to entangled pre- and post-conditions can be proved often in three steps: pushing out by introducing auxiliary variables, modification using a new inference rule \textsc{UnCR} and then pulling back by tracing out the auxiliary variables. 
	\end{enumerate}
	
	\textbf{Applications}: To demonstrate the breadth of the application range of our logic QSL, we present several case studies from two very different areas:
	\begin{itemize}
		%\item Our first example given in Section \ref{sec-vqa-exam} is a formal verification of Variational Quantum Algorithm (VQA) \cite{MRB16,PMS14} for finding the ground state of Hamiltonian (\ref{Ham-Intro}). A typical VQA can be split into different subprograms which are suited to locally reasoning. Then the frame rules together with \textsc{UnCR} are used to derive a global correctness with entangled pre- and post-conditions. \textit{In particular, an analysis based on QSL reveals that the VQA presented in the tutorial of Google's Cirq \cite{Cirq} is indeed incorrect}.  
		\item Our first example given in Section \ref{sec-vqa-exam} is formal verification of Variational Quantum Algorithm (VQA) \cite{MRB16,PMS14} for finding the ground state of a quantum system, which has potential applications in quantum chemistry for designing new materials and drugs.
		%, and is regarded as the best hope for obtaining quantum advantage in near-term quantum devices.
		A typical VQA can be split into different subprograms that are suited to locally reasoning. Then the frame rules together with \textsc{UnCR} are used to derive global correctness with entangled pre- and post-conditions. \textit{In particular, an analysis based on QSL reveals that the VQA presented in the tutorial of Google's Cirq \cite{Cirq} is incorrect}.  
		\item In Section \ref{sec-security-exam}, we use QSL to verify the security
      of quantum one-time pad (QOTP) \cite{BR03,MTW00} and quantum secret
      sharing (QSS) \cite{CGL99,HBB99}. Unlike previous work, the QSL
      verification of QOTP and QSS is \emph{scalable}: increasing the number of
      registers that algorithms employ does not complicate the verification. In
      particular, rule \textsc{Frame} with the supported assertion (SP) enables
      us to avoid the very complicated mathematical calculations used in earlier
      verifications of QOTP~\cite{BHY19}.
	\end{itemize}

	\section{Preliminaries}
	\label{sec-def}
	%$\phi\diamond\psi,\phi\sd\psi$
	For the convenience of the reader, we briefly review basic notions of quantum information and programming as well as the logic of bunched implication.  
	
	\subsection{Basics of Quantum Information}
	\label{sec basic Quantum}
	
	% Quantum computing acts on microsystems which described by quantum mechanics
	% and can be described by linear algebra over the field of complex numbers
	% $\mathbb{C}$.
  The \emph{state space} of a quantum system is a Hilbert space $\cH$, which is
  essentially a vector space in the finite-dimensional case. A \emph{pure state}
  of the system is a unit column vector $|\psi\>\in\cH$. For example, the state
  space of a quantum bit (aka qubit) is a two-dimensional Hilbert space with
  basis states $|0\> = \left[\begin{array}{c} 1 \\ 0\end{array} \right]$ and
  $|1\> = \left[\begin{array}{c} 0 \\ 1 \end{array}\right]$, and any pure state
  of a qubit can be described in the form $\alpha|0\>+\beta|1\> =
  \left[\begin{array}{c} \alpha \\ \beta \end{array}\right]$ satisfying
  normalization condition $|\alpha|^2+|\beta|^2 = 1$. When the state is not
  completely known but could be in one of some pure states $|\psi_i\>$ with
  respective probabilities $p_i$, we call $\{(p_i,|\psi_i\>)\}$ an
  \emph{ensemble} of pure states or a \emph{mixed state}, and the system is
  fully described by the \emph{density operator} $\rho =
  \sum_ip_i|\psi_i\>\<\psi_i|$. For example, the completely mixed state of a
  qubit can be seen as ensemble $\{(0.5,|0\>), (0.5,|1\>)\}$ (i.e. the state is
  either $|0\>$ or $|1\>$ with the same probability 0.5) or density matrix
  $\frac{1}{2}(|0\>\<0|+|1\>\<1|) = \left[\begin{array}{cc} 0.5 & 0 \\ 0 & 0.5
  \end{array}\right].$
	
  The evolution of a quantum system is modelled by a \emph{unitary operator}
  $U$; i.e.  a complex matrix with $UU^\dag=U^\dag U$ being the identity
  operator, where $\dag$ is conjugate transpose. In quantum computing, operators
  are often called \emph{quantum gates}. For example, the Hadamard gate
      $H=\frac{1}{\sqrt{2}}\left[\begin{array}{cc} 1 & 1 \\ 1 & -1
      \end{array}\right]$ maps $|0\>, |1\>$ to their superpositions
      $|\pm\>=\frac{1}{\sqrt{2}}(|0\>\pm|1\>)$.
	
  Unlike a classical system which can be observed directly without changing its
  state, we need to perform a quantum measurement to extract information from a
  quantum state which inevitably leads to state collapse. Formally, a
  \emph{projective quantum measurement} consists of a set of \emph{projections},
  i.e., self-adjoint and idempotent linear operators,\footnote{That is,
  $P:\cH\rightarrow\cH$ is a projection over $\cH$ iff $P = P^\dag = P^2$.}
  $M_0,M_1,\dots, M_n$. When such a measurement is applied to a quantum state
  $\rho$, we obtain one of the classical outcome $i\in\{0,1,\dots,n\}$ with
  probability $p_i = \mathrm{tr}(M_i\rho)$, and the post-measurement state of
  the system is then  $\frac{M_i\rho M_i}{p_i}$.
	
	We use variables $p,q,r, ...$ to denote quantum systems. Operations in quantum
  computing are often performed on a composite system consisting of multiple
  qubits. To indicate which system a state describes or an operation acts on, we
  use subscripts; for example, $\cH_p$ is the state space of system $p$,
  $|0\>_{p}$ is the pure state $|0\>$ of the system $p$ and $|1\>_{q}\<1|$ is
  the density matrix of the system $q$. The composite system is described by the
  tensor product of its subsystems; for example, a composite system $pq$ has the
  state space $\cH_p\otimes\cH_q$, and $|0\>_p\otimes|1\>_q$ (or, $|0\>_p|1\>_q$
  for short) is a pure state in which  subsystem $p$ is in state $|0\>$ and
  subsystem $q$ is in state $|1\>$. Due to the superposition principle, there
  exist states like $$|\Phi\>_{pq} =
  \frac{1}{\sqrt{2}}(|0\>_p|0\>_q+|1\>_p|1\>_q)$$ that cannot be written in the
  simple tensor form $|\phi\>_p|\psi\>_q$, which are called \emph{entangled
  states}. These states play a crucial role in applications of quantum
  computation and quantum communication. 
	
  The state of a composite system fully determines the state of each subsystem.
  Formally, given composite system $pq$ in state $\rho$, subsystem $q$ is then
  in state $\tr_p(\rho)$, where the partial trace $\tr_p(\cdot)$ over $p$ is a
  mapping from operators on $\cH_p\otimes\cH_q$ to operators on $\cH_q$ defined
  by: $$\tr_p(|\phi_p\>_p\<\psi_p|\otimes|\phi_q\>_q\<\psi_q|) =
  \<\psi_p|\phi_p\>\cdots|\phi_q\>_q\<\psi_q|$$  for all
  $|\phi_p\>,|\psi_p\>\in\cH_p$ and $|\phi_q\>,|\psi_q\>\in\cH_q$ together with
  linearity. The state $\tr_q(\rho)$ of subsystem $q$ can be defined
  symmetrically.  We often use the notations $\rt{\rho}{p} \triangleq
  \tr_p(\rho)$ and $\rt{\rho}{q} \triangleq \tr_q(\rho)$ in order to explicitly
  indicate that $\rt{\rho}{p}$ and  $\rt{\rho}{q}$ are states of $p,q$,
  respectively.
	
	\vspace{0.2cm}
	\noindent\textbf{Summary of Notations.}
	Let $\vars$ be  the set of all quantum variables. A \emph{quantum register} is a list of \emph{distinct} variables $\qbar = q_1,\dots,q_n$. Each quantum variable $q$ has a type $\cH_q$, which is the state Hilbert space of quantum system denoted by $q$.
	For a set of quantum variables $S = \{q_1,\dots,q_n\}\subseteq \vars$ (or a quantum register $\qbar = q_1,\dots,q_n$), we fix following notations:
	\begin{itemize}
		\item $\cH_S = \bigotimes_{i=1}^n\cH_{q_i}$: the Hilbert space of $S$.
		\item $\dim(S)$: the dimension of $\cH_S$.
    \item $\cD(S)$: the set of all (mixed) quantum states (i.e. density matrices)
      of $S$. In particular, for any $\rho\in\cD(S)$, its \emph{domain} is
      defined as $\dom{\rho}\triangleq S$; we write $\cD \triangleq
      \bigcup_{S\subseteq \vars}\cD(S)$ for the set of all states.
    \item $\cP(S)$: the set of projections on $\cH_S$. In particular, for any
      $P\in\cP(S)$, its domain is defined as $\free{P} \triangleq S$.  Since
      there is a one-to-one correspondence between projections and closed
      subspaces, we sometimes called closed subspaces of $\cH_S$ projections.
      We write $\cP\triangleq \bigcup_{S\subseteq \vars}\cP(S)$ for the set of
      all projections.
		\item $
		\rt{\rho}{S} \triangleq 
		\tr_{\dom{\rho}\backslash S}(\rho)
		$: the \emph{restriction} of state $\rho$ on $S$,  defined as a reduced density operator over $S\cap\dom{\rho}$. 
	\end{itemize}
	\subsection{Quantum Programs: Syntax and Semantics}
	
	For simplicity of presentation, we consider a purely quantum
        extension of \textbf{while}-language, namely the quantum
        ${\bf while}$-language~\cite{Ying11}---that is, we do not
        allow classical variables.

  % \jh{This comment is a bit strange. Is this language a strict fragment of
  %   Mingsheng's language? If so, either (a) we just say that it's a quantum
  % while language, or (b) we should point out what is missing in our language.}
  % \lz{It is Mingsheng's language (quantum while-language), the only restriction is that measurements are ``projective'' measurement rather than general POVM. But since we explain this in the following paragraph, I think we do not need to emphasize it here. In fact, they have the same expression power since any POVM can be tranformed to projective measurement using auxiliary variables.}

	\begin{definition}[Syntax \cite{Ying11}]\label{q-syntax}
		The quantum \textbf{while}-programs are defined by
		the grammar:
		\begin{align*} \prog::=\ &\mathbf{skip}\  |\ \prog_1;\prog_2\ |\ q:=|0\rangle\ |\ \qU \\
		&|\ \mathbf{if}\ \left(\square m\cdot M[\qbar] =m\rightarrow \prog_m\right)\ \mathbf{fi}\\ 
		 &|\ \mathbf{while}\ M[\qbar]=1\ \mathbf{do}\ \prog\ \mathbf{od}
		\end{align*}
	\end{definition}
	
	The program constructs defined above are explained as follows. First, $q:=|0\rangle$
  initializes the quantum variable $q$ in a basis state $|0\rangle$, and $\qU$
  applies a unitary transformation $U$ to a sequence $\qbar$ of quantum
  variables. The case statement $\mathbf{if}\cdots\mathbf{fi}$ performs the
  projective measurement $M = \{M_m\}$ on $\qbar$, and then chooses a subprogram
  $\prog_m$ to execute according to measurement outcome $m$. In the loop $\mathbf{while}\cdots\mathbf{od}$, the projective measurement $M = \{M_0, M_1\}$ in the guard has only two possible outcomes $0,1$: if the outcome is $0$ the loop terminates, and if the outcome is $1$ it executes the loop body $\prog$ and enters the loop again. For simplicity of presentation, we will use the following abbreviation:
	${\bf for}\ i = 1,\cdots,N\ {\bf do}\ \prog_i\ {\bf od}\triangleq \prog_1;\cdots;\prog_N$.
	
	For each program $\prog$, we write $\var(\prog)$ for the set of all quantum variables in $\prog$. If $\vars\supseteq\var(\prog)$ is a set of quantum variables, and $\rho\in\mathcal{D}(\vars)$, then
	$\langle \prog,\rho\rangle$ is called a configuration (of domain $\vars$). %{\color{red}{are not all variables quantum?} only quantum variables are considered.}
	
	\begin{definition}[Operational Semantics \cite{Ying11}]\label{def-op-sem} The  operational semantics of quantum programs is defined as a transition relation $\rightarrow$ by the following transition rules: %in Fig. \ref{fig trans rule}. 
		\begin{equation*}\begin{split}&({\rm Sk})\ \ \langle\mathbf{skip},\rho\rangle\rightarrow\langle \emptyprog,\rho\rangle\qquad ({\rm In})\ \ \ \langle
		q:=|0\rangle,\rho\rangle\rightarrow\langle \emptyprog,\rho^{q}_0\rangle\\
		&({\rm UT})\ \langle\qU,\rho\rangle\rightarrow\langle
		\emptyprog,U\rho U^{\dag}\rangle\\ 
		&({\rm SC})\ \frac{\langle \prog_1,\rho\rangle\rightarrow\langle
			\prog_1^{\prime},\rho^{\prime}\rangle} {\langle
			\prog_1;\prog_2,\rho\rangle\rightarrow\langle
			\prog_1^{\prime};\prog_2,\rho^\prime\rangle}\\
		&({\rm IF})\ \langle\mathbf{if}\ (\square m\cdot
		M[\qbar]=m\rightarrow \prog_m)\ \mathbf{fi},\rho\rangle\rightarrow\langle
		\prog_m,M_m\rho M_m^{\dag}\rangle\\
		&({\rm L}0)\ \langle\mathbf{while}\
		M[\qbar]=1\ \mathbf{do}\
		\prog\ \mathbf{od},\rho\rangle\rightarrow\langle \emptyprog, M_0\rho M_0^{\dag}\rangle\\
		&({\rm L}1)\ \langle\mathbf{while}\
		M[\qbar]=1\ \mathbf{do}\ \prog\ \mathbf{od},\rho\rangle \\
		&\qquad\qquad\quad\ \ \rightarrow
		\langle \prog;\mathbf{while}\ M[\qbar]=1\ \mathbf{do}\ \prog\ \mathbf{od}, M_1\rho
		M_1^{\dag}\rangle\end{split}\end{equation*}
		$\emptyprog$ is the empty program. In (In), $\rho^{q}_0=\sum_n|0\rangle_q\langle n|\rho|n\rangle_q\langle
		0|$.
		In (SC), we use the convention $\emptyprog;\prog_2=\prog_2.$
		In (IF), $m$ ranges over every possible outcome of measurement $M=\{M_m\}.$
		% 	\begin{figure}\centering
		% 			\begin{equation*}\begin{split}&({\rm Sk})\ \ \langle\mathbf{skip},\rho\rangle\rightarrow\langle \emptyprog,\rho\rangle\qquad\qquad\qquad\qquad\quad\, ({\rm In})\ \ \ \langle
		% 			q:=|0\rangle,\rho\rangle\rightarrow\langle \emptyprog,\rho^{q}_0\rangle\\
		% 			&({\rm UT})\ \ \langle\qU,\rho\rangle\rightarrow\langle
		% 			\emptyprog,U\rho U^{\dag}\rangle\qquad\qquad ({\rm SC})\ \ \ \frac{\langle \prog_1,\rho\rangle\rightarrow\langle
		% 				\prog_1^{\prime},\rho^{\prime}\rangle} {\langle
		% 				\prog_1;\prog_2,\rho\rangle\rightarrow\langle
		% 				\prog_1^{\prime};\prog_2,\rho^\prime\rangle}\\
		% 			&({\rm IF})\ \ \ \langle\mathbf{if}\ (\square m\cdot
		% 			M[\qbar]=m\rightarrow \prog_m)\ \mathbf{fi},\rho\rangle\rightarrow\langle
		% 			\prog_m,M_m\rho M_m^{\dag}\rangle\\
		% 			&({\rm L}0)\ \ \ \langle\mathbf{while}\
		% 			M[\qbar]=1\ \mathbf{do}\
		% 			\prog\ \mathbf{od},\rho\rangle\rightarrow\langle \emptyprog, M_0\rho M_0^{\dag}\rangle\\
		% 			&({\rm L}1)\ \ \ \langle\mathbf{while}\
		% 			M[\qbar]=1\ \mathbf{do}\ \prog\ \mathbf{od},\rho\rangle\rightarrow\
		% 			\langle \prog;\mathbf{while}\ M[\qbar]=1\ \mathbf{do}\ \prog\ \mathbf{od}, M_1\rho
		% 			M_1^{\dag}\rangle\end{split}\end{equation*}
		% 			\caption{Transition Rules.\ \ \ \ In (In), $\rho^{q}_0=\sum_n|0\rangle_q\langle n|\rho|n\rangle_q\langle
		% 				0|$.
		% 				In (SC), we make the convention $\emptyprog;\prog_2=\prog_2.$
		% 				In (IF), $m$ ranges over every possible outcome of measurement $M=\{M_m\}.$
		% 			}
		% 			\label{fig trans rule}
		% 	\end{figure}
	\end{definition}
	
	Transitions in rules (IF), (L0) and (L1) are essentially probabilistic; but we adopt a convention from \cite{Sel04b} to present them as  non-probabilistic transitions. For example, for each $m$, the transition in (IF) happens with probability $p_m=\tr(M_m^\dag M_m\rho)$ and the  program state $\rho$ is changed to $\rho_m=M_m\rho M_m^\dag /p_m$. We can combine probability $p_m$ and density operator $\rho_m$ into a partial density operator $M_m\rho M_m^\dag=p_m\rho_m$. This convention significantly simplifies the presentation.
	
	\begin{definition}[Denotational Semantics \cite{Ying11}]
		\label{def den sem} 
		Let $\vars$ be a set	of variables. Then for any quantum program $\prog$ with $\var(\prog)\subseteq\vars$, its semantic function of domain $\vars$ is the mapping 
		$\sem{\prog}_{\vars}:\mathcal{D}(\vars)\rightarrow \mathcal{D}(\vars)$ defined by $\sem{\prog}_{\vars}(\rho)=\sum\{\!|\rho^\prime: \langle \prog,\rho\rangle\rightarrow^\ast\langle \emptyprog,\rho^\prime\rangle|\!\}$ for every $\rho\in\mathcal{D}(\vars)$, where $\rightarrow^\ast$ is the reflexive and transitive closure of $\rightarrow$, and $\left\{\!|\cdot|\!\right\}$ denotes a multi-set.
	\end{definition}
	
	Note that auxiliary variables in $\vars\setminus\var(\prog)$ are allowed in the above definition of semantic function $\sem{\prog}_{\vars}$. The following proposition shows that the denotational semantics of a program $\prog$ is independent of these auxiliary variables.   
  \begin{proposition}[Proposition 3.3.5 in \cite{Ying16}]
		\label{prop sem qo} For any program $\prog$ and any set $\vars\supseteq\var(\prog)$ of variables, the semantic function of domain $\vars$ is a cylindric extension of the  semantic function of domain $\var(\prog)$:
		$\sem{\prog}_{\vars} = \sem{\prog}_{\var(\prog)}\otimes\cI_{\vars\backslash\var(\prog)},$
		where $\cI_{\vars\backslash\var(\prog)}$ is the identity quantum operation in $\cH_{\vars\backslash\var(\prog)}$.
	\end{proposition}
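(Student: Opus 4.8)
The plan is to proceed by structural induction on $\prog$, using the standard inductive characterization of the denotational semantics that follows from Definition~\ref{def den sem} by collecting terminating computation paths of the operational rules (Definition~\ref{def-op-sem}): $\sem{\mathbf{skip}}_\vars = \cI_\vars$; $\sem{q:=|0\rangle}_\vars$ is the initialization operation on $q$ padded by identity; $\sem{\prog_1;\prog_2}_\vars = \sem{\prog_2}_\vars\circ\sem{\prog_1}_\vars$; $\sem{\mathbf{if}\,(\square m\cdot M[\qbar]=m\rightarrow\prog_m)\,\mathbf{fi}}_\vars(\rho)=\sum_m\sem{\prog_m}_\vars(M_m\rho M_m^{\dag})$; and $\sem{\mathbf{while}\,M[\qbar]=1\,\mathbf{do}\,\prog'\,\mathbf{od}}_\vars=\bigsqcup_n\cE^{(\vars)}_n$ for the Löwner-increasing sequence with $\cE^{(\vars)}_0=0$ and $\cE^{(\vars)}_{n+1}(\rho)=M_0\rho M_0^{\dag}+\cE^{(\vars)}_n(\sem{\prog'}_\vars(M_1\rho M_1^{\dag}))$. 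Fix $S:=\var(\prog)$ and $T:=\vars\setminus S$, so $\cH_\vars=\cH_S\otimes\cH_T$. Three elementary facts about quantum operations will be used, each verified directly on Kraus operators: (i) padding is associative, $\cE\otimes\cI_{T_1\cup T_2}=(\cE\otimes\cI_{T_1})\otimes\cI_{T_2}$; (ii) for operations $\cE,\cF,\{\cE_i\}$ with a common domain, $(\cF\otimes\cI_T)\circ(\cE\otimes\cI_T)=(\cF\circ\cE)\otimes\cI_T$ and $\sum_i(\cE_i\otimes\cI_T)=(\sum_i\cE_i)\otimes\cI_T$; and (iii) $(\,\cdot\,)\otimes\cI_T$ is Scott-continuous, i.e.\ commutes with suprema of $\omega$-chains in the Löwner order. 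The overall pattern is: given a composite $\prog$, apply the induction hypothesis to each subprogram with the ambient set taken to be $S$ (legitimate since $\var$ of each subprogram is contained in $S$), so that every sub-semantics becomes an operation over $S$ padded by $\cI_T$; then facts (i)--(iii) let us re-assemble the expression and recognize it, via the inductive characterization over domain $S$, as $\sem{\prog}_S\otimes\cI_T$.

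Base cases. For $\prog=\mathbf{skip}$, $\sem{\prog}_\vars=\cI_\vars=\cI_S\otimes\cI_T$ and $\sem{\prog}_S=\cI_S$. For $\prog=(q:=|0\rangle)$, $S=\{q\}$ and rule (In) gives $\sem{\prog}_\vars(\rho)=\sum_n|0\rangle_q\langle n|\rho|n\rangle_q\langle 0|$, an operation acting only on the $q$-slot, so $\sem{\prog}_\vars=\sem{\prog}_{\{q\}}\otimes\cI_T$. For $\prog=(\qbar:=U[\qbar])$, $S=\var(\qbar)$; rule (UT) gives $\sem{\prog}_\vars(\rho)=(U\otimes I_T)\rho(U\otimes I_T)^{\dag}=(\cU\otimes\cI_T)(\rho)$ with $\cU(\sigma):=U\sigma U^{\dag}$, while $\sem{\prog}_S=\cU$; done.

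Sequencing and conditionals. For $\prog=\prog_1;\prog_2$ with $S=\var(\prog_1)\cup\var(\prog_2)$, applying the induction hypothesis over $S$ gives $\sem{\prog_i}_S=\sem{\prog_i}_{\var(\prog_i)}\otimes\cI_{S\setminus\var(\prog_i)}$, and combined with fact (i) it also gives $\sem{\prog_i}_\vars=\sem{\prog_i}_S\otimes\cI_T$; hence, by fact (ii), $\sem{\prog}_\vars=\sem{\prog_2}_\vars\circ\sem{\prog_1}_\vars=(\sem{\prog_2}_S\otimes\cI_T)\circ(\sem{\prog_1}_S\otimes\cI_T)=(\sem{\prog_2}_S\circ\sem{\prog_1}_S)\otimes\cI_T=\sem{\prog}_S\otimes\cI_T$, the last equality being the inductive characterization over domain $S$. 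For $\prog=\mathbf{if}\,(\square m\cdot M[\qbar]=m\rightarrow\prog_m)\,\mathbf{fi}$ with $S=\var(\qbar)\cup\bigcup_m\var(\prog_m)$: each branch gives $\sem{\prog_m}_\vars=\sem{\prog_m}_S\otimes\cI_T$ as above, and $\rho\mapsto M_m\rho M_m^{\dag}$ equals $\cM_m\otimes\cI_T$ for the operation $\cM_m$ on $S$ obtained by padding the single-Kraus operation $\sigma\mapsto M_m\sigma M_m^{\dag}$ on $\var(\qbar)$ up to $S$ (fact (i)); so by fact (ii) (now a finite sum of composites) $\sem{\prog}_\vars(\rho)=\sum_m\sem{\prog_m}_\vars(M_m\rho M_m^{\dag})=\big(\sum_m\sem{\prog_m}_S\circ\cM_m\big)\otimes\cI_T=\sem{\prog}_S\otimes\cI_T$.

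The loop — the main obstacle. For $\prog=\mathbf{while}\,M[\qbar]=1\,\mathbf{do}\,\prog'\,\mathbf{od}$ with $S=\var(\qbar)\cup\var(\prog')$, the semantics is not a single algebraic expression but the supremum $\sem{\prog}_\vars=\bigsqcup_n\cE^{(\vars)}_n$ described above. A sub-induction on $n$ will show $\cE^{(\vars)}_n=\cE^{(S)}_n\otimes\cI_T$: the base $n=0$ is $0=0\otimes\cI_T$; for the step one rewrites $\cE^{(\vars)}_{n+1}(\rho)=M_0\rho M_0^{\dag}+\cE^{(\vars)}_n(\sem{\prog'}_\vars(M_1\rho M_1^{\dag}))$ using the outer induction hypothesis for $\prog'$ (over $S$), the padded Kraus operations for $M_0,M_1$, the sub-induction hypothesis for $\cE_n$, and facts (i)--(ii). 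Passing to the limit, fact (iii) lets $(\,\cdot\,)\otimes\cI_T$ pass through the supremum:
$$\sem{\prog}_\vars=\bigsqcup_n\cE^{(\vars)}_n=\bigsqcup_n\big(\cE^{(S)}_n\otimes\cI_T\big)=\Big(\bigsqcup_n\cE^{(S)}_n\Big)\otimes\cI_T=\sem{\prog}_S\otimes\cI_T,$$
which closes the induction. I expect the two genuinely delicate points to be: keeping the variable-set bookkeeping straight in the $\mathbf{if}$ and $\mathbf{while}$ cases — branches, loop bodies and guards may mention strictly fewer variables than the whole command, so every operation must first be padded up to the common domain $S$ before facts (ii)--(iii) apply, which is the role of associativity (i) — and justifying the continuity exchange (iii) for the loop; the latter is standard domain theory for quantum operations (the relevant super-operators form a CPO under the Löwner order and $(\,\cdot\,)\otimes\cI_T$ is linear and preserves directed suprema componentwise).
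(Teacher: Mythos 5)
Your argument is correct, and it is essentially the standard proof: the paper itself does not prove this proposition but imports it as Proposition 3.3.5 of \cite{Ying16}, where it is likewise established by structural induction on $\prog$ using the compositional characterization of $\sem{\cdot}$ (with the \textbf{while} case handled through the supremum of its finite approximations and continuity of padding with $\cI_{\vars\backslash\var(\prog)}$). Your proposal reconstructs exactly that route, including the two delicate points you flag (padding every sub-operation up to the common domain before combining, and the exchange of $(\cdot)\otimes\cI_T$ with the supremum for loops), so nothing is missing.
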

	
	% 	\section{Review of BI Logic}\label{sec BI extension}
	
	% 	Now we start to develop our separation logic for quantum programs. As the first step, we define the assertion language as an extension of BI-logic, tailored for specifying properties of quantum states, with a special consideration of accommodating entanglement and separation together.   
	
	\subsection{Brief review of BI-Logic}
	\label{sec brief review of BI}
  Next, we briefly review the logic of Bunched Implications (BI)
  \cite{OP99,Pym02}. BI is a sub-structural logic with the following syntax:
	$$
	\phi,\psi ::= p\in\AP\ |\ \top\ |\ \bot\ |\ \phi\wedge\psi\ |\ \phi\vee\psi\ |\ \phi\rightarrow\psi\ |\ \phi\ast\psi
	|\ \phi\sepimp\psi
	$$
  where $p$ ranges over a set $\AP$ of atomic propositions. Besides standard
  propositional logic, BI contains a substructural fragment -- the separating
  conjunction $\ast$ and separating implication $\sepimp$ (``magic wand''). A
  distinction between $\ast$ and $\wedge$ is that $\ast$ is not idempotent,
  i.e., $P\ast P \neq P$. For example, in the standard heap model of separation
  logic, the separating conjunction $P\ast Q$ is true of a heap if it can be
  split into two heaplets, one of which makes $P$ true and the other of which
  makes $Q$ true. The implication $\sepimp$ is adjoint to $*$. For example, $P
  \sepimp Q$ holds in some heap if adding a separate heap satisfying $P$ leads
  to a combined heap satisfying $Q$.
	
	The most general semantics of BI is given in terms of a kind of Kripke structures, called BI frames. Standard BI frame is based on a pre-ordered commutative monoid:
	\begin{definition}[BI frame \cite{OP99}]
		\label{def BI frame}
		A BI frame is a tuple $\cX = (X,\circ,\preceq,e)$, where $X$ is a set equipped with a preorder $\preceq$, and $\circ: X\times X\rightarrow X$ is a partial binary operation with an unit element $e$ and satisfying the following conditions:
		\begin{enumerate}
			\item (Unit Existence) for all $x$, $x =  x\circ e = e\circ x$;
			\item (Commutativity) $x\circ y = y\circ x$;
			\item (Associativity) $x\circ (y\circ z) = (x\circ y)\circ z$;
			\item (Compatible with $\preceq$) 
			$x\preceq x^\prime$ and $y\preceq y^\prime$ and both $x\circ x^\prime$ and $y\circ y^\prime$ are defined, then $x\circ x^\prime\preceq y\circ y^\prime$.
		\end{enumerate}
    Above, equalities state that either both sides are defined and equal, or both
    sides are undefined.
		%		Moreover, a BI frame is said to be Downwards Closed (DC) if it satisfies 
		%		\begin{enumerate}
		%			\item[(5)] (Downwards Closed) $z\in x\circ y\wedge x^\prime\preceq x\wedge y^\prime\preceq y\Rightarrow \exists z^\prime (z^\prime\in x^\prime\circ y^\prime \wedge z^\prime\preceq z)$
		%		\end{enumerate}
	\end{definition}
	
	Intuitively, if we choose the collections of resources as possible worlds, then $\circ$ can be interpreted as a commutative combination of resources. The identity $e$ is an empty resource or lack of resource, and combine any resource $x$ and empty resource $e$ yields $x$ itself. Based on the combination, a preorder is defined: if $x$ is a combination of resources $y$ and $z$, it should be ``larger'' than $y$ since it contains $y$. 
	
	%Compared with the original separation algebra defined in \cite{COY07}, the structure $(X,\circ,e)$ here is not cancellative nor functional. 
	%The above definition is slightly different from the original definition 3.1 in \cite{Doc19}. Here, the unit closure condition is not required, but a multiplicative identity $e$ is introduced. 
	%	The downwards closed property was identified in \cite{CCA17} to simplify the semantics of magic wand $\sepimp$; we also find it useful for proving the restriction property of $\ast$.
	%and the restriction property.
	%In addition, it is important in proof of the restriction property of $\sepimp$. 
	
  The semantics of formulas depends on the semantics of atomic propositions.  A
  \emph{valuation} is a mapping $\cV: \AP \rightarrow \wp(X)$, and it is
  \emph{monotonic} if $x\in\cV(p)$ and $y \succeq x$ implies $y \in \cV(p)$. A
  BI frame $\cX$ together with a monotonic valuation $\cV$ gives a BI model
  $\cM$.
	
	\begin{definition}[Satisfaction in BI models \cite{OP99}]
		\label{def satisfaction BI}
    Given a BI formula $\phi$ and a BI model $\cM = (X,\circ,\preceq,e,\cV)$.
    For each $x\in X$, the relation $x\models \phi$ is defined by induction on
    $\phi$:
		\begin{align*}
		&  x\models_\cM p \ \text{\rm iff}\ x\in\cV(p)\\
		&  x\models_\cM \top :\  \text{always} \qquad
		x\models_\cM \bot :\  \text{never} \\
		&  x\models_\cM \phi_1\wedge\phi_2 \ \ \;\!  \text{\rm iff}\ \ x\models_\cM\phi_1\ \text{and}\ x\models_\cM\phi_2 \\
		&  x\models_\cM \phi_1\vee\phi_2   \ \ \;\!  \text{\rm iff}\ \ x\models_\cM\phi_1\ \text{or}\ x\models_\cM\phi_2\\
		&  x\models_\cM \phi_1\rightarrow\phi_2  \  \text{\rm iff}\ \ \forall x^\prime\succeq x,\ x^\prime\models_\cM\phi_1\ \text{implies}\ x^\prime\models_\cM\phi_2\\
		&  x\models_\cM \phi_1\ast\phi_2  \ \ \ \;\!\! \text{\rm iff}\ \ \exists y,z\ \text{s.t.}\ y\circ z\ \text{is defined and}\ x \succeq y\circ z,\\ 
		&\qquad\qquad\qquad\qquad\ \  y\models_\cM \phi_1\ \text{and}\ z\models_\cM\phi_2 \\
		&  x\models_\cM \phi_1\sepimp\phi_2  \ \;\! \text{\rm iff}\ \ \forall y\ \text{s.t.}\ x\circ y\ \text{is defined}, \\ 
		&\qquad\qquad\qquad\qquad\ \   y\models_\cM \phi_1\ \text{implies}\ x\circ y\models_\cM \phi_2. 
		\end{align*}
	\end{definition}
	
	%	The judgment $\phi\models_\cM\psi$ asserts that for every $x\in X$, whenever $x\models_\cM \phi$, it follows $x\models_\cM \psi$. We write $\phi\models\psi$ iff $\phi\models_\cM\psi$ holds for all models, and we say $\phi$ is valid, written $\models \phi$, iff $\top\models\phi$. 
	
  Following \cite{Pym02} (see also \cite{Doc19}), a sound and complete
  Hilbert-style proof system of BI is presented in Supplementary Material \ref{sec app sub Hilbert-style rules for BI}.

	\section{Quantum Interpretation of BI Logic}\label{sec-q-interpret}
	
  Now, we are ready to present our quantum model of BI, using the
  resource semantics of BI. After defining the model, we introduce some atomic
  propositions. To lay the groundwork for the separation logic, we explore a
  technical property called \emph{restriction}---which will be important for the
  frame rule---and we define a \emph{modification} operation, an analog of
  substitution that we will use for reasoning about initialization and unitary
  transformations.
	%As said before, 2-BI logic is designed as the assertion language of our quantum separation logic. More precisely, it is used to describe properties of the states of quantum programs. An abstract semantics of 2-BI was defined in the previous section in terms of 2-BI frames. In this section, this semantics will be concretised by defining a quantum frame.   
	
	\subsection{BI Frame of Quantum States}
	
  The basic idea of our model is to consider quantum states over specific
  registers as resources. Then, the separating conjunction is introduced to
  model independent combinations of spatially separate quantum resources
  (quantum states over disjoint registers). Formally, we define:
	
	\begin{definition}
		\label{def tensor coupling}
		The partial binary functions $\circ: \cD\times\cD\rightarrow \cD$ on quantum
    states is defined by:
    \[
      \rho_1\circ\rho_2 \triangleq
      \begin{cases}
        \rho_1 \otimes \rho_2 &: \text{if } \dom{\rho_1}\cap \dom{\rho_2}=\emptyset \\
        \text{undefined} &: \text{otherwise.}
      \end{cases}
    \]
		% where $\doms$ is the domain function which specifies  the quantum register that a quantum state lies on. 
	\end{definition}	
	
  Essentially, $\circ$ takes the tensor product of two quantum states with
  disjoint domains. Note that in our setting, the tensor product $\otimes$ is
  commutative since every quantum state $\rho\in\cD$ is tagged with its domain.
  For example, $|1\>_p\<1|\otimes|0\>_q\<0| = |0\>_q\<0|\otimes|1\>_p\<1|$
  denote the same state in $pq$. For the partial order over quantum states, we
  take the following:	

	\begin{definition}
		\label{def partial order quantum state}
		Let $\preceq$ be the partial order over $\cD$: 
		$
		\rho\preceq\rho^\prime \text{\ iff\ } 
		\dom{\rho}\subseteq\dom{\rho^\prime} \text{\ and\ } \rho = \rt{\rho^\prime}{\dom{\rho}}.
		$
	\end{definition}	
  Intuitively, $\rho \preceq \rho^\prime$ means that $\rho$ describes a
  subsystem of $\rho^\prime$; more precisely, if we discard the subsystem
  $\dom{\rho^\prime}\backslash\dom{\rho}$ of $\rho^\prime$, then the remaining
  subsystem is in state  $\rho$. Combining all of the ingredients defined, we
  have:
	\begin{proposition}\label{prop quantum BI frame} $(\cD,\circ, \preceq,1)$ forms a BI frame, where scalar number $1$ is understood as the state over the empty register. \end{proposition}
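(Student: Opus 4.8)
The plan is to verify, one by one, that $(\cD,\circ,\preceq,1)$ satisfies the four conditions of Definition~\ref{def BI frame}, after first checking that $\preceq$ of Definition~\ref{def partial order quantum state} is actually a preorder on $\cD$. Reflexivity is immediate, since restricting a state to its own domain leaves it unchanged; transitivity follows from the ``tower'' property of the partial trace, i.e.\ $\rt{(\rt{\rho}{S'})}{S} = \rt{\rho}{S}$ whenever $S\subseteq S'\subseteq\dom{\rho}$, so that if $\rho\preceq\rho'\preceq\rho''$ then $\dom{\rho}\subseteq\dom{\rho''}$ and $\rho=\rt{\rho'}{\dom{\rho}}=\rt{(\rt{\rho''}{\dom{\rho'}})}{\dom{\rho}}=\rt{\rho''}{\dom{\rho}}$. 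Throughout I will keep in mind the stated convention that an equation between expressions built from the partial operation $\circ$ asserts that both sides are simultaneously defined (or simultaneously undefined), and equal when defined.

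For unit existence: since $\dom{1}=\emptyset$, we always have $\dom{\rho}\cap\dom{1}=\emptyset$, so $\rho\circ 1$ is defined and equals $\rho\otimes 1=\rho$ (the scalar $1$ acting as an empty tensor factor); by the same argument $1\circ\rho=\rho$. For commutativity: $\rho_1\circ\rho_2$ is defined iff $\dom{\rho_1}\cap\dom{\rho_2}=\emptyset$, a symmetric condition, and when defined the two products coincide because, as noted after Definition~\ref{def tensor coupling}, the tensor product of domain-tagged states is commutative. For associativity: both $\rho_1\circ(\rho_2\circ\rho_3)$ and $(\rho_1\circ\rho_2)\circ\rho_3$ are defined exactly when $\dom{\rho_1},\dom{\rho_2},\dom{\rho_3}$ are pairwise disjoint, and in that case both equal $\rho_1\otimes\rho_2\otimes\rho_3$ by associativity of $\otimes$.

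The only clause that uses a genuine property of the partial trace is the monotonicity of $\circ$ with respect to $\preceq$. The key lemma here is that the partial trace distributes over a tensor product of states with disjoint domains: if $\dom{\sigma_1}\cap\dom{\sigma_2}=\emptyset$ and $S_i\subseteq\dom{\sigma_i}$, then $\rt{(\sigma_1\otimes\sigma_2)}{S_1\cup S_2}=\rt{\sigma_1}{S_1}\otimes\rt{\sigma_2}{S_2}$, which follows directly from the definition of the partial trace on product operators together with linearity. Granting this, if $\rho_i\preceq\rho_i'$ for $i=1,2$ and both $\rho_1\circ\rho_2$ and $\rho_1'\circ\rho_2'$ are defined, then $\dom{\rho_1\circ\rho_2}=\dom{\rho_1}\cup\dom{\rho_2}\subseteq\dom{\rho_1'}\cup\dom{\rho_2'}=\dom{\rho_1'\circ\rho_2'}$, and restricting $\rho_1'\otimes\rho_2'$ to $\dom{\rho_1}\cup\dom{\rho_2}$ yields $\rt{\rho_1'}{\dom{\rho_1}}\otimes\rt{\rho_2'}{\dom{\rho_2}}=\rho_1\otimes\rho_2$, so $\rho_1\circ\rho_2\preceq\rho_1'\circ\rho_2'$, as required.

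I do not expect a real obstacle: the proposition is essentially a bookkeeping exercise. The one place demanding care is matching up domains of definedness at every step, so that each claimed identity between $\circ$-expressions genuinely holds as an equation between partial terms (this matters most for associativity), and invoking the tower and distributivity properties of the partial trace rather than re-deriving them from scratch.
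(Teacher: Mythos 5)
Your proposal is correct and follows essentially the same route as the paper's own proof: a direct check of unit existence, commutativity, associativity (with the definedness conditions matching on both sides), and compatibility with $\preceq$, where the only substantive step is that the partial trace of a tensor product of disjoint-domain states factors as the tensor product of the restrictions---a fact the paper establishes by an explicit basis computation and you invoke as a lemma (it is the paper's Proposition on properties of the partial trace). Your additional verification that $\preceq$ is a preorder is a harmless extra not spelled out in the paper.
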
	

%  \jh{``a'' state or ``the'' state? is this state unique?} 
%  \lz{``the'' state. the unique state over the empty register.}
	
	\subsection{Atomic Propositions about Quantum States}
	\label{sec free choice of AP}
	
  To complete our description of the quantum BI logic, we introduce three atomic
  propositions and interpret them in quantum states. In general, we have a great
  deal of freedom in selecting these atomic propositions; the only requirement
  is that their interpretation must be monotone with respect to the pre-order
  $\preceq$. Our atomic propositions are fairly general, but motivated by
  applications of our separation logic.
	
	{\vskip 3pt}
	
	\noindent\textbf{Propositions denoting free variables.} We first introduce a set of atomic propositions $\bD[S]$ for each variable set $S\in\vars$ with domain defined by $\free{\bD[S]}\triangleq S$, and interpret it as the state with domain at least $S$:
	\begin{equation}\label{atom-0}
	\sem{\bD[S]} \triangleq \left\{\rho\in\cD: S\subseteq\dom{\rho}\right\}.\end{equation}
	
	{\vskip 3pt}
	
	\noindent\textbf{Propositions for qualitative analysis.} For qualitative analysis of quantum programs, we often use projection operators as atomic propositions.   
	For a projection $P\in\cP$ as an atomic proposition, its semantics $\sem{P}$ is defined as the following set of quantum states:
	\begin{equation}\label{atom-1}\begin{split}
	\sem{P} &\triangleq \left\{\rho\in\cD: \free{P}\subseteq\dom{\rho}\text{\ \&\ }\supp\big(\rt{\rho}{\free{P}}\big)\subseteq P\right\}. %\\
	%&= \left\{\rho\in\cD:\ \dom{P}\subseteq\dom{\rho}\text{\ and\ }\rt{\rho}{\dom{P}}\text{\ is\ invariant\ under}\ P\ \text{from\ both\ left\ and\ right}\right\}.
	\end{split}
	\end{equation}
	where the \emph{support} of a state $\rho\in\cD$ is the (topological) closure of the subspace spanned by its eigenvectors with nonzero eigenvalues, or equivalently, $\supp(\rho) = \{|\phi\>\in\cH_{\dom{\rho}}:\ \<\phi|\rho|\phi\> = 0\}^\bot.$\footnote{${}^\bot$ stands for ortho-complement.}
	%	where ${}^\bot$ stands for ortho-complement.
	Let us carefully explain the definition of $\sem{P}$. In the case that $\rho$ has the same domain of $P$, it is natural to define $\rho\in\sem{P}$ if its support $\supp(\rho)$ lies in $P$, or equivalently, $\rho$ is invariant under projection operator $P$. 
	In the case where $\dom{\rho}$ and $\free{P}$ are not the same, in order to make $\sem{P}$ upward-closed (i.e., monotonic): $\rho\in\sem{P}$ and $\rho\preceq\rho^\prime$ imply  $\rho^\prime\in\sem{P}$, it is appropriate to require that $\rho\in\sem{P}$ iff (i) $\dom{\rho}\supseteq\free{P}$; and (ii) the restricted state of $\rho$ on $\free{P}$ is in $\sem{P}$.

	{\vskip 3pt}
	
	\noindent\textbf{Atomic propositions expressing uniformity in quantum security.} As is well-known, probabilistic uniformity is a basic property in verification of security protocols. To describe uniformity in quantum protocols, we introduce an atomic proposition $\unia[S]$ for each $S\subseteq\vars$ denoting finite-dimensional quantum systems. Its domain is $\free{\unia[S]}\triangleq S$. 
	%The set of all uniformity propositions is denoted by % $\cU$.
	The semantics of $\unia[S]$ is defined as the following set of quantum states:
	\begin{equation}\label{atom-3}
	\sem{\unia[S]} \triangleq \left\{\rho\in\cD: S\subseteq\dom{\rho}\text{\ \&\ }\rt{\rho}{S} = \frac{\id_S}{\dim(S)}\right\},\end{equation}
  where $I_S$ is the identity density on the quantum system over registers $S$.
  The intuition behind defining equation (\ref{atom-3}) is quite simple: for a
  state $\rho$ in $\sem{\unia[S]}$ such that $S\subseteq\dom{\rho}$, its
  restriction on $S$ should be the completely mixed state,
  $\frac{\id_S}{\dim(S)}$, which means ``uniformly distributed" over all
  orthonormal bases of the system denoted by $S$. 
	
	{\vskip 3pt}
	
	\noindent\textbf{Axiom schema for atomic formulas.} With the interpretation of atomic propositions, we have: 
	\begin{proposition}
		\label{prop axiom projection}
    \strut{}
		\begin{enumerate}
			\item For all $S\subseteq \vars$ and identity operator $I_S$ over $\cH_S$, we have:  $\models \bD[S]\leftrightarrow \id_{S}.$ %$\models \bD[S]\gimplr \top$.
			%\item For all $P\in\cP$ and $S\subseteq \vars$, if $\free{P}\cap S = \emptyset$, then $\models P\gimplr (P\otimes \id_{S})$
			%\item For all $P,Q\in\cP$ with same domain, $\models P\rightarrow Q$ if and only if $P \sqsubseteq Q$;
			%\item For all $S_1,S_2,S\subseteq\varb$, if $S_1\cap S_2=\emptyset$ and $S\varsubsetneq S_1$, then $\models \ip(S_1)\wedge\uni(S_2)\rightarrow \inu(S_1\backslash S)\wedge\uni(S_2\cup S)$.
			\item For all $P,Q\in\cP$ with disjoint domains, we have:
        $\models  P\wedge Q \leftrightarrow (P\otimes Q) ;$  	%\item For all $P,Q\in\cP$ with the same domain, $\models P\wedge Q\leftrightarrow P\qwedge Q$, $\models P\vee Q\rightarrow P\qvee Q$.
			%\item For all $\phi\in{\rm Res}$ and $S\subseteq \vars$, if $\free{\phi}\cap S = \emptyset$, then $\models \phi\gimplr \phi\sd \id_{S}$;
			\item If $S_1\subseteq S_2$, then $\models\unia[S_2]\rightarrow\unia[S_1]$.
			\item If $S_1, S_2$ are disjoint, then:
        $$ \models(\unia[S_1]\ast\unia[S_2])\leftrightarrow\unia[S_1\cup S_2] .$$
			%\item For all $P\in\cP$, $S_1\subseteq\free{P}$ and $S_2\cap\free{P} = \emptyset$, if $\models P\rightarrow\unia[S_1]$, then $\models (P\sd\unia[S_2])\rightarrow \unia[S_1\cup S_2]$. 
		\end{enumerate}
    Note that $\otimes$ is \emph{not} a connective in BI: instead, it stands
    for the mathematical tensor product. Thus, $P\otimes Q$ is a
    projection and can be considered as atomic formula.
	\end{proposition}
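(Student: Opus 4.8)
The plan is to treat all four items as validities in the fixed quantum BI model of Proposition~\ref{prop quantum BI frame} equipped with the valuation given by (\ref{atom-0})--(\ref{atom-3}), and to reduce each one to a statement about the sets $\sem{\cdot}$. Unfolding the clause for $\rightarrow$ in Definition~\ref{def satisfaction BI} and using reflexivity of $\preceq$, the judgement $\models\phi\rightarrow\psi$ is equivalent to $\sem{\phi}\subseteq\sem{\psi}$, so $\models\phi\leftrightarrow\psi$ is equivalent to $\sem{\phi}=\sem{\psi}$; for item~(4) I will in addition unfold the clause for $\ast$ directly, together with Definitions~\ref{def tensor coupling} and~\ref{def partial order quantum state}. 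Throughout I will freely use the tower property of the partial trace, $\rt{(\rt{\rho}{T})}{S}=\rt{\rho}{S}$ for $S\subseteq T\subseteq\dom{\rho}$, and $\dim(S\cup S')=\dim(S)\dim(S')$ for disjoint $S,S'$.

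Items~(1) and~(3) are then short. For~(1), identifying the projection $\id_S$ with the closed subspace $\cH_S$, the condition $\supp(\rt{\rho}{S})\subseteq\id_S$ in (\ref{atom-1}) is automatic as soon as $S\subseteq\dom{\rho}$, so $\sem{\id_S}=\{\rho\in\cD: S\subseteq\dom{\rho}\}=\sem{\bD[S]}$ by (\ref{atom-0}). For~(3), let $\rho\in\sem{\unia[S_2]}$, so $S_2\subseteq\dom{\rho}$ and $\rt{\rho}{S_2}=\id_{S_2}/\dim(S_2)$; then $S_1\subseteq\dom{\rho}$ and, by the tower property, $\rt{\rho}{S_1}=\rt{(\rt{\rho}{S_2})}{S_1}=\tr_{S_2\setminus S_1}\!\big(\id_{S_2}/\dim(S_2)\big)=\id_{S_1}\dim(S_2\setminus S_1)/\dim(S_2)=\id_{S_1}/\dim(S_1)$, i.e.\ $\rho\in\sem{\unia[S_1]}$, so $\sem{\unia[S_2]}\subseteq\sem{\unia[S_1]}$.

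Item~(2) is the crux. Since $\free{P}$ and $\free{Q}$ are disjoint, $P\otimes Q$ is a projection on $\cH_{\free{P}}\otimes\cH_{\free{Q}}=\cH_{\free{P}\cup\free{Q}}$ with $\free{P\otimes Q}=\free{P}\cup\free{Q}$, and the domain parts of (\ref{atom-1}) agree on the two sides ($\free{P}\cup\free{Q}\subseteq\dom{\rho}$ iff $\free{P}\subseteq\dom{\rho}$ and $\free{Q}\subseteq\dom{\rho}$). Assuming these hold and writing $\sigma=\rt{\rho}{\free{P}\cup\free{Q}}$, the tower property gives $\rt{\rho}{\free{P}}=\tr_{\free{Q}}(\sigma)$ and $\rt{\rho}{\free{Q}}=\tr_{\free{P}}(\sigma)$, so it remains to prove the following quantum-information lemma: for a state $\sigma$ on $\cH_A\otimes\cH_B$ with marginals $\sigma_A,\sigma_B$ and projections $P$ on $\cH_A$, $Q$ on $\cH_B$, one has $\supp(\sigma)\subseteq P\otimes Q$ iff $\supp(\sigma_A)\subseteq P$ and $\supp(\sigma_B)\subseteq Q$. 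For the ``if'' direction I will invoke the standard bound $\supp(\sigma)\subseteq\supp(\sigma_A)\otimes\supp(\sigma_B)$. For the ``only if'' direction I will write $\sigma=\sum_i p_i\ket{\phi_i}\bra{\phi_i}$ with each $\ket{\phi_i}\in P\otimes Q$, observe that $\tr_B(\ket{\phi_i}\bra{\phi_i})=P\,\tr_B\!\big((\id\otimes Q)\ket{\phi_i}\bra{\phi_i}(\id\otimes Q)\big)\,P$ has support in $P$, hence so does $\sigma_A=\sum_i p_i\tr_B(\ket{\phi_i}\bra{\phi_i})$, and symmetrically for $\sigma_B$. This lemma is the only step I expect to need genuine care, although it is folklore in quantum information.

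Finally, for item~(4): to show $\sem{\unia[S_1]\ast\unia[S_2]}\subseteq\sem{\unia[S_1\cup S_2]}$, take $\rho\models\unia[S_1]\ast\unia[S_2]$ with witnesses $\rho_1,\rho_2$ from Definition~\ref{def satisfaction BI}; then $\dom{\rho_1}\cap\dom{\rho_2}=\emptyset$, $S_i\subseteq\dom{\rho_i}$, $\rt{\rho_i}{S_i}=\id_{S_i}/\dim(S_i)$, and $\rho\succeq\rho_1\otimes\rho_2$, so $S_1\cup S_2\subseteq\dom{\rho}$ and $\rt{\rho}{S_1\cup S_2}=\rt{(\rho_1\otimes\rho_2)}{S_1\cup S_2}=\rt{\rho_1}{S_1}\otimes\rt{\rho_2}{S_2}=\frac{\id_{S_1}}{\dim(S_1)}\otimes\frac{\id_{S_2}}{\dim(S_2)}=\frac{\id_{S_1\cup S_2}}{\dim(S_1\cup S_2)}$, i.e.\ $\rho\models\unia[S_1\cup S_2]$. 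Conversely, given $\rho\models\unia[S_1\cup S_2]$, set $\rho_1=\id_{S_1}/\dim(S_1)\in\cD(S_1)$ and $\rho_2=\id_{S_2}/\dim(S_2)\in\cD(S_2)$; then $\rho_i\models\unia[S_i]$, $\rho_1\circ\rho_2$ is defined (disjoint domains) and equals $\id_{S_1\cup S_2}/\dim(S_1\cup S_2)=\rt{\rho}{S_1\cup S_2}$, so $\rho\succeq\rho_1\circ\rho_2$ and hence $\rho\models\unia[S_1]\ast\unia[S_2]$. This yields $\sem{\unia[S_1]\ast\unia[S_2]}=\sem{\unia[S_1\cup S_2]}$ and completes the plan.
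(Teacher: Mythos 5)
Your proposal is correct and follows essentially the same route as the paper's own proof (given for the extended version in the supplementary material): reduce each validity to an inclusion or equality of the sets $\sem{\cdot}$ via reflexivity of $\preceq$, and then argue with partial-trace identities, with item (4) handled by unfolding the $\ast$-witnesses exactly as the paper does via restriction. The only substantive difference is that for item (2) you actually prove the bipartite support lemma ($\supp(\sigma)\subseteq P\otimes Q$ iff the marginal supports lie in $P$ and $Q$), which the paper asserts without justification, so your write-up is slightly more complete at that point.
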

	
	\subsection{Restriction Property}
	\label{sec res}
	After choosing (the interpretation of) atomic propositions in the quantum
  frame $(\cD,\circ,\preceq,1)$, the semantics of all BI formulas can be defined
  using Definitions \ref{def satisfaction BI}. As is well-known, the frame rule 
  plays an essential role in separation logic, and in turn it heavily relies on
  the restriction property that satisfaction only depends on the free variables
  appearing in a BI formula $\phi$. The restriction property was also identified and
  generalized in prior work on probabilistic separation logic \cite{BHL19}.
	However, the restriction property: $$\rho\models\phi\Rightarrow
  \rt{\rho}{\free{\phi}}\models\phi$$ where $\free{\phi}$ stands for the free
  variables occurring in $\phi$, does not hold for our quantum setting, even for
  the ordinary implication $\phi=\phi_1\rightarrow\phi_2$ (see Definition
  \ref{def satisfaction BI} for its semantics). Essentially, the validity of the
  restriction property in the probabilistic setting can be attributed to a
  fundamental fact in probability theory---the existence of
  extensions.\footnote{For two joint-distributions $\mu_{AB}$ and $\mu_{BC}$
  over sets $A,B$ and $B,C$ respectively, if they are consistent on $B$ (with
the same marginal on $B$) then there exists joint-distribution $\mu_{ABC}$ over
$A,B,C$ which takes $\mu_{AB}$ and $\mu_{BC}$ as marginals.}
  Unfortunately, this does not always hold for quantum systems. Indeed, it is violated by the well-known phenomenon of  
  ``Monogamy'' --  one of the most fundamental properties of
  entanglement.\footnote{If two qubits $A$ and $B$ is maximally correlated, then they
  cannot be correlated at all with a third qubit $C$; more precisely, if $A$ and $B$ are in a maximally entangled state, then $A$ and $C$ cannot be in any entangled state.}
%  \jh{What is Monogamy here? The footnote doesn't really explain very well.}
 % \lz{I add a sentence in the footnote, is it clear now?}
	% 	\begin{example}[Failure of the Restriction Property] 
	% 		\label{exam failure restriction implication}
	% 		Consider three qubits $q_1,q_2,q_3$ and maximally entanglement (Bell states) $|\Psi^\pm_{ij}\> =  \frac{1}{\sqrt{2}}(|0\>_{q_i}|0\>_{q_j}\pm|1\>_{q_i}|1\>_{q_j})$ between $q_i$ and $q_j$ for $1\le i\neq j\le 3$. Their density matrices are $\Psi^\pm_{ij} = |\Psi^\pm_{ij}\>\<\Psi^\pm_{ij}|.$
	% 		Set $\rho = \Psi^+_{12}\in\cD(q_1,q_2)$. Let $\phi_1 = \Psi^+_{23}\in\cP(q_2,q_3),\ \phi_2 = \Psi^-_{23}\in\cP(q_2,q_3)$. Then: 
	% 		\begin{itemize}
	% 			\item $\rho\models\phi_1\rightarrow\phi_2$ is valid because there does not exist $\rho^\prime\succeq\rho$ such that $\rho^\prime\models\phi_1$; that is, no extension of $\Psi^+_{12}$ and $\Psi^+_{23}$ exists\footnote{Also known as monogamy of entanglement.}. 
	% 			\item It is easy to see that   $\rt{\rho}{\Lfree{\phi_1\rightarrow\phi_2}} = \frac{\id_{q_2}}{2}\in\cD(q_2)$. Choose $\rho^{\prime\prime} = \Psi^+_{23}\in\cD(q_2,q_3)$. It holds that $\rho^{\prime\prime}\succeq\rt{\rho}{\Lfree{\phi_1\rightarrow\phi_2}}$. Note that $\rho^{\prime\prime}\models\phi_1$, but $\rho^{\prime\prime}\models\phi_2$ is not true. Therefore, $\rt{\rho}{\Lfree{\phi_1\rightarrow\phi_2}}\not\models\phi_1\rightarrow\phi_2$.\end{itemize}
	% 	\end{example}
	
  Since we wish to have a frame rule in QSL, we need to recover the restriction property to a certain extent.
  While not all formulas satisfy this property, we can identify a subset of them  that
  do satisfy it.
	\begin{definition}
		\label{def Res}
		The formulas generated by following grammar are denoted by $\res$.%\footnote{In fact, we can also introduce a variant of implication $\phi\dimp\psi$ to $\res$ which defines as the abbreviation of $\bD[\free{\phi}\cup\free{\psi}]\wedge(\phi\rightarrow\psi)$.}. 
		$$
		\phi,\psi ::= p\in\AP\ |\ \top\ |\ \bot\ |\ \phi\wedge\psi\ |\ \phi\vee\psi\ |\ \phi\ast\psi
		$$
	\end{definition} 
	\begin{proposition}
		\label{pro res set}
		Any formula $\phi\in\res$ is restrictive; that is, for any $\rho\models\phi$, $\rt{\rho}{\free{\phi}}\models\phi$.
	\end{proposition}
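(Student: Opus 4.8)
The plan is to prove the statement by structural induction on $\phi\in\res$. Two auxiliary ingredients are used throughout. First, a \emph{persistence} (Kripke monotonicity) property for this fragment: $\rho\models\phi$ and $\rho\preceq\rho'$ imply $\rho'\models\phi$. For atomic propositions this is exactly the monotonicity requirement imposed on valuations (and it is checked for $\bD[S]$, $P$ and $\unia[S]$ in Section~\ref{sec free choice of AP}); it lifts to $\wedge,\vee,\ast$ by a routine induction, the $\ast$ clause using only transitivity of $\preceq$. Second, two elementary identities about restriction, both immediate from the definition of partial trace: (i) restrictions compose, $\rt{(\rt{\rho}{A})}{B}=\rt{\rho}{A\cap B}$, so in particular $A\subseteq B$ gives $\rt{\rho}{A}\preceq\rt{\rho}{B}$; and (ii) partial trace factors through $\otimes$ over disjoint domains: if $A\subseteq\dom{\rho_1}$, $B\subseteq\dom{\rho_2}$ and $\dom{\rho_1}\cap\dom{\rho_2}=\emptyset$, then $\rt{(\rho_1\otimes\rho_2)}{A\cup B}=\rt{\rho_1}{A}\otimes\rt{\rho_2}{B}$.

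For the base cases, observe that each defining clause \eqref{atom-0}, \eqref{atom-1}, \eqref{atom-3} tests membership of $\rho$ only through the conjunction of $\free{p}\subseteq\dom{\rho}$ with a condition on $\rt{\rho}{\free{p}}$; since $\free p\subseteq\dom\rho$ forces $\dom{\rt{\rho}{\free p}}=\free p$ and $\rt{(\rt{\rho}{\free p})}{\free p}=\rt{\rho}{\free p}$, the implication $\rho\models p\Rightarrow\rt{\rho}{\free p}\models p$ is immediate. For $\top$, $\free{\top}=\emptyset$ and $\rt{\rho}{\emptyset}=1\models\top$; for $\bot$ the claim is vacuous. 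For $\phi=\phi_1\wedge\phi_2$, with $\free{\phi}=\free{\phi_1}\cup\free{\phi_2}$, $\rho\models\phi$ gives $\rho\models\phi_i$ for $i=1,2$, hence by the induction hypothesis $\rt{\rho}{\free{\phi_i}}\models\phi_i$; as $\free{\phi_i}\subseteq\free{\phi}$ implies $\rt{\rho}{\free{\phi_i}}\preceq\rt{\rho}{\free{\phi}}$ by (i), persistence yields $\rt{\rho}{\free{\phi}}\models\phi_i$, so $\rt{\rho}{\free{\phi}}\models\phi$. The case $\phi=\phi_1\vee\phi_2$ is identical, keeping whichever disjunct $\rho$ satisfies.

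The $\ast$ case is the crux. Suppose $\rho\models\phi_1\ast\phi_2$, so there are $y,z\in\cD$ with $\dom{y}\cap\dom{z}=\emptyset$, $\rho\succeq y\otimes z$, $y\models\phi_1$, and $z\models\phi_2$. Put $y'\triangleq\rt{y}{\free{\phi_1}}$ and $z'\triangleq\rt{z}{\free{\phi_2}}$; by the induction hypothesis $y'\models\phi_1$ and $z'\models\phi_2$, and since $\dom{y'}\subseteq\dom{y}$ and $\dom{z'}\subseteq\dom{z}$ are disjoint, $y'\circ z'=y'\otimes z'$ is defined. It remains to show $\rt{\rho}{\free{\phi}}\succeq y'\otimes z'$. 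The domain condition is clear: $\dom{y'}\cup\dom{z'}\subseteq(\free{\phi_1}\cup\free{\phi_2})\cap\dom{\rho}=\dom{\rt{\rho}{\free{\phi}}}$. For the state condition, write $T=\dom{y'}\cup\dom{z'}$; then $T\subseteq\free{\phi}$ and $T\subseteq\dom{y}\cup\dom{z}$, and using (i), the identity $\rt{\rho}{\dom{y}\cup\dom{z}}=y\otimes z$, and (ii) in turn,
\begin{align*}
\rt{(\rt{\rho}{\free{\phi}})}{T}
&= \rt{\rho}{T}\\
&= \rt{(y\otimes z)}{T}\\
&= \rt{y}{\dom{y'}}\otimes\rt{z}{\dom{z'}}\\
&= y'\otimes z'.
\end{align*}
Hence $\rt{\rho}{\free{\phi}}\succeq y'\circ z'$ with $y'\models\phi_1$ and $z'\models\phi_2$, so $\rt{\rho}{\free{\phi}}\models\phi_1\ast\phi_2$, completing the induction.

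I expect the main obstacle to be precisely this $\ast$ step: choosing the right witnesses after cutting $\rho$ down to $\free{\phi}$ and verifying that restriction commutes with $\otimes$ as in (i)--(ii). The restriction of $\res$ to the $\{\wedge,\vee,\ast\}$-fragment is essential: for $\rightarrow$ and $\sepimp$ the universal quantifier over \emph{all} larger (resp.\ compatible) states cannot be reproduced after discarding part of $\rho$, which is exactly the obstruction witnessed by the monogamy phenomenon discussed above.
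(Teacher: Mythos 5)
Your proof is correct and follows essentially the same route as the paper's: structural induction using monotonicity together with the restriction property of atomic propositions, lifting through $\wedge,\vee$ by persistence, and handling $\ast$ by restricting the witnesses to their free-variable sets and checking that restriction commutes with $\otimes$ (the paper's ``careful treatment of variable sets''). Your explicit use of $T=\dom{y'}\cup\dom{z'}$ and the inequality $\rt{\rho}{\free{\phi}}\succeq y'\otimes z'$ just spells out in detail what the paper's proof of the $\ast$ case asserts via its partial-trace lemma.
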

	
	The above simple treatment of restriction property is sufficient for the purpose of this paper. A more intrinsic way for recovering this property in the quantum setting will be discussed in Section~\ref{sec:discussion}.

	\subsection{Quantum Modification of BI Formulas}
	\label{sec modification BI formulas}
  In classical program logic, substitution is used in the inference rule about
  assignment statements. In the quantum setting, due to no-cloning of quantum
  data, the role of assignment is played by two basic constructs: unitary
  transformation and initialization. We conclude this section by defining a
  technique of modifying BI formulas, which we will need reasoning about these
  operations.
	
	\begin{definition}[Modification of atomic propositions]
		\label{def sub atomic prop}
    Let $\prog$ be a unitary transformation $\qU$ or an initialisation $\qI$.
    For any $p\in\AP$, we write $p[\prog]$ for the $\prog$-modification of $p$.
    For the three classes of  atomic propositions defined in Sec. \ref{sec free choice of
    AP},  $p[\prog]$ is defined as follows:
	\begin{enumerate}
			\item For an atomic proposition $\bD[S]$ defined in Eq. (\ref{atom-0}), $\bD[S][\prog] \triangleq \bD[S]$;
			\item %\textrm{\rm (\!\!c.f. \cite{ZYY19})} 
			For an atomic  proposition  $P\in\cP$ as a   projection defined in Eq. (\ref{atom-1}), 
		\begin{align*}
			&P[\qU]\triangleq\left\{
			\begin{array}{ll}
			P_{U[\qbar]} & \text{if}\ \qbar\subseteq\free{P}; \\
			P &\qbar\cap\free{P}=\emptyset; \\
			{\rm undefined} & {\rm otherwise;}
			\end{array}
			\right. \\
			&P[\qI]\triangleq\left\{
			\begin{array}{ll}
			\bD[q]\wedge\lceil P\rceil_q & \text{if}\ q\in\free{P}; \\
			P &{\rm otherwise;}
			\end{array}
			\right. 
			\end{align*}
			%\footnotetext{Here $\bot$ stands for the ortho-complement, for not only the projections but Hermitian operators, in the sense that $A^\bot = \spa\{|\psi\>\in\cH_{\free{A}}: A|\psi\> = 0\}$. $\cE^\ast$ is dual of	$\cE$; in detail, $\cE^\ast(A) = \sum_iE_i^\dag A E_i$ if $\cE$ has the operator-sum representation $\cE(\rho) = \sum_iE_i \rho E_i^\dag$.}
			where projections $P_{U[\qbar]}$ and $\lceil P\rceil_q$ are given as follows: $$P_{U[\qbar]} = 
	(U^{\qbar\dag}\otimes\id_{\free{P}\backslash\qbar})P(U^{\qbar}\otimes\id_{\free{P}\backslash\qbar}),$$ and
	$\lceil P\rceil_q ={\qbigvee}\{{\rm ~closed~subspaces}\ T:
     |0\>_q\<0|\otimes T\subseteq P\} \in \cP(\free{P}\backslash q)$. Here,  $\qvee$ is the disjunction of projections  in quantum logic, that is, for projections $P,Q$ with the same domain, $P\qvee Q=\overline{\operatorname{\spa}(P\cup Q)}$ with \textquotedblleft$\overline{\, \cdot\, }$\textquotedblright\ standing for (topological)  closure.
			\item For any atomic proposition $\unia[S]\in\cU$ for uniformity defined in Eq.(\ref{atom-3}), 
			\begin{enumerate}
				\item If $\qbar\subseteq S$ or $\qbar\cap S=\emptyset$, then $\unia[S][\qU] \triangleq\unia[S]$; 
				
				otherwise, $\unia[S][\qU]$ is undefined;
				\item If $q\notin S$, then $\unia[S][\qI] \triangleq\unia[S]$; 
				
				otherwise, $\unia[S][\qI]$ is undefined.
			\end{enumerate}
		\end{enumerate}
		%For any $p\in\AP$, we call $p[\prog]$ for the $\prog$-modification of $p$. 
	\end{definition}
	
	The modification of some atomic propositions/BI formulas may not exist;
	we write $\phi[\prog]\Mexist$ whenever $\phi[\prog]$ is defined.	
	The notion of modification can be easily extended to all BI formulae: 
	\begin{definition}[Modification of BI formulas]
		\label{def sub 2-BI form} 
		Let $\prog$ be unitary transformation $\qU$ or initialisation $\qI$. The modification $\phi[\prog]$ of BI formula $\phi$ is defined by induction on the structure of $\phi$:
		\begin{enumerate}
			\item if $\phi\equiv \top$ or $\bot$, then $\phi[\prog] \triangleq \phi$;
			\item if $\phi\equiv p\in\AP$, then $\phi[\prog]$ is defined according to Definition \ref{def sub atomic prop};
			\item if $\phi\equiv \phi_1\ \triangle\ \phi_2$ where $\triangle\in\{ \wedge,\vee\}$ and $\phi_1[\prog]\Mexist$ and $\phi_2[\prog]\Mexist$, then $\phi[\prog]\triangleq\phi_1[\prog] \ \triangle\ \phi_2[\prog]$;
			\item if $\phi\equiv \phi_1\ast\phi_2$, $\phi_i[\prog]\Mexist$ and $\qbar\subseteq\free{\phi_i}$ or $\qbar\cap\free{\phi_i}=\emptyset$ for $i=1,2$, then
			\begin{enumerate}
				\item if $\prog\equiv\qU$, then $\phi[\prog]\triangleq\phi_1[\prog] \ast \phi_2[\prog]$;
				\item if $\prog\equiv\qI$, then
				\begin{itemize}
					\item if $q\notin\free{\phi_1}\cup\free{\phi_2}$, $\phi[\prog]\triangleq\phi_1[\prog]\ast \phi_2[\prog]$;
					\item if only one of $q\in\free{\phi_1}$, $q\in\free{\phi_2}$ is
            satisfied,
            then $\phi[\prog]\triangleq(\phi_1[\prog] \wedge \phi_2[\prog])\wedge(\bD[\free{\phi_1}\backslash
            q]\ast\bD[\free{\phi_2}\backslash q])$;
				\end{itemize}
        The reason for the complexity of this case will be seen in the program
        logic; roughly speaking, initialization on $q$ is special because it can
        introduce independence: it makes $q$ independent from all variables.
			\end{enumerate}
            \item otherwise, $\phi[\prog]$ is undefined.
		\end{enumerate}
	\end{definition}
%  \jh{Please use $\equiv$ or $\triangleq$ everywhere.}\lz{$\equiv$ for syntactic equivalence, and $\triangleq$ for definition.}
	
	%The modification of 2-BI formula is not as convention. Since implication is considered, it is necessary to find the weakest precondition of $\phi$, to make the proof rules \textsc{Init} and \textsc{Unit} sound. For example, an initialization command $\qI$ makes $q$ uncorrelated with all other registers. As a consequence, the postcondition $\id_{q}\ast\id_{q^\prime}$ which asserts that two registers $q$ and $q^\prime$ are independent does not imply the precondition should assert the independence of $q$ and $q^\prime$, e.g., $\id_{q}\ast\id_{q^\prime}$, even if $\id_{q}[\qI] = \id_{q}$ and $\id_{q^\prime}[\qI] = \id_{q^\prime}$. In addition, the assumption of command variables and domains is declared for modification of $\sdimp$ and $\sepimp$, since we failed to derive the modified formula for the case $\var(\prog)\supset\free{\phi_2}\backslash\free{\phi_1}$.
	
	A close connection between the semantics of a BI formula $\phi$ and its modification $\phi[\prog]$ is shown in the   following:  
	
	\begin{proposition}
		\label{pro modification}
		Let $\prog$ be unitary transformation $\qU$ or initialisation $\qI$, and $\phi$ be any BI formula. If its modification  $\phi[\prog]$ is defined, then:
		\begin{enumerate}
			\item $\phi$ and $\phi[\prog]$ have the same domain: $\free{\phi} = \free{\phi[\prog]}$;
			\item for all $\rho\in\cD(\free{\phi}\cup\var(\prog))$, if $\rho\models \phi[\prog]$, then $\sem{\prog}(\rho)\models \phi$.
		\end{enumerate}
	\end{proposition}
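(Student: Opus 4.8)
The plan is to establish item~1 by a routine induction on the structure of $\phi$ and then item~2 by a second induction that reuses item~1. A key preliminary remark is that whenever $\phi[\prog]$ is defined, $\phi$ is built only from atoms, $\top$, $\bot$, $\wedge$, $\vee$, $\ast$ --- Definition~\ref{def sub 2-BI form} leaves $\phi[\prog]$ undefined as soon as $\rightarrow$ or $\sepimp$ occurs --- so $\phi\in\res$; moreover every clause of Definitions~\ref{def sub atomic prop}--\ref{def sub 2-BI form} again produces an $\res$-formula, so $\res$ is closed under $\cdot[\prog]$ and Proposition~\ref{pro res set} (restrictivity) applies to $\phi$, to each of its subformulas, and to $\phi[\prog]$. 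Item~1 is then immediate: for the atoms one reads off $\free{p[\prog]}=\free{p}$ from Definition~\ref{def sub atomic prop} (using $\free{P_{U[\qbar]}}=\free{P}$, $\free{\lceil P\rceil_q}=\free{P}\backslash q$, and $q\in\free{P}$ in the $\qI$ clause), and for $\wedge,\vee,\ast$ one adds free-variable sets and invokes the induction hypothesis; the split-initialization clause is consistent because $\free{(\phi_1[\prog]\wedge\phi_2[\prog])\wedge(\bD[\free{\phi_1}\backslash q]\ast\bD[\free{\phi_2}\backslash q])}=\free{\phi_1}\cup\free{\phi_2}$.

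For item~2 I would first record two ``glue'' facts. (i)~If $\var(\prog)\cap\free{\phi}=\emptyset$ (for $\qI$: $q\notin\free{\phi}$) and $\phi[\prog]\Mexist$, then $\phi[\prog]=\phi$; this is a one-line side induction from the definitions. (ii)~If $\var(\prog)\subseteq W\subseteq T$, then $\rt{\sem{\prog}_T(\rho)}{W}=\sem{\prog}_W(\rt{\rho}{W})$: by Proposition~\ref{prop sem qo}, $\sem{\prog}_T=\sem{\prog}_{\var(\prog)}\otimes\cI_{T\backslash\var(\prog)}$, and tracing out $T\backslash W$ commutes with the identity part. Combining (i)--(ii) with the standard Kripke-monotonicity of $\models$ in BI models (valid because our valuations are monotone) and the restrictivity of $\res$-formulas gives a uniform scheme for moving between the given domain $\free{\phi}\cup\var(\prog)$ and the smaller domains needed to apply the induction hypothesis to a subformula $\phi_i$: restrict the relevant state to $\free{\phi_i[\prog]}=\free{\phi_i}$, note it still satisfies $\phi_i[\prog]$ by restrictivity, apply the hypothesis to obtain a state $\models\phi_i$, then lift back along $\preceq$ by monotonicity.

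The base cases are concrete partial-trace computations using $\sem{\qU}(\rho)=U\rho U^\dag$ and $\sem{\qI}(\rho)=|0\rangle_q\langle 0|\otimes\tr_q(\rho)$ (the latter from the definition of $\rho^q_0$). Both operations preserve the domain, so the $\bD[S]$ case is immediate. For a projection $P$ under $\qU$ with $\qbar\subseteq\free{P}$: $U$ acts inside $\free{P}$, hence $\rt{\sem{\qU}(\rho)}{\free{P}}=\widetilde U\,\rt{\rho}{\free{P}}\,\widetilde U^\dag$ with $\widetilde U=U^{\qbar}\otimes\id_{\free{P}\backslash\qbar}$, and $\supp(\rt{\rho}{\free{P}})\subseteq P_{U[\qbar]}=\widetilde U^\dag P\widetilde U$ transports under the unitary $\widetilde U$ to $\supp(\rt{\sem{\qU}(\rho)}{\free{P}})\subseteq P$; for $\qbar\cap\free{P}=\emptyset$, unitary invariance of the partial trace gives $\rt{\sem{\qU}(\rho)}{\free{P}}=\rt{\rho}{\free{P}}$. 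For $P$ under $\qI$ with $q\in\free{P}$: $\rt{\sem{\qI}(\rho)}{\free{P}}=|0\rangle_q\langle 0|\otimes\rt{\rho}{\free{P}\backslash q}$, so $\supp(\rt{\rho}{\free{P}\backslash q})\subseteq\lceil P\rceil_q$ yields support contained in $|0\rangle_q\langle 0|\otimes\lceil P\rceil_q\subseteq P$, using the distributivity $|0\rangle_q\langle 0|\otimes\qbigvee_i T_i=\qbigvee_i(|0\rangle_q\langle 0|\otimes T_i)$ together with the very definition of $\lceil P\rceil_q$. The $\unia[S]$ cases are immediate from $\widetilde U\,\frac{\id_S}{\dim(S)}\,\widetilde U^\dag=\frac{\id_S}{\dim(S)}$ and the same restriction identities.

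The inductive steps for $\wedge,\vee$ just apply the glue scheme to the relevant conjunct/disjunct; for $\phi_1\ast\phi_2$ under $\qU$ --- and under $\qI$ with $q$ outside both sides --- the side condition of Definition~\ref{def sub 2-BI form} puts $\qbar$ entirely inside one of the two witness states of a $\ast$-decomposition of $\rho$, so after shrinking the witnesses to domains $\free{\phi_1},\free{\phi_2}$ and applying the induction hypothesis / fact~(i) to each, one reassembles a $\ast$-decomposition of $\sem{\prog}(\rho)$. The delicate case --- and the step I expect to be the main obstacle --- is $\phi_1\ast\phi_2$ under $\qI$ when $q$ is free in exactly one side, say $q\in\free{\phi_1}$, $q\notin\free{\phi_2}$, where $\phi[\qI]=(\phi_1[\qI]\wedge\phi_2)\wedge(\bD[\free{\phi_1}\backslash q]\ast\bD[\free{\phi_2}])$. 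From $\rho\models\bD[\free{\phi_1}\backslash q]\ast\bD[\free{\phi_2}]$, using that the two $\bD$-witnesses have disjoint domains and $q\notin\free{\phi_2}$, I first deduce that $\free{\phi_1}$ and $\free{\phi_2}$ are disjoint and that $\tr_q(\rho)$ factorizes as $\rt{\rho}{\free{\phi_1}\backslash q}\otimes\rt{\rho}{\free{\phi_2}}$. Hence $\sem{\qI}(\rho)=|0\rangle_q\langle 0|\otimes\tr_q(\rho)=\bigl(|0\rangle_q\langle 0|\otimes\rt{\rho}{\free{\phi_1}\backslash q}\bigr)\otimes\rt{\rho}{\free{\phi_2}}$: the second factor is $\rt{\rho}{\free{\phi_2}}\models\phi_2$ by restrictivity of $\phi_2$, and the first factor is exactly $\sem{\qI}_{\free{\phi_1}}(\rt{\rho}{\free{\phi_1}})$, which satisfies $\phi_1$ by the induction hypothesis applied to $\phi_1$ (its premise $\rt{\rho}{\free{\phi_1}}\models\phi_1[\qI]$ coming from the first conjunct by restrictivity of $\phi_1[\qI]$). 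This presents $\sem{\qI}(\rho)$ as a $\ast$-decomposition into states satisfying $\phi_1$ and $\phi_2$, so $\sem{\qI}(\rho)\models\phi_1\ast\phi_2$, and $\free{\phi}=\dom{\sem{\qI}(\rho)}$ is clear. The additional $\bD[\cdot]\ast\bD[\cdot]$ conjunct in $\phi[\qI]$ is indispensable here precisely because $\rho\models\phi_1[\qI]\wedge\phi_2$ alone would not force this factorization of $\tr_q(\rho)$ across $\free{\phi_1}\backslash q$ and $\free{\phi_2}$ --- initialization of $q$ is exactly the operation that re-establishes that separation.
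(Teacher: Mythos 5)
Your proposal is correct and follows essentially the same route as the paper's proof: item~1 by structural induction, and item~2 by a case-by-case induction with the explicit partial-trace computations for $\bD[S]$, projections and $\unia[S]$ under $\qI$ and $\qU$, restrictivity of $\res$-formulas (Proposition~\ref{pro res set}), and the factorization argument for the $\ast$/$\qI$ case. The only difference is one of detail: where the paper delegates the key step of that delicate case to an unproved auxiliary lemma (its Lemma on how initialization interacts with product decompositions), you prove the needed factorization of $\tr_q(\rho)$ directly from the $\bD[\free{\phi_1}\backslash q]\ast\bD[\free{\phi_2}]$ conjunct, which is a welcome filling-in of the paper's ``can be realized easily'' claim rather than a different method.
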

	
%     As pointed, modification is a quantum version of substitution. Recall the assignment rule $\{\phi[e/x]\}x:=e\{\phi\}$ for classical assignment, we now safely have its quantum counterpart 
% 	$\{\phi[q:=|0\>]\}q:=|0\>\{\phi\}$ and $\{\phi[\qbar:=U[\qbar]\}\qbar:=U[\qbar]\{\phi\}$ for reasoning initialization and unitary transformation.  

	\section{Separation Logic for Quantum Programs}
	\label{sec QSL}
	
  Now we are ready to present our separation logic for quantum programs, using
  quantum BI formulas as the assertion language. 	
	
	\subsection{Judgments and Validity}
	
  Let us first define judgments (correctness formulas) in quantum separation
  logic. A judgment is a Hoare triple of the form $\{\phi\}\prog\{\psi\}$ with
  both precondition $\phi$ and postcondition $\psi$ being \emph{restrictive} BI
  formulas (cf. Definition \ref{def Res}).
	
	\begin{definition}[Validity] Let $\vars$ be a set of quantum variables with $\free{\phi},\free{\psi},\var(\prog)\subseteq\vars$. Then a correctness formula $\{\phi\}\prog\{\psi\}$ is true in the sense of partial correctness with respect to $\vars$, written 
		$\vars \models\{\phi\}\prog\{\psi\}$, 
		if we have:
		$$\forall \rho\in\cD(\vars),\quad \rho\models\phi\Rightarrow\sem{\prog}_{\vars}(\rho)\models\psi.$$ Here, satisfaction relation $\rho\models\phi$ and $\sem{\prog}_{\vars}(\rho)\models\psi$ are defined according to the quantum interpretation of BI logic given in Section \ref{sec-q-interpret}.
	\end{definition}
	
	%	\begin{proposition}[Extension and Restriction of Satisfaction]
	%		\label{prop ext res sat}
	%		Suppose $\models^\vars\{\phi\}\prog\{\psi\}$, then 
	%		\begin{itemize}
	%			\item[--] Extension: for any $\vars^\prime$ such that %$\vars^\prime\supseteq\vars$, $\models^{\vars^\prime}\{\phi\}\prog\{\psi\}$;
	%			\item[--] Restriction: for any $\vars^\prime$ such that %$\free{\phi}\cup\free{\psi}\cup\var(\prog)\subseteq \vars^\prime\subseteq \vars$, %$\models^{\vars^\prime}\{\phi\}\prog\{\psi\}$.
	%		\end{itemize}
	%	\end{proposition}
	
	The following theorem indicates that satisfaction does not depends on auxiliary variables.
	\begin{theorem}
		\label{thm eq glb var set}
		For any two sets $\vars$ and $\vars^\prime$ containing all free variables of $\phi, \psi$ and $\prog$,
		$$\vars\models\{\phi\}\prog\{\psi\} \text{\ if\ and\ only\ if\ }\vars^\prime\models\{\phi\}\prog\{\psi\}.$$
	\end{theorem}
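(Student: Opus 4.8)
The plan is to reduce to the case of a single nested pair $\vars\subseteq\vars'$ and then transport satisfaction back and forth across the extra variables $W\triangleq\vars'\setminus\vars$, using three ingredients: cylindricity of the denotational semantics (Proposition~\ref{prop sem qo}), \emph{Kripke monotonicity} of $\models$ for every BI formula, and the \emph{restriction property} for the restrictive formulas $\phi,\psi$ (Proposition~\ref{pro res set}). The reduction is immediate: by symmetry and transitivity of the biconditional it suffices to show $\vars\models\{\phi\}\prog\{\psi\}$ iff $(\vars\cup\vars')\models\{\phi\}\prog\{\psi\}$, so we may assume $\vars\subseteq\vars'$ and put $W\triangleq\vars'\setminus\vars$, which is disjoint from $\var(\prog)$ since $\var(\prog)\subseteq\vars$.

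\textbf{Two lemmas I would establish first.} (i) \emph{Semantics commutes with discarding fresh variables}: for every $\sigma\in\cD(\vars')$, $\rt{\sem{\prog}_{\vars'}(\sigma)}{\vars}=\sem{\prog}_{\vars}(\rt{\sigma}{\vars})$. By Proposition~\ref{prop sem qo}, $\sem{\prog}_{\vars'}=\sem{\prog}_{\var(\prog)}\otimes\cI_{\vars\setminus\var(\prog)}\otimes\cI_W$ while $\sem{\prog}_{\vars}=\sem{\prog}_{\var(\prog)}\otimes\cI_{\vars\setminus\var(\prog)}$; since $\tr_W$ commutes with any quantum operation of the form $\Lambda\otimes\cI_W$ and $\rt{(\cdot)}{\vars}=\tr_W$ on $\cD(\vars')$, the identity follows. (ii) \emph{Kripke monotonicity}: if $\rho\models\chi$ and $\rho\preceq\rho^\prime$ then $\rho^\prime\models\chi$, for every BI formula $\chi$. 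This is a routine induction on $\chi$: the atomic case is the monotonicity assumption on the valuation (Section~\ref{sec brief review of BI}), the $\wedge,\vee,\rightarrow$ cases are direct from Definition~\ref{def satisfaction BI}, and the $\ast,\sepimp$ cases use the induction hypothesis together with the fact, read off from Definition~\ref{def partial order quantum state}, that $\rho\preceq\rho^\prime$ with $\dom\tau$ disjoint from $\dom{\rho^\prime}$ implies $\rho\circ\tau\preceq\rho^\prime\circ\tau$.

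\textbf{Main argument.} For $(\vars'\models)\Rightarrow(\vars\models)$: take $\rho\in\cD(\vars)$ with $\rho\models\phi$; fix any normalized $\sigma\in\cD(W)$ and set $\rho^\prime\triangleq\rho\otimes\sigma$, so $\rho\preceq\rho^\prime$ and hence $\rho^\prime\models\phi$ by (ii). The hypothesis gives $\sem{\prog}_{\vars'}(\rho^\prime)\models\psi$; since $\psi\in\res$ with $\free\psi\subseteq\vars$, Proposition~\ref{pro res set} yields $\rt{\sem{\prog}_{\vars'}(\rho^\prime)}{\free\psi}\models\psi$, which by (i) equals $\rt{\sem{\prog}_{\vars}(\rho)}{\free\psi}$; this is $\preceq\sem{\prog}_{\vars}(\rho)$, so (ii) gives $\sem{\prog}_{\vars}(\rho)\models\psi$. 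For $(\vars\models)\Rightarrow(\vars'\models)$: take $\rho^\prime\in\cD(\vars')$ with $\rho^\prime\models\phi$; since $\phi\in\res$, Proposition~\ref{pro res set} gives $\rt{\rho^\prime}{\free\phi}\models\phi$, and because $\free\phi\subseteq\vars$ and $\rt{\rho^\prime}{\free\phi}\preceq\rt{\rho^\prime}{\vars}$, lemma (ii) gives $\rt{\rho^\prime}{\vars}\models\phi$. The hypothesis then gives $\sem{\prog}_{\vars}(\rt{\rho^\prime}{\vars})\models\psi$, which by (i) equals $\rt{\sem{\prog}_{\vars'}(\rho^\prime)}{\vars}\preceq\sem{\prog}_{\vars'}(\rho^\prime)$, so (ii) gives $\sem{\prog}_{\vars'}(\rho^\prime)\models\psi$.

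\textbf{Expected obstacle.} The delicate point — and the reason the judgment is restricted to formulas in $\res$ — is the single appeal to the restriction property (Proposition~\ref{pro res set}), to $\psi$ in the first direction and to $\phi$ in the second: this step passes from a state on the larger register to its reduct, which fails for general BI formulas (in particular those containing $\rightarrow$), as discussed in Section~\ref{sec res}. Everything else is bookkeeping; lemma (i) is routine but must be stated carefully, since that is precisely where Proposition~\ref{prop sem qo} enters, and lemma (ii) must be checked for the substructural connectives, not just the additive ones.
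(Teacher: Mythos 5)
Your proposal is correct and follows essentially the same route as the paper's own proof: reduce to the nested case $\vars\subseteq\vars'$, commute $\sem{\prog}$ with the partial trace over the fresh variables via Proposition~\ref{prop sem qo}, and transfer satisfaction of $\phi$ and $\psi$ across $\preceq$ using Kripke monotonicity together with the restriction property of $\res$ (the paper packages these two as Proposition~\ref{pro BI res mon}). The only differences are presentational: you make explicit the reduction through $\vars\cup\vars'$ and the monotonicity induction, which the paper leaves implicit.
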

 % \jh{Even if $\vars'$ doesn't contain free variables of $\phi, \psi, C$??}
%  \lz{In above definition, correctness formular is defined only if $\vars,\vars^\prime$ contain all free variables of $\phi, \psi, C$. But we can reemphasize this here.}
	
	As a consequence, we can drop $\vars$ from $\vars\models\{\phi\}\prog\{\psi\}$ and simply write $\models\{\phi\}\prog\{\psi\}$. 
	
	In the remainder of this section, we gradually develop the proof system for our quantum separation logic. For better readability, this proof system is organised as several sets of inference rules.

	\subsection{Inference Rules for Program Constructs}
	
	%	\todo{NK: Highlight nontrivial rules.}
	
	\begin{figure}\centering
		\begin{equation*}\begin{split}
		&\textsc{Skip}\ \frac{}{\{\phi\}\mathbf{skip}\{\phi\}}\quad 
		\textsc{Init}\ \frac{\phi[\qI]\Mexist}{\left\{\phi[\qI] \right\}\qI\{\phi\}} \\[0.1cm]
		&\textsc{Unit}\ \frac{\phi[\qU]\Mexist}{
			\{\phi[\qU]\}\qbar:=U\left[\qbar\right]\{\phi\}}\\[0.1cm]
		%		&\textsc{Perm}\quad \frac{}{\{\phi[\qbar^\prime\mapsto\qbar]\}\qbar:=\perm(\qbar\mapsto\qbar^\prime)[\qbar]\{\phi\}} \\[0.1cm]
		&\textsc{Seq}\quad
		\frac{\{\phi\}\prog_1\{\psi\}\ \ \ \ \ \ \{\psi\}\prog_2\{\mu\}}{\{\phi\}\prog_1;\prog_2\{\mu\}}\\[0.1cm]
		%\textsc{For}\quad \frac{\{\phi_i\}\prog_i\{\phi_{i+1}\}\ {\rm for\ all}\ 1\le i\le N}{\{\phi_1\}\mathbf{for}\ i=1,2,\cdots,N\ \mathbf{do}\ \prog_i\ \mathbf{od}\{\phi_{N+1}\}} \\[0.1cm]
		%&\textsc{DIf}\quad
		%\frac{\left\{\phi_m\right\}\prog_m\{\psi\}\ {\rm for\ all}\ m}{\big\{\bigvee_m(M_m\wedge \phi_m)\big\}\mathbf{if}\cdots\mathbf{fi}\{\psi\}}\\[0.1cm]
		%&\textsc{DLoop}\quad
		%\frac{\{\phi\}\prog\{(M_0\wedge \psi)\vee (M_1\wedge \phi)\}}{\{(M_0\wedge \psi)\vee (M_1\wedge \phi)\}\mathbf{while}\{\psi\}} \\[0.1cm]
		&\textsc{RIf}\quad\frac{\left\{\phi\ast M_m\right\}\prog_m\{\psi\}\ {\rm for\ all}\ m\quad \psi\in \text{CM}}{\{\phi\ast\bD(\qbar)\}\mathbf{if}\cdots\mathbf{fi}\{\psi\}}\\[0.1cm]
		&\textsc{RLoop}\quad\frac{\{\phi\ast M_1\}\prog\{\phi\ast\bD(\qbar)\}\quad\phi\in\text{CM}}{\{\phi\ast\bD(\qbar)\}\mathbf{while}\{\phi\wedge M_0\}} 
		\end{split}\end{equation*}
		\caption{Inference Rules for Program Constructs. In \textsc{Init} and \textsc{Unit}, $\Mexist$ means the existence of modification. In \textsc{RIf} and \textsc{RLoop}, $\mathbf{if}\cdots\mathbf{fi}$ and $\mathbf{while}$ are abbreviations of $\mathbf{if}\ (\square m\cdot M[\qbar] = m \rightarrow \prog_m )\ \mathbf{fi}$ and $\mathbf{while}\ M[\qbar]=1\ \mathbf{do}\ \prog\ \mathbf{od}$ respectively, and  $M_0,M_1,M_m$ in assertions are regarded as projective predicates acting on $\qbar$. In \textsc{Perm}, $\perm(\qbar\mapsto\qbar^\prime)[\qbar]$ stands for the unitary transformation which permutes the variables from $\qbar$ to $\qbar^\prime$ (see Section \ref{sec basic Quantum} for details).
		}
		\label{fig proof system 1}
	\end{figure}

	The first set of our inference rules are designed for reasoning about basic quantum program constructs and displayed in Fig. \ref{fig proof system 1}. Some of them deserve careful explanations:
	
	\vspace{0.2cm}
	
    \noindent $\ \bullet\, $ \textbf{Rules} \textsc{Init} \textbf{and} \textsc{Unit}: With
      the definition of modification of BI formulas and Proposition \ref{pro
      modification} in mind, the rules \textsc{Init} and \textsc{Unit} are
      similar to the (backwards) inference rule $\{\phi[e/x]\}x:=e\{\phi\}$ for
      assignment in classical program logics.
		
		%It is worth noting that our modifications are also valid for separating conjunction $\ast, \sd$ and spatial implication $\sepimp, \sdimp$, though they are not idiomatic due to the property of quantum independence and changes in definition of satisfaction relation of assertions.
		
		%		\item	\noindent\textbf{Rule} \textsc{Perm}: At the first glance, one may think that this rule is a special case of rule \textsc{Unit} because permutation is a unitary transformation. Indeed, it is strictly stronger that what can be derived from \textsc{Unit} because entanglement is not invariant under a permutation between quantum registers; in particular when the 2-BI formulas describing the involved quantum systems contain independence conjunction $\ast$ and implication $\sepimp$.
		
		%    \noindent\textbf{Rules} \textsc{Seq} \textbf{and} \textsc{For}: rule \textsc{Seq} works as usual. Since (deterministic) $\mathbf{for}$ statement is an abbreviation of sequential subprograms, rule \textsc{For} is just a combination of rule \textsc{Seq}.

	\vspace{0.2cm}
	
    \noindent$\ \bullet\, $ \textbf{Rules} \textsc{RIf} \textbf{and} \textsc{RLoop}:
      These two rules use the separating conjunction to perform reasoning about
      different execution paths. 
      Note that condition $\phi\in {\rm CM}$ is imposed in the premises of the rules \textsc{RIf} \textbf{and} \textsc{RLoop}. 
      
The set ${\rm CM}$ of assertions is formally defined as follows:
	\begin{definition}
		\label{def CM}
			A formula $\phi$ is closed under mixtures (CM), written $\phi\in{\rm CM}$, if for any $\rho,\rho^\prime$, whenever  $\dom{\rho}=\dom{\rho^\prime}$, $\rho\models\phi$ and $\rho^\prime\models\phi$, we have:
			$\forall\ \lambda\in[0,1],\ \lambda\rho+(1-\lambda)\rho^\prime\models\phi.$
		\end{definition}
		\begin{example}
			For two projections $P_0 = |0\rangle\langle0|$ and $P_1 = |1\rangle\langle1|$, $P_0\wedge P_1$ is ${\rm CM}$, but $P_0\vee P_1$ is not ${\rm CM}$ (both states $|0\rangle\langle0|$ and $|1\rangle\langle1|$ satisfies $P_0\vee P_1$, but their affine combination $\frac{I}{2} = \frac{|0\rangle\langle0|+|1\rangle\langle1|}{2}$ does not satisfy $P_0$ nor $P_1$ and thus does not satisfy $P_0\vee P_1$).
		\end{example}
		To see why  the  condition $\phi\in {\rm CM}$ necessary, we note that a quantum program can be executed in different paths with non-zero probabilities, and its semantic function maps the input to a weighted summation of the outputs from different execution paths. The condition $\phi\in {\rm CM}$ is introduced so that satisfaction relation is preserved under affine combination. 
		The following proposition identifies a class of formulas closed under mixture. 
		
		\begin{proposition}
			\label{prop CM}
			The formulas generated by following grammar are CM:
			$$
			\phi,\psi ::= p\in\AP\ |\ \top\ |\ \bot\ |\ \phi\wedge\psi\ |\ \unia[S]\ast\phi
			$$
		\end{proposition}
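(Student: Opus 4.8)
The plan is a structural induction on the grammar in the statement. For the atomic base cases, observe first that $\dom{(\lambda\rho+(1-\lambda)\rho')}=\dom{\rho}=\dom{\rho'}$, so $\top$, $\bot$ (vacuously), and $\bD[S]$ are trivially CM since their satisfaction depends only on the domain. For $\unia[S]$, linearity of the partial trace gives $\rt{(\lambda\rho+(1-\lambda)\rho')}{S}=\lambda\,\rt{\rho}{S}+(1-\lambda)\,\rt{\rho'}{S}=\frac{\id_S}{\dim(S)}$, so uniformity is preserved. For a projection $P\in\cP$, linearity again gives $\rt{(\lambda\rho+(1-\lambda)\rho')}{\free{P}}=\lambda\,\rt{\rho}{\free{P}}+(1-\lambda)\,\rt{\rho'}{\free{P}}$; since a vector killed by two positive operators is killed by any convex combination of them, $\supp\big(\lambda A+(1-\lambda)B\big)\subseteq\supp(A)\qvee\supp(B)$ for positive $A,B$, and combining this with $\supp(\rt{\rho}{\free{P}}),\supp(\rt{\rho'}{\free{P}})\subseteq P$ and the fact that $P$ is a (closed) subspace yields that the mixture lies in $\sem{P}$.

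For the inductive step, the case $\phi_1\wedge\phi_2$ is immediate from the induction hypothesis applied to $\phi_1,\phi_2$ (both in the fragment). The substantive case is $\unia[S]\ast\psi$, for which I would first prove a characterization: $\rho\models\unia[S]\ast\psi$ iff $S\cap\free{\psi}=\emptyset$, $S\cup\free{\psi}\subseteq\dom{\rho}$, $\rt{\rho}{\free{\psi}}\models\psi$, and $\rt{\rho}{S\cup\free{\psi}}=\frac{\id_S}{\dim(S)}\otimes\rt{\rho}{\free{\psi}}$. For the ``only if'' direction, unfold Definition \ref{def satisfaction BI}: witnesses $y,z$ with $y\circ z$ defined, $\rho\succeq y\circ z$, $y\models\unia[S]$, $z\models\psi$ force $\dom{y}\cap\dom{z}=\emptyset$ (hence $S\cap\free{\psi}=\emptyset$) and $\rt{\rho}{\dom{y}\cup\dom{z}}=y\otimes z$; tracing out appropriately, and using that $y$ has unit trace and that $\psi$ is restrictive (this grammar is a sub-grammar of the one defining $\res$, so Proposition \ref{pro res set} applies), one gets $\rt{\rho}{\free{\psi}}=\rt{z}{\free{\psi}}\models\psi$ and $\rt{\rho}{S\cup\free{\psi}}=\rt{y}{S}\otimes\rt{z}{\free{\psi}}=\frac{\id_S}{\dim(S)}\otimes\rt{\rho}{\free{\psi}}$. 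For the ``if'' direction, these conditions let one exhibit $y\triangleq\frac{\id_S}{\dim(S)}$ over $S$ and $z\triangleq\rt{\rho}{\free{\psi}}$ over $\free{\psi}$ as the required witnesses.

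Granting the characterization, CM for $\unia[S]\ast\psi$ is quick: if $\rho,\rho'$ share a domain and both satisfy it, then $\psi$ is satisfied by $\rt{\rho}{\free{\psi}}$ and $\rt{\rho'}{\free{\psi}}$, which have the common domain $\free{\psi}$, so the induction hypothesis on $\psi$ gives $\rt{(\lambda\rho+(1-\lambda)\rho')}{\free{\psi}}\models\psi$; meanwhile linearity of the partial trace propagates both the domain inclusion and the tensor factorization to $\sigma\triangleq\lambda\rho+(1-\lambda)\rho'$, namely $\rt{\sigma}{S\cup\free{\psi}}=\frac{\id_S}{\dim(S)}\otimes\rt{\sigma}{\free{\psi}}$, so the characterization yields $\sigma\models\unia[S]\ast\psi$. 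The main obstacle is the characterization lemma for the separating conjunction --- in particular, arguing that the witnesses supplied by the $\ast$-semantics can be normalized to live exactly on $S$ and on $\free{\psi}$, which is precisely where restrictivity of $\psi$ and unit-trace of density operators are used; everything else is routine bookkeeping with restrictions and the linearity of partial trace.
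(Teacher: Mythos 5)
Your proposal is correct and follows essentially the same route as the paper: the paper's appendix proof (of a slightly more general statement where $\unia[S]$ is replaced by an arbitrary supported assertion) also handles the atomic and $\wedge$ cases exactly as you do, and treats $\mu\ast\phi$ via the product-form characterization of $\ast$ (its Lemma~\ref{lem sound proof 8}) combined with the fact that the left factor is pinned to a single state, so that convex combination preserves the tensor factorization while the induction hypothesis handles the right factor. Your characterization lemma for $\unia[S]\ast\psi$, including the normalization of witnesses using restrictivity of $\psi$ and unit trace, is precisely that lemma specialized to $\unia[S]$, so there is no substantive difference in approach.
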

		
		We need to pay special attention on the application of 
		separating conjunctions $\ast$ in \textsc{Rif} and \textsc{RLoop}. Since the
    quantum measurement in the guards of {\bf if}-statements and {\bf while}
    loops may change the quantum state, we hereby consider a special kind of
    inputs that satisfying $\phi \ast \id_\qbar$. Thus the subsystem being
    measured is uncorrelated to the part of the state described by $\phi$, which ensures that the post-measurement state still satisfies $\phi$. In \textsc{RLoop}, although $\phi\ast M_0$ is satisfied for each path, it does not belong to CM in general. Thus, only a weaker postcondition $\phi\wedge M_0\in {\rm CM}$ can be achieved. 
	%\end{itemize}

	%	\begin{remark}
	%		The rule \textsc{DIf} and \textsc{DLoop} are weaker than those of (R.IF-a)  and (R.LP-a), the reasons are: (1). as we introduce uniformity propositions as atomic proposition, so the properties of projective predicates are no longer hold, e.g., $\qvee$ is only well-defined for projections; (2). $\qvee$ is essentially different from $\vee$, $\sd$ and $\ast$; if only projective predicates are considered, it is possible to introduce it as a new disjunction; (3). for atomic propositions, the satisfaction holds for affine combination(see Proposition \ref{prop atomic form}), however, the classical disjunction $\vee$ violates such property; on the other hand, the satisfaction for affine combination is one of the essential properties that makes (R.IF-a) and (R.LP-a) sound.
	%		
	%		{\color{red}\bf It is a trade-off between introducing uniformity properties and classical disjunction and stronger proof rules.}
	%	\end{remark}
	%	
	%	In fact, the rule \textsc{DIF} and \textsc{DLoop} is only applicable for deterministic {\bf if} and {\bf while}; that is, the program choose one branch to execute with probability one. To reason about non-deterministic {\bf if} and {\bf while}, we need to find a subclass of formulas whose satisfaction is preserved under affine combination (mixture).

	\subsection{Structural rules}
	
	\begin{figure}\centering
		\begin{equation*}\begin{split}
		&\textsc{Weak}\quad  \frac{\phi\gimp\phi^\prime\quad \{\phi^{\prime}\}\prog\{\psi^{\prime}\}\quad 
			\psi^{\prime}\gimp\psi  }{\{\phi\}\prog\{\psi\}} \\
		&\textsc{Conj}\quad \frac{\{\phi_1\}\prog\{\psi_1\}\quad\{\phi_2\}\prog\{\psi_2\}}{\{\phi_1\wedge \phi_2\}\prog\{\psi_1\wedge \psi_2\}} \\[0.1cm]
		&\textsc{Disj}\quad \frac{\{\phi_1\}\prog\{\psi_1\}\quad\{\phi_2\}\prog\{\psi_2\}}{\{\phi_1\vee\phi_2\}\prog\{\psi_1\vee\psi_2\}} \\
		&\textsc{Const}\quad \frac{\{\phi\}\prog\{\psi\}\quad\free{\mu}\cap\var(\prog)=\emptyset}{\{\phi\wedge\mu\}\prog\{\psi\wedge\mu\}} \\[0.1cm]
		&\textsc{Frame}\quad \frac{\begin{split}\{\phi\}\prog\{\psi\}&\quad\free{\mu}\cap\var(\prog)=\emptyset\\ \free{\psi}\cup\var&(\prog)\subseteq\free{\phi}\text{\ or\ }\psi\in{\rm SP} \end{split}}{\{\phi\ast\mu\}\prog\{\psi\ast\mu\}} %\\[0.1cm]
		%&\textsc{FrameE}\quad \frac{\begin{split}\{\phi\}&\prog\{\psi\}\quad\free{\mu}\cap\var(\prog)=\emptyset\\ &\free{\psi}\subseteq\free{\phi}\cup\var(\prog)\end{split}}{\{\phi\sd\mu\}\prog\{\psi\sd\mu\}} 
		\end{split}\end{equation*}
		\caption{Structural Rules. Since $\gimp$ is strictly weaker than $\rightarrow$, \textsc{Weak} is stronger than ordinary weak rule.
		}\label{fig proof system 2}
	\end{figure}
	
	The second set of rules consists of the structural rules, presented in Fig. \ref{fig proof system 2}. %Since we proved that auxiliary variables do not affect satisfaction (see Theorem \ref{thm eq glb var set}), we can safely discuss the validity of judgments without deliberately considering auxiliary variables.
	The rules \textsc{Conj} and \textsc{Disj} are similar to their counterparts in
  classical program logics. 	To explain the other rules,
let us fist define the global implication:		
		\begin{definition}[Global implication]
			For any BI formulas $\phi,\psi$, the global implication $\phi\gimp\psi$ is defined as the abbreviation of $\bD[\free{\phi}\cup\free{\psi}]\rightarrow(\phi\rightarrow\psi)$.
		\end{definition}
		Trivially, $\gimp$ is strictly weaker than $\rightarrow$. The difference is that, $\phi\gimp\psi$ is already enough to ensure that for any state $\rho$ with $\dom{\rho}\supseteq\free{\phi}\cup\free{\psi}$, $\rho\models\phi$ implies $\rho\models\psi$. For example, we have following proposition:
		\begin{proposition}
		    \label{prop glo imp}
			%\begin{enumerate}
				%\item For all $S\subseteq \vars$, $\models \bD[S]\gimplr \top$.
				%\item 
				For all $\phi\in{\rm Res}$ and $S\subseteq \vars$, it holds that $\models \phi\gimplr \phi\wedge \bD[S].$	
			%\end{enumerate}
		\end{proposition}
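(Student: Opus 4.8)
The plan is to reduce validity of the global bi-implication to an elementary statement about domains, and then dispatch that statement by unfolding the semantics of $\wedge$ and of $\bD[S]$. Write $\psi\triangleq\phi\wedge\bD[S]$, so that $\free{\psi}=\free{\phi}\cup S$ (using $\free{\bD[S]}=S$). Unfolding the abbreviation for $\gimp$, and hence for $\gimplr$, via Definition \ref{def satisfaction BI}, $\models\phi\gimplr\psi$ says: for every $\rho_0$, every $\rho'\succeq\rho_0$ with $\dom{\rho'}\supseteq\free{\phi}\cup S$, and every $\rho''\succeq\rho'$, we have $\rho''\models\phi$ iff $\rho''\models\psi$. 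Since $\preceq$ is reflexive and $\rho''\succeq\rho'$ forces $\dom{\rho'}\subseteq\dom{\rho''}$ (Definition \ref{def partial order quantum state}), this nested condition collapses to the single one: for every $\rho\in\cD$ with $\dom{\rho}\supseteq\free{\phi}\cup S$, $\rho\models\phi$ iff $\rho\models\phi\wedge\bD[S]$ (one direction takes $\rho_0=\rho'=\rho''=\rho$; the other uses that any admissible $\rho''$ already has $\dom{\rho''}\supseteq\free{\phi}\cup S$). This is exactly the content of the informal remark preceding the proposition, so the only care needed at this step is correctly unwinding the iterated $\preceq$-quantifiers.

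Next I would prove the collapsed equivalence directly, with $\rho$ fixed so that $\dom{\rho}\supseteq\free{\phi}\cup S$. The direction $\rho\models\phi\wedge\bD[S]\Rightarrow\rho\models\phi$ is immediate from the clause for $\wedge$ in Definition \ref{def satisfaction BI}. For the converse, assume $\rho\models\phi$; since $S\subseteq\dom{\rho}$, the defining equation (\ref{atom-0}) for $\sem{\bD[S]}$ gives $\rho\in\sem{\bD[S]}$, i.e.\ $\rho\models\bD[S]$, and combining the two satisfactions via the clause for $\wedge$ yields $\rho\models\phi\wedge\bD[S]$. This closes the argument.

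I do not expect a genuine obstacle: the whole point is that the antecedent $\bD[\free{\phi}\cup S]$ built into the global (bi-)implication already supplies the hypothesis $\dom{\rho}\supseteq S$, which is precisely what makes $\bD[S]$ hold ``for free.'' In particular the hypothesis $\phi\in{\rm Res}$ is not actually used---the argument above goes through verbatim for an arbitrary BI formula $\phi$---it appears in the statement only because ${\rm Res}$ is the fragment allowed as pre- and post-conditions in the program logic.
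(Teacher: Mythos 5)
Your proof is correct and follows essentially the same route as the paper's: both reduce $\models\phi\gimplr\phi\wedge\bD[S]$ to the statement that every $\rho$ with $\dom{\rho}\supseteq\free{\phi}\cup S$ satisfies $\phi$ iff it satisfies $\phi\wedge\bD[S]$, which is immediate because such a $\rho$ automatically satisfies $\bD[S]$. Your side remark that the hypothesis $\phi\in{\rm Res}$ is never used is also consistent with the paper's own (terser) argument, which likewise does not invoke it.
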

    
    Now we are ready to carefully examine the remaining rules in Fig. \ref{fig proof system 2}.  
	
	\vspace{0.2cm}
	
	\noindent$\bullet\ $\textbf{Rules} \textsc{Weak}:
 This rule is also similar to its counterpart in classical program logics, but there is a subtle difference between them.   Since only global states (i.e. the states whose domain
    contains all free variables appearing in the assertions and programs) are 
    considered in  defining the validity of the Hoare triple, we use  $\gimp$ in the premise of the \textsc{Weak} rule for comparing assertions. It is easy to see that the rule is also sound when using $\rightarrow$, but the  \textsc{Weak} rule with $\gimp$ is stronger. 

%    \jh{Confusing statement. Is the weak rule not sound with regular
   % implication? if so, we should just say that. }
%    \lz{Sound but too weak}.
%	\end{itemize}

	% rule \textsc{Const} is also sound for quantum programs. It states that if any  variable appeared in program $\prog$ is not included in the domain of $\mu$, then the execution of $\prog$ will not affect the satisfaction of state to $\mu$, and thus $\mu$ may be conjoined to the pre- and post-conditions. The principle behind this rule is the restriction property of assertions, i.e., the satisfaction of $\mu$ depends only on the reduced state over subsystem $\free{\mu}$, which remains unchanged after executing $\prog$. 
	
	%\todo{} We have two frame rules, one for the uncorrelated conjunction $\ast$ and one for the general (possibly entangled) conjunction $\sd$. Their appearances are very similar to the frame rule in classical and probabilistic separation logics, but the intuition behind them are quite different: 
%	\begin{itemize}
	
	\vspace{0.2cm}
	
	\noindent$\bullet\ $\textbf{Rules} \textsc{Const}: This rule states that if any
      variable appearing in program $\prog$ is not free in $\mu$, then $\mu$ is preserved and thus can be conjoined to the pre- and post-conditions. 
		The principle behind is that $\mu$ is restrictive, i.e., the satisfaction of $\mu$ depends only on the reduced state over subsystem $\free{\mu}$, which trivially remains unchanged after executing $\prog$. 
		An interesting application of this rule is proving  product predicates from local reasoning using Proposition \ref{prop axiom projection}.
		%\item	\noindent\textbf{Rules} \textsc{FrameE}:  The conditions $\free{\mu}\cap\var(\prog)=\emptyset$ and $\free{\psi}\subseteq\free{\phi}\cup\var(\prog)$ in the premise ensur that (1) satisfaction of $\mu$ is unchanged after executing $\prog$; and (2) if $\phi\sd\mu$ has a non-empty interpretation, then $\psi\sd\mu$ is well-defined in the sense that the domains of $\psi$ and $\mu$ do not overlap: $\free{\psi}\cap\free{\mu}=\emptyset$.
	
	\vspace{0.2cm}
	
	\noindent$\bullet\ $\noindent\textbf{Rules} \textsc{Frame}: 	
		The condition $\free{\mu}\cap\var(\prog)=\emptyset$ in the premise ensures that $\mu$ can be conjoined with the pre- and post-conditions. The condition  $\free{\psi}\cup\var(\prog)\subseteq\free{\phi}$ guarantees that, if the input satisfies $\phi\ast\mu$, which asserts that subsystems $\free{\phi}$ and $\free{\mu}$ are uncorrelated, then after executing $\prog$, these two subsystems are still independent since $\var(\prog)\subseteq\free{\phi}$, and furthermore, by the downward closed property of independence, subsystems $\free{\psi}$ and $\free{\mu}$ are uncorrelated as $\free{\psi}\subseteq\free{\phi}$.
		It is particularly interesting to note that the latter condition can be altered by  $\psi\in{\rm SP}$ defined in the following:
		\begin{definition}[Supported Assertion, c.f. \cite{Rey08}]
			\label{def SP}
			A formula $\psi$ is called supported, written $\psi\in{\rm SP}$, if $\sem{\psi}$ is nonempty then it has a least element, or equivalently, there exists a $S\subseteq\vars$ such that 1. at most one $\rho\in\cD(S)$ satisfies $\psi$ and 2. if $\sigma\models\psi$, $\sigma\succeq\rho$.
		\end{definition}
		
		Trivially, any uniformity proposition $\unia[S]$ and any atomic proposition
    defined by a projection of rank 1 are in SP; more examples of SP are given
    in the  Supplementary Material \ref{sec app sub proof prop SP}. The frame rule with SP condition is nontrivial
    and it will be very useful in our later case studies on verification of
    quantum information-theoretic security; indeed, this application uncovered
    the condition $\psi\in{\rm SP}$. 
    %Indeed, we were led to the condition $\psi\in{\rm SP}$ by this application.
    Note that under this condition, the frame rule is sound even without any restriction on $\free{\psi}$, $\free{\phi}$ and $\var(\prog)$. This seems counter-intuitive; but in fact, the premise $\{\phi\}\prog\{\psi\}$ is much stronger than it looks at first sight, given that the postcondition $\psi\in{\rm SP}$. If the input satisfies precondition $\phi$, then an execution of $\prog$ is almost equivalent to first erasing any information on subsystem $\free{\psi}$ (of course, it is now uncorrelated with the rest part of the whole system), and then regenerating the singleton that satisfies the postcondition $\psi$.
%	\end{itemize}
	
	\subsection{Reasoning about Entangled Predicates}
	\label{sec Local Reasoning Entangle}
	
	Many quantum algorithms are designed following the same pattern: start from a
  large entangled state, and then operate on various subsystems. Inevitably,
  entanglements often appear in the preconditions and/or postconditions of Hoare
  triples appropriate for specifying the correctness of these algorithms.
  But the frame rule itself is not strong enough to verify them. To see this more clearly, let us consider the following simple example: 
  
 \begin{example}
 \label{exam verify entanglement}%\jh{what is an entanglement in the pre/post condition? what does it look like?}
 Let $|\Phi^\pm\> = \frac{1}{\sqrt{2}}(|00\>\pm|11\>)$ be two %(unnormalized) 
 Bell states (entanglement). Define projections  $\Phi^\pm = |\Phi^\pm\>\<\Phi^\pm|$. The program $\prog\equiv {\sqrt{Z}}[q_1]; {\sqrt{Z}}[q_2]$ transforms one Bell states to the other; that is, both $\{\Phi^+\}\prog\{\Phi^-\}$ and $\{\Phi^-\}\prog\{\Phi^+\}$ are true. However, they cannot be proved by using \textsc{Frame} or \textsc{Const} to lift local correctness of ${\sqrt{Z}}[q_1]$ and ${\sqrt{Z}}[q_2]$ to global predicates $\Phi^\pm$, since $\Phi^\pm$ cannot be written in the form of $\Phi^\pm \not\equiv \phi_{q_1}\ast\psi_{q_2}$ or $\Phi^\pm \not\equiv \phi_{q_1}\wedge\psi_{q_2}$. \end{example}
  % \jh{why not? see above example.}
	
	Fortunately, our frame rule  can be combined with a technique for reasoning about entangled predicates proposed in \cite{YZL18} to handle this problem. Originally, this technique was introduced for parallel quantum programs. Here, we need to reformulate it in a way convenient for our purpose. 
	%As we will see later in the case studies, 
	A combination of this technique with the frame rule can significantly broaden the range of applications of our quantum separation logic. 
	To this end, we need to generalise Definition \ref{def sub atomic prop} from
  modification by a unitary transformation, and initialisation to modification
  by a general quantum operation.\footnote{Quantum operation is used to describe
    the evolution of a (open) quantum system and can be characterized by an
    superoperator $\cE$, namely a completely-positive and trace-non-increasing
    linear map from $\cD$ to $\cD$.  For every superoperator $\cE$, there exists
  a set of Kraus operators $\{E_i\}_i$ (linear operators that satisfy completeness condition $\sum_iE_i^\dag E_i = I$) such that $\cE(\rho) = \sum_i E_i\rho
E_i^\dag$ for any input $\rho$.}
%\jh{What is a Kraus operator?}
	\begin{definition}[$\cE$-Modification]
		\label{def qo modification}
		Let $\cE$ be quantum operation on $\qbar$. The $\cE$-Modification $\phi[\cE[\qbar]]$ acting on register $\qbar$ of a BI formula $\phi$ is defined inductively:
		\begin{enumerate}
			\item (Atomic Proposition) For any $P\in\cP$, we have:\footnotemark
			\begin{enumerate}
			    \item if $\qbar\subseteq\free{P}$, $$P[\cE[\qbar]]\triangleq\left(\big(\cE^\ast_{\qbar}\otimes\cI_{\free{P}\backslash\qbar}\big)(P^\bot)\right)^\bot;$$
			    \item if $\qbar\cap\free{P}=\emptyset$, $P[\cE[\qbar]]\triangleq P$;
			    \item otherwise, $P[\cE[\qbar]]$ is undefined;
			\end{enumerate}
            %\jh{Use cases, and mark undefined.}
			\footnotetext{Here $\bot$ stands for the ortho-complement, for not only the projections but Hermitian operators, in the sense that $A^\bot = \spa\{|\psi\>\in\cH_{\free{A}}: A|\psi\> = 0\}$. $\cE^\ast$ is dual of	$\cE$; in detail, $\cE^\ast(A) = \sum_iE_i^\dag A E_i$ if $\cE$ has the operator-sum representation $\cE(\rho) = \sum_iE_i \rho E_i^\dag$.}
			\item (Composite) Write $\phi[\cE[\qbar]]\Mexist$ if $\phi[\cE[\qbar]]$ is defined.
			\begin{enumerate}			
				\item if $\phi\equiv \top$ or $\bot$, then $\phi[\cE[\qbar]] \triangleq \phi$;
				\item if $\phi\equiv p\in\AP$, then $\phi[\cE[\qbar]]$ is defined according to Clause {\rm (1)};
				\item if $\phi\equiv \phi_1\ \triangle\ \phi_2$ where $\triangle\in\{ \wedge,\vee\}$ and both $\phi_1[\cE[\qbar]]\Mexist$ and $\phi_2[\cE[\qbar]]\Mexist$, then $\phi[\cE[\qbar]]\triangleq\phi_1[\cE[\qbar]] \ \triangle\ \phi_2[\cE[\qbar]]$
				\item otherwise, $\phi[\cE[\qbar]]$ is undefined.
                %\jh{why $\equiv$?}
			\end{enumerate}
		\end{enumerate}
	\end{definition}
	Intuitively, if $\phi[\cE[\qbar]]\Mexist$, then for any state $\rho$, $\cE(\rho)\models\phi$ if and only if $\rho\models \phi[\cE[\qbar]]$.

	\begin{figure}\centering
		\begin{equation*}\begin{split}
		&\textsc{UnCR}\quad \frac{\{\phi\}\prog\{\psi\}\ \ \qbar\cap\var(\prog)=\emptyset\ \ \phi[\cE[\qbar]]\Mexist\ \psi[\cE[\qbar]]\Mexist}{\{\phi[\cE[\qbar]]\}\prog\{\psi[\cE[\qbar]]\}} 		
		\end{split}%\frac{1}{\dim(\qbar^\prime)}
		\end{equation*}
		\caption{Proof rule for dealing with entangled predicates. $\Mexist$ means the existence of modification.
		}\label{fig proof system 3}
	\end{figure}
	
	Now we can introduce a new inference rule \textsc{UnCR} (stands for ``uncorrelated'') in Fig. \ref{fig proof system 3}. 
	This rule plays an essential role in the verification of VQA (see Section \ref{sec-vqa-exam}). We divide VQA into several pieces and reason locally, but the global predicate we desired is an entangled predicate that cannot be constructed using \textsc{Frame}. \textsc{UnCR} is the bridge for structural reasoning from local to global predicates. In addition, a formal verification of Example \ref{exam verify entanglement} using \textsc{UnCR} can be found in Supplementary Material \ref{sec app sub verification exam verify entanglement}.
 % \jh{Notation for modification looks different in UnCR rule?}
%  \lz{Yes, and it can be seen as a generalization of modification of initialization and unitary transformation; the denotational semantic of any quantum program is always a quantum operation. Since projection is used in our application, I didn't enumerate the cases for other atomic propositions.}

	%	\begin{figure}
	%	    \centering
	%	    \includegraphics[width = 0.616\linewidth]{}
	%	    \includegraphics[width = \linewidth]{}
	%	    \caption{Direct Reasoning and Local Reasoning}
	%	    \label{fig:my_label}
	%	\end{figure}

	\subsection{Soundness}
	
	To conclude this section, we show that quantum separation logic QSL consisting
  of all the proof rules listed in Figure \ref{fig proof system 1}--\ref{fig
  proof system 3} are sound. The detailed proof can be found in the Supplementary Material \ref{sec app sub proof thm sound QSL}.
 % \jh{Where can the proof be found?}
	
	\begin{theorem}[Soundness of QSL]
		\label{thm sound QSL}
    A program $\prog$ is \emph{almost surely terminating} if for all inputs
    $\rho$, $\tr(\sem{\prog}(\rho)) = \tr(\rho)$. If $\prog$ is a most surely
    terminating program, then $\vdash\{\phi\}\prog\{\psi\}\ {\rm implies}\
    \models\{\phi\}\prog\{\psi\}.$
   % \jh{What is definition of terminating?}
	\end{theorem}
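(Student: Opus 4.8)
The plan is to argue by rule induction on the derivation of $\vdash\{\phi\}\prog\{\psi\}$: I would show that each inference rule in Figures \ref{fig proof system 1}--\ref{fig proof system 3} preserves validity $\models$. Since judgments are defined relative to a variable set $\vars$ but Theorem \ref{thm eq glb var set} lets us drop it, throughout the argument I would fix a single $\vars$ large enough to contain all free variables and program variables occurring in the rule instance (adding auxiliary variables when a rule like \textsc{UnCR} introduces a fresh register $\qbar$), and use Proposition \ref{prop sem qo} to move the denotational semantics between domains freely. First I would dispatch the easy cases: \textsc{Skip} is immediate; \textsc{Seq} is composition of the two implications; \textsc{Conj} and \textsc{Disj} follow because $\sem{\prog}_\vars$ is a single function applied to $\rho$ and $\models$ distributes over $\wedge,\vee$ at a fixed domain; \textsc{Weak} follows from the definition of $\gimp$ (which, as remarked after Definition of global implication, already forces the pointwise implication on global states) together with the fact that $\sem{\prog}_\vars(\rho)$ again has domain $\vars\supseteq\free{\psi}\cup\free{\psi'}$.

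Next I would handle the program-construct rules that hinge on modification. For \textsc{Init} and \textsc{Unit}, the key is Proposition \ref{pro modification}: if $\rho\models\phi[\prog]$ then $\sem{\prog}(\rho)\models\phi$, and clause (1) of that proposition guarantees the domains match up so we can invoke Proposition \ref{prop sem qo} to see $\sem{\prog}_\vars(\rho) = \sem{\prog}_{\var(\prog)}\otimes\cI$ acts as expected on the auxiliary part. For \textsc{RIf} and \textsc{RLoop} I would unfold the operational semantics: an input satisfying $\phi\ast\bD(\qbar)$ factors (using Proposition \ref{pro res set}, restriction, since $\phi\in\res$) so that the measured register $\qbar$ is uncorrelated with the $\free{\phi}$-part; hence each post-measurement branch $M_m\rho M_m^\dag$ still satisfies $\phi\ast M_m$, the branch premise gives $\psi$ (or $\phi\ast\bD(\qbar)$ for the loop body), and finally the CM hypothesis $\psi\in\mathrm{CM}$ (resp. $\phi\in\mathrm{CM}$) lets us sum the partial density operators from the different branches and still satisfy $\psi$ (resp. $\phi\wedge M_0$, using that the terminating branch forces support in $M_0$). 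The loop case additionally needs the standard argument that the denotational semantics of the loop is the supremum/sum of its finite unfoldings, together with the ``almost surely terminating'' hypothesis so that no probability mass is lost; this is where that hypothesis is used.

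The two genuinely substantial rules are \textsc{Frame} and \textsc{UnCR}, and I expect \textsc{Frame} to be the main obstacle. For \textsc{Const} the argument is short --- $\mu$ is restrictive (Proposition \ref{pro res set}), its satisfaction depends only on $\rt{\rho}{\free{\mu}}$, and since $\free{\mu}\cap\var(\prog)=\emptyset$ Proposition \ref{prop sem qo} shows this reduced state is untouched by $\prog$. For \textsc{Frame}, in the first case ($\free{\psi}\cup\var(\prog)\subseteq\free{\phi}$) I would argue that an input $\rho\models\phi\ast\mu$ has its $\free{\phi}$-part and $\free{\mu}$-part separable, the program touches only variables inside $\free{\phi}$, so by a purification argument (introduce a reference system for the $\free{\mu}\cup(\text{rest})$ block, push $\prog$ through as $\sem{\prog}_{\var(\prog)}\otimes\cI$) the output remains a tensor across $\free{\phi}\mid\free{\mu}$; then downward-closure of separability under restriction (monotonicity of $\sem{\cdot}$ plus Proposition \ref{pro res set} applied to $\psi$) yields $\psi\ast\mu$. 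For the alternative side condition $\psi\in\mathrm{SP}$, I would use Definition \ref{def SP}: the premise forces the output, restricted to $\free{\psi}$, to equal the unique least state $\rho_\psi$; since this is the \emph{same} state on every input compatible with $\mu$, the output decomposes as $\rho_\psi$ on $\free{\psi}$ tensored with whatever sits on the complement, and in particular $\free{\psi}$ is uncorrelated with $\free{\mu}$ --- giving $\psi\ast\mu$ with no constraint relating $\free{\psi}$, $\free{\phi}$, $\var(\prog)$. Nailing down this SP case rigorously --- in particular showing that the premise $\{\phi\}\prog\{\psi\}$ with $\psi$ supported really does force the output to ``erase and regenerate'' the $\free{\psi}$ block independently of the rest, via the purification calculation the introduction alludes to --- is the delicate part. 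Finally \textsc{UnCR} follows from the intuition stated after Definition \ref{def qo modification}: $\cE(\rho)\models\phi \iff \rho\models\phi[\cE[\qbar]]$, proved by induction on $\phi$ using the dual-map characterization $P[\cE[\qbar]] = ((\cE^\ast_{\qbar}\otimes\cI)(P^\bot))^\bot$ and the identity $\tr(P\,\cE(\rho)) = \tr(\cE^\ast(P)\,\rho)$; then since $\qbar\cap\var(\prog)=\emptyset$, $\sem{\prog}_\vars$ commutes with the $\cE$ applied on $\qbar$ (Proposition \ref{prop sem qo}), so validity of the premise transports directly to the conclusion.
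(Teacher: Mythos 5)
Your overall plan—rule induction, with \textsc{Skip}, \textsc{Seq}, \textsc{Conj}/\textsc{Disj}, \textsc{Weak} handled via the definition of $\gimp$, \textsc{Init}/\textsc{Unit} via Proposition \ref{pro modification}, \textsc{RIf}/\textsc{RLoop} via factoring the input across $\free{\phi}$ and $\qbar$ and closing the branch sum with the CM hypothesis, \textsc{Const} and \textsc{Frame}(first side condition) via restriction plus $\sem{\prog}_{\vars}=\sem{\prog}_{\var(\prog)}\otimes\cI$, and \textsc{UnCR} via the equivalence $\cE(\rho)\models\phi\Leftrightarrow\rho\models\phi[\cE[\qbar]]$ and commutation of operations on disjoint registers—is the same decomposition the paper uses, and those cases are fine (the first \textsc{Frame} case needs no purification, only the direct tensor computation you also describe).

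The genuine gap is in the \textsc{Frame} case with $\psi\in{\rm SP}$. Your deductive step is: the premise pins the output's reduced state on $\free{\psi}$ to the unique least element $\rho_\psi$ on every input, ``hence the output decomposes as $\rho_\psi$ on $\free{\psi}$ tensored with whatever sits on the complement, and in particular $\free{\psi}$ is uncorrelated with $\free{\mu}$.'' That inference is false as stated: a fixed reduced state does not imply a product structure (a maximally entangled state has fixed marginals on both sides), and since the input may correlate $\var(\prog)\setminus\free{\phi}$ with $\free{\mu}$, the program could in principle transport those correlations onto $\free{\psi}$ while keeping its marginal equal to $\rho_\psi$ on each particular input. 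What actually rules this out—and what the paper's proof consists of—is exploiting the \emph{universal} quantification of the premise over all global inputs, including arbitrarily entangled/purified ones: one dilates $\sem{\prog}$ by a Stinespring ancilla, purifies an arbitrary input $\rho_{ABC}\models\phi\ast\mu$ with reference systems so that the non-$\free{\phi}$ part is an arbitrary pure state $|\Psi_{BR_BCR_C}\>$, writes the output pure state as $\sum_i\sqrt{\lambda_i}|\alpha_i\>_{A'}(X_i\otimes I_C)|\Psi_{BR_BCR_C}\>$, and uses the arbitrariness of $|\Psi_{BR_BCR_C}\>$ to force the operator identities $X_i^\dag X_i=I$ and $X_j^\dag X_i=0$ for $i\neq j$; only these isometry/orthogonality relations make the partial trace collapse to $\sigma_{A'}\otimes\sigma_C$, i.e.\ to $\psi\ast\mu$. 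You flag this case as ``the delicate part'' and name purification, but the argument you sketch would not close it—the missing idea is that separation from $\mu$ comes from constraining the channel's dilation by ranging the premise over all entangled extensions of the input, not from uniqueness of the output marginal alone. (Your placement of the almost-sure-termination hypothesis is acceptable: it is what guarantees the branch/unfolding sums form genuine convex mixtures of density operators so the CM step applies.)
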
	
	
	\section{Local Reasoning: Analysis of Variational Quantum Algorithms}\label{sec-vqa-exam}
	
	From now on we present a couple of examples to demonstrate  applicability of our quantum separation logic. 
	Variational quantum algorithms (VQA) are a class of hybrid quantum/classical algorithms solving a fundamental problem in quantum chemistry -- determine the ground state of a quantum system \cite{PMS14,MRB16}. It has been identified as one of the first practical applications of  near-term Noisy Intermediate Scale Quantum (NISQ) computers \cite{Pre18}, and thus were chosen  as an example in the tutorials of several quantum programming platforms including Google's Cirq~\cite{Cirq}. Surprisingly, using the inference rules presented in the last section, we are able to show that the implementation of VQA in the tutorial of Cirq is actually incorrect;  that is, the approximation of ground energy computed by the quantum circuit given there is sometimes far from the real one.
	%	exploring under what conditions VQA will succeed becomes a key research topic for . In this section, 
	
	\subsection{Variational Quantum Algorithm (VQA)}
%  \jh{What is a QPU?}
%	\lz{QPU (quantum processing unit)}
	A typical VQA uses a hybrid computing system consisting of a QPU (quantum processing unit) and CPU to find a good approximation of the ground energy and ground state of a given Hamiltonian of the form:
	$$H = \sum_{i,\alpha}h_\alpha^i\sigma_\alpha^i + \sum_{i,j,\alpha,\beta}h_{\alpha\beta}^{ij}\sigma_\alpha^i\sigma_\beta^j+\cdots$$
	where $h$'s are real numbers, and superscripts $i,j,\cdots$ identify the subsystem and subscripts $\alpha,\beta,\cdots\in\{x,y,z\}$ indicate the appropriate Pauli operators. The algorithm can be described in four steps:
	\begin{enumerate}
		\item Define a set of ansatz states $|f(\boldsymbol{\theta})\>$, which are characterized by  parameters $\boldsymbol{\theta} = (\theta_1,\theta_2,\cdots,\theta_n)$ and can be efficiently prepared by a quantum circuit $\prog(\boldsymbol{\theta})$. The goal of the algorithm is to find the optimal parameters $\boldsymbol{\theta}_{\min}$ which minimize the energy $\<f(\boldsymbol{\theta})|H|f(\boldsymbol{\theta})\>$. Then $\<f(\boldsymbol{\theta}_{\min})|H|f(\boldsymbol{\theta}_{\min})\>$ and $|f(\boldsymbol{\theta}_{\min})\>$ can be set as an approximation of the ground energy and ground state, respectively.
		
		\item Use the QPU to execute the quantum computation represented as quantum circuit $\prog(\boldsymbol{\theta})$ in order to generate state $|f(\boldsymbol{\theta})\>$ and compute the expectations of $\sigma_\alpha^i,\ \sigma_\alpha^i\sigma_\beta^j\ ,\cdots$ in all the terms of $H$;
		
		\item Use the CPU to sum up the expectations of all the terms of $H$ with the weights $h$'s and thus evaluate $\<f(\boldsymbol{\theta})|H|f(\boldsymbol{\theta})\>$;
		
		\item Feed $\<f(\boldsymbol{\theta})|H|f(\boldsymbol{\theta})\>$ to an classical minimization algorithm. If the optimization is not completed, prepare the parameters $\boldsymbol{\theta}$ for the next round and go to step (2); otherwise, terminate and return $\boldsymbol{\theta}$ as output.
	\end{enumerate}
	
	\subsection{VQA in the Tutorial of Cirq}
	
	The VQA presented in the tutorial of Google's Cirq \footnote{\url{https://quantumai.google/cirq/tutorials/variational_algorithm}} deals with a 2D $+/-$ Ising model of size $N\times N$ with objective Hamiltonian (observable)
	$$H = \sum_{(i,j)}h_{ij}Z_{ij} + \sum_{(i,j;i^\prime,j^\prime)\in S}J_{ij;i^\prime j^\prime}Z_{ij}Z_{i^\prime j^\prime},$$
	where each index pair $(i,j)$ is associated with a vertex in a the $N\times N$ grid, $S$ is the set of all neighboring vertices in the grid, and all $h_{ij}$ and $J_{ij;i^\prime j^\prime}$ are either $+1$ or $-1$.
	The algorithm for preparing the ansatz state with real parameters $(\alpha,\beta,\gamma)$ given in the tutorial of Cirq can be rewritten in the quantum-{\bf while} language with $N\times N$ grid of qubits as follows:
	\begin{align*}
	{\rm VQA}(N)\equiv\ &{\bf for\ } j = 1,\cdots,N {\bf\ do}\ {\rm ProcC}(j)\ {\bf od}; \\
	&{\bf for\ } i = 1,\cdots,N {\bf\ do}\ {\rm ProcR}(i)\ {\bf od}.
	\end{align*}
	Here, subprogram ${\rm ProcC}(j)$ acts on the $j$th column of qubits and ${\rm ProcR}(i)$ acts on the $i$th row of qubits; each of them is a sequential composition of unitary transformations (see the Supplementary Material \ref{sec app sub VQA in the Tutorial of Cirq} for detailed subprograms).
%  \jh{To clarify: does this program implement all four steps above? or just the first one (preparation)?} \lz{the first step, preparing the ansatz state.}
	
	\subsection{Specifying and Proving Incorrectness in Quantum Separation Logic}
	\label{sec: VQA in QSL}
	As pointed out at the beginning of this section, we can use our quantum separation logic to show that algorithm ${\rm VQA}(N)$ is indeed incorrect. Let us first describe its incorrectness in our logical language.  
	Suppose the Hamiltonian $H$ has eigenvalues $E_0,E_1,...$ ranged in increasing order, with corresponding eigenspaces (projections) $Q_0, Q_1...$. 
	%	If we can find a precondition $P_0\in\cP$ such that $$\models\{P_0\} {\rm VQA}(N) \{Q_0\},$$ then showing that $|0\>$ (the initial state of quantum circuit) is far from $P_0$; that is, fidelity $\<0|P_0|0\> \le \delta$, enable us to conclude the approximate ground energy computed by ${\rm VQA}(N)$ is at least $\delta E_0 + (1-\delta)E_1$
	%	This idea can be generalised as follows: 
	If for each $i\leq n$, we can find a  precondition $P_i\in\cP$ such that $\models\{P_i\} {\rm VQA}(N) \{1-\sum_{k=0}^{i}Q_i\}\ (i=0,1,...,n),$ then by showing that $|0\>$ (the initial state of quantum circuit) is close to $P_i$; that is, $\<0|P_i|0\> \ge \delta_i$, we can conclude that the approximate ground energy computed by ${\rm VQA}(N)$ is at least: 
	%(see Supplementary Material \ref{sec app VQA} for more details)
	\begin{equation}\label{VQA-meters}E_0 + \sum_{i=1}^{n}(E_{i+1}-E_{i})\delta_i.\end{equation}
	%due to the fact that ${\rm VQA}(N)$ is a unitary transformation and thus preserves the fidelity. 
	Therefore, whenever the quantity in (\ref{VQA-meters}) is far away from the real ground energy $E_0$, then ${\rm VQA}(N)$ is incorrect.
	
%	\begin{figure}
%		\includegraphics[width=\linewidth]{png4.png}
%		\caption{VQA(2) with parameters taken in Sec. \ref{sec: VQA in QSL}.}
%	\end{figure}
	
	To illustrate our idea more explicitly, let us consider the simplest case of $2\times 2$ grid ($N = 2$) with parameters:
	$$h = {\small\left[\begin{array}{cc}-1 & -1 \\ 1 & 1\end{array}\right]},\quad Jc = {\small\left[\begin{array}{c}-1 \\ -1\end{array}\right]},\quad Jr = \left[\begin{array}{cc}-1 & 1\end{array}\right]$$
	and $J_{ij;(i+1)j} = Jr_{ij}$ and $J_{ij;i(j+1)} = Jc_{ij}$; see Fig. \ref{fig VQA2} for its circuit model.
	%\todo{add a figure of the circuit?}
	The eigenvalues of the Hamiltonian $H$ in this case are $E_0,\cdots,E_5 =
  -6,-4,-2,0,2,4$ with corresponding eigenspaces $Q_0,Q_1,\cdots,Q_5$,
  respectively.  Using QSL, we are able to prove:
  \[
    \vdash \{P_i\} {\rm VQA}(N) \left\{1-\sum_{k=0}^{i}Q_i\right\}\ (i=0,1)
  \]
  where
  \begin{align*}
  &\<0|P_0|0\> = 1-\frac{1}{16}\sin(\alpha\pi)^4 \ge \frac{15}{16}\\
  &\<0|P_1|0\> = 1-\frac{1}{32}(7+\cos(2\alpha\pi))\sin^2(\alpha\pi)\ge \frac{13}{16},
  \end{align*}
	by first reasoning about each subprogram ${\rm ProcC}(1)$, ${\rm ProcC}(2)$, ${\rm Proc(R)}(1)$, ${\rm ProcR}(2)$ and then using \textsc{Const} and \textsc{UnCR} to lift these local reasoning to global correctness above (details can be found in Supplementary Material \ref{sec app sub Verifying Incorrectness in QSL}).
	Then it follows from (\ref{VQA-meters}) that the approximate ground energy of VQA is at least $-2.5$, which is much higher than the real ground energy $E_0 = -6$.

  Our quantum separation logic can also apply to higher dimensionional versions
  of this program. In general, since the number of qubits in each subprogram of
  VQA is $\frac{1}{N}$ of that of the entire system, there is no extra cost for
  local reasoning no matter how large $N$ is. 
 % \jh{What is ``$N$''? }\lz{VQA(N), VQA with $N\times N$ grid of qubits.}
	%Although only the case of $2\times 2$ grid was discussed above, our quantum separation logic can apply to the case of a higher dimension such as for $3\times 3$ grid. 
	%	For example, we are able to verify that for the $3\times 3$ grid with parameters:
	%	$$h = {\small\left[\begin{array}{ccc}1 & -1 & -1 \\ -1 & -1 & 1 \\ -1 & 1 & 1\end{array}\right]},\quad Jc = {\small\left[\begin{array}{cc}1 & -1 \\ 1 & -1 \\ 1 & 1\end{array}\right]},\quad Jr = {\small\left[\begin{array}{ccc}1 & -1 & 1 \\ 1 & -1 & -1\end{array}\right]},$$
	%	the ground energy calculated by VQA in Cirq is at least $-1.411$, far away from the real ground energy $-13$.
	Besides revealing the incorrectness of ground energy, we can prove that parameters $\beta,\gamma$ are helpless for finding the ground energy %when $N = 2,3$ 
	in the sense that the expectation of measurement outcome\footnote{The QPU executes ${\rm VQA}(N)$ and then measures each qubit in computational basis and feed the outcome to CPU.} is independent of $\beta,\gamma$.
	
	\section{Scalable Reasoning: Verification of Security}\label{sec-security-exam}
	
	A major distinction between classical and quantum information can be stated as the no-cloning theorem that it is impossible to create an identical copy of an arbitrary unknown quantum state. Exploiting this fundamental property among others, many quantum cryptographic protocols with information-theoretical security have been proposed, including quantum key distribution, quantum one-time pad \cite{MTW00,BR03} and quantum secret sharing \cite{HBB99, CGL99}.
	
	In this section, we show how quantum separation logic developed in this paper can be used to verify the security of quantum one-time pad and quantum secret sharing. In particular, such verification is scalable in the sense that only a constant computational resource is required in the verification as the length of protocols and the involved qubits increase.
	
	Uniformity is essential in proving the information-theoretical security of many quantum cryptographic protocols. For convenience, let us first present a useful rule: 
	\begin{equation}
	\label{fig proof system uniform}
	\textsc{FrameU}\quad \frac{\{\top\}\prog\{\unia[S_1]\}\quad S_2\cap(\var(\prog)\cup S_1)=\emptyset}{\{\unia[S_2]\}\prog\{\unia[S_1,S_2]\}}.
	\end{equation}
	This rule is derived by instantiating $\phi\equiv\top$,  $\psi\equiv\unia[S_1]$ and $\mu\equiv\unia[S_2]$ in the frame rule \textsc{Frame} and using axiom scheme (see Proposition \ref{prop axiom projection} (4)).
 % \jh{Plus using an axiom scheme, right?}\lz{yes.}
	
	\subsection{Security of Quantum One-Time Pad}
	
	Let us first verify the security of quantum one-time pad (QOTP) \cite{MTW00, BR03}, one of the basic quantum encryption schemes in quantum cryptography. Similar to the classical one-time pad, a one-time pre-shared secret key is employed to encrypt and decrypt the quantum data.
	%, which provides the information-theoretic security. 
	
	\subsubsection{Single-Qubit Case} To warm up, we consider the simplest case for protecting one-qubit data.   
	The QOTP scheme consists of three parts: key generation $\mathbf{KeyGen}$, encryption $\mathbf{Enc}$ and decryption $\mathbf{Dec}$, which can be written as programs:
	{\small
		\begin{align*}
		&\mathbf{KeyGen}[a,b] \equiv  a := |0\>; b := |0\>; \ 
		a := H[a];  b := H[b];  \\
		&\qquad\qquad\qquad\quad\ \, \mathbf{if}\ \cM[a,b]=00\rightarrow \mathbf{skip}\ \square\ \ \! 01\rightarrow \mathbf{skip}\ \\
		&\qquad\qquad\qquad\quad\ \, \qquad\qquad\  \square\ \ \!  10\rightarrow \mathbf{skip}\ \square\ \ \!  11\rightarrow \mathbf{skip} \quad \mathbf{fi}\\ 
		&\mathbf{Enc}[a,b,q] \equiv\  \mathbf{if}\ \cM[a,b]=00\rightarrow \mathbf{skip}\ \ \ \square \  01\rightarrow q = Z[q]\\
		&\qquad\qquad\qquad\ \  \square\ 10\rightarrow q = X[q]\ \square\ 11\rightarrow q = Z[q];q = X[q]\  \mathbf{fi} \\
		%\mathbf{DisKey} \equiv\ & {\bf Tr}[a];{\bf Tr}[b] 
		&{\rm QOTP}[a,b,q] \equiv  \mathbf{KeyGen}[a,b]; \mathbf{Enc}[a,b,q]
		\end{align*}
	}\\[-0.8cm]
	
	\noindent Here, registers $a$ and $b$ are used as the secret key, and measurement 
    \begin{align*}
        \cM = \{&M_{00} = |00\>_{ab}\<00|, M_{01} = |01\>_{ab}\<01|, \\
	            &M_{10} = |10\>_{ab}\<10|, M_{11} = |11\>_{ab}\<11|\}
    \end{align*}
	is introduced to generate and detect the value of secret key, which returns a two-bit classical outcome with a certain probability. Register $q$ is the input quantum data which we want to protect. $H$ is the Hadamard gate and $X,Z$ are Pauli gates as usual. 
	
	Security of QOTP for the single-qubit case can be specified as the following uniformity: 
	\begin{equation}\vdash\label{q-one-pad-1}\{\top\}{\rm QOTP}[a,b,q]\{\unia[q]\}.\end{equation}
	This fact has been formally verified using quantum Hoare logic with ghost variables \cite{Unr19b} and relational quantum Hoare logic in \cite{BHY19}. 
 % \jh{And can we verify it too?}\lz{yes, but we need to add observable as atomic propositions. so no in the main text.}
	
	\subsubsection{Multi-Qubit Case -- Scaling Up} Now we show how can the verification for single-qubit be easily scaled up to the multi-qubit case using the frame rule in our quantum separation logic.  
	The protocol for protecting $n$-qubit data stored in register $\qbar = q_1,\cdots,q_n$ can be written as:
	$${\rm QOTP}(n)\equiv {\bf for}\ i=1,\cdots, n\ {\bf do}\ {\rm QOTP}[a_i,b_i,q_i]\ {\bf od}$$
	where $a_1,b_1,\cdots,a_n,b_n$ are secret key of size $2n$. Its security can be stated as the following uniformity:
	\begin{equation}\vdash\label{q-one-pad-n}\{\top\}{\rm QOTP}(n)\{\unia[q_1,\cdots,q_n]\},\end{equation}
	which shows that, no matter what is the plain text initialised on $\qbar$, after encryption, the cipher text is always uniform and the eavesdropper cannot release any useful information.
	This judgment is proved as follows. First, it follows from (\ref{q-one-pad-1})
  that 
  $$\vdash \{\top\}{\rm QOTP}[a_i,b_i,q_i]\{\unia[q_i]\}\quad (i=1,...,n).$$
  Using \textsc{FrameU} we obtain for all $i=1,...,n$:
	$$\vdash\{\unia[q_1,\cdots,q_{i-1}]\}{\rm QOTP}[a_i,b_i,q_i]\{\unia[q_1,\cdots,q_i]\}$$ Then (\ref{q-one-pad-n}) is derived by repeatedly using rule \textsc{Seq}. 
	
	\subsubsection{Discussion} A comparison between the security verification of QOTP in quantum Hoare logic \cite{Unr19b, BHY19} and in quantum separation logic presented above is interesting. Only the single-qubit case was considered in \cite{Unr19b}. A crucial step in the verification for the multi-qubit case given in \cite{BHY19} is based on a complicated transformation of quantum predicates, which cannot be proved by the logic itself, but is derived from a mathematical result proved by quite  involved calculations in the previous literature \cite{MTW00}. 
	%: \\[-0.35cm]
	%$$
%	I_{\bar{p}} = \sum_{\forall\ i\in[n]: x_i,z_i\in\{0,1\}}\prod_{i}^n (Z_{p_i}^{z_i}X_{p_i}^{x_i})|\psi\>_{\bar{p}}\<\psi|\prod_{i}^n (X_{p_i}^{x_i}Z_{p_i}^{z_i})$$
	%for any possible pure state $|\psi\>\in\cH_{\bar{p}}$. 
%  \jh{Is it really necessary to show the complicated formula? no one will be
%  able to understand what it means (and that's bad).}
 % \lz{hmm... I show it here to illustrate that a previous proof relies on a complex formula. But it is also the problem that readers cannot understand it...}
	%The dimension of above equation is $2^n$, and it seems impossible to handle it automatically without employing other technical tools. 
	In contrast, the verification in quantum separation logic avoids such complicated  calculations by using the frame rule \textsc{FrameU}.
	
	\subsection{Security of Quantum Secret Sharing}
	
  Now we turn to verify the security of another quantum cryptographic protocol:
  quantum security sharing. Similar to classical secret sharing
  \cite{Bla79,Sha79}, quantum secret sharing addresses the problem of how to
  distribute a secret amongst a group of participants so that the secret can be
  reconstructed by a sufficient number of participants while any individual has
  no information about it \cite{HBB99, CGL99}. For concreteness, let us focus on
  a typical scheme.
	
	\subsubsection{Quantum $(2,3)$ Threshold Scheme} 
  The $(2,3)$ threshold scheme for sharing a single secret \emph{qutrit} $p$ (a
  $3$-dimensional quantum state) takes $p$ as the input and outputs three
  qutrits $p^\prime,q^\prime,r^\prime$ so that each of them has no information
  about the input secret while any two of them can recover the input.  Formally,
  it can be written as the following program: 
	\begin{equation*}% \label{2-3-one}
	{\bf Enc}[p,q,r]\equiv q:=|0\>;\ r:=|0\>;\ p,q,r := U_{\rm enc}[p,q,r]
	\end{equation*}
	where unitary transformation $U_{\rm enc}$ maps $|i\>|0\>|0\>$ to $|e_i\>$
	for $i = 0,1,2$, where $|e_i\>$ are three orthonormal states:
	\begin{align*}
	|e_i\> = \frac{1}{\sqrt{3}}\sum_{k=0}^2|k\>|k\oplus_3 i\>|k\oplus_3 2i\>
	\end{align*}
	where $\oplus_3$ stands for the addition modulo 3. For secretly sharing information of multiple qutrits $\ol{p} =
  p_1,\cdots,p_n$, this scheme can simply be  generalised to:
	\begin{equation*}%\label{2-3-many}
    {\rm QSS}(n)\equiv {\bf for}\ i=1,\cdots, n\ {\bf do}\ {\bf Enc}[p_i,q_i,r_i]\ {\bf od}.
	\end{equation*}

	%\vspace{-1cm}
	
	\subsubsection{Security as Uniformity}
	\label{sec QSS sec uni}
	Quantum secret sharing is designed for against both dishonest agents and eavesdroppers \cite{HBB99,KKI99,CGL99}. Let us first consider the case without any eavesdropper during transmission. In this case, the security of ${\rm QSS}(n)$ can be specified as the following judgment:
	\begin{equation}\label{2-3-security}\vdash\{\top\}{\rm QSS}(n)\{\unia[q_1,\cdots,q_n]\}.\end{equation}
	The above judgment can be easily proved in our quantum separation logic. First, using rules \textsc{Unit}, \textsc{Init} and \textsc{Seq} directly we obtain:
	\begin{equation}\label{eqn QSS single}\vdash\{\top\}{\bf Enc}[p,q,r]\{P_S[p,q,r]\},\end{equation}
	where projection 
	$P_S =  |e_0\>\<e_0|+|e_1\>\<e_1|+|e_2\>\<e_2|.$
	It is easy to check that
	$\models P_S[p,q,r]\rightarrow (\unia[p]\wedge \unia[q]\wedge \unia[r]).$ 
	Based on this we can conclude:  $$\vdash\{\top\}{\bf Enc}[p,q,r]\{\unia[\alpha]\}$$ for $\alpha\in\{p,q,r\}$. 
	%I'll do this... there are some tricks for spaces... no needs for modify the formulas.
	This proves the security for the case of a 
	
	\noindent single qutrit. To generalise it to the case of multiple qutrits, we can use \textsc{FrameU} to derive:
	$$\vdash\{\unia[q_1,\cdots,q_{i-1}]\}{\bf Enc}[p_i,q_i,r_i]\{\unia[q_1,\cdots,q_i]\}$$
	from $\vdash\{\top\}{\bf Enc}[p_i,q_i,r_i]\{\unia[q_i]\}.$ Then by setting
  formulas $\phi_i = \unia[q_1,\cdots,q_{i-1}]$ and $\phi_1 = \top$,  we have
  $\vdash\{\phi_i\}{\bf Enc}[p_i,q_i,r_i]\{\phi_{i+1}\}$ for all $1\le i\le n$,
  and (\ref{2-3-security}) is obtained by repeatedly using rule \textsc{Seq}.

	\section{Discussion and Related Work}
	\label{sec:discussion}
	
	\vspace{-0.05cm}
	%\vspace{-0.9cm}
	
	In this section, we briefly discuss an issue about restriction property left open in Subsection \ref{sec res} as well as some previous work on verification of quantum programs.  	
	
	\subsection{Restriction property and BI with domain}
	
	\vspace{-0.05cm}
	
	Our quantum interpretation  of standard BI logic is sufficient for the applications discussed in this paper. 
	However, it has a drawback: 
 the restriction property does not hold for all BI formulas, and thus the assertions in our QSL (Quantum Separation Logic) are confined in a special class of BI formulas (see Def. \ref{def Res}), which do not include implication and separating implication. 
 %.are not derivable in our program logic. 
 One possible solution to this issue is to redefine the BI logic so that the  restriction property becomes intrinsic -- similar to the monotonicity. We can introduce a notion of \emph{domain} into BI: the domain $\dom{x}$ of a \textit{state} $x$ is the set of variables specified by the state. Then a  basic idea in classical separation logic \cite{Bro07,IO01,ORY01,OHe07}, called the domain assumption for stack, can be adopted in defining  satisfaction relation: $x\models \phi$ is defined only when $\dom{x}\supseteq\free{\phi}$, where $\free{\phi} $ is the set of free variables in a BI-\textit{formula} $\phi$. The domain assumption  guarantees that  the restriction property is true even when the extension of joint quantum states does not exist (see Sec. \ref{sec res}). In this way, BI is upgraded to BID (BI with domain), and all BID formulas can be safely used as assertions in QSL. Details of this approach can be found in the  Supplementary Material \ref{sec 2BID}.
 %A meaningful side-effect is that with the help of modification, we are able to make rule \textsc{Init} and \textsc{Unit} sound and complete.
 %\jh{If we are going to make this sketch, we need to point readers to the
  % details. If we don't have the details, we need to be more careful with this
 %  paragraph (we can say it may be possible to do XYZ, instead of it is possible
% to do XYZ).} %\lz{do you mean that we need to point readers to the supplementary material where the whole approach and details are displayed?}
	
	\subsection{Related work}
	
	\vspace{-0.05cm}
	
	Quantum programming has become an active research field  in recent years after two decades of development \cite{HSM20}. Various analysis, verification, testing and debugging methodologies and techniques for quantum programs have been developed \cite{Ak05, DP06, BS04, BS06, BJ04, CMS06, Kak09, Ran16, FDJ07, YDF10, Ying16}. In particular, several quantum program logics have been established, including quantum Hoare logic~\cite{Ying11, Unr19b, ZYY19} for verifying correctness of one quantum program and relational quantum Hoare logic~\cite{Unr19a, BHY19, LU19} for verifying equivalence of two quantum programs.
The frame rule plays a key role in our QSL. We should mention that a frame rule
was also introduced in relational quantum Hoare logic  \cite{Unr19a, BHY19,
LU19}. But it was defined using the ordinary conjunction $\wedge$ and thus  is
similar to our \textsc{Const}. The frame rule in QSL is given using the
separating conjunction $\ast$. Of course, the intuitions behind them are the
same---an assertion is preserved by a program if it is independent of the program.
	
	The target applications of our SQL is verification of large-scale quantum programs, where the size of the representation   of assertions and the complexity of the involved  calculations can increase exponentially w.r.t the number of qubits. Two different approaches to this issue were proposed in \cite{HRS20} and \cite{Bor20}, essentially based  on the operational semantics. They have achieved obvious success, in particular for those  large-scale quantum programs with a good algebraic structure that can be  inductively defined. 
%	algorithm of size $N$ is given in a simple recursive/inductive form. 
It seems that sometimes our QSL can be used in combination with  them; for example, some larger VQAs (Variational Quantum Algorithms) can be divided into several blocks,  each of which has a good algebraic structure and thus  can be verified using the tools developed in  \cite{HRS20,Bor20}. Then our QSL can be employed to lift these local reasoning to the 
global correctness of VQAs.

	\section{Conclusion}
	\label{sec conclusion}
	In this paper, we have developed a quantum separation logic QSL that enables local reasoning for scalable verification of quantum programs written in a simple quantum programming language, namely the quantum extension of \textbf{while}-language.	The applicability of QSL has been demonstrated in 
	%	(1). defining the assertion language 2-BI, an extension of BI-logic with domains, designed for specifying properties of quantum states; (2). giving the quantum interpretation of 2-BI whose resources are domain-specified quantum states, and two separating conjunctions $\ast$ and $\sd$ are interpreted as uncorrelated conjunction (independence) and a general (possibly entangled) conjunction respectively; (3). propose a program logic with a sound proof system.
	the formal verification and analysis of several practical quantum algorithms and cryptographic protocols, including a VQA (Variational Quantum Algorithm), quantum one-time pad, and quantum secret sharing.
	
	%	Besides the problems mentioned in Section \ref{sec:discussion}, 
	There are several interesting topics for future research along this line:
	
	(1) We would like to explore more applications of our logic QSL in the verification of those algorithms identified as practical applications of near-term Noisy Intermediate Scale Quantum (NISQ) computers \cite{Pre18}; for example, quantum machine learning from quantum data. We will also try to apply QSL in the security analysis of more quantum cryptographic protocols rather than those considered in this paper, in particular QKD (Quantum Key Distribution).  
	
	(2) Currently, QSL can only be used to quantum \textbf{while}-programs
    without indexed variables, like arrays. However, indexed variables has
    already been frequently used in writing large quantum algorithms. We would
    like to extend our logic for a more sophisticated quantum program language
    with indexing.
	
	(3) Resource theory has been emerging as a subarea of quantum information theory in recent years. 
	Roughly speaking, it aims at understanding  how the resources  with  quantum advantage in computing and communication can be generated and transformed (e.g. only using LOCC (local operations and classical communication))\cite{HO13, Plenio2014, VMG14}.
  %\jh{fix ref}
  As briefly mentioned in the Introduction, some  connections between resource theory and the resource semantics of BI were  already noticed in   \cite{Doc19,CoeckeFS16,fritz_2017}.
	%these theories are closely connected to bunch logics. 
	We would like to see how quantum separation logic can be used to reason about these quantum resources.
	%In particular, at this moment we do not understand how can the diamond $\sd$ (the conjunction with entanglement) work properly when subscripting and aliasing are allowed. 
	%\end{enumerate}
	
	% 	\subsubsection{Computational Complexity and Potential Applications}
	% 	\label{sec local reasoning and complexity} 
	% 	Since auxiliary variables are necessary when dealing with entangled predicates, a natural and essential problem rises: any extra computational cost is needed compared with direct reasoning? Based on tensor networks -- a common used method in simulation of quantum circuit (see for example \cite{BC17}), there is \emph{no extra time cost} whenever 1. the postcondition is a projection and 2. any local system is much smaller than the whole system. With this in mind, our logic allows efficient local reasoning for many quantum algorithms, including the Bravyi-Gosset-K\"onig’s algorithm for solving HLF (Hidden Linear Function) problem\cite{BGK18}, and some variational quantum algorithm which is designed for finding low energy states of a quantum system \cite{PMS14,WHT15} -- and it is believed to be achieved by NISQ (Noisy Intermediate-Scale Quantum) technology in near future \cite{Pre18}.

	\bibliographystyle{IEEEtran}
	\bibliography{main}

% Generated by IEEEtran.bst, version: 1.14 (2015/08/26)
\begin{thebibliography}{10}
\providecommand{\url}[1]{#1}
\csname url@samestyle\endcsname
\providecommand{\newblock}{\relax}
\providecommand{\bibinfo}[2]{#2}
\providecommand{\BIBentrySTDinterwordspacing}{\spaceskip=0pt\relax}
\providecommand{\BIBentryALTinterwordstretchfactor}{4}
\providecommand{\BIBentryALTinterwordspacing}{\spaceskip=\fontdimen2\font plus
\BIBentryALTinterwordstretchfactor\fontdimen3\font minus
  \fontdimen4\font\relax}
\providecommand{\BIBforeignlanguage}[2]{{%
\expandafter\ifx\csname l@#1\endcsname\relax
\typeout{** WARNING: IEEEtran.bst: No hyphenation pattern has been}%
\typeout{** loaded for the language `#1'. Using the pattern for}%
\typeout{** the default language instead.}%
\else
\language=\csname l@#1\endcsname
\fi
#2}}
\providecommand{\BIBdecl}{\relax}
\BIBdecl

\bibitem{OP99}
\BIBentryALTinterwordspacing
P.~W. O'Hearn and D.~J. Pym, ``The logic of bunched implications,'' \emph{The
  Bulletin of Symbolic Logic}, vol.~5, no.~2, pp. 215--244, 1999. [Online].
  Available: \url{http://www.jstor.org/stable/421090}
\BIBentrySTDinterwordspacing

\bibitem{Pym02}
D.~J. Pym, \emph{The semantics and proof theory of the logic of bunched
  implications}, ser. Applied Logic Series.\hskip 1em plus 0.5em minus
  0.4em\relax Kluwer Academic Publishers, 2002, vol.~26.

\bibitem{POY04}
\BIBentryALTinterwordspacing
D.~J. Pym, P.~W. O'Hearn, and H.~Yang, ``Possible worlds and resources: the
  semantics of bi,'' \emph{Theoretical Computer Science}, vol. 315, no.~1, pp.
  257 -- 305, 2004, mathematical Foundations of Programming Semantics.
  [Online]. Available:
  \url{http://www.sciencedirect.com/science/article/pii/S0304397503006248}
\BIBentrySTDinterwordspacing

\bibitem{Rey02}
J.~C. {Reynolds}, ``Separation logic: a logic for shared mutable data
  structures,'' in \emph{Proceedings 17th Annual IEEE Symposium on Logic in
  Computer Science}, 2002, pp. 55--74.

\bibitem{ORY01}
P.~O'Hearn, J.~Reynolds, and H.~Yang, ``Local reasoning about programs that
  alter data structures,'' in \emph{Computer Science Logic}, L.~Fribourg,
  Ed.\hskip 1em plus 0.5em minus 0.4em\relax Berlin, Heidelberg: Springer
  Berlin Heidelberg, 2001, pp. 1--19.

\bibitem{IO01}
\BIBentryALTinterwordspacing
S.~S. Ishtiaq and P.~W. O'Hearn, ``Bi as an assertion language for mutable data
  structures,'' in \emph{Proceedings of the 28th ACM SIGPLAN-SIGACT Symposium
  on Principles of Programming Languages}, ser. POPL '01.\hskip 1em plus 0.5em
  minus 0.4em\relax New York, NY, USA: ACM, 2001, pp. 14--26. [Online].
  Available: \url{http://doi.acm.org/10.1145/360204.375719}
\BIBentrySTDinterwordspacing

\bibitem{OHe07}
\BIBentryALTinterwordspacing
P.~W. O’Hearn, ``Resources, concurrency, and local reasoning,''
  \emph{Theoretical Computer Science}, vol. 375, no.~1, pp. 271 -- 307, 2007,
  festschrift for John C. Reynolds’s 70th birthday. [Online]. Available:
  \url{http://www.sciencedirect.com/science/article/pii/S030439750600925X}
\BIBentrySTDinterwordspacing

\bibitem{Bro07}
\BIBentryALTinterwordspacing
S.~Brookes, ``A semantics for concurrent separation logic,'' \emph{Theoretical
  Computer Science}, vol. 375, no.~1, pp. 227 -- 270, 2007, festschrift for
  John C. Reynolds’s 70th birthday. [Online]. Available:
  \url{http://www.sciencedirect.com/science/article/pii/S0304397506009248}
\BIBentrySTDinterwordspacing

\bibitem{BHL19}
\BIBentryALTinterwordspacing
G.~Barthe, J.~Hsu, and K.~Liao, ``A probabilistic separation logic,''
  \emph{Proc. ACM Program. Lang.}, vol.~4, no. POPL, Dec. 2019. [Online].
  Available: \url{https://doi.org/10.1145/3371123}
\BIBentrySTDinterwordspacing

\bibitem{CoeckeFS16}
\BIBentryALTinterwordspacing
B.~Coecke, T.~Fritz, and R.~W. Spekkens, ``A mathematical theory of
  resources,'' \emph{Inf. Comput.}, vol. 250, pp. 59--86, 2016. [Online].
  Available: \url{https://doi.org/10.1016/j.ic.2016.02.008}
\BIBentrySTDinterwordspacing

\bibitem{Doc19}
\BIBentryALTinterwordspacing
S.~R. Docherty, ``Bunched logics: a uniform approach,'' Ph.D. dissertation, UCL
  (University College London), 2019. [Online]. Available:
  \url{https://discovery.ucl.ac.uk/id/eprint/10073115/}
\BIBentrySTDinterwordspacing

\bibitem{BWP17}
\BIBentryALTinterwordspacing
J.~Biamonte, P.~Wittek, N.~Pancotti, P.~Rebentrost, N.~Wiebe, and S.~Lloyd,
  ``Quantum machine learning,'' \emph{Nature}, vol. 549, no. 7671, pp.
  195--202, Sep 2017. [Online]. Available:
  \url{https://doi.org/10.1038/nature23474}
\BIBentrySTDinterwordspacing

\bibitem{BVM20}
M.~Broughton, G.~Verdon, T.~McCourt, A.~J. Martinez, J.~H. Yoo, S.~V. Isakov,
  P.~Massey, M.~Y. Niu, R.~Halavati, E.~Peters, M.~Leib, A.~Skolik, M.~Streif,
  D.~V. Dollen, J.~R. McClean, S.~Boixo, D.~Bacon, A.~K. Ho, H.~Neven, and
  M.~Mohseni, ``Tensorflow quantum: A software framework for quantum machine
  learning,'' 2020.

\bibitem{PMS14}
\BIBentryALTinterwordspacing
A.~Peruzzo, J.~McClean, P.~Shadbolt, M.-H. Yung, X.-Q. Zhou, P.~J. Love,
  A.~Aspuru-Guzik, and J.~L. O'Brien, ``A variational eigenvalue solver on a
  photonic quantum processor,'' \emph{Nature Communications}, vol.~5, no.~1, p.
  4213, Jul 2014. [Online]. Available: \url{https://doi.org/10.1038/ncomms5213}
\BIBentrySTDinterwordspacing

\bibitem{MRB16}
\BIBentryALTinterwordspacing
J.~R. McClean, J.~Romero, R.~Babbush, and A.~Aspuru-Guzik, ``The theory of
  variational hybrid quantum-classical algorithms,'' \emph{New Journal of
  Physics}, vol.~18, no.~2, p. 023023, feb 2016. [Online]. Available:
  \url{https://doi.org/10.1088%2F1367-2630%2F18%2F2%2F023023}
\BIBentrySTDinterwordspacing

\bibitem{Ying11}
M.~Ying, ``Floyd--hoare logic for quantum programs,'' \emph{ACM Transactions on
  Programming Languages and Systems (TOPLAS)}, vol.~33, no.~6, pp. 19:1--19:49,
  2011.

\bibitem{ZYY19}
\BIBentryALTinterwordspacing
L.~Zhou, N.~Yu, and M.~Ying, ``An applied quantum hoare logic,'' in
  \emph{Proceedings of the 40th ACM SIGPLAN Conference on Programming Language
  Design and Implementation}, ser. PLDI 2019.\hskip 1em plus 0.5em minus
  0.4em\relax New York, NY, USA: Association for Computing Machinery, 2019, p.
  1149–1162. [Online]. Available:
  \url{https://doi.org/10.1145/3314221.3314584}
\BIBentrySTDinterwordspacing

\bibitem{HRS20}
K.~Hietala, R.~Rand, S.-H. Hung, L.~Li, and M.~Hicks, ``Proving quantum
  programs correct,'' 2020.

\bibitem{Rey08}
J.~C. Reynolds, ``An introduction to separation logic (preliminary draft),''
  \emph{Course notes, October}, 2008.

\bibitem{NC00}
M.~A. Nielsen and I.~Chuang, \emph{Quantum computation and quantum
  information}.\hskip 1em plus 0.5em minus 0.4em\relax Cambridge University
  Press, 2002.

\bibitem{YZL18}
M.~Ying, L.~Zhou, and Y.~Li, ``Reasoning about parallel quantum programs,''
  2018.

\bibitem{Cirq}
{The Cirq Developers}, ``quantumlib/cirq: A python framework for creating,
  editing, and invoking noisy intermediate scale quantum (nisq) circuits,''
  2018, \url{https://github.com/quantumlib/Cirq}.

\bibitem{BR03}
\BIBentryALTinterwordspacing
P.~O. Boykin and V.~Roychowdhury, ``Optimal encryption of quantum bits,''
  \emph{Phys. Rev. A}, vol.~67, p. 042317, Apr 2003. [Online]. Available:
  \url{https://link.aps.org/doi/10.1103/PhysRevA.67.042317}
\BIBentrySTDinterwordspacing

\bibitem{MTW00}
\BIBentryALTinterwordspacing
M.~Mosca, A.~Tapp, and R.~de~Wolf, ``Private quantum channels and the cost of
  randomizing quantum information,'' \emph{arXiv preprint quant-ph/0003101},
  2000. [Online]. Available: \url{https://arxiv.org/abs/quant-ph/0003101}
\BIBentrySTDinterwordspacing

\bibitem{CGL99}
\BIBentryALTinterwordspacing
R.~Cleve, D.~Gottesman, and H.-K. Lo, ``How to share a quantum secret,''
  \emph{Phys. Rev. Lett.}, vol.~83, pp. 648--651, Jul 1999. [Online].
  Available: \url{https://link.aps.org/doi/10.1103/PhysRevLett.83.648}
\BIBentrySTDinterwordspacing

\bibitem{HBB99}
\BIBentryALTinterwordspacing
M.~Hillery, V.~Bu\ifmmode~\check{z}\else \v{z}\fi{}ek, and A.~Berthiaume,
  ``Quantum secret sharing,'' \emph{Phys. Rev. A}, vol.~59, pp. 1829--1834, Mar
  1999. [Online]. Available:
  \url{https://link.aps.org/doi/10.1103/PhysRevA.59.1829}
\BIBentrySTDinterwordspacing

\bibitem{BHY19}
\BIBentryALTinterwordspacing
G.~Barthe, J.~Hsu, M.~Ying, N.~Yu, and L.~Zhou, ``Relational proofs for quantum
  programs,'' \emph{Proc. ACM Program. Lang.}, vol.~4, no. POPL, Dec. 2019.
  [Online]. Available: \url{https://doi.org/10.1145/3371089}
\BIBentrySTDinterwordspacing

\bibitem{Sel04b}
P.~Selinger, ``Towards a quantum programming language,'' \emph{Mathematical
  Structures in Computer Science}, vol.~14, no.~4, pp. 527--586, 2004.

\bibitem{Ying16}
M.~Ying, \emph{Foundations of Quantum Programming}.\hskip 1em plus 0.5em minus
  0.4em\relax Morgan Kaufmann, 2016.

\bibitem{Pre18}
\BIBentryALTinterwordspacing
J.~Preskill, ``Quantum {C}omputing in the {NISQ} era and beyond,''
  \emph{{Quantum}}, vol.~2, p.~79, Aug. 2018. [Online]. Available:
  \url{https://doi.org/10.22331/q-2018-08-06-79}
\BIBentrySTDinterwordspacing

\bibitem{Unr19b}
D.~Unruh, ``Quantum hoare logic with ghost variables,'' in \emph{2019 34th
  Annual ACM/IEEE Symposium on Logic in Computer Science (LICS)}, 2019, pp.
  1--13.

\bibitem{Bla79}
\BIBentryALTinterwordspacing
G.~R. Blakley, ``Safeguarding cryptographic keys,'' in \emph{Managing
  Requirements Knowledge, International Workshop on}.\hskip 1em plus 0.5em
  minus 0.4em\relax Los Alamitos, CA, USA: IEEE Computer Society, jun 1979, p.
  313. [Online]. Available:
  \url{https://doi.ieeecomputersociety.org/10.1109/AFIPS.1979.98}
\BIBentrySTDinterwordspacing

\bibitem{Sha79}
\BIBentryALTinterwordspacing
A.~Shamir, ``How to share a secret,'' \emph{Commun. ACM}, vol.~22, no.~11, p.
  612–613, Nov. 1979. [Online]. Available:
  \url{https://doi.org/10.1145/359168.359176}
\BIBentrySTDinterwordspacing

\bibitem{KKI99}
\BIBentryALTinterwordspacing
A.~Karlsson, M.~Koashi, and N.~Imoto, ``Quantum entanglement for secret sharing
  and secret splitting,'' \emph{Phys. Rev. A}, vol.~59, pp. 162--168, Jan 1999.
  [Online]. Available: \url{https://link.aps.org/doi/10.1103/PhysRevA.59.162}
\BIBentrySTDinterwordspacing

\bibitem{HSM20}
\BIBentryALTinterwordspacing
B.~Heim, M.~Soeken, S.~Marshall, C.~Granade, M.~Roetteler, A.~Geller,
  M.~Troyer, and K.~Svore, ``Quantum programming languages,'' \emph{Nature
  Reviews Physics}, vol.~2, no.~12, pp. 709--722, Dec 2020. [Online].
  Available: \url{https://doi.org/10.1038/s42254-020-00245-7}
\BIBentrySTDinterwordspacing

\bibitem{Ak05}
\BIBentryALTinterwordspacing
D.~Akatov, ``The logic of quantum program verification,'' Master's thesis,
  Oxford University Computing Laboratory, 2005. [Online]. Available:
  \url{http://www.academia.edu/download/7563948/thesis-1.1.ps}
\BIBentrySTDinterwordspacing

\bibitem{DP06}
E.~D'hondt and P.~Panangaden, ``Quantum weakest preconditions,''
  \emph{Mathematical Structures in Computer Science}, vol.~16, no.~3, pp.
  429--451, 2006.

\bibitem{BS04}
\BIBentryALTinterwordspacing
A.~Baltag and S.~Smets, ``The logic of quantum programs,'' in \emph{Proceedings
  of the 2nd International Workshop on Quantum Programming Languages (QPL
  2004)}, P.~Selinger, Ed., 2004, pp. 39--56. [Online]. Available:
  \url{https://www.mathstat.dal.ca/~selinger/qpl2004/PDFS/04Baltag-Smets.pdf}
\BIBentrySTDinterwordspacing

\bibitem{BS06}
------, ``Lqp: the dynamic logic of quantum information,'' \emph{Mathematical
  Structures in Computer Science}, vol.~16, no.~3, pp. 491--525, 2006.

\bibitem{BJ04}
\BIBentryALTinterwordspacing
O.~Brunet and P.~Jorrand, ``Dynamic quantum logic for quantum programs,''
  \emph{International Journal of Quantum Information}, vol.~02, no.~01, pp.
  45--54, 2004. [Online]. Available:
  \url{https://doi.org/10.1142/S0219749904000067}
\BIBentrySTDinterwordspacing

\bibitem{CMS06}
R.~Chadha, P.~Mateus, and A.~Sernadas, ``Reasoning about imperative quantum
  programs,'' \emph{Electronic Notes in Theoretical Computer Science}, vol.
  158, pp. 19--39, 2006.

\bibitem{Kak09}
Y.~Kakutani, ``A logic for formal verification of quantum programs,'' in
  \emph{Proceedings of the 13th Asian conference on Advances in Computer
  Science: information Security and Privacy (ASIAN 2009)}, A.~Datta, Ed.,
  Springer.\hskip 1em plus 0.5em minus 0.4em\relax Berlin, Heidelberg: Springer
  Berlin Heidelberg, 2009, pp. 79--93.

\bibitem{Ran16}
\BIBentryALTinterwordspacing
R.~Rand, ``Verification logics for quantum programs,'' 2016. [Online].
  Available: \url{http://www.cs.umd.edu/~rrand/wpe.pdf}
\BIBentrySTDinterwordspacing

\bibitem{FDJ07}
Y.~Feng, R.~Duan, Z.~Ji, and M.~Ying, ``Proof rules for the correctness of
  quantum programs,'' \emph{Theoretical Computer Science}, vol. 386, no. 1-2,
  pp. 151--166, 2007.

\bibitem{YDF10}
M.~Ying, R.~Duan, Y.~Feng, and Z.~Ji, ``Predicate transformer semantics of
  quantum programs,'' \emph{Semantic Techniques in Quantum Computation}, no.~8,
  pp. 311--360, 2010.

\bibitem{Unr19a}
\BIBentryALTinterwordspacing
D.~Unruh, ``Quantum relational hoare logic,'' \emph{Proc. ACM Program. Lang.},
  vol.~3, no. POPL, Jan. 2019. [Online]. Available:
  \url{https://doi.org/10.1145/3290346}
\BIBentrySTDinterwordspacing

\bibitem{LU19}
Y.~Li and D.~Unruh, ``Quantum relational hoare logic with expectations,'' 2019.

\bibitem{Bor20}
\BIBentryALTinterwordspacing
A.~Bordg, H.~Lachnitt, and Y.~He, ``Certified quantum computation in
  isabelle/hol,'' \emph{Journal of Automated Reasoning}, Dec 2020. [Online].
  Available: \url{https://doi.org/10.1007/s10817-020-09584-7}
\BIBentrySTDinterwordspacing

\bibitem{HO13}
\BIBentryALTinterwordspacing
M.~HORODECKI and J.~OPPENHEIM, ``(quantumness in the context of) resource
  theories,'' \emph{International Journal of Modern Physics B}, vol.~27, no.
  01n03, p. 1345019, 2013. [Online]. Available:
  \url{https://doi.org/10.1142/S0217979213450197}
\BIBentrySTDinterwordspacing

\bibitem{Plenio2014}
\BIBentryALTinterwordspacing
M.~B. Plenio and S.~S. Virmani, \emph{An Introduction to Entanglement
  Theory}.\hskip 1em plus 0.5em minus 0.4em\relax Cham: Springer International
  Publishing, 2014, pp. 173--209. [Online]. Available:
  \url{https://doi.org/10.1007/978-3-319-04063-9_8}
\BIBentrySTDinterwordspacing

\bibitem{VMG14}
\BIBentryALTinterwordspacing
V.~Veitch, S.~A.~H. Mousavian, D.~Gottesman, and J.~Emerson, ``The resource
  theory of stabilizer quantum computation,'' \emph{New Journal of Physics},
  vol.~16, no.~1, p. 013009, jan 2014. [Online]. Available:
  \url{https://doi.org/10.1088/1367-2630/16/1/013009}
\BIBentrySTDinterwordspacing

\bibitem{fritz_2017}
T.~FRITZ, ``Resource convertibility and ordered commutative monoids,''
  \emph{Mathematical Structures in Computer Science}, vol.~27, no.~6, p.
  850–938, 2017.

\bibitem{SSV18}
\BIBentryALTinterwordspacing
M.~Y. Siraichi, V.~F.~d. Santos, S.~Collange, and F.~M.~Q. Pereira, ``Qubit
  allocation,'' in \emph{Proceedings of the 2018 International Symposium on
  Code Generation and Optimization}, ser. CGO 2018.\hskip 1em plus 0.5em minus
  0.4em\relax New York, NY, USA: Association for Computing Machinery, 2018, p.
  113–125. [Online]. Available: \url{https://doi.org/10.1145/3168822}
\BIBentrySTDinterwordspacing

\bibitem{BvN36}
G.~Birkhoff and J.~Von~Neumann, ``The logic of quantum mechanics,''
  \emph{Annals of Mathematics}, vol.~37, no.~4, pp. 823--843, 1936.

\bibitem{KA83}
G.~Kalmbach, \emph{Orthomodular lattices}.\hskip 1em plus 0.5em minus
  0.4em\relax Academic Press, 1983, vol.~18.

\bibitem{YLZ19}
N.~Yu, C.-Y. Lai, and L.~Zhou, ``Protocols for packet quantum network
  intercommunication,'' 2019.

\bibitem{CCA17}
Q.~Cao, S.~Cuellar, and A.~W. Appel, ``Bringing order to the separation logic
  jungle,'' in \emph{Programming Languages and Systems}, B.-Y.~E. Chang,
  Ed.\hskip 1em plus 0.5em minus 0.4em\relax Cham: Springer International
  Publishing, 2017, pp. 190--211.

\end{thebibliography}
	
	%	\newpage
	%	{\Large \textbf{Supplementary material and deferred proofs}}
	%	\renewcommand{\thesection}{A\arabic{section}}
	
	\onecolumn
	
	{\centering\Large \textbf{Supplementary material and deferred proofs}}

	\setcounter{section}{0}
	
	\begin{appendices}
		
		\section{Preliminary}
		\label{sec basic Quantum app}
		
		In the main text we give a brief introduction of quantum information (see Section \ref{sec basic Quantum}). A extended introduction is given here for the convenience of reader.
		
		Quantum Information is built on the linear algebra. We first give the mathematical preliminary needed for understanding quantum information/computation.
		
		\subsection{Mathematical Preliminary}
		
		We write $\mathbb{C}$ for the set of complex numbers. For each
		complex number $\lambda\in \mathbb{C}$, $\lambda^{\ast}$ stands for
		the conjugate of $\lambda$. A (complex) vector space is a nonempty
		set $\mathcal{H}$ together with two operations: vector addition $+:
		\mathcal{H}\times \mathcal{H}\rightarrow \mathcal{H}$ and scalar
		multiplication $\cdot : \mathbb{C}\times \mathcal{H}\rightarrow
		\mathcal{H}$, satisfying the following conditions:
		\begin{enumerate}\item $(\mathcal{H},+)$ is an Abelian group, its zero element $\mathbf{0}$ is called the zero vector; \item
			$1|\varphi\rangle=|\varphi\rangle$; \item $\lambda (\mu
			|\varphi\rangle)=\lambda\mu |\varphi\rangle$; \item $(\lambda
			+\mu)|\varphi\rangle =\lambda |\varphi\rangle +\mu |\varphi\rangle$;
			and \item $\lambda (|\varphi\rangle +|\psi\rangle)=\lambda
			|\varphi\rangle + \lambda |\psi\rangle$ \end{enumerate} for any
		$\lambda,\mu\in\mathbb{C}$ and $|\varphi\rangle,
		|\psi\rangle\in\mathcal{H}$.
		
		An inner product over a vector space $\mathcal{H}$ is a mapping
		$\langle\cdot|\cdot\rangle:\mathcal{H}\times \mathcal{H}\rightarrow
		\mathbb{C}$ satisfying the following properties:
		\begin{enumerate}\item $\langle\varphi|\varphi\rangle\geq 0$ with
			equality if and only if $|\varphi\rangle =0$; \item
			$\langle\varphi|\psi\rangle=\langle\psi|\varphi\rangle^{\ast}$; and
			\item $\langle\varphi|(\lambda_1|\psi_1\rangle+\lambda_2|\psi_2\rangle)=
			\lambda_1\langle\varphi|\psi_1\rangle+\lambda_2\langle\varphi|\psi_2\rangle$\end{enumerate}
		for any $|\varphi\rangle, |\psi\rangle, |\psi_1\rangle,
		|\psi_2\rangle \in \mathcal{H}$ and for any $\lambda_1,\lambda_2\in
		\mathbb{C}$. Sometimes, we also write
		$(|\varphi\rangle,|\psi\rangle)$ for the inner product
		$\langle\varphi|\psi\rangle$ of $|\varphi\rangle$ and
		$|\psi\rangle$.
		
		For any vector $|\psi\rangle$ in $\mathcal{H}$, its length
		$||\psi||$ is defined to be $\sqrt{\langle\psi|\psi\rangle}$. A
		vector $|\psi\rangle$ is called a unit vector if $||\psi||=1$.
		A family $\{|\psi_i\rangle\}_{i\in I}$ of
		unit vectors is called an orthonormal basis of $\mathcal{H}$ if
		\begin{enumerate}\item $|\psi_i\rangle\perp |\psi_j\rangle$ for any $i,j\in I$ with $i\neq j$; and \item
			$|\psi\rangle=\sum_{i\in I}\langle\psi_i|\psi\rangle|\psi_i\rangle$
			for each $|\psi\rangle\in\mathcal{H}.$\end{enumerate} In this case,
		the cardinality of $I$ is called the dimension of $\mathcal{H}$. We use $\dim(\cH)$ to denote the dimension of $\cH.$
		
		A Hilbert space is defined to be a complete inner product space;
		that is, an inner product space in which each Cauchy sequence of
		vectors has a limit. According to a basic postulate of quantum
		mechanics, the state space of an isolated quantum system is
		represented by a Hilbert space, and a pure state of the system is
		described by a unit vector in its state space.
		
		\begin{example}\label{ex-qub}\begin{enumerate}\item The state space of qubits is the
				$2-$dimensional Hilbert space:
				$$\mathcal{H}_2=\{\alpha|0\rangle+\beta|1\rangle:\alpha,\beta\in\mathbb{C}\}.$$
				The inner product in $\mathcal{H}_2$ is defined by $$(\alpha
				|0\rangle+\beta |1\rangle,\alpha^{\prime} |0\rangle+\beta^{\prime}
				|1\rangle)=\alpha^{\ast}\alpha^{\prime}+\beta^{\ast}\beta^{\prime}
				$$ for all
				$\alpha,\alpha^{\prime},\beta,\beta^{\prime}\in\mathbb{C}$. Then
				$\{|0\rangle, |1\rangle\}$ is an orthonormal basis of
				$\mathcal{H}_2$, called the computational basis.
				
				%Do we need to add n-dimensional Hilbert space (finite) here?
				
				\item The space $l_2$ of square summable sequences is $$\mathcal{H}_\infty=\Big\{\sum_{n=-\infty}^{\infty}\alpha_n|n\rangle:\alpha_n\in\mathbb{C}\
				{\rm for\ all}\ n\in\mathbb{Z}\ {\rm and}\
				\sum_{n=-\infty}^{\infty}|\alpha_n|^{2}<\infty\Big\},$$ where
				$\mathbb{Z}$ is the set of integers. The inner product in
				$\mathcal{H}_\infty$ is defined by
				$$\Big(\sum_{n=-\infty}^{\infty}\alpha_n|n\rangle,\sum_{n=-\infty}^{\infty}\alpha^{\prime}_n|n\rangle\Big)=\sum_{n=-\infty}^{\infty}
				\alpha_n^{\ast}\alpha_n^{\prime}$$ for all
				$\alpha_n,\alpha_n^{\prime}\in\mathbb{C}$, $-\infty <n<\infty$. Then
				$\{|n\rangle: n\in\mathbb{Z}\}$ is an orthonormal basis of
				$\mathcal{H}_\infty$, called the computational basis.
		\end{enumerate}\end{example}

		A (linear) operator on a Hilbert space $\mathcal{H}$ is a mapping
		$A:\mathcal{H}\rightarrow\mathcal{H}$ satisfying the following
		conditions:\begin{enumerate}\item
			$A(|\varphi\rangle+|\psi\rangle)=A|\varphi\rangle+A|\psi\rangle$;
			\item $A(\lambda |\psi\rangle)=\lambda A|\psi\rangle$
		\end{enumerate} for all $|\varphi\rangle,|\psi\in\mathcal{H}$ and
		$\lambda\in\mathbb{C}$. If $\{|\psi_i\rangle\}$ is an orthonormal
		basis of $\mathcal{H}$, then an operator $A$ is uniquely determined
		by the images $\{A|\psi_i\rangle\}$ of basis vectors
		$\{|\psi_i\rangle\}$ under $A$. In particular, $A$ can be
		represented by matrix
		$$A=\left(\langle\psi_i|A|\psi_j\rangle\right)_{ij}$$ when
		$\mathcal{H}$ is finite-dimensional. An operator $A$ on $\mathcal{H}$ is said to be bounded if there is a
		constant $c\geq 0$ such that $\|A|\psi\rangle\|\leq c\cdot\|\psi\|$
		for all $|\psi\rangle\in\mathcal{H}$. In the paper, only bounded operators are considered and for simplicity, we omit ``bounded''.
		The identity operator
		on $\mathcal{H}$ is denoted $I_{\mathcal{H}}$, and the zero operator
		on $\mathcal{H}$ that maps every vector in $\mathcal{H}$ to the zero
		vector is denoted $0_\mathcal{H}$.
		
		For any operator $A$ on $\mathcal{H}$, there exists a unique linear
		operator $A^{\dag}$ on $\mathcal{H}$ such that
		$$(|\varphi\rangle,
		A|\psi\rangle)=(A^{\dag}|\psi\rangle,|\varphi\rangle)$$ for all
		$|\varphi\rangle, |\psi\rangle\in\mathcal{H}$. The operator
		$A^{\dag}$ is called the adjoint of $A$. Given the matrix form of $A$, $A^\dag$ is the conjugate transpose of $A$.
		
		Following are frequently used sets of operators:
		\begin{enumerate}
			\item \emph{Hermitian operator}: An operator $M$ on
			$\mathcal{H}$ is said to be Hermitian if $M^{\dag}=M$.
			\item \emph{Positive semi-definite operator}: An Hermitian operator $A$ on $\mathcal{H}$ is said to be positive semi-definite if $\langle
			\psi|A|\psi\rangle\geq 0$ for all states $|\psi\rangle\in
			\mathcal{H}$. 
			\item \emph{Projection}: An Hermitian operator $P$ on
			$\mathcal{H}$ is a projection if $P^2 = P$. There is a one-to-one correspondence between the closed subspaces and projections: given projection $P$, its corresponding closed subspace is $\{|\phi\>\in\cH: P|\phi\> = |\phi\>\}$; and given closed subspace $V\subseteq\cH$, its corresponding projection is $\sum_i|\phi_i\>\<\phi_i|$ where $\{|\phi_i\>\}$ is an orthonormal basis of $V$.
			\item \emph{Unitary operator}: An operator $U$ on $\mathcal{H}$ is unitary if $UU^\dag = U^\dag U = I$ where $I$ is the identity operator on $\mathcal{H}$.
			\item \emph{Density operator}: An positive semi-definite operator with trace one.
		\end{enumerate}
		
		Following are frequently used concepts of operators:
		\begin{enumerate}
			\item \emph{Trace}: The trace of an operator $A$ on $\mathcal{H}$ is given by $$\tr(A)=\sum_{i}\langle \psi_i|A|\psi_i\rangle$$
			where $\{|\psi_i\>\}$ is an orthonormal basis of $\mathcal{H}$. In particular, trace is independent of the choice of the orthonormal basis. Given the matric form of $A$, $\tr(A)$ is exactly the summation of diagonal entries of $A$.
			\item \emph{Support}: The \emph{support} of a Hermitian operator $M$ on $\mathcal{H}$ is the (topological) closure of subspace spanned by its eigenvectors with nonzero eigenvalues. One can show that:
			$$\supp(M) = \{|\phi\>\in\cH:\ \<\phi|\rho|\phi\> = 0\}^\bot,$$
			where ${}^\bot$ stands for ortho-complement.
			\item \emph{L\"owner Order}: Given two Hermitian operator $A,B$ on $\mathcal{H}$, we use the \emph{L\"owner order} to compare them which is defined as follows: 
			$$A\sqsubseteq B \text{ if and only if } B-A \text{ is positive semi-definite};$$
			that is, for any $|\phi\>\in\cH$, $\<\phi|A|\phi\> \le \<\phi|B|\phi\>$. Whenever both $A,B$ are projections, their L\"owner order is consistent with the inclusion relation between the subspaces corresponding to $A,B$.
			\item \emph{eigenspaces of eigenvalue 1}: For a Hermitian operator $A$ on $\mathcal{H}$, we define $\proj (A)$ as the eigenspaces of eigenvalue 1:
			$$\proj(A) = \{|\phi\>\in\cH: A|\phi\> = |\phi\>\}.$$
		\end{enumerate}

		\subsection{Basics of Quantum Information, Extended Version}
		\label{sec basic Quantum extended version}
		
		% Quantum computing acts on microsystems which described by quantum mechanics
		% and can be described by linear algebra over the field of complex numbers
		% $\mathbb{C}$.
		The \emph{state space} of a quantum system is a Hilbert space $\cH$, which is
		essentially a vector space in the finite-dimensional case. A \emph{pure state}
		of the system is a unit column vector $|\psi\>\in\cH$. For example, the state
		space of a quantum bit (aka qubit) is a two-dimensional Hilbert space $\cH_2$ with
		basis states $$|0\> = \left[\begin{array}{c} 1 \\ 0\end{array} \right] \quad\text{and}\quad
		|1\> = \left[\begin{array}{c} 0 \\ 1 \end{array}\right],$$ and any pure state
		of a qubit can be described in the form $$\alpha|0\>+\beta|1\> =
		\left[\begin{array}{c} \alpha \\ \beta \end{array}\right]$$ satisfying
		normalization condition $|\alpha|^2+|\beta|^2 = 1$. The orthonormal basis is not unique, for example, the states $|+\> = \frac{1}{\sqrt{2}}(|0\>+|1\>)$ and $|-\> = \frac{1}{\sqrt{2}}(|0\>-|1\>)$ is another orthonormal basis of $\cH_2$.

		When the state is not
		completely known but could be in one of some pure states $|\psi_i\>$ with
		respective probabilities $p_i$, we call $\{(p_i,|\psi_i\>)\}$ an
		\emph{ensemble} of pure states or a \emph{mixed state}, and the system is
		fully described by the \emph{density operator} $\rho =
		\sum_ip_i|\psi_i\>\<\psi_i|$, which mathematically, the positive semi-definite operator with unit trace. For example, the completely mixed state of a
		qubit can be seen as ensemble $\{(0.5,|0\>), (0.5,|1\>)\}$ (i.e. the state is
		either $|0\>$ or $|1\>$ with the same probability 0.5) or density matrix
		$$\frac{1}{2}(|0\>\<0|+|1\>\<1|) = \left[\begin{array}{cc} 0.5 & 0 \\ 0 & 0.5
		\end{array}\right];$$ 
		if a state is in $|0\rangle$ with probability $\frac{2}{3}$ and in $|+\rangle$ with probability $\frac{1}{3}$, then it can be described by density operator
		\begin{equation}\label{ex-mix}\rho=\frac{2}{3}|0\rangle\langle 0|+\frac{1}{3}|+\rangle\langle +|=\frac{1}{6}\left (\begin{array}{cc}5 & 1\\ 1&1\end{array}\right ).\end{equation}
		
		The evolution of a quantum system is modelled by a \emph{unitary operator}
		$U$; i.e.  a complex matrix with $UU^\dag=U^\dag U$ being the identity
		operator, where $\dag$ is conjugate transpose. In quantum computing, operators
		are often called \emph{quantum gates}. For example, the Hadamard gate
		$H=\frac{1}{\sqrt{2}}\left[\begin{array}{cc} 1 & 1 \\ 1 & -1
		\end{array}\right]$ maps $|0\>, |1\>$ to their superpositions
		$|+\>$ and $|-\>$ respectively:
		\begin{align*}
		&H|0\> = \frac{1}{\sqrt{2}}\left[\begin{array}{cc} 1 & 1 \\ 1 & -1
		\end{array}\right]\cdot\left[\begin{array}{c} 1 \\ 0\end{array} \right] = \frac{1}{\sqrt{2}}\left[\begin{array}{c} 1 \\ 1\end{array} \right] = |+\>,\\
		&H|1\> = \frac{1}{\sqrt{2}}\left[\begin{array}{cc} 1 & 1 \\ 1 & -1
		\end{array}\right]\cdot\left[\begin{array}{c} 0 \\ 1\end{array} \right] = \frac{1}{\sqrt{2}}\left[\begin{array}{c} 1 \\ -1\end{array} \right] = |-\>.
		\end{align*}
		
		Unlike a classical system which can be observed directly without changing its
		state, we need to perform a quantum measurement to extract information from a
		quantum state which inevitably leads to state collapse. Formally, a
		\emph{projective quantum measurement} consists of a set of \emph{projections},
		$M_0,M_1,\dots, M_n$ that satisfies the completeness condition:
		$$\sum_{i = 0}^n M_i^\dag M_i = \sum_{i = 0}^n M_i = I$$
		where $I$ is the identity operator. When such a measurement is applied to a quantum state
		$\rho$, we obtain one of the classical outcome $i\in\{0,1,\dots,n\}$ with
		probability $p_i = \mathrm{tr}(M_i\rho)$, and the post-measurement state of
		the system is then $\frac{M_i\rho M_i}{p_i}$.
		For instance, consider the measurement defined by $\cM=\{M_0 = |+\>\<+|,M_1 = |-\>\<-|\}$, and if we perform $\cM$ on a qubit in (mixed) state $\rho$ given in equation~(\ref{ex-mix}), then the probability that we get outcome \textquotedblleft$1$\textquotedblright\ is $$p_1=\mathit{tr}(M_1\rho)=tr\left(\frac{1}{2}\left[\begin{array}{cc}1&-1\\-1&1\end{array}\right]
		\cdot\frac{1}{6}\left[\begin{array}{cc}5& 1\\ 1&1\end{array}\right]
		\right)=\frac{1}{12}\cdot \tr\left[\begin{array}{cc}4&0\\-4&0\end{array}\right]=\frac{1}{3}$$ and after that, the qubit's state will change to $|-\>\<-|$:
		$$M_1\rho M_1/p_1 = \frac{1}{2}\left[\begin{array}{cc}1&-1\\-1&1\end{array}\right]\cdot\frac{1}{6}\left[\begin{array}{cc}5& 1\\ 1&1\end{array}\right]\cdot\frac{1}{2}\left[\begin{array}{cc}1&-1\\-1&1\end{array}\right]\div\frac{1}{3} = \frac{1}{2}\left[\begin{array}{cc}1&-1\\-1&1\end{array}\right] = |-\>\<-|.$$
		Similarly, the probability of outcome \textquotedblleft$0$\textquotedblright\ is $p_0=\frac{2}{3}$, and then the state changes to $|+\>\<+|$.

		We use variables $p,q,r, ...$ to denote quantum systems. Operations in quantum
		computing are often performed on a composite system consisting of multiple
		qubits. To indicate which system a state describes or an operation acts on, we
		use subscripts; for example, $\cH_p$ is the state space of system $p$,
		$|0\>_{p}$ is the pure state $|0\>$ of the system $p$ and $|1\>_{q}\<1|$ is
		the density matrix of the system $q$. The composite system is described by the
		tensor product of its subsystems; for example, a composite system $pq$ with $p,q$ being single qubit systems has the
		state space $\cH_p\otimes\cH_q$, and 
		$$|0\>_p\otimes|1\>_q = \left[\begin{array}{c} 1 \\ 0\end{array} \right]_p\otimes\left[\begin{array}{c} 0 \\ 1\end{array} \right]_q = \left[\begin{array}{c} 0 \\ 1 \\ 0 \\ 0\end{array} \right]_{pq}$$
		(or, $|0\>_p|1\>_q$
		for short) is a pure state in which subsystem $p$ is in state $|0\>$ and
		subsystem $q$ is in state $|1\>$. Due to the superposition principle, there
		exist states like 
		\begin{equation}
		\label{ex-ent}
		|\Phi\>_{pq} =
		\frac{1}{\sqrt{2}}(|0\>_p|0\>_q+|1\>_p|1\>_q) = \frac{1}{\sqrt{2}}\left[\begin{array}{c} 1 \\ 0 \\ 0 \\ 1\end{array} \right]_{pq}
		\end{equation} that cannot be written in the
		simple tensor form $|\phi\>_p|\psi\>_q$, which are called \emph{entangled
			states}. These states play a crucial role in applications of quantum
		computation and quantum communication. 
		
		The state of a composite system fully determines the state of each subsystem.
		Formally, given composite system $pq$ in state $\rho$, subsystem $q$ is then
		in state $\tr_p(\rho)$, where the partial trace $\tr_p(\cdot)$ over $p$ is a
		mapping from operators on $\cH_p\otimes\cH_q$ to operators on $\cH_q$ defined
		by: $$\tr_p(|\phi_p\>_p\<\psi_p|\otimes|\phi_q\>_q\<\psi_q|) =
		\<\psi_p|\phi_p\>\cdots|\phi_q\>_q\<\psi_q|$$  for all
		$|\phi_p\>,|\psi_p\>\in\cH_p$ and $|\phi_q\>,|\psi_q\>\in\cH_q$ together with
		linearity. The state $\tr_q(\rho)$ of subsystem $q$ can be defined
		symmetrically.  We often use the notations $\rt{\rho}{p} \triangleq
		\tr_p(\rho)$ and $\rt{\rho}{q} \triangleq \tr_q(\rho)$ in order to explicitly
		indicate that $\rt{\rho}{p}$ and  $\rt{\rho}{q}$ are states of $p,q$,
		respectively. For example, if the composite system $pq$ is in state $
		|\Phi\rangle_{pq}$ defined in Eqn. \ref{ex-ent} or equivalently represented by density operator $\Phi_{pq}$ \begin{equation}\label{max-ent}\Phi_{pq} = |\Phi\>_{pq}\<\Phi| = \frac{1}{2}(|0\>_p\<0|\otimes|0\>_q\<0|+ |0\>_p\<1|\otimes|0\>_q\<1|+|1\>_p\<0|\otimes|1\>_q\<0|+|1\>_p\<1|\otimes|1\>_q\<1|)
		= \frac{1}{2}\left[\begin{array}{cccc} 1 & & & 1\\ & 0 & & \\ & & 0 & \\ 1 & & & 1\end{array} \right]_{pq}
		\end{equation} then the partial traces $\rt{\Phi_{pq}}{q} = \tr_p(|\Phi\>_{pq}\<\Phi|) = \frac{1}{2}(|0\>_q\<0|+|1\>_q\<1|) \text{\ and\ } \rt{\Phi_{pq}}{p} = \tr_q(|\Phi\>_{pq}\<\Phi|) = \frac{1}{2}(|0\>_p\<0|+|1\>_p\<1|)$
		describe states of $q$ and $p$, respectively.
		
		\vspace{0.2cm}
		\noindent\textbf{Summary of Notations.}
		Let $\vars$ be  the set of all quantum variables. A \emph{quantum register} is a list of \emph{distinct} variables $\qbar = q_1,\dots,q_n$. Each quantum variable $q$ has a type $\cH_q$, which is the state Hilbert space of quantum system denoted by $q$.
		For a set of quantum variables $S = \{q_1,\dots,q_n\}\subseteq \vars$ (or a quantum register $\qbar = q_1,\dots,q_n$), we fix following notations:
		\begin{itemize}
			\item $\cH_S = \bigotimes_{i=1}^n\cH_{q_i}$: the Hilbert space of $S$.
			\item $\dim(S)$: the dimension of $\cH_S$.
			\item $\cD(S)$: the set of all mixed quantum states (i.e. density matrices)
			of $S$. In particular, for any $\rho\in\cD(S)$, its \emph{domain} is
			defined as $\dom{\rho}\triangleq S$; we write $\cD \triangleq
			\bigcup_{S\subseteq \vars}\cD(S)$ for the set of all states.
			\item $\cP(S)$: the set of projections on $\cH_S$. In particular, for any
			$P\in\cP(S)$, its domain is defined as $\free{P} \triangleq S$.  Since
			there is a one-to-one correspondence between projections and closed
			subspaces, we sometimes called closed subspaces of $\cH_S$ projections.
			We write $\cP\triangleq \bigcup_{S\subseteq \vars}\cP(S)$ for the set of
			all projections.
			\item $
			\rt{\rho}{S} \triangleq 
			\tr_{\dom{\rho}\backslash S}(\rho)
			$: the \emph{restriction} of state $\rho$ on $S$,  defined as a reduced density operator over $S\cap\dom{\rho}$. 
		\end{itemize}

		\vspace{0.3cm}
		
		\noindent\textbf{Permutations} of variables are frequently used in quantum computing, e.g., in qubit allocation~\cite{SSV18}. We use 
		$\perm(\qbar\mapsto\qbar^\prime)$ to denote the operator  
		that permutes a list $\qbar := q_1,q_2,\cdots,q_n$ of quantum variables to $\qbar^\prime = q_{i_1},q_{i_2},\cdots,q_{i_n}$. 	For example, if $q_1$ and $q_2$ are two different variables with same type and $\{|i\>\}$ is an arbitrary orthonormal basis of $\cH_{q_1}$ (and $\cH_{q_2}$), then the swap gate 
		$\swap[q_1,q_2]\triangleq\sum_{i,j}|i\>_{q_1}\<j|\otimes|j\>_{q_2}\<i|$ is the simplest permutation from $\qbar=q_1, q_2$ to $\qbar^\prime=q_2,q_1$, that is, for any $m,n$:
		$$\swap[q_1,q_2](|m\>_{q_1}|n\>_{q_2}) = \sum_{i,j}|i\>_{q_1}\<j|\otimes|j\>_{q_2}\<i|(|m\>_{q_1}|n\>_{q_2}) =   |n\>_{q_1}|m\>_{q_2}.$$
		Indeed, any permutation can be decomposed into a sequence of swap gates.

		\vspace{0.3cm}
		
		\noindent{\bf Meet and Join of Projections}: There is a one-to-one correspondence between the closed subspaces of a Hilbert space and projections in it, and moreover, the inclusion between closed subspaces is coincident with the L\"owner order between their projections.
		So, we do not distinguish a closed subspace from the projection onto it. Furthermore, let $^\perp$ stands for the orthocomplement, and for any $P, Q\in\cP(S)$, we define the \emph{meet} $\qwedge$ and \emph{join} $\qvee$: $$P\qwedge Q=P\cap Q,\quad\quad P\qvee Q=\overline{\operatorname{\spa}(P\cup Q)}$$ where $\overline{T}$ stands for the closure of $T$ and $\operatorname{\spa}(T)$ for the subspace spanned by $T$.
		It is well-known that $(\cP(S),\qwedge,\qvee,^\perp)$ is an orthomodular lattice (or quantum logic) \cite{BvN36, KA83}, with inclusion $\subseteq$ as its order.
		
		\section{BI and its quantum interpretation, Deferred Proofs for Section \ref{sec-q-interpret}}
		\label{sec app BI}
		
		\subsection{Hilbert-style rules for BI.}
		\label{sec app sub Hilbert-style rules for BI}
		Hilbert-style rules for BI is shown in Fig. \ref{fig HR for BI}.
		\begin{figure}[h]
			\small
			\begin{align*}
			&1. \quad \frac{}{\phi\vdash\phi}  \qquad
			&&2. \quad \frac{}{\phi\vdash\top} \qquad 
			&&3. \quad \frac{}{\bot\vdash\phi} \qquad
			&&4. \quad \frac{\mu\vdash\phi\quad\mu\vdash\psi}{\mu\vdash\phi\wedge\psi} \\
			&5. \quad \frac{\phi\vdash\psi_1\wedge\psi_2}{\phi\vdash\psi_i} \quad
			&&6. \quad \frac{\phi\vdash\psi}{\mu\wedge\phi\vdash\psi} \qquad
			&&7. \quad \frac{\mu\vdash\psi\quad\phi\vdash\psi}{\mu\vee\phi\vdash\psi} \qquad
			&&8. \quad \frac{\phi\vdash\psi_i}{\phi\vdash\psi_1\vee\psi_2} \\
			&9. \quad \frac{\mu\vdash\phi\rightarrow\psi\quad\mu\vdash\phi}{\mu\vdash\psi} \quad
			&&10. \quad \frac{\mu\wedge\phi\vdash\psi}{\mu\vdash\phi\rightarrow\psi} \quad 
			&&11. \quad \frac{\xi\vdash\phi\quad\mu\vdash\psi}{\xi\ast\mu\vdash\phi\ast\psi} \qquad
			&&12. \quad \frac{\mu\ast\phi\vdash\psi}{\mu\vdash\phi\sepimp\psi} \\
			&13. \quad \frac{\xi\vdash\phi\sepimp\psi\quad\mu\vdash\phi }{\xi\ast\mu\vdash\psi} \qquad
			%12. \quad& \frac{\mu\qm\phi\vdash\psi}{\mu\vdash\phi\qmimp\psi} \\[0.2cm]
			%13. \quad& \frac{\xi\vdash\phi\qmimp\psi\quad\mu\vdash\phi}{\xi\qm\mu\vdash\psi} & 
			&&14. \quad \frac{}{\phi\ast\psi\vdash\psi\ast\phi} \quad
			&&15. \quad \frac{}{(\phi\ast\psi)\ast\xi\vdash\phi\ast(\psi\ast\xi)} \qquad
			&&16. \quad \frac{}{\phi\ast\top\dashv\vdash\phi} 
			\end{align*}
			\caption{Hilbert-style rules for BI\cite{Pym02,Doc19}. $i=1$ or $2$ for rules 5 and 8. 
			}
			\label{fig HR for BI}
			%In rules 11 -- 16, $\qm,\qmimp$  stands for $\ast,\sepimp$ or $\sd,\sdimp$.
		\end{figure}
		
		\subsection{Proposition \ref{pro-rt}}
		
		\begin{proposition}[Properties of Partial Trace]
			\label{pro-rt}
			\begin{enumerate}
				\item $\forall\ S_1,S_2\subseteq \vars \text{\ and\ }\rho\in\cD:\ \rt{(\rt{\rho}{S_2})}{S_1} = \rt{(\rt{\rho}{S_1})}{S_2} = \rt{\rho}{S_1\cap S_2}$;
				\item $\forall\,S\subseteq\vars,\rho_1,\rho_2\in\cD$ with $\type{\rho_1}\cap\type{\rho_2}=\emptyset$:  $\rt{(\rho_1\otimes\rho_2)}{S} = \rt{\rho_1}{S}\otimes\rt{\rho_2}{S}.$
			\end{enumerate}
		\end{proposition}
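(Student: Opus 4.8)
The plan is to derive both clauses from two elementary facts about the partial trace, each verified by checking it on product operators $|\phi\>\<\psi|\otimes|\phi'\>\<\psi'|$ and extending by linearity (every operator on a tensor-product space is such a linear combination, so the defining equation for $\tr_p$ together with linearity completely determines it). \emph{Fact~(a), locality}: the partial trace over a set of variables $A$ acts only on the tensor factor whose domain contains $A$; precisely, if $\sigma_1,\sigma_2\in\cD$ have disjoint domains and $A\subseteq\dom{\sigma_2}$, then $\tr_A(\sigma_1\otimes\sigma_2)=\sigma_1\otimes\tr_A(\sigma_2)$, and symmetrically (recall $\otimes$ is commutative here because states are tagged with their domains). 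On products this is immediate from $\tr_p(|\phi_p\>\<\psi_p|\otimes X)=\<\psi_p|\phi_p\>\,X$, applied variable by variable over $A$. \emph{Fact~(b), composition}: if $A,B\subseteq\vars$ are disjoint, then on any state whose domain contains $A\cup B$ one has $\tr_A\circ\tr_B=\tr_B\circ\tr_A=\tr_{A\cup B}$; this is again checked on products, where the scalar factors $\<\psi_i|\phi_i\>$ contributed by variables of $A$ and of $B$ simply multiply, independently of the order in which they are taken.

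For clause~1, write $T=\dom{\rho}$. Since $\rt{\rho}{S_2}=\tr_{T\backslash S_2}(\rho)$ is a state with domain $T\cap S_2$, we get
\[
\rt{(\rt{\rho}{S_2})}{S_1}=\tr_{(T\cap S_2)\backslash S_1}\!\big(\tr_{T\backslash S_2}(\rho)\big).
\]
The index sets $(T\cap S_2)\backslash S_1$ and $T\backslash S_2$ are disjoint (the first lies in $S_2$, the second in its complement), and a direct set computation gives $\big((T\cap S_2)\backslash S_1\big)\cup\big(T\backslash S_2\big)=T\backslash(S_1\cap S_2)$. Hence Fact~(b) collapses the iterated trace to $\tr_{T\backslash(S_1\cap S_2)}(\rho)=\rt{\rho}{S_1\cap S_2}$. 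Swapping the roles of $S_1$ and $S_2$ yields $\rt{(\rt{\rho}{S_1})}{S_2}=\rt{\rho}{S_1\cap S_2}$ as well.

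For clause~2, write $T_i=\dom{\rho_i}$, so $T_1\cap T_2=\emptyset$ and $\dom{\rho_1\otimes\rho_2}=T_1\cup T_2$. Then
\[
\rt{(\rho_1\otimes\rho_2)}{S}=\tr_{(T_1\cup T_2)\backslash S}(\rho_1\otimes\rho_2)=\tr_{(T_1\backslash S)\cup(T_2\backslash S)}(\rho_1\otimes\rho_2),
\]
a disjoint union of index sets. Apply Fact~(b) to write this as $\tr_{T_1\backslash S}$ applied after $\tr_{T_2\backslash S}$; then use Fact~(a) twice, first to pull $\tr_{T_2\backslash S}$ (whose index set lies in $T_2=\dom{\rho_2}$) through the tensor, giving $\rho_1\otimes\rt{\rho_2}{S}$, and then to pull $\tr_{T_1\backslash S}$ (index set in $T_1=\dom{\rho_1}$) through, giving $\tr_{T_1\backslash S}(\rho_1)\otimes\rt{\rho_2}{S}=\rt{\rho_1}{S}\otimes\rt{\rho_2}{S}$.

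There is no genuine obstacle here: the argument is pure linear bookkeeping. The only points requiring a little care are (i) checking, at each application of Facts~(a) and~(b), that the index sets involved are disjoint and sit inside the domain of the state being acted on, and (ii) verifying the set identity $\big((T\cap S_2)\backslash S_1\big)\cup(T\backslash S_2)=T\backslash(S_1\cap S_2)$ used in clause~1. Everything else reduces, via linearity, to the single defining equation for the partial trace on product operators.
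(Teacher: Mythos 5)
Your proof is correct. The paper itself dismisses this proposition with a one-word proof (``Trivial''), and your argument---reducing both clauses to the locality and composition facts for the partial trace, verified on product operators and extended by linearity, together with the set identity $\big((T\cap S_2)\backslash S_1\big)\cup(T\backslash S_2)=T\backslash(S_1\cap S_2)$---is exactly the routine bookkeeping the authors are leaving implicit, written out in full and with no gaps.
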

		\begin{proof}
			Trivial.
		\end{proof}
		
		\subsection{Proof of Proposition \ref{prop quantum BI frame}}
		
		\begin{proposition}
			$(\cD,\circ, \preceq,1)$ forms a BI frame, where scalar number $1$ is understood as a state over the empty register. 
		\end{proposition}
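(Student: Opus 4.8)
The plan is to verify, one at a time, that the data $X=\cD$, the partial operation $\circ$ of Definition \ref{def tensor coupling}, the preorder $\preceq$ of Definition \ref{def partial order quantum state}, and the unit $e=1$ satisfy every clause of Definition \ref{def BI frame} (plus the standing requirement that $\preceq$ is a preorder). Here $1$ is the unique element of $\cD(\emptyset)$, since $\cH_\emptyset=\mathbb{C}$ and a density operator on $\mathbb{C}$ is the scalar $1$. Throughout, the only nontrivial input is Proposition \ref{pro-rt}: part (1) for transitivity of $\preceq$, and part (2) for the compatibility clause.

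First I would observe that $\preceq$ is a partial order. Reflexivity is immediate from $\rt{\rho}{\dom{\rho}}=\rho$. For transitivity, if $\rho\preceq\rho'\preceq\rho''$ then $\dom{\rho}\subseteq\dom{\rho''}$ and, by Proposition \ref{pro-rt}(1), $\rho=\rt{\rho'}{\dom{\rho}}=\rt{(\rt{\rho''}{\dom{\rho'}})}{\dom{\rho}}=\rt{\rho''}{\dom{\rho'}\cap\dom{\rho}}=\rt{\rho''}{\dom{\rho}}$. Antisymmetry is clear since $\rho\preceq\rho'\preceq\rho$ forces $\dom{\rho}=\dom{\rho'}$, whence $\rho=\rt{\rho'}{\dom{\rho}}=\rho'$. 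Next, unit existence: for any $\rho$, since $\dom{1}=\emptyset$ the product $\rho\circ 1$ is always defined and equals $\rho\otimes 1=\rho$; commutativity of $\otimes$ on domain-tagged states gives $1\circ\rho=\rho$ as well. Commutativity of $\circ$ holds because $\dom{\rho_1}\cap\dom{\rho_2}=\emptyset$ is symmetric (so both sides are defined or both undefined) and, when defined, $\rho_1\otimes\rho_2=\rho_2\otimes\rho_1$ precisely by the domain-tagging convention noted after Definition \ref{def tensor coupling}. Associativity is similar: both $\rho_1\circ(\rho_2\circ\rho_3)$ and $(\rho_1\circ\rho_2)\circ\rho_3$ are defined exactly when $\dom{\rho_1},\dom{\rho_2},\dom{\rho_3}$ are pairwise disjoint, and then both equal $\rho_1\otimes\rho_2\otimes\rho_3$.

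The last clause, compatibility with $\preceq$ (i.e. monotonicity of $\circ$: if $\rho_1\preceq\rho_1'$, $\rho_2\preceq\rho_2'$ and both $\rho_1\circ\rho_2$ and $\rho_1'\circ\rho_2'$ are defined, then $\rho_1\circ\rho_2\preceq\rho_1'\circ\rho_2'$), is the only step needing a short computation. I would set $S=\dom{\rho_1}\cup\dom{\rho_2}=\dom{\rho_1\circ\rho_2}$; since $\dom{\rho_i}\subseteq\dom{\rho_i'}$, we get $S\subseteq\dom{\rho_1'\circ\rho_2'}$. Applying Proposition \ref{pro-rt}(2), $\rt{(\rho_1'\otimes\rho_2')}{S}=\rt{\rho_1'}{S}\otimes\rt{\rho_2'}{S}$. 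The key bookkeeping point is that the cross-intersection vanishes: $\dom{\rho_1'}\cap\dom{\rho_2}\subseteq\dom{\rho_1'}\cap\dom{\rho_2'}=\emptyset$ (using $\dom{\rho_2}\subseteq\dom{\rho_2'}$ and definedness of $\rho_1'\circ\rho_2'$), so $\dom{\rho_1'}\cap S=\dom{\rho_1}$ and hence $\rt{\rho_1'}{S}=\rt{\rho_1'}{\dom{\rho_1}}=\rho_1$; symmetrically $\rt{\rho_2'}{S}=\rho_2$. Therefore $\rt{(\rho_1'\circ\rho_2')}{S}=\rho_1\otimes\rho_2=\rho_1\circ\rho_2$, which is exactly the definition of $\rho_1\circ\rho_2\preceq\rho_1'\circ\rho_2'$. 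I do not expect a genuine obstacle here; the whole proof is a sequence of direct unfoldings of Definitions \ref{def tensor coupling}--\ref{def partial order quantum state} together with Proposition \ref{pro-rt}, and the only thing to watch is the domain set-algebra (in particular that the primed/unprimed cross-intersections are empty).
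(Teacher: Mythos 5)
Your proposal is correct and follows essentially the same route as the paper: a direct verification of unit existence, commutativity, associativity, and compatibility with $\preceq$, where the only substantive step is the compatibility check. The one cosmetic difference is that you discharge that step by citing Proposition \ref{pro-rt}(2) together with the domain bookkeeping $\dom{\rho_1'}\cap S=\dom{\rho_1}$, whereas the paper carries out the equivalent partial-trace computation explicitly in an orthonormal basis (and, unlike you, does not bother to re-verify that $\preceq$ is a preorder), so the two arguments have the same mathematical content.
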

		\begin{proof}
			It is straightforward to check all the properties defined in Definition \ref{def BI frame}.
			\begin{itemize}
				\item (Unit Existence): for all $\rho\in\cD$, note that scalar $1$ is of domain $\emptyset$, so $\rho\circ 1 = \rho\otimes 1 = \rho$, and $1\circ\rho = 1\otimes \rho = \rho$.
				\item (Commutativity): for all $\rho,\sigma\in\cD$, if their domains are overlap, i.e., $\dom {\rho}\cap\dom {\sigma}\neq\emptyset$, then neither $\rho\circ\sigma$ nor $\sigma\circ\rho$ is defined; if their domains are disjoint, i.e., $\dom {\rho}\cap\dom {\sigma}=\emptyset$, then $\rho\circ\sigma = \sigma\circ\rho$ since both of them denote the tensor product state over system $\dom {\rho}\cup\dom {\sigma}$ with reduced state $\rho$ over subsystem $\dom {\rho}$ and $\sigma$ over subsystem $\dom {\sigma}$.
				\item (Associativity): for all $\rho,\sigma, \delta\in\cD$, if their domains are pairwise disjoint, then $\rho\circ(\sigma\circ\delta) = (\sigma\circ\rho)\circ\delta$ since standard tensor product are associative; otherwise, neither $\rho\circ(\sigma\circ\delta)$ nor $(\sigma\circ\rho)\circ\delta$ is defined.
				\item (Compatible with $\preceq$): it follows from the property of partial trace. Formally, for any $\rho_1\preceq\rho_1^\prime$ and $\rho_2\preceq\rho_2^\prime$ and both $\rho\circ\rho_2$ and $\rho_1^\prime\circ\rho_2^\prime$ are defined, then we know:
				\begin{itemize}
					\item $\dom{\rho_1}\subseteq\dom{\rho_1^\prime}$, $\dom{\rho_2}\subseteq\dom{\rho_2^\prime}$, $\dom{\rho_1^\prime}\cap\dom{\rho_2^\prime} = \emptyset$;
					
					let us use notations: $S_1 \triangleq \dom{\rho_1}$, $S_1^\prime \triangleq \dom{\rho_1^\prime}\backslash S_1$, $S_2 \triangleq \dom{\rho_2}$, $S_2^\prime \triangleq \dom{\rho_2^\prime}\backslash S_2$;
					\item $\rho_1 = \tr_{\dom{\rho_1^\prime}\backslash\dom{\rho_1}}(\rho_1^\prime) = \tr_{S_1^\prime}(\rho_1^\prime)$ and $\rho_2 = \tr_{\dom{\rho_2^\prime}\backslash\dom{\rho_2}}(\rho_2^\prime) = \tr_{S_2^\prime}(\rho_2^\prime)$;
				\end{itemize}
				Rewrite $\rho_1^\prime$ and  $\rho_2^\prime$ in the explicit forms:
				$$\rho_1^\prime = \sum_{ii^\prime jj^\prime}\lambda_{ii^\prime jj^\prime}|i\>_{S_1}\<i^\prime|\otimes|j\>_{S_1^\prime}\<j^\prime|, \quad \rho_2^\prime = \sum_{mm^\prime nn^\prime}\gamma_{mm^\prime nn^\prime}|m\>_{S_2}\<m^\prime|\otimes|n\>_{S_2^\prime}\<n^\prime|$$
				where $\{|i\>\}, \{|j\>\},\{|m\>\},\{|n\>\}$ are orthonormal basis of system $S_1$, $S_1^\prime$, $S_2$ and $S_2^\prime$ respectively, and $\lambda_{ii^\prime jj^\prime}, \gamma_{mm^\prime nn^\prime}$ are complex numbers. By the definition of partial trace, we have:
				\begin{align*}
				&\rho_1 = \tr_{S_1^\prime}(\rho_1^\prime) = \tr_{S_1^\prime}\left(\sum_{ii^\prime jj^\prime}\lambda_{ii^\prime jj^\prime}|i\>_{S_1}\<i^\prime|\otimes|j\>_{S_1^\prime}\<j^\prime|\right) = \sum_{ii^\prime jj^\prime}\lambda_{ii^\prime jj^\prime}|i\>_{S_1}\<i^\prime|\cdot\<j^\prime|j\> = \sum_{ii^\prime j}\lambda_{ii^\prime jj}|i\>_{S_1}\<i^\prime| \\
				&\rho_2 = \tr_{S_2^\prime}(\rho_2^\prime) = \tr_{S_2^\prime}\left(\sum_{mm^\prime nn^\prime}\gamma_{mm^\prime nn^\prime}|m\>_{S_2}\<m^\prime|\otimes|n\>_{S_2^\prime}\<n^\prime|\right) = \sum_{mm^\prime n}\gamma_{mm^\prime nn}|m\>_{S_2}\<m^\prime|
				\end{align*}
				and we can calculate $\tr_{S_1^\prime\cup S_2^\prime}(\rho_1^\prime\circ\rho_2^\prime)$ directly:
				\begin{align*}
				\tr_{S_1^\prime\cup S_2^\prime}(\rho_1^\prime\circ\rho_2^\prime) &= \tr_{S_1^\prime\cup S_2^\prime}\left(\sum_{ii^\prime jj^\prime}\lambda_{ii^\prime jj^\prime}|i\>_{S_1}\<i^\prime|\otimes|j\>_{S_1^\prime}\<j^\prime| \otimes  \sum_{mm^\prime nn^\prime}\gamma_{mm^\prime nn^\prime}|m\>_{S_2}\<m^\prime|\otimes|n\>_{S_2^\prime}\<n^\prime|\right) \\
				&= \sum_{ii^\prime jj^\prime}\sum_{mm^\prime nn^\prime}\lambda_{ii^\prime jj^\prime}\gamma_{mm^\prime nn^\prime}|i\>_{S_1}\<i^\prime|\otimes|m\>_{S_2}\<m^\prime|\cdot \left(\<j^\prime|j\>\<n^\prime|n\>\right) \\
				&= \sum_{ii^\prime j}\sum_{mm^\prime n}\lambda_{ii^\prime jj}\gamma_{mm^\prime nn}|i\>_{S_1}\<i^\prime|\otimes|m\>_{S_2}\<m^\prime| \\
				&= \rho_1\otimes\rho_2 = \rho_1\circ\rho_2
				\end{align*}
				which leads to $\rho_1\circ\rho_2\preceq\rho_1^\prime\circ\rho_2^\prime$.
			\end{itemize}
		\end{proof}
		
		\subsection{Proposition \ref{pro mon res ato prop}}
		
		\begin{proposition}[Monotonicity and restriction of atomic proposition]
			\label{pro mon res ato prop}
			For any $p\in\AP$ (atomic propositions defined in Sec. \ref{sec free choice of AP}) and $\rho,\sigma\in \cD$ such that $\rho\preceq \sigma$ and $\LTypeF{p}\subseteq\LTypeE{\rho}$, 
			$\rho\models p$ if and only if $\sigma\models p$.
		\end{proposition}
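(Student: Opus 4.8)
The plan is to reduce all three kinds of atomic proposition to a single observation about partial traces. The key lemma is that whenever $\rho\preceq\sigma$ and $S\subseteq\dom{\rho}$, the two restrictions onto $S$ coincide, i.e. $\rt{\rho}{S}=\rt{\sigma}{S}$. This follows directly from the definitions: by Definition \ref{def partial order quantum state} we have $\dom{\rho}\subseteq\dom{\sigma}$ and $\rho=\rt{\sigma}{\dom{\rho}}$, hence $\rt{\rho}{S}=\rt{(\rt{\sigma}{\dom{\rho}})}{S}=\rt{\sigma}{S\cap\dom{\rho}}=\rt{\sigma}{S}$, where the second equality is Proposition \ref{pro-rt}(1) and the last uses $S\subseteq\dom{\rho}$.

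Next I would case split on the three classes of atomic propositions from Section \ref{sec free choice of AP}. First, if $p=\bD[S]$ with $\free{p}=S$, then the hypothesis $S\subseteq\dom{\rho}$ gives $\rho\models\bD[S]$ by Eq.~(\ref{atom-0}), and $S\subseteq\dom{\rho}\subseteq\dom{\sigma}$ gives $\sigma\models\bD[S]$, so both sides of the biconditional hold. Second, if $p=P\in\cP$ is a projection, then by Eq.~(\ref{atom-1}) we have $\rho\models P$ iff $\free{P}\subseteq\dom{\rho}$ and $\supp(\rt{\rho}{\free{P}})\subseteq P$; the domain condition holds for both $\rho$ and $\sigma$ by hypothesis (together with $\dom{\rho}\subseteq\dom{\sigma}$), and applying the key lemma with $S=\free{P}$ turns the support condition for $\rho$ into the identical statement for $\sigma$. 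Third, if $p=\unia[S]$, then by Eq.~(\ref{atom-3}) we have $\rho\models\unia[S]$ iff $S\subseteq\dom{\rho}$ and $\rt{\rho}{S}=\id_S/\dim(S)$; again the domain condition holds for both states, and the key lemma gives $\rt{\rho}{S}=\rt{\sigma}{S}$, so the two instances of the equation are the same.

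Since these exhaust the atomic propositions introduced in Section \ref{sec free choice of AP}, this completes the argument. There is no genuine obstacle here: once the restriction-agreement lemma is established, every case is immediate from unfolding the relevant semantic clause. The only subtlety worth flagging is the role of the hypothesis $\free{p}\subseteq\dom{\rho}$---it is precisely what upgrades the one-directional monotonicity implicit in Definition \ref{def satisfaction BI} to a genuine biconditional, since it ensures that the subsystem on which satisfaction of $p$ depends is already present in $\rho$ and is not lost when passing from $\sigma$ down to $\rho$.
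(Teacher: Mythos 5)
Your proof is correct and follows essentially the same route the paper intends, which it dismisses as ``trivial by the definition and interpretation of atomic propositions'': unfolding the semantic clauses~(\ref{atom-0})--(\ref{atom-3}) and using the fact that $\rho\preceq\sigma$ with $\free{p}\subseteq\dom{\rho}$ forces $\rt{\rho}{S}=\rt{\sigma}{S}$ on the relevant domain $S$. Your explicit restriction-agreement lemma and the closing remark on the role of $\free{p}\subseteq\dom{\rho}$ are accurate elaborations of what the paper leaves implicit.
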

		
		\begin{proof}
			Trivial by the definition and interpretation of atomic propositions defined in Sec. \ref{sec free choice of AP}.
		\end{proof}

		\subsection{Proof of Proposition \ref{prop axiom projection}}
		
		\begin{proposition}[Proposition \ref{prop axiom projection}, Extended Version]
			\label{prop axiom projection extend}
			\begin{enumerate}
				\item For all $S\subseteq \vars$ and identity operator $I_S$ over $\cH_S$, $$\models \bD[S]\leftrightarrow \id_{S}.$$ %$\models \bD[S]\gimplr \top$.
				%\item For all $P\in\cP$ and $S\subseteq \vars$, if $\free{P}\cap S = \emptyset$, then $\models P\gimplr (P\otimes \id_{S})$
				\item For all $P,Q\in\cP$ with same domain, $\models P\rightarrow Q$ if and only if $P \sqsubseteq Q$;
				%\item For all $S_1,S_2,S\subseteq\varb$, if $S_1\cap S_2=\emptyset$ and $S\varsubsetneq S_1$, then $\models \ip(S_1)\wedge\uni(S_2)\rightarrow \inu(S_1\backslash S)\wedge\uni(S_2\cup S)$.
				\item For all $P,Q\in\cP$ with disjoint domains, then
				$\models  P\wedge Q \leftrightarrow (P\otimes Q)$ ;   	%\item For all $P,Q\in\cP$ with the same domain, $\models P\wedge Q\leftrightarrow P\qwedge Q$, $\models P\vee Q\rightarrow P\qvee Q$.
				%\item For all $\phi\in{\rm Res}$ and $S\subseteq \vars$, if $\free{\phi}\cap S = \emptyset$, then $\models \phi\gimplr \phi\sd \id_{S}$;
				\item If $S_1\subseteq S_2$, then $\models\unia[S_2]\rightarrow\unia[S_1]$.
				\item If $S_1, S_2$ are disjoint, then:
				$ \models(\unia[S_1]\ast\unia[S_2])\leftrightarrow\unia[S_1\cup S_2] $. 
				\item For all $P\in\cP$, $S_1\subseteq\free{P}$ and $S_2\cap\free{P} = \emptyset$, if $\models P\rightarrow\unia[S_1]$, then $\models (P\wedge\unia[S_2])\rightarrow \unia[S_1\cup S_2]$. 
			\end{enumerate}
		\end{proposition}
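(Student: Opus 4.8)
The plan is to verify the six clauses directly from the definitions of the three families of atomic propositions (Section~\ref{sec free choice of AP}), the BI satisfaction clauses (Definition~\ref{def satisfaction BI}), and the partial-trace identities of Proposition~\ref{pro-rt}; for the clauses phrased as implications $\rightarrow$ I would first invoke monotonicity of the atomic propositions (Proposition~\ref{pro mon res ato prop}) to reduce validity to the material implication over states whose domain contains all relevant free variables. Clause (1) is immediate: as a closed subspace the identity projection $\id_S$ equals all of $\cH_S$, so the condition $\supp(\rt{\rho}{S})\subseteq\id_S$ in the definition of $\sem{\id_S}$ is vacuous and $\sem{\id_S}=\{\rho:S\subseteq\dom{\rho}\}=\sem{\bD[S]}$. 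For clause (2): if $P\sqsubseteq Q$ then $\supp(\rt{\rho}{\free{P}})\subseteq P\subseteq Q$ whenever $\rho\models P$, so $\models P\rightarrow Q$; conversely, if $P\not\sqsubseteq Q$ I would pick a unit vector $|\psi\>\in P\setminus Q$ and note that $|\psi\>\<\psi|$, taken with domain $\free{P}$, satisfies $P$ but not $Q$. Clause (4) is the computation $\tr_{S_2\setminus S_1}\big(\id_{S_2}/\dim(S_2)\big)=\id_{S_1}/\dim(S_1)$ together with Proposition~\ref{pro-rt}(1), which identifies $\rt{\rho}{S_1}$ with $\rt{(\rt{\rho}{S_2})}{S_1}$.

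For clause (3), writing $\sigma:=\rt{\rho}{S_1\cup S_2}$ for a state $\rho$ with $S_1\cup S_2\subseteq\dom{\rho}$, the equivalence $\rho\models P\wedge Q\iff\rho\models P\otimes Q$ reduces (via Proposition~\ref{pro-rt}(1)) to: $\supp(\sigma)\subseteq P\otimes Q$ iff $\supp(\rt{\sigma}{S_1})\subseteq P$ and $\supp(\rt{\sigma}{S_2})\subseteq Q$. The forward direction follows by writing $P\otimes Q=(P\otimes\id_{S_2})(\id_{S_1}\otimes Q)$ and using $\tr_{S_2}\big((P\otimes\id_{S_2})X(P\otimes\id_{S_2})\big)=P\,\tr_{S_2}(X)\,P$, so $\rt{\sigma}{S_1}$ has range in $P$, and symmetrically for $S_2$; the backward direction uses the elementary support lemma $\supp(\tau_{AB})\subseteq\supp(\tau_A)\otimes\cH_B$ (from positivity of $\tau_{AB}$), applied to both bipartitions and intersected via $(\supp(\tau_A)\otimes\cH_B)\cap(\cH_A\otimes\supp(\tau_B))=\supp(\tau_A)\otimes\supp(\tau_B)$. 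Clause (5) unfolds the semantics of $\ast$: for $\Leftarrow$, given $\rho\models\unia[S_1\cup S_2]$, take the witnesses $y=\id_{S_1}/\dim(S_1)$ and $z=\id_{S_2}/\dim(S_2)$, so $y\circ z=\id_{S_1\cup S_2}/\dim(S_1\cup S_2)=\rt{\rho}{S_1\cup S_2}$, hence $\rho\succeq y\circ z$ and $y\models\unia[S_1]$, $z\models\unia[S_2]$; for $\Rightarrow$, from any witnesses $y,z$ with $y\circ z$ defined and $\rho\succeq y\circ z$, disjointness $\dom{y}\cap\dom{z}=\emptyset$ gives $S_2\cap\dom{y}=\emptyset$ and $S_1\cap\dom{z}=\emptyset$, and then Proposition~\ref{pro-rt}(2) yields $\rt{\rho}{S_1\cup S_2}=\rt{y}{S_1}\otimes\rt{z}{S_2}=\big(\id_{S_1}/\dim(S_1)\big)\otimes\big(\id_{S_2}/\dim(S_2)\big)$.

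The real work is clause (6), and the naive route fails: the two marginals $\rt{\rho}{S_1}$, $\rt{\rho}{S_2}$ being maximally mixed does \emph{not} make $\rt{\rho}{S_1\cup S_2}$ maximally mixed (a maximally entangled state on $S_1\cup S_2$ is a counterexample), so I must use $\rho\models P$. Put $C\triangleq\free{P}\setminus S_1$ and $\sigma\triangleq\rt{\rho}{\free{P}\cup S_2}$, so $\rt{\sigma}{\free{P}}=\rt{\rho}{\free{P}}$ has support in $P$ and $\rt{\sigma}{S_2}=\id_{S_2}/\dim(S_2)$ by $\rho\models\unia[S_2]$. The hypothesis $\models P\rightarrow\unia[S_1]$ says exactly that $\tr_C(|\psi\>\<\psi|)=\<\psi|\psi\>\,\id_{S_1}/\dim(S_1)$ for every $|\psi\>\in P$ (for unit vectors, then by rescaling). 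By the support lemma $\supp(\sigma)\subseteq P\otimes\cH_{S_2}$, so in a spectral decomposition $\sigma=\sum_m p_m|\Phi_m\>\<\Phi_m|$ every $|\Phi_m\>$ lies in $P\otimes\cH_{S_2}$; fixing an orthonormal basis $\{|e_j\>\}$ of $\cH_{S_2}$, write $|\Phi_m\>=\sum_j|\chi^m_j\>|e_j\>$ with each $|\chi^m_j\>\in P$. Then $\rt{\rho}{S_1\cup S_2}=\rt{\sigma}{S_1\cup S_2}=\tr_C(\sigma)=\sum_m p_m\sum_{j,j'}\tr_C\big(|\chi^m_j\>\<\chi^m_{j'}|\big)\otimes|e_j\>\<e_{j'}|$. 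The crucial step --- the one I expect to be the main obstacle --- is to show every cross term $\tr_C\big(|\chi^m_j\>\<\chi^m_{j'}|\big)$ is a scalar multiple of $\id_{S_1}$; this is a polarization argument that applies the hypothesis to the four vectors $|\chi^m_j\>\pm|\chi^m_{j'}\>$ and $|\chi^m_j\>\pm i|\chi^m_{j'}\>$, all of which lie in $P$ because $P$ is a linear subspace. It follows that $\rt{\rho}{S_1\cup S_2}=\id_{S_1}\otimes M$ for some $M$ on $\cH_{S_2}$; tracing out $S_1$ forces $M=\rt{\sigma}{S_2}/\dim(S_1)=\id_{S_2}/\big(\dim(S_1)\dim(S_2)\big)$, so $\rt{\rho}{S_1\cup S_2}=\id_{S_1\cup S_2}/\dim(S_1\cup S_2)$, i.e.\ $\rho\models\unia[S_1\cup S_2]$. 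Conceptually this decoupling is a finite, statement-level instance of the monogamy of entanglement; clauses (1)--(5) are routine unfoldings of the definitions by comparison.
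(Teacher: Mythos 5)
Your proof is correct and follows essentially the same route as the paper's: clauses (1)--(5) are the same definitional unfoldings, and for clause (6) your polarization argument on the components $|\chi^m_j\>\in P$ is the same mechanism the paper uses when it applies the hypothesis to superpositions $\frac{1}{\sqrt2}(|e_i\>+|e_{i'}\>)$, $\frac{1}{\sqrt2}(|e_i\>+i|e_{i'}\>)$ of an orthonormal basis of $P$ to kill the cross terms before tracing out $\free{P}\setminus S_1$. The only differences are cosmetic (you make the support lemma and the final normalization of the $S_2$ factor explicit, where the paper routes the last step through clause (5)).
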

		
		\begin{proof}
			\begin{itemize}
				\item By definition, $\sem{\bD[S]} \triangleq \left\{\rho\in\cD: S\subseteq\dom{\rho}\right\}$. On the other hand, $\free {I_S} = S$ and for any $\rho$ with $\dom {\rho}\supseteq S$, $\supp(\rt {\rho}{S})\subseteq I_S$, so 
				$\sem{I_S} = \left\{\rho\in\cD: S\subseteq\dom{\rho}\right\}$. Therefore, 
				$\models \bD[S]\leftrightarrow \id_{S}.$
				
				\item Suppose $\free{P} = \free{Q} = S$. Then we have: $\models P\rightarrow Q$ iff $\forall\,\rho\in\cD(S),$ $\rho\models P$ implies $\rho\models Q$ iff $\forall\,\rho\in\cD(S),$ $\supp(\rho)\subseteq P$ implies $\supp(\rho)\subseteq Q$ (regarded as subspaces) iff subspaces $P$ and $Q$ have inclusion relation $P\subseteq Q$ iff $P\sqsubseteq Q$ (regarded as projections).
				
				\item Suppose $\rho\models P\otimes Q$, then $\type{\rho}\supseteq\typef{P}\cup\typef{Q}$ and $\supp(\rt{\rho}{\typef{P}\cup\typef{Q}})\subseteq P\otimes Q$. Note that $\typef{P}\cap\typef{Q}=\emptyset$, so $\supp(\rt{\rho}{\typef{P}})\subseteq P$ and $\supp(\rt{\rho}{\typef{Q}})\subseteq Q$; that is,  $\rt{\rho}{\typef{P}}\models P$ and $\rt{\rho}{\typef{Q}}\models Q$, and thus $\rho\models P\wedge Q$.
				
				Suppose $\rho\models P\wedge Q$, then $\rt{\rho}{\typef{P}}\models P$ and $\rt{\rho}{\typef{Q}}\models Q$. As $\typef{P}\cap\typef{Q}=\emptyset$, $\supp(\rt{\rho}{\typef{P}\cup\typef{Q}})\subseteq P\otimes Q$, or equivalently, $\rho\models P\otimes Q$.
				
				\item For any $\rho\models\unia[S_2]$, we must have: $\dom{\rho}\supseteq S_2\supseteq S_1$, and $\rt {\rho}{S_2} = \frac{I_{S_2}}{\dim (S_2)}$. Take the partial trace over $S_1$, we obtain $\rt {\rho}{S_1} = \frac{I_{S_1}}{\dim (S_1)}$ and thus $\rho\models\unia[S_1]$.
				
				\item For any $\rho\models(\unia[S_1]\ast\unia[S_2])$, $\dom{\rho}\supseteq S_1\cup S_2$ and by Proposition \ref{pro mon res ato prop}, $\rt{\rho}{S_1\cup S_2}\models(\unia[S_1]\ast\unia[S_2])$. Therefore, $\rt{\rho}{S_1}\models\unia[S_1]$ and $\rt{\rho}{S_2}\models\unia[S_2]$ and $\rt{\rho}{S_1\cup S_2} = \rt{\rho}{S_1}\otimes\rt{\rho}{S_2}$, and thus
				$$\rt{\rho}{S_1\cup S_2} = \frac{\id_{S_1}}{\dim(S_1)}\otimes\frac{\id_{S_2}}{\dim(S_2)} = \frac{\id_{S_1\cup S_2}}{\dim(S_1\cup S_2)}$$
				which leads to $\rho\models\unia[S_1\cup S_2]$.
				
				If $\rho\models\unia[S_1,S_2]$, then 
				$$\rt{\rho}{S_1\cup S_2} = \frac{\id_{S_1\cup S_2}}{\dim(S_1\cup S_2)} = \frac{\id_{S_1}}{\dim(S_1)}\otimes\frac{\id_{S_2}}{\dim(S_2)}\models \unia[S_1]\ast\unia[S_2]$$
				and so $\rho\models\unia[S_1]\ast\unia[S_2]$.
				
				\item Suppose $\models P\rightarrow\unia[S_1]$. Assume $\{|e_i\>\}$ is an orthonormal basis of $P$, $\{|k\>\}$ an orthonormal basis of $\cH_{\free{P}\backslash S_1}$. First, it is trivial to realize for any $i$, $|e_i\>\<e_i|\models P$, so it must satisfy $\unia[S_1]$, that is,
				\begin{align*}
				\rt{|e_i\>\<e_i|}{S_1} &= \sum_k\<k|e_i\>\<e_i|k\> = \frac{I_{S_1}}{\dim(S_1)}.
				\end{align*}
				Next, for any $i\neq i^\prime$, choose two states $\frac{1}{\sqrt{2}}(|e_i\>+|e_{i^\prime}\>)$ and $\frac{1}{\sqrt{2}}(|e_i\>+\bi|e_{i^\prime}\>)$ which also satisfy $P$ and so $\unia[S_1]$, then 
				\begin{align*}
				&\frac{1}{2}\sum_k\<k|e_i\>\<e_{i^\prime}|k\> + \frac{1}{2}\sum_k\<k|e_{i^\prime}\>\<e_i|k\> = 0 \\
				&\frac{-\bi}{2}\sum_k\<k|e_i\>\<e_{i^\prime}|k\> + \frac{\bi}{2}\sum_k\<k|e_{i^\prime}\>\<e_i|k\> = 0
				\end{align*}
				which lead to $\sum_k\<k|e_i\>\<e_{i^\prime}|k\> = 0$. Now, for any $\rho\in\cD(\free{P}\cup S_2)$ that satisfy $P\wedge\unia[S_2]$, it can be written in the form
				$$\rho = \sum_m\left(\sum_i|e_i\>|h_{im}\>\right)\left(\sum_{i^\prime}\<e_{i^\prime}|\<h_{i^\prime m}|\right) = \sum_{ii^\prime m}|e_i\>\<e_{i^\prime}|\otimes|h_{im}\>\<h_{i^\prime m}|$$
				where the states $|h_{im}\>$ may not be unit vectors. By restriction, its reduced state $\rt{\rho}{S_2} = \sum_{im}|h_{im}\>\<h_{im}|\models \unia[S_2]$.
				We observe that:
				\begin{align*}
				\rt{\rho}{S_1\cup S_2} &= \sum_k\<k|\rho|k\> = \sum_{kii^\prime m}\<k|e_i\>\<e_{i^\prime}|k\>\otimes|h_{im}\>\<h_{i^\prime m}| \\
				&= \sum_{kim}\<k|e_i\>\<e_{i}|k\>\otimes|h_{im}\>\<h_{im}|  \\
				&= \frac{I_{S_1}}{\dim(S_1)}\otimes\sum_{im}|h_{im}\>\<h_{im}|, 
				\end{align*}
				and thus, $\rt{\rho}{S_1\cup S_2}\models \unia[S_1]\ast\unia[S_2]$. By $\mathit{4}$, we know that 
				$\rt{\rho}{S_1\cup S_2}\models \unia[S_1\cup S_2]$. Finally, by monotonicity and restriction, this conclusion holds for all $\rho\in\cD$ and thus finishes the proof.
				
			\end{itemize}
			
		\end{proof}
		
		\subsection{Nonexistence of Extension}
		
		As is well-known, the frame rule plays an essential role in separation logic, and in turn it heavily relies on the restriction property that satisfaction only depends on the free variables appearing in a BI formula $\phi$. The restriction property and frame rule were successfully generalised into probabilistic separation logic in \cite{BHL19}. Essentially, the validity of the restriction property in the probabilistic setting can be attributed to a fundamental fact in probabilistic theory -- existence of extension: for any three random variables $x,y,z$, if joint distributions $\mu_{xy}$ and $\mu_{yz}$ coincide on $y$, then there exists a joint distribution $\mu_{xyz}$ with $\mu_{xy}$ and $\mu_{yz}$ as its marginals. Unfortunately, existence of extension is not true for quantum systems as shown in the following:
		
		\begin{example}[Non-existence of Extension]
			\label{prop noexist extension}
			Consider three qubits $q_1,q_2,q_3$ and states $\rho_{12}\in\cD(q_1q_2)$, $\rho_{23}\in\cD(q_2q_3)$:
			$$\rho_{12} = {\small\left[\begin{array}{cccc}0.5 & & & \\ & 0 & & \\ & & 0 & \\ & & & 0.5\end{array}\right]},\quad
			\rho_{23} = {\small\left[\begin{array}{cccc}0.25 & & 0.25 & \\ & 0.25 & & -0.25\\ 0.25 & & 0.25 & \\ & -0.25 & & 0.25\end{array}\right]}.
			$$
			It is easy to see that $\tr_{q_1}(\rho_{12}) = \tr_{q_3}(\rho_{23})$. But by SDP (Semi-definite Programming), we can prove that there is no  $\rho_{123}\in\cD(q_1q_2q_3)$ such that $\tr_{q_3}(\rho_{123}) = \rho_{12}$ and $\tr_{q_1}(\rho_{123}) = \rho_{23}$. This shows that existence of extension does not hold even for separable states $\rho_{12}$ and/or $\rho_{23}$.
		\end{example}
		
		%	As a consequence, direct applications of the BI % resource model are no longer applicable and  cause % a disaster -- the failure of the restriction % property and then the simplest frame rule: even if a program $\prog$ does not contain any free variables of $\phi$, there exists input configuration $s$ such that $s\models\phi$ but the output configuration $\sem{\prog}(s)\not\models\phi$! Such $\phi$ can be as simple as containing only one connective, the implication. 
		
		\subsection{Failure of the Restriction Property}
		As a consequence, the restriction property: $\rho\models\phi\Rightarrow \rt{\rho}{\V{\phi}}\models\phi$ where $\V{\phi}$ stands for the free variables occurring in $\phi$, does not hold, even for the ordinary implication $\phi=\phi_1\rightarrow\phi_2$ (see Definition \ref{def satisfaction BI} for its semantics). 
		
		\begin{example}[Failure of the Restriction Property] 
			\label{exam failure restriction implication}
			Consider three qubits $q_1,q_2,q_3$ and maximally entanglement (Bell states) $|\Psi^\pm_{ij}\> =  \frac{1}{\sqrt{2}}(|0\>_{q_i}|0\>_{q_j}\pm|1\>_{q_i}|1\>_{q_j})$ between $q_i$ and $q_j$ for $1\le i\neq j\le 3$. Their density matrices are $\Psi^\pm_{ij} = |\Psi^\pm_{ij}\>\<\Psi^\pm_{ij}|.$
			Set $\rho = \Psi^+_{12}\in\cD(q_1,q_2)$. Let $\phi_1 = \Psi^+_{23}\in\cP(q_2,q_3),\ \phi_2 = \Psi^-_{23}\in\cP(q_2,q_3)$. Then: 
			\begin{itemize}
				\item $\rho\models\phi_1\rightarrow\phi_2$ is valid because there does not exist $\rho^\prime\succeq\rho$ such that $\rho^\prime\models\phi_1$; that is, no extension of $\Psi^+_{12}$ and $\Psi^+_{23}$ exists. 
				\item It is easy to see that   $\rt{\rho}{\LTypeF{\phi_1\rightarrow\phi_2}} = \frac{\id_{q_2}}{2}\in\cD(q_2)$. Choose $\rho^{\prime\prime} = \Psi^+_{23}\in\cD(q_2,q_3)$. It holds that $\rho^{\prime\prime}\succeq\rt{\rho}{\LTypeF{\phi_1\rightarrow\phi_2}}$. Note that $\rho^{\prime\prime}\models\phi_1$, but $\rho^{\prime\prime}\models\phi_2$ is not true. Therefore, $\rt{\rho}{\LTypeF{\phi_1\rightarrow\phi_2}}\not\models\phi_1\rightarrow\phi_2$.\end{itemize}
		\end{example}
		
		\vspace{0.5cm}
		
		\noindent\textbf{Problem in Program logic without Restriction} The following example shows that, without the domain assumption and restriction property, local reasoning is not sound in program logic.
		\begin{problem}
			For classical assignment rule $\{\phi[e/x]\}x:=e\{\phi\}$, if $\phi$ do not contain free variable $x$, then $\{\phi\}x:=e\{\phi\}$. However, such simple rule doesn't hold for quantum case.
			
			Here is a simple example. Consider a three qubits system $q_1,q_2,q_3$, and let $|\Phi^\pm\> = |00\>\pm|11\>$, $\Phi^\pm = |\Phi^\pm\>\<\Phi^\pm|$. Now, the state $\Phi^+_{12}$ indeed satisfies the formula $\Phi^+_{23}\rightarrow\Phi^-_{23}$ because for any $\rho\succeq\Phi^+_{12}$, $\rho\not\models\Phi^+_{23}$. However, if we do an initialization on $q_1$, which is disjoint of the domain of $\Phi^+_{23}\rightarrow\Phi^-_{23}$ (its domain is $\{q_2,q_3\}$). Now the state $\Phi^+_{12}$ is changed to $|0\>_1\<0|\otimes I_2$, which violate $\Phi^+_{23}\rightarrow\Phi^-_{23}$ ($|0\>_1\<0|\otimes\Phi^+_{23} \succeq|0\>_1\<0|\otimes I_2$, $|0\>_1\<0|\otimes\Phi^+_{23}\models\Phi^+_{23}$ but $|0\>_1\<0|\otimes\Phi^+_{23}\not\models\Phi^-_{23}$)!
		\end{problem}

		\subsection{Proof of Proposition \ref{pro res set}}
		
		\begin{proposition}
			Any formula $\phi\in\res$ is restrictive, i.e., for any $\rho\models\phi$, $\rt{\rho}{\free{\phi}}\models\phi$.
		\end{proposition}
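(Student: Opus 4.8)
The plan is to prove the proposition by structural induction on $\phi\in\res$, relying on two facts already in hand: Proposition~\ref{pro mon res ato prop} (monotonicity and restriction for atomic propositions) and Proposition~\ref{pro-rt} (the compatibility laws $\rt{(\rt{\rho}{S_1})}{S_2}=\rt{\rho}{S_1\cap S_2}$ and $\rt{(\rho_1\otimes\rho_2)}{S}=\rt{\rho_1}{S}\otimes\rt{\rho_2}{S}$). First I would record the auxiliary fact that every $\phi\in\res$ is \emph{monotone}: $\rho\models\phi$ and $\rho\preceq\rho'$ imply $\rho'\models\phi$. This is a routine induction --- the atomic case is Proposition~\ref{pro mon res ato prop}, and $\wedge,\vee,\ast$ each visibly preserve upward-closedness (for $\ast$ the same pair of witnesses still works, since $\rho_1\circ\rho_2\preceq\rho\preceq\rho'$). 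Monotonicity will be used in every composite case of the main induction.

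For the base cases, $\top$ and $\bot$ are trivial; and if $\phi\equiv p\in\AP$ then $\rho\models p$ already forces $\free{p}\subseteq\dom{\rho}$, so $\rt{\rho}{\free{p}}$ has domain exactly $\free{p}$ and sits below $\rho$ in $\preceq$, whence Proposition~\ref{pro mon res ato prop} applied to this pair gives $\rt{\rho}{\free{p}}\models p$. For $\phi\equiv\phi_1\wedge\phi_2$ or $\phi\equiv\phi_1\vee\phi_2$, set $S=\free{\phi}=\free{\phi_1}\cup\free{\phi_2}$; from $\rho\models\phi$ one has $\rho\models\phi_i$ (for both $i$, resp.\ for one), the induction hypothesis gives $\rt{\rho}{\free{\phi_i}}\models\phi_i$, and since $\rt{(\rt{\rho}{S})}{\free{\phi_i}}=\rt{\rho}{\free{\phi_i}}$ by Proposition~\ref{pro-rt}(1) we have $\rt{\rho}{\free{\phi_i}}\preceq\rt{\rho}{S}$, so monotonicity lifts this to $\rt{\rho}{S}\models\phi_i$; re-assembling the connective yields $\rt{\rho}{S}\models\phi$.

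The $\ast$ case is where the real work lies, and I expect the domain bookkeeping there to be the main obstacle. Suppose $\rho\models\phi_1\ast\phi_2$ with witnesses $\rho_1,\rho_2$: disjoint domains, $\rho\succeq\rho_1\otimes\rho_2$, and $\rho_i\models\phi_i$. I would take $\sigma_i=\rt{\rho_i}{\free{\phi_i}}$; by the induction hypothesis $\sigma_i\models\phi_i$, and $\dom{\sigma_i}\subseteq\dom{\rho_i}$ keeps the two domains disjoint, so $\sigma_1\circ\sigma_2$ is defined. It then suffices to verify $\sigma_1\circ\sigma_2\preceq\rt{\rho}{\free{\phi}}$, which by the definition of $\preceq$ reduces to an immediate domain inclusion (each $\dom{\sigma_i}\subseteq\free{\phi_i}\subseteq\free{\phi}$ and $\dom{\sigma_i}\subseteq\dom{\rho_i}\subseteq\dom{\rho}$) together with the single marginal identity $\rt{(\rt{\rho}{\free{\phi}})}{A}=\sigma_1\otimes\sigma_2$, where $A=\dom{\sigma_1}\cup\dom{\sigma_2}$. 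That identity follows by collapsing the nested restriction with Proposition~\ref{pro-rt}(1) to $\rt{\rho}{A}$, passing through $\rt{\rho}{\dom{\rho_1}\cup\dom{\rho_2}}=\rho_1\otimes\rho_2$, splitting the tensor with Proposition~\ref{pro-rt}(2), and using $\dom{\rho_1}\cap\dom{\rho_2}=\emptyset$ to recognize each factor $\rt{\rho_i}{A}$ as $\sigma_i$. With this, $\rt{\rho}{\free{\phi}}\models\phi_1\ast\phi_2$ via the witnesses $\sigma_1,\sigma_2$, and the induction closes. It is worth noting why implication and separating implication are kept out of $\res$: the restriction property itself can fail for them, precisely because the relevant joint quantum states need not have a common extension --- the monogamy phenomenon recalled in Section~\ref{sec res}.
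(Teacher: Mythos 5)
Your proof is correct and follows essentially the same route as the paper's: structural induction on $\phi\in\res$, with the atomic case handled by monotonicity/restriction of atomic propositions, the $\wedge/\vee$ cases by the induction hypothesis plus Kripke monotonicity, and the $\ast$ case by restricting the two witnesses and splitting the partial trace across the tensor product. Your $\ast$ case is, if anything, spelled out a bit more carefully than the paper's (you verify $\sigma_1\otimes\sigma_2\preceq\rt{\rho}{\free{\phi}}$ rather than asserting an equality of restrictions), but the underlying argument is the same.
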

		\begin{proof}
			It is straightforward to prove it by induction on the structure of $\phi$.
			\begin{itemize}
				\item $\phi\equiv p\in\AP$. By Proposition \ref{pro mon res ato prop}.
				\item $\phi \equiv \top$ or $\bot$. Trivial.
				\item $\phi\equiv \phi_1\wedge (\vee)\ \phi_2$. If $\rho\models\phi_1\wedge (\vee)\ \phi_2$, then $\rho\models\phi_1$ and(or) $\rho\models\phi_2$, by induction hypotheses, we know $\rt{\rho}{\free{\phi_1}}\models\phi_1$ and(or) $\rt{\rho}{\free{\phi_2}}\models\phi_2$ and by monotonicity, $\rt{\rho}{\free{\phi_1\wedge (\vee)\, \phi_2}}\models\phi_1$ and(or) $\rt{\rho}{\free{\phi_1\wedge (\vee)\, \phi_2}}\models\phi_2$, and thus, $\rt{\rho}{\free{\phi_1\wedge (\vee)\, \phi_2}}\models\phi_1\wedge (\vee)\ \phi_2$.
				\item $\phi\equiv \phi_1\ast \phi_2$. If $\rho\models\phi_1\ast \phi_2$, then there exist $\rho_1$ and $\rho_2$ such that $\rho\succeq\rho_1\otimes\rho_2$ and $\rho_1\models\phi_1$, $\rho_2\models\phi_2$. By Proposition \ref{pro res set} and a careful treatment of variable sets, we know that $\rt{\rho}{\free{\phi_1\ast \phi_2}} = \rt{\rho_1}{\free{\phi_1}}\otimes \rt{\rho_2}{\free{\phi_2}}$, and by induction hypotheses, $\rt{\rho}{\free{\phi_1}}\models\phi_1$ and $\rt{\rho}{\free{\phi_2}}\models\phi_2$, thus $\rt{\rho}{\free{\phi_1\ast \phi_2}}\models\phi_1\ast \phi_2$.
			\end{itemize}
		\end{proof}
		
		\subsection{Proposition \ref{pro BI res mon}}
		
		\begin{proposition}
			\label{pro BI res mon}
			For any $\phi\in\res$ and $\rho,\sigma\in \cD$ such that $\rho\preceq \sigma$ and $\LTypeF{\phi}\subseteq\LTypeE{\rho}$, 
			$\rho\models \phi$ if and only if $\sigma\models \phi$.
		\end{proposition}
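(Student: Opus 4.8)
The plan is to decompose the biconditional into its two implications and dispatch each with one of the two facts already in hand: Kripke-monotonicity of BI satisfaction and the restriction property (Proposition \ref{pro res set}). No induction on $\phi$ is needed if these two results are used as black boxes, although one could alternatively mirror the inductive proof of Proposition \ref{pro res set}.

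For the forward direction ($\rho \models \phi \Rightarrow \sigma \models \phi$), I would invoke the standard monotonicity of satisfaction in BI models. Every connective occurring in a $\res$ formula is interpreted by an upward-closed clause of Definition \ref{def satisfaction BI}: atomic propositions are monotone by Proposition \ref{pro mon res ato prop} (indeed the valuation in our model is required to be monotone), $\top$ and $\bot$ are trivially upward-closed, and $\wedge$, $\vee$, $\ast$ preserve upward-closure by a one-line induction---for $\ast$, if $x \succeq y\circ z$ and $x\preceq x'$ then $x'\succeq y\circ z$ by transitivity of $\preceq$. Hence $\rho\models\phi$ and $\rho\preceq\sigma$ immediately give $\sigma\models\phi$, and this half does not even use the hypothesis $\free{\phi}\subseteq\dom{\rho}$.

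For the backward direction ($\sigma\models\phi \Rightarrow \rho\models\phi$), I would chain restriction with monotonicity. Assume $\sigma\models\phi$. By Proposition \ref{pro res set}, $\rt{\sigma}{\free{\phi}}\models\phi$. Since $\rho\preceq\sigma$ we have $\dom{\rho}\subseteq\dom{\sigma}$ and $\rho = \rt{\sigma}{\dom{\rho}}$, and because $\free{\phi}\subseteq\dom{\rho}$, part 1 of Proposition \ref{pro-rt} gives $\rt{\sigma}{\free{\phi}} = \rt{(\rt{\sigma}{\dom{\rho}})}{\free{\phi}} = \rt{\rho}{\free{\phi}}$; thus $\rt{\rho}{\free{\phi}}\models\phi$. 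Finally $\rt{\rho}{\free{\phi}}\preceq\rho$ by Definition \ref{def partial order quantum state} (again using $\free{\phi}\subseteq\dom{\rho}$), so the monotonicity observed above yields $\rho\models\phi$.

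The argument is essentially bookkeeping, so I do not anticipate a real obstacle; the only point deserving care is taking the two partial-trace restrictions against the correct variable sets, which is exactly what the compatibility identity Proposition \ref{pro-rt}(1) supplies when one intersects $\free{\phi}$ with $\dom{\rho}$ and uses $\free{\phi}\subseteq\dom{\rho}$.
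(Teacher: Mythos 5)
Your proposal is correct and follows essentially the same route as the paper, whose proof simply cites monotonicity together with Proposition \ref{pro res set}; your forward direction is Kripke monotonicity and your backward direction is restriction followed by monotonicity, with the partial-trace bookkeeping via Proposition \ref{pro-rt}(1) correctly filling in the details the paper leaves implicit.
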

		\begin{proof}
			By monotonicity and Proposition \ref{pro res set}.
		\end{proof}
		
		\subsection{Proof of Proposition \ref{pro modification}}
		
		\begin{proposition}
			Let $\prog$ be unitary transformation $\qU$ or initialisation $\qI$, and $\phi$ be any BI formula. If its modification  $\phi[\prog]$ is defined, then:
			\begin{enumerate}
				\item $\phi$ and $\phi[\prog]$ have the same domain: $\free{\phi} = \free{\phi[\prog]}$;
				\item for all $\rho\in\cD(\free{\phi}\cup\var(\prog))$, if $\rho\models \phi[\prog]$, then $\sem{\prog}(\rho)\models \phi$.
			\end{enumerate}
		\end{proposition}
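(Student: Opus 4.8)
The plan is to prove both parts by a single structural induction on the BI formula $\phi$, mirroring the inductive definition of modification (Definitions \ref{def sub atomic prop} and \ref{def sub 2-BI form}). A preliminary observation drives the whole argument: if $\phi[\prog]$ is defined then, by the ``otherwise'' clause of Definition \ref{def sub 2-BI form}, $\phi$ contains no $\rightarrow$ or $\sepimp$, so $\phi\in\res$; moreover, inspecting every clause of Definitions \ref{def sub atomic prop} and \ref{def sub 2-BI form}, the result $\phi[\prog]$ is again built only from atomic propositions, $\top$, $\bot$, $\wedge$, $\vee$, $\ast$, hence $\phi[\prog]\in\res$ as well. So both $\phi$ and $\phi[\prog]$ are restrictive by Proposition \ref{pro res set} and satisfy the monotonicity/restriction facts of Propositions \ref{pro mon res ato prop} and \ref{pro BI res mon}; this lets me freely pass between a state and its restriction in the inductive steps.

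Part (1) is a routine bookkeeping induction. For atomic $p$ one checks the three clauses of Definition \ref{def sub atomic prop}: $\bD[S]$ and $\unia[S]$ are syntactically unchanged; $P_{U[\qbar]}$ is a conjugate of $P$ by a unitary, hence has the same free variables; and $\bD[q]\wedge\lceil P\rceil_q$ has free variables $\{q\}\cup(\free{P}\backslash q)=\free{P}$ since $\lceil P\rceil_q\in\cP(\free{P}\backslash q)$. For composite $\phi$ the free-variable set is a union visibly preserved by modification; the only clause needing the induction hypothesis is $\phi_1\ast\phi_2$ under initialisation, where the free variables of $(\phi_1[\qI]\wedge\phi_2[\qI])\wedge(\bD[\free{\phi_1}\backslash q]\ast\bD[\free{\phi_2}\backslash q])$ equal $\free{\phi_1}\cup\free{\phi_2}$ precisely because $\free{\phi_i[\qI]}=\free{\phi_i}$ by induction and the initialised variable $q$ lies in exactly one of the two.

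The substance of part (2) is in the atomic base cases. For $\top,\bot,\bD[S]$ the claim is immediate once one notes that $\sem{\prog}$ preserves the domain of its argument. For a projection $P$ under a unitary $\qU$, the two defined sub-cases reduce to computing $\rt{(U\rho U^\dag)}{\free{P}}$: when $\qbar\subseteq\free{P}$ this equals $(U^{\qbar}\otimes\id)\,\rt{\rho}{\free{P}}\,(U^{\qbar\dag}\otimes\id)$ because tracing out $\dom{\rho}\backslash\free{P}$ commutes with the local unitary on $\qbar\subseteq\free{P}$, so $\supp(\rt{\rho}{\free{P}})\subseteq P_{U[\qbar]}$ rewrites exactly to $\supp(\rt{(U\rho U^\dag)}{\free{P}})\subseteq P$; when $\qbar\cap\free{P}=\emptyset$, the restriction to $\free{P}$ is unchanged by cyclicity of the trace on the $\qbar$ factor. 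For $P$ under an initialisation $\qI$ one uses the identity $\rho^q_0=|0\rangle_q\langle0|\otimes\tr_q(\rho)$, read off directly from $\rho^q_0=\sum_n|0\rangle_q\langle n|\rho|n\rangle_q\langle0|$: if $q\notin\free{P}$ the restriction to $\free{P}$ is again unchanged, and if $q\in\free{P}$ then $\rt{\rho^q_0}{\free{P}}=|0\rangle_q\langle0|\otimes\rt{\rho}{\free{P}\backslash q}$, whose support is $|0\rangle_q\langle0|\otimes\supp(\rt{\rho}{\free{P}\backslash q})\subseteq|0\rangle_q\langle0|\otimes\lceil P\rceil_q$; the final containment $|0\rangle_q\langle0|\otimes\lceil P\rceil_q\subseteq P$ is exactly the content of the definition $\lceil P\rceil_q=\qbigvee\{T:|0\rangle_q\langle0|\otimes T\subseteq P\}$, using that $|0\rangle_q\langle0|\otimes(-)$ distributes over joins of projections. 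The cases $\unia[S][\prog]$ are analogous but easier, since conjugating $\frac{\id_S}{\dim(S)}$ by a unitary leaves it fixed and tracing out $q\notin S$ does not affect $\rt{\rho}{S}$.

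For the composite cases of part (2) I first strengthen the statement to allow $\rho\in\cD(W)$ for any $W\supseteq\free{\phi}\cup\var(\prog)$; this is harmless, following from the stated version by restrictiveness of $\phi[\prog]$, monotonicity, and the cylindric-extension property of $\sem{\prog}$ (Proposition \ref{prop sem qo}), which gives $\rt{\sem{\prog}_W(\rho)}{\free{\phi}\cup\var(\prog)}=\sem{\prog}_{\free{\phi}\cup\var(\prog)}(\rt{\rho}{\free{\phi}\cup\var(\prog)})$. With the strengthened hypothesis, the $\wedge$ and $\vee$ cases follow immediately from the induction hypotheses for $\phi_1[\prog]$ and $\phi_2[\prog]$, whose free variables lie in $\free{\phi}$. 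The $\ast$ case under $\qU$ is nearly as easy: the side condition puts $\qbar$ entirely inside one block, say $\free{\phi_1}$, and disjoint from $\free{\phi_2}$, so $U$ acts within the first block and preserves the tensor splitting witnessing $\ast$; one applies induction to $\phi_1[\qU]$ while $\phi_2[\qU]=\phi_2$. The genuine obstacle is the $\ast$ case under initialisation: one must show that $\rho\models(\phi_1[\qI]\wedge\phi_2[\qI])\wedge(\bD[\free{\phi_1}\backslash q]\ast\bD[\free{\phi_2}\backslash q])$ forces $\sem{\qI}(\rho)$ to decompose as $\sigma_1\otimes\sigma_2$ with $\dom{\sigma_i}=\free{\phi_i}$ and $\sigma_i\models\phi_i$. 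The idea is that $\rho^q_0=|0\rangle_q\langle0|\otimes\tr_q(\rho)$ makes the initialised $q$ independent of everything else, while the $\bD\ast\bD$ conjunct forces $\free{\phi_1}\backslash q$ and $\free{\phi_2}\backslash q$ to be already uncorrelated in $\rho$ (hence in $\rho^q_0$); taking $q\in\free{\phi_1}$, this yields $\rt{\rho^q_0}{\free{\phi_1}\cup\free{\phi_2}}=(|0\rangle_q\langle0|\otimes\rt{\rho}{\free{\phi_1}\backslash q})\otimes\rt{\rho}{\free{\phi_2}}$. The first factor is $\sem{\qI}(\rt{\rho}{\free{\phi_1}})$, which satisfies $\phi_1$ by induction (using that $\rho\models\phi_1[\qI]$ and $\phi_1[\qI]$ is restrictive), and the second factor equals $\rt{\rho^q_0}{\free{\phi_2}}$, which satisfies $\phi_2$ since $\phi_2[\qI]=\phi_2$ is restrictive and is untouched on $\free{\phi_2}$ by the initialisation of $q\notin\free{\phi_2}$. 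Assembling these two witnesses gives $\sem{\qI}(\rho)\models\phi_1\ast\phi_2$, which closes the induction.
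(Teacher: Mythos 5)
Your proposal is correct and follows essentially the same route as the paper's proof: a structural induction whose substance lies in the atomic computations $\rt{\sem{\qI}(\rho)}{\free{P}}=|0\>_q\<0|\otimes\rt{\rho}{\free{P}\backslash q}$ with the $\lceil P\rceil_q$ join argument, conjugation commuting with partial trace for $\qU$, and the preservation of the tensor splitting in the $\ast$ cases (which the paper packages into its Lemmas on partial trace and separating conjunction, and you carry out inline). Your explicit strengthening of the induction hypothesis to arbitrary larger domains via restrictiveness and Proposition \ref{prop sem qo} is a presentational refinement of what the paper does implicitly, not a different method.
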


		\begin{proof}
			
			\noindent (1). Induction on the structure of $\phi$.
			
			\vspace{0.2cm}
			
			\noindent (2). 	We will introduce following lemmas which can be realized easily, and set variable set $\vars = \cD(\free{\phi}\cup\var(\prog))$.	
			
			\begin{lemma}
				\label{lem sound proof 1}
				For any $\rho\in\cD(\vars)$ and terminating program $\prog$, for any variable set $S\subseteq\vars$:
				\begin{enumerate}
					\item if $S\cap\var(\prog) = \emptyset$, then $\rt{\rho}{S} = \rt{\sem{\prog}(\rho)}{S}$;
					\item if $S\supseteq\var(\prog)$, then $\sem{\prog}(\rt{\rho}{S}) = \rt{\sem{\prog}(\rho)}{S}$.
				\end{enumerate} 
			\end{lemma}
			%	\begin{lemma}
			%		\label{lem sound proof 2}
			%		For any $\rho\in\cD(\vars)$, disjoint $S_1,S_2\subseteq\vars$ and any program $\prog$, if $\var(\prog)\subseteq S_1$ (or $S_2$) and $\rt{\rho}{S_1\cup S_2}$ is separable between $S_1$ and $S_2$, then $\rt{\sem{\prog}(\rho)}{S_1\cup S_2}$ is also separable between $S_1$ and $S_2$.
			%	\end{lemma}
			%	\begin{lemma}
			%		\label{lem sound proof 3}
			%		For any $\rho\in\cD(\vars)$, disjoint $S_1,S_2\subseteq\vars$ and any unitary transformation command $\prog\equiv \qU$, if $\var(\prog)\subseteq S_1$ or $\var(\prog)\subseteq S_2$ or $\var(\prog)\cap (S_1\cup S_2)=\emptyset$, then $\rt{\rho}{S_1\cup S_2}$ is separable between $S_1$ and $S_2$ if and only if $\rt{\sem{\prog}(\rho)}{S_1\cup S_2}$ is separable between $S_1$ and $S_2$.
			%	\end{lemma}
			\begin{lemma}
				\label{lem sound proof 5}
				For any command $\prog\equiv \qU$ or $\qI$, and for any $\rho_0,\rho_1\in\cD$ with disjoint domains and $\var(\prog)\subseteq\dom{\rho_0}$: $\sem{\prog}(\rho_0)\otimes\rho_1 = \sem{\prog}(\rho_0\otimes\rho_1)$
			\end{lemma}
			\begin{lemma}
				\label{lem sound proof 6}
				For any $\rho\in\cD(\vars)$ and any command $\prog\equiv \qI$ and two disjoint sets $S_1,S_2\subseteq\vars$, $\sem{\prog}(\rho)\succeq\rt{\sem{\prog}(\rho)}{S_1}\otimes\rt{\sem{\prog}(\rho)}{S_2}$ if and only if $\rho\succeq\rt{\rho}{S_1\backslash q}\otimes\rt{\rho}{S_2\backslash q}$.
			\end{lemma}
			\begin{lemma}
				\label{lem sound proof 7}
				For any $\rho\in\cD(\vars)$ and any command $\prog\equiv \qU$ and two disjoint sets $S_1,S_2\subseteq\vars$ such that $\qbar\subseteq S_1$ or $\qbar\subseteq S_2$ or $\qbar\cap(S_1\cup S_2) = \emptyset$, $\sem{\prog}(\rho)\succeq\rt{\sem{\prog}(\rho)}{S_1}\otimes\rt{\sem{\prog}(\rho)}{S_2}$ if and only if $\rho\succeq\rt{\rho}{S_1}\otimes\rt{\rho}{S_2}$.
			\end{lemma}	
			\begin{lemma}
				\label{lem sound proof 8}
				$\rho\models\phi_1\ast\phi_2$ iff $\free{\phi_1}\cap\free{\phi_2} = \emptyset$, $\rho\models\phi_1$, $\rho\models\phi_2$ and $\rho\succeq\rt{\rho}{\free{\phi_1}}\otimes\rt{\rho}{\free{\phi_2}}$.
			\end{lemma}	
			
			Now we start to prove (2) by following two statements:
			
			{\bf Statement 1:} For any $\rho\in\cD(\vars)$, if $\rho\models\phi[\qI]$, then  $\sem{\qI}(\rho)\models\phi$.
			\begin{enumerate}
				\item $\phi\equiv \bD[S]$. By definition, $\bD[S][\qI] = \bD[S]$. If $\rho\models\bD[S]$, then $\dom {\rho}\supseteq S$. Trivially, $\dom {\sem{\qI}(\rho)} = \dom {\rho}\supseteq S$, so $\sem{\qI}(\rho)\models\bD[S]$.
				\item $\phi\equiv P\in\cP$. There are two cases: 
				
				Case 1: $q\in\free{P}$, $ P[\qI] = \id_q\wedge\lceil P\rceil_q$. First observe that for any $\rho\in\cD(\vars)$,
				\begin{align*}
				\rt{\sem{\qI}(\rho)}{\free{P}} &= \rt{\left[(\cE_{\qI}\otimes\cI_{\vars\backslash\var(\qI)})(\rho)\right]}{\free{P}} \\
				&= (\cE_{\qI}\otimes\cI_{\free{P}\backslash q)})(\rt{\rho}{\free{P}}) \\
				&= \sum_n(|0\>_q\<n|\otimes\id_{\free{P}\backslash q)})\left(\rt{\rho}{\free{P}}\right)(|n\>_q\<0|\otimes\id_{\free{P}\backslash q)}) \\
				&= |0\>_q\<0|\otimes \rt{\rho}{\free{P}\backslash q}.
				\end{align*}
				If $\rho\models P[\qI]$, then $\rt{\rho}{\free{P}}\models\id_q\wedge\lceil P\rceil_q$, so $\supp(\rt{\rho}{\free{P}\backslash q})\subseteq\lceil P\rceil_q$ and $\supp(|0\>_q\<0|\otimes\rt{\rho}{\free{P}\backslash q})\subseteq P$ by definition of $\lceil P\rceil_q$, which implies $\rt{\sem{\qI}(\rho)}{\free{P}}\models P$ and thus $\sem{\qI}(\rho)\models P$ as desired. 
				
				Case 2: $q\notin\free{P}$, $ P[\qI] = P$. For any $\rho\in\cD(\vars)$, note that $\sem{\qI}$ is trace preserving and only applies on $q$, so $\free{P}\cap\var(\qI) = \emptyset$ and therefore, 
				$$\rt{\rho}{\free{P}} = \rt{\sem{\qI}(\rho)}{\free{P}}.$$
				Thus, $\rho\models P[\qI]$ iff $\rt{\rho}{\free{P}}\models P$ iff $\rt{\sem{\qI}(\rho)}{\free{P}}\models P$ iff $\sem{\qI}(\rho)\models P$.
				
				\item $\phi\equiv \unia[S]$. The only case $\unia[S][\qI]$ being defined is that $q\notin S$ and $\unia[S][\qI] = \unia[S]$. For any $\rho\in\cD(\vars)$, since $S\cap\var(\qI) = \emptyset$, so $\rt{\rho}{S} = \rt{\sem{\qI}(\rho)}{S}$. Therefore, $\rho\models\unia[S][\qI]$ iff $\rt{\rho}{S}\models \unia[S]$ iff $\rt{\sem{\qI}(\rho)}{S}\models \unia[S]$ iff $\sem{\qI}(\rho)\models \unia[S]$.

				\item $\phi \equiv \top$ or $\bot$. Trivial.
				
				\item $\phi \equiv \phi_1\wedge\!(\vee)\ \phi_2$. For any $\rho\in\cD(\vars)$, first by induction hypothesis, $\rho\models\phi_i[\qI]\Rightarrow\sem{\qI}(\rho)\models\phi_i$ for $i = 1,2$. Thus, 
				\begin{align*}
				&\rho\models\phi[\qI] \equiv \phi_1[\qI]\wedge\!(\vee)\ \phi_2[\qI] \\
				\Longrightarrow\ &\rho\models\phi_1[\qI] \text{\ and(or)\ }\rho\models\phi_2[\qI] \\
				\Longrightarrow\ &\sem{\qI}(\rho)\models\phi_1 \text{\ and(or)\ }\sem{\qI}(\rho)\models\phi_2 \\
				\Longrightarrow\ &\sem{\qI}(\rho)\models\phi_1\wedge\!(\vee)\ \phi_2.
				\end{align*}
				
				\item $\phi \equiv \phi_1\ast\phi_2$. For any $\rho\in\cD(\vars)$, first by induction hypothesis, $\rho\models\phi_i[\qI]\Rightarrow\sem{\qI}(\rho)\models\phi_i$ for $i = 1,2$. 
				\begin{itemize}
					\item[$\cdot$]Case 1: $q\notin\free{\phi_1}\cup\free{\phi_2}$, and $\phi[\qI]\equiv \phi_1[\qI] \ast \phi_2[\qI]$. So, $\rt{\rho}{\free{\phi_1}\cup\free{\phi_2}} = \rt{\sem{\qI}(\rho)}{\free{\phi_1}\cup\free{\phi_2}}$, then using induction hypothesis and Proposition \ref{pro res set} we have:
					\begin{align*}
					&\rho\models\phi_1[\qI] \ast \phi_2[\qI] \\
					\Longrightarrow\ &\free{\phi_1[\qI]}\cap\free{\phi_2[\qI]} = \emptyset,\ \rho\models\phi_1[\qI],\ \rho\models\phi_2[\qI] \text{\ and\ }\\
					& \rho\succeq\rt{\rho}{\free{\phi_1}}\otimes\rt{\rho}{\free{\phi_2}} \\
					\Longrightarrow\ &\free{\phi_1}\cap\free{\phi_2} = \emptyset,\ \sem{\qI}(\rho)\models\phi_1, \sem{\qI}(\rho)\models\phi_2 \text{\ and\ }\\
					&\sem{\qI}(\rho)\succeq\rt{\sem{\qI}(\rho)}{\free{\phi_1}}\otimes\rt{\sem{\qI}(\rho)}{\free{\phi_2}} \\
					\Longrightarrow\ &\sem{\qI}(\rho)\models\phi_1\ast\phi_2. 
					\end{align*}	
					\item[$\cdot$]Case 2: $q\in\free{\phi_1}$ and $q\notin\free{\phi_2}$, and  $\phi[\qI]\equiv (\phi_1[\qI] \wedge \phi_2[\qI])\wedge(\bD(\free{\phi_1}\backslash q)\ast\bD(\free{\phi_2})$. Following by Lemma \ref{lem sound proof 6}, we have :				
					\begin{align*}
					&\rho\models(\phi_1[\qI] \wedge \phi_2[\qI])\wedge(\bD(\free{\phi_1}\backslash q)\ast\bD(\free{\phi_2}) \\
					\Longrightarrow\ &(\free{\phi_1}\backslash q)\cap\free{\phi_2} = \emptyset,\ \rho\models\phi_1[\qI],\ \rho\models\phi_2[\qI] \text{\ and\ }\\
					& \rho\succeq\rt{\rho}{\free{\phi_1}\backslash q}\otimes\rt{\rho}{\free{\phi_2}} \\
					\Longrightarrow\ &\free{\phi_1}\cap\free{\phi_2} = \emptyset,\ \sem{\qI}(\rho)\models\phi_1, \sem{\qI}(\rho)\models\phi_2 \text{\ and\ }\\
					&\sem{\qI}(\rho)\succeq\rt{\sem{\qI}(\rho)}{\free{\phi_1}}\otimes\rt{\sem{\qI}(\rho)}{\free{\phi_2}} \\
					\Longrightarrow\ &\sem{\qI}(\rho)\models\phi_1\ast\phi_2. 
					\end{align*}			
				\end{itemize}

			\end{enumerate}

			{\bf Statement 2:} For any $\rho\in\cD(\vars)$, if $\rho\models\phi[\qU]$, then $\sem{\qU}(\rho)\models\phi$.
			\begin{enumerate}
				\item $\phi\equiv \bD[S]$. Similar to Statement 1 (1).
				
				\item $\phi\equiv P\in\cP$. There are two cases: 
				
				Case 1: $\qbar\subseteq\free{P}$, $ P[\qU] = (U^{\qbar\dag}\otimes\id_{\free{P}\backslash \qbar})P(U^{\qbar}\otimes\id_{\free{P}\backslash \qbar})$. First observe that for any $\rho\in\cD(\vars)$,			
				\begin{align*}
				\rt{\sem{\qU}(\rho)}{\free{P}} &= \rt{\left[(\cE_{\qU}\otimes\cI_{\vars\backslash\var(\qU)})(\rho)\right]}{\free{P}} \\
				&= (\cE_{\qU}\otimes\cI_{\free{P}\backslash \qbar)})(\rt{\rho}{\free{P}}) \\
				&= (U^{\qbar}\otimes\id_{\free{P}\backslash \qbar})(\rt{\rho}{\free{P}}) (U^{\qbar\dag}\otimes\id_{\free{P}\backslash \qbar})
				\end{align*}
				Therefore, we have
				\begin{align*}
				&\rho\models (U^{\qbar\dag}\otimes\id_{\free{P}\backslash \qbar})P(U^{\qbar}\otimes\id_{\free{P}\backslash \qbar}) \\
				\Longrightarrow\ &\rt{\rho}{\free{P}}\models (U^{\qbar\dag}\otimes\id_{\free{P}\backslash \qbar})P(U^{\qbar}\otimes\id_{\free{P}\backslash \qbar}) \\
				\Longrightarrow\ &\supp(\rt{\rho}{\free{P}})\subseteq (U^{\qbar\dag}\otimes\id_{\free{P}\backslash \qbar})P(U^{\qbar}\otimes\id_{\free{P}\backslash \qbar}) \\
				\Longrightarrow\ &\supp\big((U^{\qbar}\otimes\id_{\free{P}\backslash \qbar})(\rt{\rho}{\free{P}}) (U^{\qbar\dag}\otimes\id_{\free{P}\backslash \qbar})\big)\subseteq P \\
				\Longrightarrow\ &\rt{\sem{\qU}(\rho)}{\free{P}}\models P \\
				\Longrightarrow\ &\sem{\qU}(\rho)\models P
				\end{align*}
				
				Case 2: $\qbar\cap\free{P} = \emptyset$, $ P[\qU] = P$. For any $\rho\in\cD(\vars)$, note that $\sem{\qU}$ is trace preserving and $\var(\qU) = \qbar$, so $\free{P}\cap\var(\qU) = \emptyset$ and therefore, 
				$$\rt{\rho}{\free{P}} = \rt{\sem{\qU}(\rho)}{\free{P}}.$$
				Thus, $\rho\models P[\qU]$ iff $\rt{\rho}{\free{P}}\models P$ iff $\rt{\sem{\qU}(\rho)}{\free{P}}\models P$ iff $\sem{\qU}(\rho)\models P$.

				\item $\phi\equiv \unia[S]$. There are two cases:
				
				Case 1: $\qbar\subseteq S$, $\unia[S][\qU] = \unia[S]$. Similar to above arguments, for any $\rho\in\cD(\vars)$,			
				$$\rt{\sem{\qU}(\rho)}{\free{P}} = (U^{\qbar}\otimes\id_{\free{P}\backslash \qbar})(\rt{\rho}{\free{P}}) (U^{\qbar\dag}\otimes\id_{\free{P}\backslash \qbar}),$$
				and therefore,
				\begin{align*}
				&\rho\models \unia[S][\qU] \\
				\Longrightarrow\ &\rt{\rho}{S} = \frac{I_S}{\dim(S)} \\
				\Longrightarrow\ &(U^{\qbar}\otimes\id_{S\backslash \qbar})(\rt{\rho}{S}) (U^{\qbar\dag}\otimes\id_{S\backslash \qbar}) = \frac{I_S}{\dim(S)}\\
				\Longrightarrow\ &\rt{\sem{\qU}(\rho)}{S}\models \unia[S] \\ 
				\Longrightarrow\ &\sem{\qU}(\rho)\models \unia[S].
				\end{align*}
				
				Case 2: $\qbar\cap S = \emptyset$. For any $\rho\in\cD(\vars)$, as $S\cap\var(\qU) = \emptyset$, we have
				$\rt{\rho}{S} = \rt{\sem{\qU}(\rho)}{S}$, which leads to: $\rho\models\unia[S][\qU]$ iff $\rt{\rho}{S}\models \unia[S]$ iff $\rt{\sem{\qU}(\rho)}{S}\models \unia[S]$ iff $\sem{\qU}(\rho)\models \unia[S]$.
				
				\item $\phi \equiv \top$ or $\bot$. Trivial.
				
				\item $\phi \equiv \phi_1\wedge\!(\vee)\ \phi_2$. Similar to Statement 1 (5).
				
				\item $\phi \equiv \phi_1\ast\phi_2$. Either $\qbar\subseteq\free{\phi_1}$ or $\qbar\subseteq\free{\phi_2}$ or $\qbar\cap(\free{\phi_1}\cup\free{\phi_2}) = \emptyset$. So according to Lemma \ref{lem sound proof 7} and induction hypothesis we have:
				\begin{align*}
				&\rho\models\phi_1[\qU] \ast \phi_2[\qU] \\
				\Longrightarrow\ &\free{\phi_1[\qU]}\cap\free{\phi_2[\qU]} = \emptyset,\ \rho\models\phi_1[\qU],\ \rho\models\phi_2[\qU] \text{\ and\ }\\
				& \rho\succeq\rt{\rho}{\free{\phi_1[\qU]}}\otimes\rt{\rho}{\free{\phi_2[\qU]}} \\
				\Longrightarrow\ &\free{\phi_1}\cap\free{\phi_2} = \emptyset,\ \sem{\qU}(\rho)\models\phi_1, \sem{\qU}(\rho)\models\phi_2 \text{\ and\ }\\
				&\sem{\qU}(\rho)\succeq\rt{\sem{\qU}(\rho)}{\free{\phi_1}}\otimes\rt{\sem{\qU}(\rho)}{\free{\phi_2}} \\
				\Longrightarrow\ &\sem{\qU}(\rho)\models\phi_1\ast\phi_2. 
				\end{align*}
			\end{enumerate}
			
		\end{proof}

		\section{Separation Logic for Quantum Programs, Deferred Proofs for Section \ref{sec QSL}}
		\label{sec app QSL}
		
		\subsection{Proof of Theorem \ref{thm eq glb var set}}
		
		\begin{theorem}[Theorem \ref{thm eq glb var set}]
			For any two sets $\vars$ and $\vars^\prime$ containing all free variables of $\phi, \psi$ and $\prog$,
			$\vars\models\{\phi\}\prog\{\psi\} \text{\ if\ and\ only\ if\ }\vars^\prime\models\{\phi\}\prog\{\psi\}.$
		\end{theorem}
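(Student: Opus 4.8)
The plan is to reduce to the case where one variable set contains the other, and then push satisfaction back and forth along partial traces. Set $\vars_0 := \free{\phi}\cup\free{\psi}\cup\var(\prog)$; both $\vars$ and $\vars^\prime$ contain $\vars_0$, so it suffices to prove
$\vars_0\models\{\phi\}\prog\{\psi\}$ iff $\vars\models\{\phi\}\prog\{\psi\}$ for every $\vars\supseteq\vars_0$, after which $\vars\models\cdots$ and $\vars^\prime\models\cdots$ are both equivalent to $\vars_0\models\cdots$. Two facts do the real work. The first is the cylindric-extension property of the semantics (Proposition \ref{prop sem qo}), which I will use in the form $\rt{\,\sem{\prog}_{\vars}(\rho)\,}{\vars_0} = \sem{\prog}_{\vars_0}\!\big(\rt{\rho}{\vars_0}\big)$ for all $\rho\in\cD(\vars)$; this holds because $\var(\prog)\subseteq\vars_0\subseteq\vars$, so tracing out $\vars\setminus\vars_0$ (which is disjoint from $\var(\prog)$) commutes with applying $\sem{\prog}_{\var(\prog)}\otimes\cI$, and it is recorded below as Lemma \ref{lem sound proof 1}(2). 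The second is the combined restriction/monotonicity property of restrictive formulas (Proposition \ref{pro BI res mon}): if $\chi\in\res$, $\sigma\preceq\tau$ and $\free{\chi}\subseteq\dom{\sigma}$, then $\sigma\models\chi$ iff $\tau\models\chi$. Since QSL judgments require $\phi,\psi\in\res$, this applies to both.

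For the direction $\vars_0\models\cdots\ \Rightarrow\ \vars\models\cdots$: given $\rho\in\cD(\vars)$ with $\rho\models\phi$, note $\rt{\rho}{\vars_0}\preceq\rho$ and $\free{\phi}\subseteq\vars_0=\dom{\rt{\rho}{\vars_0}}$, so the second fact gives $\rt{\rho}{\vars_0}\models\phi$; applying the hypothesis yields $\sem{\prog}_{\vars_0}(\rt{\rho}{\vars_0})\models\psi$. By the first fact this reads $\rt{\,\sem{\prog}_{\vars}(\rho)\,}{\vars_0}\models\psi$, and since $\rt{\,\sem{\prog}_{\vars}(\rho)\,}{\vars_0}\preceq\sem{\prog}_{\vars}(\rho)$ with $\free{\psi}\subseteq\vars_0$, the second fact upgrades this to $\sem{\prog}_{\vars}(\rho)\models\psi$, as required.

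For the converse $\vars\models\cdots\ \Rightarrow\ \vars_0\models\cdots$: given $\rho_0\in\cD(\vars_0)$ with $\rho_0\models\phi$, choose any extension $\rho\in\cD(\vars)$ with $\rt{\rho}{\vars_0}=\rho_0$ (e.g.\ $\rho:=\rho_0\otimes\sigma$ for an arbitrary $\sigma\in\cD(\vars\setminus\vars_0)$). Then $\rho_0\preceq\rho$ and $\free{\phi}\subseteq\vars_0$, so $\rho\models\phi$ by the second fact, hence $\sem{\prog}_{\vars}(\rho)\models\psi$ by the hypothesis. Restricting to $\vars_0$ and using the first fact gives $\sem{\prog}_{\vars_0}(\rho_0)=\rt{\,\sem{\prog}_{\vars}(\rho)\,}{\vars_0}$, which is $\preceq\sem{\prog}_{\vars}(\rho)$; since $\free{\psi}\subseteq\vars_0$, the second fact (in the ``$\Leftarrow$'' direction) yields $\sem{\prog}_{\vars_0}(\rho_0)\models\psi$. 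This closes the argument.

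The one point deserving care is the commutation identity $\rt{\,\sem{\prog}_{\vars}(\rho)\,}{\vars_0}=\sem{\prog}_{\vars_0}(\rt{\rho}{\vars_0})$: it is tempting to treat it as obvious, but it genuinely rests on Proposition \ref{prop sem qo} together with the elementary fact that partial trace over a subsystem disjoint from $\var(\prog)$ commutes with $\sem{\prog}$, and it would fail if $\var(\prog)\not\subseteq\vars_0$. Everything else is routine bookkeeping of domains and repeated appeals to monotonicity.
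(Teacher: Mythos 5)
Your proof is correct and follows essentially the same route as the paper's: both directions rest on the restriction/monotonicity property of $\res$ formulas (Proposition \ref{pro BI res mon}) combined with the commutation of $\sem{\prog}$ with partial trace over variables outside $\var(\prog)$, which is exactly the computation the paper carries out via Proposition \ref{prop sem qo}. The only difference is cosmetic: you reduce both sets to the minimal set $\vars_0=\free{\phi}\cup\free{\psi}\cup\var(\prog)$, whereas the paper simply assumes $\vars\subseteq\vars^\prime$ and leaves the reduction from two arbitrary sets implicit, so your version is if anything slightly more complete.
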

		
		\begin{proof} Suppose $\vars\subseteq \vars^\prime$.
			
			(Extension, $\Rightarrow$ part): For any $\rho\in\cD(\vars^\prime)$, if $\rho\models\phi$, then by Proposition \ref{pro BI res mon}, $\rt{\rho}{\vars}\models\phi$ as $\free{\phi}\subseteq\vars$.
			Note that $\rt{\rho}{\vars}\in\cD(\vars)$, so by assumption, 
			$\sem{\prog}(\rt{\rho}{\vars})\models\psi$. Observe that:
			\begin{align*}
			\sem{\prog}(\rt{\rho}{\vars}) &= 
			(\cE_\prog\otimes\cI_{\vars\backslash\var(\prog)})(\rt{\rho}{\vars}) \\
			&= \rt{\left[(\cE_\prog\otimes\cI_{\vars\backslash\var(\prog)}\otimes\cI_{\vars^\prime\backslash\vars})(\rho)\right]}{\vars} \\
			&= \rt{\left[(\cE_\prog\otimes\cI_{\vars^\prime\backslash\var(\prog)})(\rho)\right]}{\vars} \\
			&= \rt{\sem{\prog}(\rho)}{\vars}
			\end{align*}
			by using Proposition \ref{prop sem qo}, therefore $\sem{\prog}(\rt{\rho}{\vars})\preceq\sem{\prog}(\rho)$ and $\sem{\prog}(\rho)\models\psi$ by Kripke monotonicity.
			
			(Restriction, $\Leftarrow$ part): For any $\rho\in\cD(\vars)$, choose a $\rho^\prime\in\cD(\vars^\prime)$ such that $\rho\preceq\rho^\prime$ (so $\rho = \rt{\rho^\prime}{\vars}$). If $\rho\models\phi$, then $\rho^\prime\models\phi$, and by assumption, 
			$\sem{\prog}(\rho^\prime)\models\psi$. Observe that:
			\begin{align*}
			\sem{\prog}(\rho) &= 
			(\cE_\prog\otimes\cI_{\vars\backslash\var(\prog)})(\rt{\rho^\prime}{\vars}) \\
			&= \rt{\left[(\cE_\prog\otimes\cI_{\vars\backslash\var(\prog)}\otimes\cI_{\vars^\prime\backslash\vars})(\rho^\prime)\right]}{\vars} \\
			&= \rt{\left[(\cE_\prog\otimes\cI_{\vars^\prime\backslash\var(\prog)})(\rho^\prime)\right]}{\vars} \\
			&= \rt{\sem{\prog}(\rho^\prime)}{\vars}
			\end{align*}
			by using Proposition \ref{prop sem qo}, therefore $\sem{\prog}(\rho)\preceq\sem{\prog}(\rho^\prime)$. As $\free{\psi}\subseteq\vars$, and $\sem{\prog}(\rho)\models\psi$  by Proposition \ref{pro BI res mon}.
		\end{proof}

		\subsection{Proof of Proposition \ref{prop CM}}
		
		\begin{proposition}[Proposition \ref{prop CM}, Extended Version] The formulas generated by following grammar are ${\rm CM}$. 
			$$
			\phi,\psi ::= p\in\AP\ |\ \top\ |\ \bot\ |\ \phi\wedge\psi\ |\ \phi\in{\rm SP}\ |\ \mu\ast\phi
			$$
			where $\mu\in {\rm SP}$.
		\end{proposition}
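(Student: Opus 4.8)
The plan is to prove, by induction on the structure of $\phi$, the \emph{stronger} statement that every formula generated by the grammar is both \emph{restrictive} (in the sense of Proposition~\ref{pro res set}) and ${\rm CM}$; the restrictivity is carried along only so that Lemma~\ref{lem sound proof 8} can be invoked freely in the separating-conjunction case.

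First I would record a characterisation of ${\rm SP}$ that disposes of the two new productions. If $\mu\in{\rm SP}$ and $\sem\mu\neq\emptyset$, let $\sigma_0$ be its least model; then $\sigma_0\models\mu$, and since satisfaction in any BI model is monotone with respect to $\preceq$, we get $\rho\models\mu$ \emph{iff} $\rho\succeq\sigma_0$, i.e.\ iff $\dom{\sigma_0}\subseteq\dom\rho$ and $\rt\rho{\dom{\sigma_0}}=\sigma_0$ (for every supported formula that occurs here, $\sigma_0\in\cD(\free\mu)$, so this is a constraint on $\rt\rho{\free\mu}$). This yields restrictivity of $\mu$ at once, and also ${\rm CM}$: if $\rho,\rho'\models\mu$ with $\dom\rho=\dom{\rho'}$ and $\lambda\in[0,1]$, then $\rt{(\lambda\rho+(1-\lambda)\rho')}{\free\mu}=\lambda\sigma_0+(1-\lambda)\sigma_0=\sigma_0$, so the mixture is $\succeq\sigma_0$ and hence models $\mu$ (and if $\sem\mu=\emptyset$ the ${\rm CM}$ condition is vacuous). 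The remaining base cases are routine and I would dispatch them directly: $\top,\bot$ are immediate; for $\bD[S]$ the domain is unchanged by mixing; for a projection $P$ one uses linearity of $\rt\cdot{\free P}$ together with $\supp(\lambda A+(1-\lambda)B)\subseteq P$ whenever $\supp A,\supp B\subseteq P$; for $\unia[S]$ one uses that $\frac{\id_S}{\dim(S)}$ is an affine fixed point. The conjunction case $\phi\wedge\psi$ follows componentwise from the induction hypotheses, and its restrictivity is obtained exactly as in the proof of Proposition~\ref{pro res set}.

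The crux is the production $\mu\ast\phi$ with $\mu\in{\rm SP}$ and $\phi$ drawn from the grammar, hence restrictive and ${\rm CM}$ by induction. I would use Lemma~\ref{lem sound proof 8}, whose forward direction needs no restrictivity hypothesis: from $\rho\models\mu\ast\phi$ it yields $\free\mu\cap\free\phi=\emptyset$, $\rho\models\mu$, $\rho\models\phi$, and $\rho\succeq\rt\rho{\free\mu}\otimes\rt\rho{\free\phi}$, whence, restricting to $\free\mu\cup\free\phi$, $\rt\rho{\free\mu\cup\free\phi}=\rt\rho{\free\mu}\otimes\rt\rho{\free\phi}$. The key point is that $\rho\models\mu$ with $\mu\in{\rm SP}$ \emph{forces} $\rt\rho{\free\mu}=\sigma_0$, a state independent of $\rho$. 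So given $\rho,\rho'\models\mu\ast\phi$ with common domain and $\lambda\in[0,1]$, write $\tau=\lambda\rho+(1-\lambda)\rho'$; then $\rt\tau{\free\mu\cup\free\phi}=\lambda\bigl(\sigma_0\otimes\rt\rho{\free\phi}\bigr)+(1-\lambda)\bigl(\sigma_0\otimes\rt{\rho'}{\free\phi}\bigr)=\sigma_0\otimes\rt\tau{\free\phi}=\rt\tau{\free\mu}\otimes\rt\tau{\free\phi}$, the constant first tensor factor $\sigma_0$ pulling out of the convex combination. Combined with $\tau\models\mu$ (monotonicity, since $\rt\tau{\free\mu}=\sigma_0\models\mu$) and $\tau\models\phi$ (by ${\rm CM}$ of $\phi$), the backward direction of Lemma~\ref{lem sound proof 8} --- which here only requires $\rt\tau{\free\mu}\models\mu$ (true, it equals $\sigma_0$) and $\rt\tau{\free\phi}\models\phi$ (true, $\phi$ is restrictive by induction) --- gives $\tau\models\mu\ast\phi$. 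Restrictivity of $\mu\ast\phi$ is inherited as in Proposition~\ref{pro res set}, closing the induction.

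The main obstacle is precisely this $\mu\ast\phi$ case: the naive idea of taking the separating decompositions witnessing $\rho\models\mu\ast\phi$ and $\rho'\models\mu\ast\phi$ and mixing them fails, because those decompositions --- and even the domains of their two factors --- can be entirely different. What rescues the argument is that a supported $\mu$ has a \emph{unique} minimal model $\sigma_0$, which pins $\rt\rho{\free\mu}$ to one fixed state and makes the product shape $\sigma_0\otimes(\cdot)$ stable under mixing. The remaining subtlety is purely bookkeeping: one must run the induction on ``restrictive $\wedge$ ${\rm CM}$'' rather than on ${\rm CM}$ alone, so that both directions of Lemma~\ref{lem sound proof 8} are available; all the underlying facts (linearity of the partial trace, behaviour of $\supp$ under convex combinations) are standard linear algebra.
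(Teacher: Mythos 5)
Your proof is correct and follows essentially the same route as the paper's: in the crucial $\mu\ast\phi$ case, the paper also uses the fact that a supported $\mu$ pins the reduced state on $\free{\mu}$ to its fixed least element, pulls that constant tensor factor out of the convex combination, and then applies the CM induction hypothesis to $\phi$ (via Lemma~\ref{lem sound proof 8}), with the other cases ($\top$, $\bot$, atomic propositions, $\wedge$, and $\phi\in{\rm SP}$) handled exactly as you do. Your explicit bookkeeping of restrictivity alongside CM is only a minor presentational difference from the paper, which relies on the restriction and monotonicity facts implicitly.
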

		
		\begin{proof}
			\begin{enumerate}
				\item $p\equiv\bD[S]$. Trivial.
				\item $p\equiv P\in\cP$. Suppose $\rho,\rho^\prime\in\cD$ with same domain and $\rho\models P$ and $\rho^\prime\models P$, then 
				$$\supp\left(\rt{\rho}{\free {P}}\right)\subseteq P, \quad \supp\left(\rt{\rho^\prime}{\free {P}}\right)\subseteq P$$
				then for any $\lambda\in[0,1]$, we have:
				\begin{align*}
				\supp\left(\rt{\left(\lambda\rho + (1-\lambda)\rho^\prime\right)}{\free {P}}\right) &=  \supp\left(\rt{\left(\lambda\rho\right)}{\free {P}}\right)\sqcup\supp\left(\rt{\left((1-\lambda)\rho^\prime\right)}{\free {P}}\right) \\
				&\subseteq \supp\left(\rt{\rho}{\free {P}}\right)\sqcup \supp\left(\rt{\rho^\prime}{\free {P}}\right) \\
				&\subseteq P
				\end{align*}
				\item $p\equiv\unia[S]$. Suppose $\rho,\rho^\prime\in\cD$ with same domain and $\rho\models \unia[S]$ and $\rho^\prime\models \unia[S]$, then 
				$$\rt{\rho}{S} = \rt{\rho^\prime}{S} = \frac{I_S}{\dim(S)},$$
				and thus for any $\lambda\in[0,1]$, we have:
				$$\rt{\left(\lambda\rho + (1-\lambda)\rho^\prime\right)}{S} = \lambda\rt{\rho}{\free {S}} + (1-\lambda)\rt{\rho^\prime}{\free {S}} = \frac{I_S}{\dim(S)},$$
				and so, $\lambda\rho + (1-\lambda)\rho^\prime\models \unia[S]$.
				\item $\top$ or $\bot$. Trivial.
				\item $\phi\wedge\psi$. Suppose $\rho,\rho^\prime\in\cD$ with same domain and $\rho\models \phi\wedge\psi$ and $\rho^\prime\models \phi\wedge\psi$, then by induction hypothesis, for any $\lambda\in[0,1]$, 
				$$\lambda\rho + (1-\lambda)\rho^\prime \models\phi,\quad \lambda\rho + (1-\lambda)\rho^\prime \models\psi$$
				and thus, $\lambda\rho + (1-\lambda)\rho^\prime \models\phi\wedge\psi$.
				\item $\phi\in{\rm SP}$. If $\sem{\phi} = \emptyset$, then trivially $\phi\in{\rm CM}$. Otherwise, suppose $\sigma$ is the least element of $\sem{\phi}$, and $\rho,\rho^\prime\in\cD$ with same domain and $\rho\models \phi$ and $\rho^\prime\models \phi$, we must have: for any $\lambda\in[0,1]$,
				\begin{align*}
				\rt{\rho}{\free{\phi}} = \rt{\rho^\prime}{\free{\phi}} = \sigma\quad\Rightarrow\quad \rt{\left(\lambda\rho + (1-\lambda)\rho^\prime\right)}{\free{\phi}} = \sigma
				\end{align*}
				and so $\lambda\rho + (1-\lambda)\rho^\prime\models\phi$.
				\item $\mu\ast\phi$. Suppose $\sigma$ is the least element of $\sem{\phi}$, $\rho,\rho^\prime\in\cD$ with same domain and $\rho\models \mu\ast\phi$ and $\rho^\prime\models \mu\ast\phi$, then by induction hypothesis and \ref{lem sound proof 8}, $\free{\mu}\cap\free{\psi} = \emptyset$, for any $\lambda\in[0,1]$, 
				\begin{align*}
				&\rt{\rho}{\free{\mu\ast\phi}} = \sigma \otimes \rt{\rho}{\free{\phi}},\quad \rt{\rho^\prime}{\free{\mu\ast\phi}} = \sigma \otimes \rt{\rho^\prime}{\free{\phi}},\quad \rt{\rho}{\free{\phi}},\rt{\rho^\prime}{\free{\phi}}\models\phi \\
				\Rightarrow\ &\rt{\left(\lambda\rho + (1-\lambda)\rho^\prime\right)}{\free{\mu\ast\phi}} = \sigma \otimes \left(\lambda\rt{\rho}{\free{\phi}} + (1-\lambda)\rt{\rho^\prime}{\free{\phi}} \right),\quad \lambda\rt{\rho}{\free{\phi}} + (1-\lambda)\rt{\rho^\prime}{\free{\phi}} \models\phi
				\end{align*}
				and thus, $\lambda\rho + (1-\lambda)\rho^\prime\models\mu\ast\phi$.
			\end{enumerate}
		\end{proof}
		
		\subsection{Proof of Proposition \ref{prop glo imp}}
		
		\begin{proposition}[extended version]
			\label{prop glo imp app}
			1) For all $\phi\in{\rm Res}$ and $S\subseteq \vars$, $\models \phi\gimplr \phi\wedge \bD[S]$.
			
			2) For all $\phi,\psi$, $\models\phi\rightarrow\psi$ implies $\models\phi\gimp\psi$.
		\end{proposition}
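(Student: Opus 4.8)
The plan is to reduce both claims to a single reformulation of global validity and then chase definitions. Write $T \triangleq \free{\phi}\cup\free{\psi}$. I would first establish the characterization: $\models\phi\gimp\psi$ holds if and only if for every $\rho\in\cD$ with $\dom{\rho}\supseteq T$ one has $\rho\models\phi\Rightarrow\rho\models\psi$. The forward direction is exactly the remark preceding the proposition. Unfolding $\phi\gimp\psi \triangleq \bD[T]\rightarrow(\phi\rightarrow\psi)$ together with the satisfaction clause for $\rightarrow$ in Definition~\ref{def satisfaction BI}: given $\sigma$ with $\dom{\sigma}\supseteq T$ and $\sigma\models\phi$, instantiate $\sigma\models\bD[T]\rightarrow(\phi\rightarrow\psi)$ at the world $\sigma\succeq\sigma$ — legal since $\sigma\models\bD[T]$ by Eq.~(\ref{atom-0}) — to get $\sigma\models\phi\rightarrow\psi$, then instantiate this at $\sigma\succeq\sigma$ once more to conclude $\sigma\models\psi$. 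For the converse, take arbitrary $\rho$, arbitrary $\rho'\succeq\rho$ with $\rho'\models\bD[T]$ (so $\dom{\rho'}\supseteq T$), and arbitrary $\rho''\succeq\rho'$ with $\rho''\models\phi$; since $\dom{\rho''}\supseteq\dom{\rho'}\supseteq T$ the hypothesis yields $\rho''\models\psi$, and reassembling the nested $\rightarrow$-clauses gives $\rho\models\bD[T]\rightarrow(\phi\rightarrow\psi)$, i.e.\ $\models\phi\gimp\psi$. Only reflexivity of $\preceq$ and the semantics of $\bD[T]$ from Eq.~(\ref{atom-0}) are used.

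Given the characterization, part (2) is immediate: $\models\phi\rightarrow\psi$ means $\rho\models\phi\rightarrow\psi$ for every $\rho$, hence (instantiating the $\rightarrow$-clause at the world $\rho\succeq\rho$) $\rho\models\phi\Rightarrow\rho\models\psi$ for every $\rho\in\cD$, which in particular holds for all $\rho$ with $\dom{\rho}\supseteq T$ — exactly the characterization of $\models\phi\gimp\psi$. For part (1), apply the characterization with $\psi\equiv\phi\wedge\bD[S]$, so that $\free{\phi\wedge\bD[S]}=\free{\phi}\cup S$ and $T=\free{\phi}\cup S$. For $\models\phi\gimp(\phi\wedge\bD[S])$: if $\dom{\rho}\supseteq\free{\phi}\cup S$ and $\rho\models\phi$, then $S\subseteq\dom{\rho}$ gives $\rho\models\bD[S]$ by Eq.~(\ref{atom-0}), hence $\rho\models\phi\wedge\bD[S]$. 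For the reverse implication $\models(\phi\wedge\bD[S])\gimp\phi$: $\rho\models\phi\wedge\bD[S]$ entails $\rho\models\phi$ by the clause for $\wedge$. Combining the two gives $\models\phi\gimplr\phi\wedge\bD[S]$. Note the hypothesis $\phi\in{\rm Res}$ is not actually used in the equivalence; it is stated because it ensures $\phi$ and $\phi\wedge\bD[S]$ are restrictive and hence admissible as QSL assertions.

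There is essentially no obstacle: the whole argument is pure definition-unfolding, and the only point needing a moment's care is being precise about the doubly-nested $\forall\rho'\succeq\rho$, $\forall\rho''\succeq\rho'$ quantifier structure hidden inside $\bD[T]\rightarrow(\phi\rightarrow\psi)$ when proving both directions of the characterization lemma. In particular, monotonicity of satisfaction is not even needed here — reflexivity of the preorder suffices.
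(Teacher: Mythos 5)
Your proof is correct and follows essentially the same route as the paper: the paper's own argument simply asserts "by definition" what you spell out as the characterization lemma ($\models\phi\gimp\psi$ iff $\rho\models\phi\Rightarrow\rho\models\psi$ for all $\rho$ with $\dom{\rho}\supseteq\free{\phi}\cup\free{\psi}$), and then, exactly as you do, reduces part (1) to the trivial equivalence of $\phi$ and $\phi\wedge\bD[S]$ on such states and part (2) to unfolding the definition of $\gimp$. Your explicit handling of the nested intuitionistic $\rightarrow$-clauses, and the observation that $\phi\in{\rm Res}$ is not actually needed, are consistent with (and merely more detailed than) the paper's terse proof.
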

		\begin{proof}
			1) By definition, it is sufficient to prove that for all $\rho$ with $\dom{\rho}\supseteq\free{\phi}\cup S$, $\rho\models\phi$ if and only if $\rho\models\phi\wedge\bD[S]$. This is trivial since $\rho\models\bD[S]$.
			
			2) Trivial by definition of global implication $\gimp$.
		\end{proof}
		
		\subsection{Proof of Proposition \ref{prop SP}}
		\label{sec app sub proof prop SP}
		\begin{proposition}
			\label{prop SP}
			The formulas generated by following grammar are ${\rm SP}$: 
			$$
			\phi,\psi ::= \unia[S]\ |\ p\in\cP {\rm\ of\ rank\ } 1\ |\ \top\ |\ \bot\ |\ \phi\ast\psi
			$$
			where $\cP {\rm\ of\ rank\ } 1$ consists all rank 1 projections.
		\end{proposition}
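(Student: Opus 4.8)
The plan is to prove Proposition~\ref{prop SP} by structural induction on the grammar, working with the ``least element'' formulation of ${\rm SP}$ (Definition~\ref{def SP}). Before the induction I would record two preliminary facts. First, every formula $\phi$ produced by this grammar also belongs to ${\rm Res}$ (Definition~\ref{def Res}): $\unia[S]$ and the rank-$1$ projections are atomic propositions, and the only connective used is $\ast$; hence by Proposition~\ref{pro res set} all these $\phi$ are restrictive, i.e.\ $\rho\models\phi\Rightarrow\rt{\rho}{\free{\phi}}\models\phi$. Second, for a restrictive $\phi$ with $\sem{\phi}\neq\emptyset$, if $\sem{\phi}$ has a least element $\rho$ then $\dom{\rho}=\free{\phi}$ (because $\rt{\rho}{\free{\phi}}\models\phi$ and $\rt{\rho}{\free{\phi}}\preceq\rho$, so minimality forces $\rho=\rt{\rho}{\free{\phi}}$), and consequently $\rho$ is the \emph{unique} element of $\cD(\free{\phi})$ satisfying $\phi$. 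These two facts are what let me transport the least-element witness between $\phi\ast\psi$ and its constituents.

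For the base cases: $\sem{\bot}=\emptyset$, so $\bot\in{\rm SP}$ vacuously; $\top$ has the empty-register state $1$ as least element, since $1\preceq\rho$ for every $\rho\in\cD$ and $1$ is the only element of $\cD(\emptyset)$; $\unia[S]$ has least element $\frac{\id_S}{\dim(S)}\in\cD(S)$ directly from the defining equation~(\ref{atom-3}), because any $\sigma\models\unia[S]$ satisfies $S\subseteq\dom{\sigma}$ and $\rt{\sigma}{S}=\frac{\id_S}{\dim(S)}$, hence $\sigma\succeq\frac{\id_S}{\dim(S)}$; and a rank-$1$ projection $P=\ket{\psi}\bra{\psi}$ has least element $\ket{\psi}\bra{\psi}$, since the only density operator on $\cH_{\free{P}}$ whose support lies in the one-dimensional subspace $P$ is $\ket{\psi}\bra{\psi}$ itself, so every $\sigma\models P$ has $\rt{\sigma}{\free{P}}=\ket{\psi}\bra{\psi}$ and thus $\sigma\succeq\ket{\psi}\bra{\psi}$.

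For the inductive step $\phi\ast\psi$, with $\phi,\psi\in{\rm SP}$: if $\free{\phi}\cap\free{\psi}\neq\emptyset$, or $\sem{\phi}=\emptyset$, or $\sem{\psi}=\emptyset$, then $\sem{\phi\ast\psi}=\emptyset$ by Lemma~\ref{lem sound proof 8}, so the claim is vacuous. Otherwise the induction hypothesis together with the second preliminary fact supplies the unique states $\sigma_\phi\in\cD(\free{\phi})$ with $\sigma_\phi\models\phi$ and $\sigma_\psi\in\cD(\free{\psi})$ with $\sigma_\psi\models\psi$; since their domains are disjoint, $\sigma_\phi\otimes\sigma_\psi$ is defined, and I claim it is the least element of $\sem{\phi\ast\psi}$. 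Membership follows from Lemma~\ref{lem sound proof 8} (take the two witnesses to be $\sigma_\phi$ and $\sigma_\psi$). Conversely, if $\tau\models\phi\ast\psi$, Lemma~\ref{lem sound proof 8} gives $\tau\models\phi$, $\tau\models\psi$, and $\tau\succeq\rt{\tau}{\free{\phi}}\otimes\rt{\tau}{\free{\psi}}$; by restrictiveness $\rt{\tau}{\free{\phi}}\models\phi$ and $\rt{\tau}{\free{\psi}}\models\psi$, and uniqueness forces $\rt{\tau}{\free{\phi}}=\sigma_\phi$ and $\rt{\tau}{\free{\psi}}=\sigma_\psi$, whence $\tau\succeq\sigma_\phi\otimes\sigma_\psi$; uniqueness of the least element inside $\cD(\free{\phi}\cup\free{\psi})$ then follows as well. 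The only real subtlety is this last step: one must invoke Lemma~\ref{lem sound proof 8} to \emph{decompose} satisfaction of $\ast$ into its three ingredients, and use restrictiveness to pin the two marginals down to the canonical least elements rather than to ``some'' satisfying states; the degenerate cases (overlapping free variables, empty component semantics) have to be separated out first so that $\sigma_\phi\otimes\sigma_\psi$ even makes sense. Everything else is routine bookkeeping.
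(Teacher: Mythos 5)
Your proposal is correct and follows essentially the same route as the paper's own proof: structural induction with the same least-element witnesses ($1$ for $\top$, $\frac{\id_S}{\dim(S)}$ for $\unia[S]$, the rank-$1$ projection itself, and $\sigma_\phi\otimes\sigma_\psi$ for $\ast$). The only difference is that you carefully spell out the $\ast$-case (via restrictiveness and the decomposition of $\models\ast$) which the paper dismisses as ``straightforward,'' so no further changes are needed.
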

		
		\begin{proof}
			\begin{enumerate}
				\item $\unia[S]$. Trivially, $\frac{I_S}{\dim(S)}$ is the least element of $\sem{\unia[S]}$.
				\item $P\in\cP \text{\ of rank 1}$. Trivially, $P$ itself (interpreted as a pure quantum state) is the least element of $\sem{P}$.
				\item $\top$. Scalar number $1$ is the least element of $\sem{\top}$.
				\item $\bot$. Trivial.
				\item $\phi\ast\psi$. Suppose $\sigma_\phi$ and $\sigma_\psi$ are the least elements of  $\sem{\phi}$ and  $\sem{\psi}$ respectively, then it is straightforward to show $\sigma_\phi\otimes \sigma_\psi$ is the least element of $\phi\ast\psi$.
			\end{enumerate}
		\end{proof}
		
		\subsection{Proposition \ref{prop modification qo}}
		
		\begin{proposition}
			\label{prop modification qo}
			\begin{enumerate}
				\item If $\phi[\cE[\qbar]]\Mexist$, $\free{\phi[\cE[\qbar]]} = \free {\phi}$;
				\item 
				If $\phi[\cE[\qbar]]\Mexist$, then for any state $\rho\in\cD(\free{\phi}\supseteq\qbar)$, $\cE(\rho)\models\phi$ if and only if $\rho\models \phi[\cE[\qbar]]$.
			\end{enumerate}
		\end{proposition}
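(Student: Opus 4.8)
The plan is to prove the two parts of Proposition~\ref{prop modification qo} by a simultaneous induction on the structure of $\phi$, mirroring the bookkeeping already carried out for the (more restrictive) Proposition~\ref{pro modification}. Part (1) is a routine structural induction: for $\top,\bot$ the formula is unchanged; for an atomic $P\in\cP$ with $\qbar\subseteq\free{P}$ we have $P[\cE[\qbar]] = \bigl((\cE^\ast_{\qbar}\otimes\cI_{\free{P}\backslash\qbar})(P^\bot)\bigr)^\bot$, which by construction is a projection on $\cH_{\free{P}}$, so $\free{P[\cE[\qbar]]} = \free{P}$; the case $\qbar\cap\free{P}=\emptyset$ is trivial; and for $\wedge,\vee$ the domain is a union of the domains of the subformulas, which agree with those of their modifications by the induction hypothesis.

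For part (2) the key case is the atomic projection $P$ with $\qbar\subseteq\free{P}$. Here the plan is to start from the identity $\rt{\cE(\rho)}{\free{P}} = \bigl(\cE_{\qbar}\otimes\cI_{\free{P}\backslash\qbar}\bigr)\bigl(\rt{\rho}{\free{P}}\bigr)$, which holds because $\cE$ acts only on $\qbar\subseteq\free{P}$ and is trace-nonincreasing (the same commutation-with-partial-trace fact used in Lemma~\ref{lem sound proof 1} and in the proof of Proposition~\ref{pro modification}, Statement~2). Then $\cE(\rho)\models P$ iff $\supp\bigl((\cE_{\qbar}\otimes\cI)(\rt{\rho}{\free{P}})\bigr)\subseteq P$, and the heart of the argument is the adjoint/support duality: for a quantum operation $\cF$ and a state $\sigma$, $\supp(\cF(\sigma))\subseteq P$ iff $\supp(\sigma)\subseteq(\cF^\ast(P^\bot))^\bot$. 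Concretely, with $\cF(\sigma)=\sum_i E_i\sigma E_i^\dag$ one checks $\tr(P^\bot\cF(\sigma)) = \tr(\cF^\ast(P^\bot)\sigma)$, and since both $\sigma$ and $\cF^\ast(P^\bot)$ are positive, this trace is zero iff $\supp(\sigma)\perp\cF^\ast(P^\bot)$, i.e. $\supp(\sigma)\subseteq(\cF^\ast(P^\bot))^\bot$. Applying this with $\cF = \cE_{\qbar}\otimes\cI_{\free{P}\backslash\qbar}$ (whose adjoint is $\cE^\ast_{\qbar}\otimes\cI_{\free{P}\backslash\qbar}$) and $\sigma = \rt{\rho}{\free{P}}$ yields exactly $\rho\models P[\cE[\qbar]]$. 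The case $\qbar\cap\free{P}=\emptyset$ follows because then $\rt{\cE(\rho)}{\free{P}} = \rt{\rho}{\free{P}}$.

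For the composite cases, $\top$ and $\bot$ are immediate; for $\phi\equiv p\in\AP$ we invoke the atomic case; and for $\phi\equiv\phi_1\triangle\phi_2$ with $\triangle\in\{\wedge,\vee\}$, if the modification is defined then both $\phi_i[\cE[\qbar]]$ are defined and we simply distribute: $\cE(\rho)\models\phi_1\triangle\phi_2$ iff $\cE(\rho)\models\phi_1$ ``$\triangle$'' $\cE(\rho)\models\phi_2$, which by the induction hypothesis is iff $\rho\models\phi_1[\cE[\qbar]]$ ``$\triangle$'' $\rho\models\phi_2[\cE[\qbar]]$, i.e. $\rho\models\phi[\cE[\qbar]]$. (Note there is no $\ast$ case here, since $\cE$-modification of a separating conjunction is left undefined in Definition~\ref{def qo modification}, which conveniently avoids the delicate independence-preservation reasoning needed in Proposition~\ref{pro modification}.)

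The main obstacle I expect is getting the support/adjoint duality stated and applied cleanly: one must be careful that $\cE$ is only trace-nonincreasing (not necessarily trace-preserving), so $\cE(\sigma)$ may be subnormalized, but this does not affect the support computation since $\tr(P^\bot\cE(\sigma)) = 0$ is still equivalent to $\supp(\cE(\sigma))\subseteq P$; and one must check that $P^\bot$ is interpreted as the orthocomplement projection and that $(\cdot)^\bot$ applied to the (possibly non-projection, but still positive) operator $\cE^\ast_{\qbar}(P^\bot)\otimes\cI$ via the ``span of the kernel'' convention of the footnote gives the correct closed subspace. Once this duality is isolated as a short standalone lemma, the rest of the proof is routine bookkeeping parallel to the proof of Proposition~\ref{pro modification}.
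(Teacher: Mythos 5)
Your proposal is correct and follows essentially the same route as the paper's proof: structural induction, with the projection case settled by the duality $\tr\bigl(\rho\,\cE^\ast_{\qbar}\otimes\cI(P^\bot)\bigr)=\tr\bigl(\cE_{\qbar}\otimes\cI(\rho)\,P^\bot\bigr)$ and positivity, and the $\wedge,\vee$ cases by distributing the induction hypothesis. Your explicit remark that only $\tr(\cE(\rho)P^\bot)=0$ (rather than $\tr(\cE(\rho)P)=1$) should be used when $\cE$ is merely trace-nonincreasing is a small but welcome refinement of the paper's chain of equivalences.
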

		\begin{proof}
			\noindent	(1). Induction on the structure of $\phi$.
			
			\noindent	(2). We prove it by induction on the structure of $\phi$.
			\begin{enumerate}
				\item[(a)] $\phi\equiv\top$ or $\bot$. Trivial.
				\item[(b)] $\phi\equiv P\in\cP$,  there are two cases.
				
				Case 1. $\qbar\cap\free{P} = \emptyset$. For any $\rho\in\cD(\qbar\cup\free{P})$, $\rt{\rho}{\free{P}} = \rt{\cE(\rho)}{\free{P}}$ and thus, $\cE(\rho)\models\phi$ if and only if $\rho\models \phi[\cE[\qbar]]$ since $P[\cE[\qbar]] = P$.
				
				Case 2. $\qbar\subseteq\free{P}$. For any $\rho\in\cD(\free{P})$, we observe:
				\begin{align*}
				&\rho\models P[\cE[\qbar]] \\
				\Longleftrightarrow\ & \rho\models\left(\big(\cE^\ast_{\qbar}\otimes\cI_{\free{P}\backslash\qbar}\big)(P^\bot)\right)^\bot \\
				\Longleftrightarrow\ & \tr\left(\rho \left(\big(\cE^\ast_{\qbar}\otimes\cI_{\free{P}\backslash\qbar}\big)(P^\bot)\right)\right) = 0 \\
				\Longleftrightarrow\ & \tr\left(\big(\cE_{\qbar}\otimes\cI_{\free{P}\backslash\qbar}\big)(\rho) P^\bot\right) = 0 \\
				\Longleftrightarrow\ &
				\tr(\cE(\rho) P) = 1\\
				\Longleftrightarrow\ &\cE(\rho)\models P.		
				\end{align*}

				\item[(c)] $\phi\wedge\psi$. By induction hypothesis, for any state $\rho\in\cD(\free{\phi\wedge\psi}\cup\qbar)$, $\cE(\rho)\models\phi\wedge\psi$ iff $\cE(\rho)\models\phi$ and $\cE(\rho)\models\psi$ iff $\rho\models \phi[\cE[\qbar]]$ and $\rho\models \psi[\cE[\qbar]]$ iff $\rho\models \phi[\cE[\qbar]]\wedge\psi[\cE[\qbar]]$ iff $\rho\models (\phi\wedge\psi)[\cE[\qbar]]$.
				\item[(d)] $\phi\vee\psi$. Similar to (c).
			\end{enumerate}
		\end{proof}

		\subsection{Proof of Theorem \ref{thm sound QSL}}
		\label{sec app sub proof thm sound QSL}
		The global variable set is denoted by $\vars$, which contains all variables of programs and formulas. We first introduce following lemma for quantum measurement:	
		%	\begin{lemma}
		%		\label{lem sound proof 2}
		%		For any $\rho\in\cD(\vars)$, disjoint $S_1,S_2\subseteq\vars$ and any program $\prog$, if $\var(\prog)\subseteq S_1$ (or $S_2$) and $\rt{\rho}{S_1\cup S_2}$ is separable between $S_1$ and $S_2$, then $\rt{\sem{\prog}(\rho)}{S_1\cup S_2}$ is also separable between $S_1$ and $S_2$.
		%	\end{lemma}
		%	\begin{lemma}
		%		\label{lem sound proof 3}
		%		For any $\rho\in\cD(\vars)$, disjoint $S_1,S_2\subseteq\vars$ and any unitary transformation command $\prog\equiv \qU$, if $\var(\prog)\subseteq S_1$ or $\var(\prog)\subseteq S_2$ or $\var(\prog)\cap (S_1\cup S_2)=\emptyset$, then $\rt{\rho}{S_1\cup S_2}$ is separable between $S_1$ and $S_2$ if and only if $\rt{\sem{\prog}(\rho)}{S_1\cup S_2}$ is separable between $S_1$ and $S_2$.
		%	\end{lemma}
		\begin{lemma}
			\label{lem sound proof 4}
			For any $\rho\in\cD(\vars)$ and projective measurement $M = \{M_m\}$, if $\rho\models M_m^{\qbar}$, then performing the measurement $M[\qbar]$ will not change the state, and the outcome is $m$ with certainty. As a consequence, for any {\bf if} statement $\prog\equiv\mathbf{if}\ (\square m\cdot M[\qbar] = m \rightarrow \prog_m )\ \mathbf{fi}$, if the global state $\rho\models M_m^{\qbar}$, then $\sem{\prog}(\rho) = \sem{\prog_m}(\rho)$. 
		\end{lemma}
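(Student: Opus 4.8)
The plan is to reduce both parts of the lemma to a single linear-algebraic fact about a projection acting on one tensor factor, and then read off the consequences. First I would unfold the hypothesis $\rho\models M_m^{\qbar}$ via the semantics of projection atomic propositions (Eq.~(\ref{atom-1})): it says exactly that $\qbar\subseteq\dom{\rho}$ and $\supp\bigl(\rt{\rho}{\qbar}\bigr)\subseteq M_m$, where $M_m$ on the right is read as the closed subspace it projects onto; and since $\rho$ is a global state, $\rt{\rho}{\qbar}=\tr_{\vars\backslash\qbar}(\rho)$.

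The key step is the claim that if $\sigma\geq 0$ acts on $\cH_{\qbar}\otimes\cH_{\vars\backslash\qbar}$ and $\supp\bigl(\tr_{\vars\backslash\qbar}(\sigma)\bigr)\subseteq M_m$, then $(M_m\otimes I)\,\sigma\,(M_m\otimes I)=\sigma$. To prove it, set $R=M_m^{\bot}\otimes I$, a projection on the whole space. Then $\tr(R\sigma)=\tr\bigl(M_m^{\bot}\,\tr_{\vars\backslash\qbar}(\sigma)\bigr)=0$ because $\supp\bigl(\tr_{\vars\backslash\qbar}(\sigma)\bigr)\subseteq M_m$. Writing $\sigma=\sqrt{\sigma}\sqrt{\sigma}$ and using cyclicity of the trace together with $R=R^{\dag}=R^{2}$ gives $0=\tr(R\sigma)=\tr\bigl((R\sqrt{\sigma})^{\dag}(R\sqrt{\sigma})\bigr)$, hence $R\sqrt{\sigma}=0$, hence $R\sigma=0$ and (taking adjoints) $\sigma R=0$. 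Since $M_m\otimes I=I-R$, this yields $(M_m\otimes I)\sigma=\sigma=\sigma(M_m\otimes I)$, and therefore $(M_m\otimes I)\sigma(M_m\otimes I)=\sigma$.

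Applying the claim with $\sigma=\rho$ shows the post-measurement partial state for outcome $m$ is $\rho$ itself, so the outcome probability is $p_m=\tr\bigl((M_m\otimes I)\rho\bigr)=\tr(\rho)$: outcome $m$ occurs with certainty and the state is unchanged. For $k\neq m$, using that the components of a projective measurement are pairwise orthogonal ($M_kM_m=0$), one has $(M_k\otimes I)\rho=(M_kM_m\otimes I)\rho(M_m\otimes I)=0$, so $p_k=0$ and the outcome-$k$ partial state vanishes. This proves the first claim. For the consequence I would unfold the denotational semantics of the case statement through rule (IF) and Definition~\ref{def den sem}, obtaining $\sem{\prog}(\rho)=\sum_k\sem{\prog_k}\bigl((M_k\otimes I)\rho(M_k\otimes I)\bigr)$; every summand with $k\neq m$ has argument $0$ and hence value $0$ (semantic functions are linear superoperators), while the $k=m$ summand has argument $\rho$, so $\sem{\prog}(\rho)=\sem{\prog_m}(\rho)$.

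The main obstacle is precisely the linear-algebraic claim of the second paragraph — promoting ``the reduced state on $\qbar$ is supported in $M_m$'' to ``the global state is a fixed point of $M_m\otimes I$'' (positivity of $\rho$ is essential here). Everything afterwards — the measurement-probability identities and the unfolding of the semantics of $\mathbf{if}$ — is routine; the only minor points are that a projective measurement genuinely has pairwise orthogonal components and that superoperators map $0$ to $0$, both immediate.
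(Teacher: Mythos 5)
Your proof is correct, and it is in fact more than the paper provides: the paper states this lemma without proof (it is introduced among auxiliary facts taken as easily realized), so there is no authorial argument to compare against. Your chain — unfolding $\rho\models M_m^{\qbar}$ into $\supp\bigl(\rt{\rho}{\qbar}\bigr)\subseteq M_m$, then the positivity argument $\tr(R\rho)=0\Rightarrow R\sqrt{\rho}=0\Rightarrow R\rho=\rho R=0$ with $R=M_m^{\bot}\otimes I$, giving $(M_m\otimes I)\rho(M_m\otimes I)=\rho$, and then killing the $k\neq m$ branches via pairwise orthogonality of the measurement projections and linearity of $\sem{\prog_k}$ on partial density operators — is exactly the standard justification and covers the only genuinely non-trivial point, namely lifting the support condition on the reduced state to a fixed-point property of the global state. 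No gaps.
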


		\begin{proof}[Proof of Theorem \ref{thm sound QSL}]
			
			It is sufficient to show that each of the rules shown in Figure \ref{fig proof system 1}, \ref{fig proof system 2} and \ref{fig proof system 3} is sound.
			
			\begin{figure}[h]\centering
				\begin{equation*}\begin{split}
				&\textsc{Skip}\ \frac{}{\{\phi\}\mathbf{skip}\{\phi\}}\quad 
				\textsc{Init}\ \frac{\phi[\qI]\Mexist}{\left\{\phi[\qI] \right\}\qI\{\phi\}}\quad
				\textsc{Unit}\ \frac{\phi[\qU]\Mexist}{
					\{\phi[\qU]\}\qbar:=U\left[\qbar\right]\{\phi\}}\\[0.1cm]
				&\textsc{Perm}\quad \frac{}{\{\phi[\qbar^\prime\mapsto\qbar]\}\qbar:=\perm(\qbar\mapsto\qbar^\prime)[\qbar]\{\phi\}} \qquad
				\textsc{Seq}\quad
				\frac{\{\phi\}\prog_1\{\psi\}\ \ \ \ \ \ \{\psi\}\prog_2\{\mu\}}{\{\phi\}\prog_1;\prog_2\{\mu\}}\\[0.1cm]
				%\textsc{For}\quad \frac{\{\phi_i\}\prog_i\{\phi_{i+1}\}\ {\rm for\ all}\ 1\le i\le N}{\{\phi_1\}\mathbf{for}\ i=1,2,\cdots,N\ \mathbf{do}\ \prog_i\ \mathbf{od}\{\phi_{N+1}\}} \\[0.1cm]
				&\textsc{DIf}\quad
				\frac{\left\{\phi_m\right\}\prog_m\{\psi\}\ {\rm for\ all}\ m}{\big\{\bigvee_m(M_m\wedge \phi_m)\big\}\mathbf{if}\cdots\mathbf{fi}\{\psi\}}\qquad
				\textsc{DLoop}\quad
				\frac{\{\phi\}\prog\{(M_0\wedge \psi)\vee (M_1\wedge \phi)\}}{\{(M_0\wedge \psi)\vee (M_1\wedge \phi)\}\mathbf{while}\{\psi\}} \\[0.1cm]
				&\textsc{RIf}\quad\frac{\left\{\phi\ast M_m\right\}\prog_m\{\psi\}\ {\rm for\ all}\ m\quad \psi\in \text{CM}}{\{\phi\ast\id_{\qbar}\}\mathbf{if}\cdots\mathbf{fi}\{\psi\}}\qquad
				\textsc{RLoop}\quad\frac{\{\phi\ast M_1\}\prog\{\phi\ast\id_{\qbar}\}\quad\phi\in\text{CM}}{\{\phi\ast\id_{\qbar}\}\mathbf{while}\{\phi\wedge M_0\}} 
				\end{split}\end{equation*}
				\caption{Proof System QSL. In \textsc{DIf}, \textsc{DLoop}, \textsc{RIf} and \textsc{RLoop}, $\mathbf{if}\cdots\mathbf{fi}$ and $\mathbf{while}$ are abbreviations of $\mathbf{if}\ (\square m\cdot M[\qbar] = m \rightarrow \prog_m )\ \mathbf{fi}$ and $\mathbf{while}\ M[\qbar]=1\ \mathbf{do}\ \prog\ \mathbf{od}$ respectively, and  $M_0,M_1,M_m$ in assertions are regarded as projective predicates acting on $\qbar$. In \textsc{Perm}, $\perm(\qbar\mapsto\qbar^\prime)[\qbar]$ stands for the unitary transformation which permutes the variables from $\qbar$ to $\qbar^\prime$ (see Section \ref{sec basic Quantum} for details).
				}
			\end{figure}
			
			\noindent -- \textsc{Skip}. Trivial as the state of quantum variables are unchanged when applying ${\bf skip}$.
			
			\vspace{0.4cm}
			
			\noindent -- \textsc{Init}. By Proposition \ref{pro modification}.
			
			\vspace{0.4cm}
			
			\noindent -- \textsc{Unit}. By Proposition \ref{pro modification}.
			
			\vspace{0.4cm}
			
			\noindent -- \textsc{Perm}. For any input $\rho\in\cD(\vars)$ with matrix form $\ol{\rho}[\qbar,\qbar_r]$ ($\qbar_r = \vars\backslash\qbar$; i.e., $\ol{\rho}$ is a purely matrix and $[\qbar,\qbar_r]$ denotes the order of basis; that is, $\ol{\rho}[\qbar,\qbar_r]$ is interpreted as a matrix over $\cH_{\qbar}\otimes\cH_{\qbar_r}$), the output state after performing the $\perm[\qbar\mapsto\qbar^\prime]$: $\qbar:=\perm[\qbar\mapsto\qbar^\prime]$ has the matrix form $\ol{\rho}[\qbar^\prime,\qbar_r]$. Then it is not difficult to show $\ol{\rho}[\qbar,\qbar_r]\models\phi[\qbar^\prime\mapsto\qbar]$ if and only if $\ol{\rho}[\qbar^\prime,\qbar_r]\models\phi$.
			
			\vspace{0.4cm}
			
			\noindent -- \textsc{Seq}. For any $\rho\in\cD(\vars)$, if $\rho\models\phi$, then by assumptions, $\sem{\prog_1}(\rho)\models\psi$ and $\sem{\prog_2}(\sem{\prog_1}(\rho))\models\mu$. Note that $\sem{\prog_2}(\sem{\prog_1}(\rho)) = \sem{\prog_1;\prog_2}(\rho)$ as $\rho$ is a global state, so $\sem{\prog_1;\prog_2}(\rho)\models\mu$.
			
			\vspace{0.4cm}
			
			\noindent -- \textsc{DIf}. For any $\rho\in\cD(\vars)$, if $\rho\models\bigvee_m(M_m^{\qbar}\wedge \phi_m)$, then there exists at least one $m$ such that $\rho\models M_m^{\qbar}\wedge \phi_m$, and we assume it is $n$. As $\rho\models M_n^{\qbar}\wedge \phi_n$, so $\rho\models M_n^{\qbar}$ and $\rho\models\phi_n$, by Lemma \ref{lem sound proof 4}, we have 
			$$\sem{\mathbf{if}\ (\square m\cdot M[\qbar] = m \rightarrow \prog_m )\ \mathbf{fi}}(\rho) = \sem{\prog_n}(\rho),$$
			and by assumption $\{\phi_n\}\prog_n\{\psi\}$, so $\sem{\prog_n}(\rho)\models\psi$, or equivalently, $$\sem{\mathbf{if}\ (\square m\cdot M[\qbar] = m \rightarrow \prog_m )\ \mathbf{fi}}(\rho)\models\psi.$$
			
			\vspace{0.4cm}
			
			\noindent -- \textsc{DLoop}. For any input $\rho$ that satisfies $(M_0^{\qbar}\wedge \psi)\vee (M_1^{\qbar}\wedge \phi)$, with the premise $\{\phi\}\prog\{(M_0^{\qbar}\wedge \psi)\vee (M_1^{\qbar}\wedge \phi)\}$, it is indeed a deterministic loop and the measurement $M$ in guard never changes the current state (see Lemma \ref{lem sound proof 4}), i.e., the number of iterations $N(\rho)$ is deterministic and moreover,
			$$\sem{\mathbf{while}\ M[\qbar]=1\ \mathbf{do}\ \prog\ \mathbf{od}} = \sem{\prog^{N(\rho)}}(\rho)$$
			where $\prog^k \triangleq \prog;\cdots;\prog$ is the $k$-fold sequential composition of $\prog$. Soundness follows by repeatedly using the induction hypothesis.

			\vspace{0.4cm}
			
			\noindent -- \textsc{RIf}. For any input $\rho\in\cD(\vars)$ such that $\rho\models\phi\ast\id_\qbar$, it must have:
			$$\rt{\rho}{\free{\phi}\cup\qbar} = \rt{\rho}{\free{\phi}}\otimes\rt{\rho}{\qbar}.$$
			After the measurement $M$, 
			with probability $p_m = \tr(M_m^\qbar\rho M_m^\qbar)$ the outcome is $m$ and the state changes to 
			$$\rho_m = \frac{M_m^\qbar\rho M_m^\qbar}{p_m}.$$ Observe that $p_m = \tr(M_m^\qbar\rt{\rho}{\qbar} M_m^\qbar)$ and 
			\begin{align*}
			\rt{\rho_m}{\free{\phi}\cup\qbar} &= \frac{M_m^\qbar\rt{\rho}{\free{\phi}\cup\qbar} M_m^\qbar}{p_m} \\
			&= \frac{M_m^\qbar\rt{\rho}{\qbar}\otimes\rt{\rho}{\free{\phi}} M_m^\qbar}{p_m} \\
			&= \frac{M_m^\qbar\rt{\rho}{\qbar} M_m^\qbar}{p_m}\otimes\rt{\rho}{\free{\phi}}.
			\end{align*}
			Realizing that $\frac{M_m^\qbar\rt{\rho}{\qbar} M_m^\qbar}{p_m}\models M_m^\qbar$ and $\rt{\rho}{\free{\phi}}\models \phi$, we have $\rho_m\models\phi\ast M_m^\qbar$. By premise, $\sem{\prog_m}(\rho_m)\models\psi$. Back to the semantics of {\bf if} statement, we know that
			$$\sem{\mathbf{if}\ (\square m\cdot M[\qbar] = m \rightarrow \prog_m )\ \mathbf{fi}}(\rho) = \sum_mp_m\sem{\prog_m}(\rho_m)$$
			and by promise $\psi\in{\rm CM}$, so $\sem{\mathbf{if}\ (\square m\cdot M[\qbar] = m \rightarrow \prog_m )\ \mathbf{fi}}(\rho) \models\psi$. 
			
			\vspace{0.4cm}
			
			\noindent -- \textsc{RLoop}. We here use the notations similar to \cite{Ying16}, Section 3.3. Set quantum operation (and its cylinder extension) $\cE_i(\cdot) = M_i(\cdot) M_i^\dag$ for $i = 0,1$. We first claim:
			$$\textbf{Statement:} \rho\models\phi\ast\id_{\qbar} \text{\ implies\ } \cE_0(\rho)\models\phi\ast M_0,\ \cE_1(\rho)\models\phi\ast M_1, \ \sem{\prog}\circ_{c}\cE_1(\rho)\models\phi\ast I_{\qbar}$$
			by the premises and $\circ_{c}$ denote the composition of quantum operations, i.e., $(\cE\circ_{c}\cF)(\rho) = \cE(\cF(\rho))$.
			Next, by induction and the statement, we have: for all $i\ge 0$:
			$$ \rho\models\phi\ast\id_{\qbar} \text{\ implies\ } \cE_0\circ_{c}(\sem{\prog}\circ_{c}\cE_1)^i(\rho)\models\phi\ast M_0.$$
			Finally, it has been proved that (see \cite{Ying11})
			$$\sem{{\bf while}}(\rho) = \sum_{i=0}^\infty \cE_0\circ_{c}(\sem{\prog}\circ_{c}\cE_1)^i(\rho) $$
			and thus if $\rho\models\phi\ast\id_{\qbar}$, then $\sem{{\bf while}}(\rho)\models\phi$ and $\sem{{\bf while}}(\rho)\models M_0$ since $\phi,M_0\in{\rm CM}$. Therefore, $\sem{{\bf while}}(\rho)\models \phi\wedge M_0$.	
			
			\begin{figure}[h]\centering
				\begin{equation*}\begin{split}
				&\textsc{Weak}\quad  \frac{\phi\gimp\phi^\prime\quad \{\phi^{\prime}\}\prog\{\psi^{\prime}\}\quad 
					\psi^{\prime}\gimp\psi  }{\{\phi\}\prog\{\psi\}} \quad
				\textsc{Conj}\quad \frac{\{\phi_1\}\prog\{\psi_1\}\quad\{\phi_2\}\prog\{\psi_2\}}{\{\phi_1\wedge\phi_2\}\prog\{\psi_1\wedge\psi_2\}} \\[0.1cm]
				&\textsc{Disj}\quad \frac{\{\phi_1\}\prog\{\psi_1\}\quad\{\phi_2\}\prog\{\psi_2\}}{\{\phi_1\vee\phi_2\}\prog\{\psi_1\vee\psi_2\}} \qquad\qquad\
				\textsc{Const}\quad \frac{\{\phi\}\prog\{\psi\}\quad\free{\mu}\cap\var(\prog)=\emptyset}{\{\phi\wedge\mu\}\prog\{\psi\wedge\mu\}} \\[0.1cm]
				&\textsc{Frame}\quad \frac{\{\phi\}\prog\{\psi\}\quad\free{\mu}\cap\var(\prog)=\emptyset\quad\free{\psi}\cup\var(\prog)\subseteq\free{\phi}\text{\ or\ }\psi\in{\rm SP}}{\{\phi\ast\mu\}\prog\{\psi\ast\mu\}} \\[0.1cm]
				%&\textsc{FrameE}\quad \frac{\{\phi\}\prog\{\psi\}\quad\free{\mu}\cap\var(\prog)=\emptyset\quad\free{\psi}\subseteq\free{\phi}\cup\var(\prog)}{\{\phi\sd\mu\}\prog\{\psi\sd\mu\}} 
				\end{split}\end{equation*}
				\caption{Proof System QSL. ${\rm SP}$: supported assertion, i.e., there is at most one element $\rho\in\cD(\free{\psi})$ that satisfies $\psi$.}
			\end{figure}
			
			\vspace{0.4cm}
			
			\noindent -- \textsc{Weak}. By premise $\phi\gimp\phi^\prime \triangleq \bD[\free{\phi}\cup\free{\phi^\prime}]\rightarrow(\phi\rightarrow\phi^\prime)$, we know that for any input $\rho\in\cD(\vars)$ that satisfies $\phi$, it must also satisfy $\phi^\prime$. By another premise $\{\phi^{\prime}\}\prog\{\psi^{\prime}\}$, then $\sem{\prog}(\rho)\models\psi^{\prime}$, and thus $\sem{\prog}(\rho)\models\psi$ by $\psi\gimp\psi^\prime$. The trick here is that $\dom{\rho} = \dom{\sem{\prog}(\rho)} \supseteq \free{\phi}\cup\free{\phi^\prime} \cup \free{\psi}\cup\free{\psi^\prime}$.
			
			\vspace{0.4cm}
			
			\noindent -- \textsc{Conj}. For any input $\rho\in\cD(\vars)$ such that $\rho\models\phi_1\wedge\phi_2$, then it must have $\rho\models\phi_1$ and $\rho\models\phi_2$. By premise and induction hypothesis, we obtain $\sem{\prog}(\rho)\models\psi_1$ and $\sem{\prog}(\rho)\models\psi_2$ and thus $\sem{\prog}(\rho)\models\psi_1\wedge\psi_2$.
			
			\vspace{0.4cm}
			
			\noindent -- \textsc{Case}. For any input $\rho\in\cD(\vars)$ such that $\rho\models\phi_1\vee\phi_2$, it must satisfy $\phi_1$ or $\phi_2$. By premise and induction hypothesis, we know that $\sem{\prog}(\rho)\models\psi_1$ or $\sem{\prog}(\rho)\models\psi_2$, that is, $\sem{\prog}(\rho)\models\psi_1\vee\psi_2$.
			
			\vspace{0.4cm}
			
			\noindent -- \textsc{Const}. For any input $\rho\in\cD(\vars)$ such that $\rho\models\phi\vee\mu$, it must satisfy $\phi$ and thus by premise and induction hypothesis, $\sem{\prog}(\rho)\models\psi$. Moreover, $\rho\models\mu$ implies  $\rt{\rho}{\free{\mu}}\models\mu$, and note that $\free{\mu}\cap\var(\prog) = \emptyset$, so $\rt{\rho}{\free{\mu}} = \rt{\sem{\prog}(\rho)}{\free{\mu}}$ by Lemma \ref{lem sound proof 1}, which leads to $\rt{\sem{\prog}(\rho)}{\free{\mu}}\models\mu$ and $\sem{\prog}(\rho)\models\mu$. Therefore, $\sem{\prog}(\rho)\models\psi\wedge\mu$.
			
			\vspace{0.4cm}
			
			\noindent -- \textsc{Frame}(1), with premise $\free{\psi}\cup\var(\prog)\subseteq\free{\phi}$. For any input $\rho\in\cD(\vars)$ such that $\rho\models\phi\ast\mu$, by Proposition \ref{lem sound proof 8}, then $\free{\phi}\cap\free{\mu}=\emptyset$, $\rho\models\phi\wedge\mu$, $\rt{\rho}{\free{\phi}\cup\free{\mu}} = \rt{\rho}{\free{\phi}} \otimes \rt{\rho}{\free{\mu}}$. Similar to \textsc{Const}, we have $\sem{\prog}(\rho)\models\psi\wedge\mu$ by first two premises. Also $\free{\psi}\cap\free{\mu}\subseteq\free{\phi}\cap\free{\mu} = \emptyset$, so it is sufficient to show $\rt{\sem{\prog}(\rho)}{\free{\psi}\cup\free{\mu}} = \rt{\sem{\prog}(\rho)}{\free{\psi}} \otimes \rt{\sem{\prog}(\rho)}{\free{\mu}}$. Observe following facts:
			\begin{align*}
			\rt{\sem{\prog}(\rho)}{\free{\phi}\cup\free{\mu}} &= \sem{\prog}(\rt{\rho}{\free{\phi}\cup\free{\mu}}) &&\hspace{-1.5cm} \text{Lemma\ \ref{lem sound proof 1}},\ \var(\prog)\subseteq\free{\phi} \\
			&= (\cE_\prog\otimes\cI_{\free{\phi}\backslash\var(\prog)}\otimes\cI_{\free{\mu}})(\rt{\rho}{\free{\phi}} \otimes \rt{\rho}{\free{\mu}}) && \\
			&= (\cE_\prog\otimes\cI_{\free{\phi}\backslash\var(\prog)})(\rt{\rho}{\free{\phi}})\otimes\rt{\rho}{\free{\mu}} &&\\
			&= \sem{\prog}(\rt{\rho}{\free{\phi}})\otimes\rt{\sem{\prog}(\rho)}{\free{\mu}} && \hspace{-1.5cm} \text{Lemma\ \ref{lem sound proof 1}},\ \var(\prog)\cap\free{\mu}=\emptyset\\
			&= \rt{\sem{\prog}(\rho)}{\free{\phi}}\otimes\rt{\sem{\prog}(\rho)}{\free{\mu}} && \hspace{-1.5cm} \text{Lemma\ \ref{lem sound proof 1}},\ \var(\prog)\subseteq\free{\phi}
			\end{align*}
			and by the downwards closed property of $\circ$ ($\otimes$), using $\free{\psi}\subseteq\free{\phi}$, we obtain
			\begin{align*}
			\rt{\sem{\prog}(\rho)}{\free{\psi}\cup\free{\mu}} &= \rt{\left[\rt{\sem{\prog}(\rho)}{\free{\phi}\cup\free{\mu}}\right]}{\free{\psi}\cup\free{\mu}} \\
			&= \rt{\left[\rt{\sem{\prog}(\rho)}{\free{\phi}}\otimes\rt{\sem{\prog}(\rho)}{\free{\mu}}\right]}{\free{\psi}\cup\free{\mu}} \\
			&= \rt{\left[\rt{\sem{\prog}(\rho)}{\free{\phi}}\right]}{\free{\psi}}\otimes\rt{\sem{\prog}(\rho)}{\free{\mu}} \\
			&= \rt{\sem{\prog}(\rho)}{\free{\psi}}\otimes\rt{\sem{\prog}(\rho)}{\free{\mu}}.
			\end{align*}

			\vspace{0.4cm}
			
			\noindent -- \textsc{Frame}(2), with premise $\psi\in{\rm SP}$. Unlike the previous proofs, this rule is highly nontrivial, at least in the sense of proof of itself. Given the output a singleton, there are many unreleased properties of the program $\prog$. One technique we used here is the \emph{purification}, which allows us to associate pure states with mixed states. 
			
			\begin{fact}
				Given any density operator $\rho_A$ of the system $A$, and introduce another system $R$, often called the reference system. If the dimension of $R$ is larger than or equal to $A$, then there exists a pure state $|\psi\>_{AR}$ over the composite system $AR$, such that:
				$$\tr_R(|\psi\>_{AR}\<\psi|) = \rho_A.$$
				Generally, such purifications are not unique, but they are related by a local unitary of reference system $R$. In detail, for any purifications $|\psi\>_{AR}$ and $|\psi^\prime\>_{AR}$ of $\rho_A$, there exists a unitary transformation $U_R$ acting on system $R$, such that:
				$$|\psi^\prime\>_{AR} = (\id_A\otimes U_R)|\psi\>_{AR}.$$
			\end{fact}
			
			\noindent\textbf{Step 1}: Let us first reveal some variable information from the rule itself. If there exists some input satisfies $\phi\ast\mu$, then obviously, $\free{\phi}\cap\free{\mu} = \emptyset$; otherwise, the rule is trivially sound. From the promise $\free{\mu}\cap\var(\prog) = \emptyset$, we know that $\free{\mu}\cap(\free{\phi}\cup\var(\prog)) = \emptyset$, thus without loss of generality, we can assume $\free{\phi}\subseteq\var(\prog)$, as we can always add all the variables in $\free{\phi}\backslash\var(\prog)$ to the program and left them unchanged.
			
			Moreover, as $\psi$ is a singleton formula, we must have $\free{\psi}\subseteq\free{\phi}\cup\var(\prog)$. To see this, suppose $q\in\free{\psi}\backslash(\free{\phi}\cup\var(\prog))$, then the input state is free on $q$ and the state of $q$ remains unchanged after executing $\prog$, so the output state on $q$ is not unique, which is contradictory to the premise that $\psi$ is a singleton formula.
			
			In summary, it is sufficient to prove the soundness when $\free{\phi},\free{\psi}\subseteq\var(\prog)$. To simply the representation, we use $A$ to denote $\free{\phi}$, $A^\prime$ for $\free{\psi}$, $B$ for $\var(\prog)\backslash \free{\phi}$, $C$ for $\free{\mu}$, as illustrated in Figure \ref{fig proof of FrameS 1}.
			
			\begin{figure}
				\centering
				\includegraphics[width=0.8\linewidth]{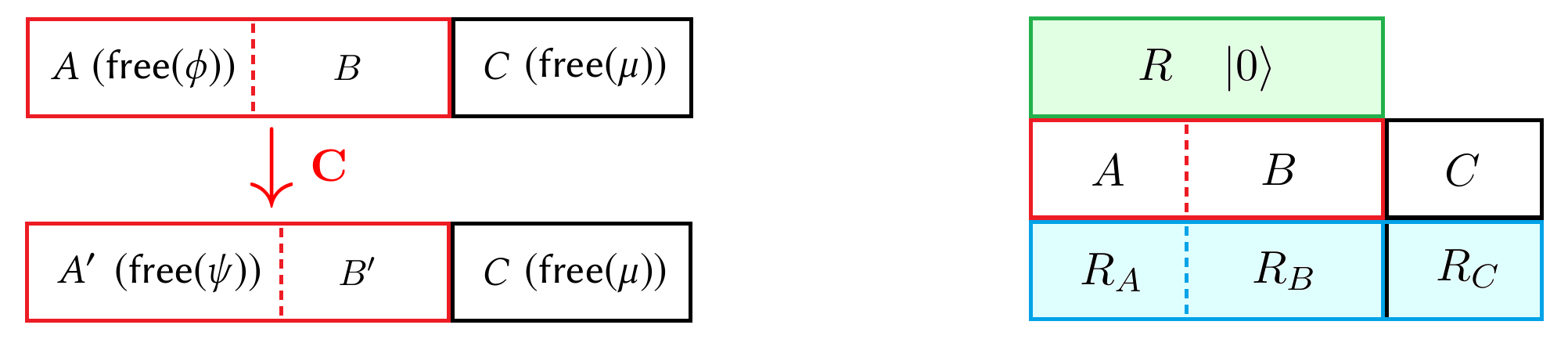}
				\caption{Variables. The left one illustrates the relations of variables, where $\var(\prog)$ is the area in red frame, see step 1. The right one shows the auxiliary systems, including $R_A$ used for purify $\rho_A$, $R_C$ used for purify $\rho_C$, $R_AR_BR_C$ used for purify the input $\rho_{ABC}$, and $R$ is the ancilla system with initial state $|0\>$ used to purify the quantum operation $\cE_{AB}$, see step 2 and step 3. }
				\label{fig proof of FrameS 1}
			\end{figure}
			
			\noindent\textbf{Step 2}: Extract hidden information from premise $\{\phi\}\prog\{\psi\}$.  In detail, we prove the following Lemma \ref{lem proof of FrameS 1}.
			
			Let us first extend the system $ABC$ with several mathematically ancilla system $R_A,R_B,R_C$ and $R$. The dimension of $R_A$ is the same as $A$ and it is used for purify density matrix of $A$, and the similar for $R_B$ and $R_C$. System $R$ is used for related the semantics function $\sem{\prog}$ (a quantum operation acting on $AB$) to a unitary transformation $U_{RAB}$ acting on $RAB$; in detail, for any input density operator $\rho_{AB}\in\cD(AB)$, the output $\sem{\prog}(\rho_{AB})$ can be obtained by following step:
			1. initial system $R$ in $|0\>_R$; 2. apply unitary transformation $U_{RAB}$ on $RAB$; 3. trace out the system $R$; or equivalently:
			$$\sem{\prog}(\rho_{AB}) = \tr_{R}\left( U_{RAB} (|0\>_R\<0|\otimes\rho_{AB}) U_{RAB}^\dag \right).$$
			
			Suppose $\rho_{A}\in\cD(A)$ is any state that satisfies $\phi$, and $\sigma_{A^\prime}\in\cD(A^\prime)$ with diagonal decomposition $\sigma_{A^\prime} = \sum_i\lambda_i|\alpha_i\>_{A^\prime}\<\alpha_i|$ is the only state on $A^\prime$ that satisfies $\psi$.
			Assuming $|\Psi_{AR_A}\> \in \cH(AR_A)$ is a purification of $\rho_{A}$. We now prove that, 
			\begin{lemma}
				\label{lem proof of FrameS 1}
				For any pure state $|\Psi_{BR_BCR_C}\> \in \cH(BR_BCR_C)$, and any unitary transformation $U_{R_ABR_BR_C}$ acting on $R_ABR_BR_C$:
				\begin{align}
				\label{eqn proof of FrameS 1}
				&\rt{\left\{\sem{\prog}\left[U_{R_ABR_BR_C}(|\Psi_{AR_A}\>\<\Psi_{AR_A}|\otimes|\Psi_{BR_BCR_C}\>\<\Psi_{BR_BCR_C}|)U^\dag_{R_ABR_BR_C}\right]\right\}}{A^\prime C} \\
				=\ &\rt{\left\{U_{RAB}U_{R_ABR_BR_C}(|0\>_R\<0|\otimes|\Psi_{AR_A}\>\<\Psi_{AR_A}|\otimes|\Psi_{BR_BCR_C}\>\<\Psi_{BR_BCR_C}|)U^\dag_{R_ABR_BR_C}U_{RAB}^\dag\right\}}{A^\prime C} \\
				=\ &\sigma_{A^\prime}\otimes\sigma_C
				\end{align}
				where $$\sigma_C = \rt{|\Psi_{BR_BCR_C}\>\<\Psi_{BR_BCR_C}|}{C}.$$
			\end{lemma}
			Realize that the input state indeed satisfies $\phi$ because
			\begin{align*}
			\rt{\left[U_{R_ABR_BR_C}(|\Psi_{AR_A}\>\<\Psi_{AR_A}|\otimes|\Psi_{BR_BCR_C}\>\<\Psi_{BR_BCR_C}|)U^\dag_{R_ABR_BR_C}\right]}{A} = \rt{|\Psi_{AR_A}\>\<\Psi_{AR_A}|}{A} = \rho_A,
			\end{align*}
			so after applying $\prog$, the reduced state over $A^\prime$ of the output must be $\sigma_{A^\prime}$. That is: for fixed $U_{R_ABR_BR_C}$ and $|\Psi_{AR_A}\>$, the output pure state must have the form
			\begin{align*}
			U_{RAB}U_{R_ABR_BR_C}(|0\>_R|\Psi_{AR_A}\>|\Psi_{BR_BCR_C}\> = \sum_i\sqrt{\lambda_i}|\alpha_i\>_{A^\prime} X_i\otimes I_C |\Psi_{BR_BCR_C}\>
			\end{align*}
			where $X_i$ are isometries mapping from $BR_BR_C$ to $RR_AB^\prime R_BR_C$, and satisfies:
			\begin{align*}
			\forall\, i:\ \|X_i\otimes I_C |\Psi_{BR_BCR_C}\>\| = 1,\quad 
			\forall\, i\neq j:\ \<\Psi_{BR_BCR_C}|(X_j^\prime \otimes I_C)(X_i\otimes I_C) |\Psi_{BR_BCR_C}\>.
			\end{align*}
			Since $|\Psi_{BR_BCR_C}\>$ are freely chosen, so it can range over all pure state over $\cH(BR_BCR_C)$, we must have:
			\begin{align}
			\label{eqn proof of FrameS 2}
			\forall\, i:\ X_i^\dag X_i = I_{BR_BR_C},\quad 
			\forall\, i\neq j:\ X_j^\dag X_i = 0.
			\end{align}
			Let $\{|e_k\>\}$ be an orthonormal basis of $BR_BR_C$, then $X_i$ has the explicit form:
			$$X_i = \sum_k|\beta_{ik}\>\<e_k|$$
			where $|\beta_{ik}\>\in\cH(RR_AB^\prime R_BR_C)$ and may not be normalized. However, by Eqn. (\ref{eqn proof of FrameS 2}), there are many constrains of $|\beta_{ik}\>$:
			\begin{align*}
			&\forall\, i:\ X_i^\dag X_i = \sum_{kk^\prime} |e_{k^\prime}\>\<e_{k}|\cdot\<\beta_{ik^\prime}|\beta_{ik}\> = \sum_k|e_k\>\<e_k| \quad\Rightarrow\quad \forall\, i:\ \<\beta_{ik^\prime}|\beta_{ik}\> = \left\{\begin{array}{ll}1 & k = k^\prime \\ 0 & k\neq k^\prime\end{array}\right. \\
			&\forall\, i\neq j:\ X_j^\dag X_i = \sum_{kk^\prime} |e_{k^\prime}\>\<e_{k}|\cdot\<\beta_{jk^\prime}|\beta_{ik}\> = 0 \quad\Rightarrow\quad \forall\, i\neq j,\ \forall\, k,k^\prime:\ \<\beta_{jk^\prime}|\beta_{ik}\> = 0.
			\end{align*}
			As a consequence, $\{|\beta_{ik}\>\}_{ik}$ is a orthonormal set, and we may extend it as an orthonormal basis of $RR_AB^\prime R_BR_C$: $\{|\beta_{ik}\>, |f_m\>\}.$ Now, let us start to calculate the explicit form of output:
			\begin{align*}
			&\rt{\left\{U_{RAB}U_{R_ABR_BR_C}(|0\>_R\<0|\otimes|\Psi_{AR_A}\>\<\Psi_{AR_A}|\otimes|\Psi_{BR_BCR_C}\>\<\Psi_{BR_BCR_C}|)U^\dag_{R_ABR_BR_C}U_{RAB}^\dag\right\}}{A^\prime C} \\
			=\ &\tr_{RR_AB^\prime R_BR_C}\left\{
			\sum_{ii^\prime}\sum_{kk^\prime}\sqrt{\lambda_i\lambda_{i^\prime}}|\alpha_i\>_{A^\prime}\<\alpha_{i^\prime}|\otimes |\beta_{ik}\>\<e_k|\Psi_{BR_BCR_C}\>\<\Psi_{BR_BCR_C}|e_{k^\prime}\>\<\beta_{i^\prime k^\prime}|
			\right\} \\
			=\ &\sum_{i^{\prime\prime}k^{\prime\prime}}\<\beta_{i^{\prime\prime}k^{\prime\prime}}|\left\{
			\sum_{ii^\prime}\sum_{kk^\prime}\sqrt{\lambda_i\lambda_{i^\prime}}|\alpha_i\>_{A^\prime}\<\alpha_{i^\prime}|\otimes |\beta_{ik}\>\<e_k|\Psi_{BR_BCR_C}\>\<\Psi_{BR_BCR_C}|e_{k^\prime}\>\<\beta_{i^\prime k^\prime}|
			\right\}|\beta_{i^{\prime\prime}k^{\prime\prime}}\> \\
			& + \sum_{m}\<f_m|\left\{
			\sum_{ii^\prime}\sum_{kk^\prime}\sqrt{\lambda_i\lambda_{i^\prime}}|\alpha_i\>_{A^\prime}\<\alpha_{i^\prime}|\otimes |\beta_{ik}\>\<e_k|\Psi_{BR_BCR_C}\>\<\Psi_{BR_BCR_C}|e_{k^\prime}\>\<\beta_{i^\prime k^\prime}|
			\right\}|f_m\> \\
			=\ &\sum_{ik}\left\{
			\sqrt{\lambda_i\lambda_{i}}|\alpha_i\>_{A^\prime}\<\alpha_{i}|\otimes \<e_k|\Psi_{BR_BCR_C}\>\<\Psi_{BR_BCR_C}|e_{k}\>\right\} + 0 \\
			=\ &\left(\sum_i\lambda_i|\alpha_i\>_{A^\prime}\<\alpha_{i}|\right)\otimes \left(\sum_k\<e_k|\Psi_{BR_BCR_C}\>\<\Psi_{BR_BCR_C}|e_{k}\>\right)\\
			=\ &\sigma_{A^\prime}\otimes\sigma_C
			\end{align*}
			
			\noindent\textbf{Step 3}: purification of all possible input. In detail, we show that for any input state $\rho_{ABC}\models\phi\ast\mu$, it can must be written into the form of Lemma \ref{lem proof of FrameS 1}; that is, there must exists $|\Psi_{AR_A}\>\in\cH(AR_A)$, $|\Psi_{BR_BCR_C}\> \in \cH(BR_BCR_C)$ and unitary transformation $U_{R_ABR_BR_C}$ acting on $R_ABR_BR_C$, such that \begin{equation*}
			\rt{\left[U_{R_ABR_BR_C}(|\Psi_{AR_A}\>\<\Psi_{AR_A}|\otimes|\Psi_{BR_BCR_C}\>\<\Psi_{BR_BCR_C}|)U^\dag_{R_ABR_BR_C}\right]}{ABC} = \rho_{ABC}.
			\end{equation*}
			
			This step is relatively simple if we realize the fact of freedom of purification.
			We use the notations $\rho_A = \tr_{BC}(\rho_{ABC}),\ \rho_C = \tr_{AB}(\rho_{ABC})$ and $\rho_{AC} = \tr_{B}(\rho_{ABC})$, and trivially $\rho_A\models\phi$ and $\rho_C\models\mu$ by restriction.
			First, there always exists pure state $|\Psi_{ABCR_AR_BR_C}\>$ that purify $\rho_{ABC}$. Next, let us focus on system $AC$: note that $\rho_{AC}$ is a product state between $A$ and $C$ as $\rho_{ABC}\models\phi\ast\mu$ (so $\rho_{AC}\models\phi\ast\mu$), thus $\rho_{AC} = \rho_A\otimes\rho_C$, and set $|\Psi_{AR_A}\>$ and $|\Psi_{CR_C}\>$ being the purification of $\rho_A$ and $\rho_C$, then $|\Psi_{AR_A}\>|\Psi_{CR_C}\>$ is also a purification of $\rho_{AC}$. If we add $|0\>_{BR_B}$, trivially $|\Psi_{AR_A}\>|0\>_{BR_B}|\Psi_{CR_C}\>$ is still a purification of $\rho_{AC}$. Since $|\Psi_{ABCR_AR_BR_C}\>$ is a purification of $\rho_{ABC}$, it is also a purification of $\rho_{AC}$. Now, we have two purifications $|\Psi_{AR_A}\>|0\>_{BR_B}|\Psi_{CR_C}\>$ and $|\Psi_{ABCR_AR_BR_C}\>$ with the same reference system $R_ABR_BR_C$, and due to the freedom of purification, there exists a local unitary transformation $U_{R_ABR_BR_C}$ that related these two purifications, i.e., 
			$$
			|\Psi_{ABCR_AR_BR_C}\> = U_{R_ABR_BR_C}|\Psi_{AR_A}\>|0\>_{BR_B}|\Psi_{CR_C}\>.
			$$
			Set $|\Psi_{BR_BCR_C}\> = |0\>_{BR_B}|\Psi_{CR_C}\>$ and we will obtain:
			\begin{align*}
			\rho_{ABC} &= \tr_{R_AR_BR_C}\left[|\Psi_{ABCR_AR_BR_C}\>\<\Psi_{ABCR_AR_BR_C}|\right] \\
			&= \rt{\left[U_{R_ABR_BR_C}(|\Psi_{AR_A}\>\<\Psi_{AR_A}|\otimes|\Psi_{BR_BCR_C}\>\<\Psi_{BR_BCR_C}|)U^\dag_{R_ABR_BR_C}\right]}{ABC}
			\end{align*}
			as we desired.
			
			\noindent\textbf{Step 4}: Combine Step 2 and 3 to conclude the soundness. For any $\rho_{ABC}\models\phi\ast\mu$, we have the following equations:
			\begin{align*}
			&\rt{\sem{\prog}(\rho_{ABC})}{A^\prime C} \\
			=\ &\rt{\left\{\sem{\prog}\left[\rt{\left(U_{R_ABR_BR_C}(|\Psi_{AR_A}\>\<\Psi_{AR_A}|\otimes|\Psi_{BR_BCR_C}\>\<\Psi_{BR_BCR_C}|)U^\dag_{R_ABR_BR_C}\right)}{ABC}\right]\right\}}{A^\prime C} \\
			=\ &\rt{\left\{\sem{\prog}\left[U_{R_ABR_BR_C}(|\Psi_{AR_A}\>\<\Psi_{AR_A}|\otimes|\Psi_{BR_BCR_C}\>\<\Psi_{BR_BCR_C}|)U^\dag_{R_ABR_BR_C}\right]\right\}}{A^\prime C} \\
			=\ &\sigma_{A^\prime}\otimes\rho_C
			\end{align*}
			by using Lemma \ref{lem sound proof 1} \textit{2}, $A^\prime C\subseteq ABC$ and 
			$$\sigma_C = \rt{|\Psi_{BR_BCR_C}\>\<\Psi_{BR_BCR_C}|}{C} = \rho_C.$$
			Since $\sigma_{A^\prime} \models \psi$ and $\rho_C\models\mu$, so $\sem{\prog}(\rho_{ABC})\models \psi\ast\mu$ as we desired.

			\begin{figure}[h]\centering
				\begin{equation*}\begin{split}
				&\textsc{UnCR}\quad \frac{\{\phi\}\prog\{\psi\}\quad\qbar\cap\var(\prog)=\emptyset\quad\phi[\cE[\qbar]]\Mexist\quad\psi[\cE[\qbar]]\Mexist}{\{\phi[\cE[\qbar]]\}\prog\{\psi[\cE[\qbar]]\}} 		
				\end{split}%\frac{1}{\dim(\qbar^\prime)}
				\end{equation*}
				\caption{Proof rule for dealing with entangled predicates. %For \textsc{PEPR}, function $\proj$ maps an Hermitian operator $A$ to the projection onto the eigenspace with eigenvalue 1 of $A$.
				}
			\end{figure}
			
			\vspace{0.4cm}
			
			\noindent -- \textsc{UnCR}. For $\rho\in\cD(\vars)$ such that $\rho\models\phi[\cE[\qbar]]$, by Proposition \ref{prop modification qo} (2), $\cE(\rho)\models\phi$, by premise $\{\phi\}\prog\{\psi\}$, $\sem{\prog}(\cE(\rho))\models\psi$. By the premise $\qbar\cap\var(\prog)=\emptyset$, 
			$\sem{\prog}(\cE(\rho)) = \cE(\sem{\prog}(\rho))$, and thus $\cE(\sem{\prog}(\rho))\models\psi$. Using Proposition \ref{prop modification qo} (2) again, we have $\sem{\prog}(\rho)\models\psi[\cE[\qbar]]$.
			
		\end{proof}

		\subsection{More explanations for reasoning about entangled predicates}
		\label{sec app LR}
		
		The technique for reasoning about entangled predicates proposed in \cite{YZL18} can be  described in following three steps:
		\begin{enumerate}
			\item \textit{\textbf{Pushing out}}: Introduce auxiliary variables for local reasoning;
			To capture the behavior how a program affects the entanglement relations between other systems, it is needed to introduce auxiliary variables, at most the fresh copy of each quantum variable as auxiliary variables, in both pre- and post-conditions\footnote{It is not surprising a fresh copy of all variables is enough, since the semantic function is a quantum operation -- quantum operation can be realized by a unitary transformation acting on both prime system and environment as large as the prime system. }. Thanks to Theorem \ref{thm eq glb var set}, this step can be safely down in our logic.
			\item \textit{\textbf{Modification}}: Choose appropriate quantum operation $\mathcal{E}$ and apply rule \textsc{UnCR} on the auxiliary variables;
			We use frame rule to glue all preconditions (postcondition) derived by local reasoning together to obtain a valid judgment $\{\phi\}\prog\{\psi\}$. Remember that $\phi$ and $\psi$ contains the same set of auxiliary variables. Suppose $\cE$ is an arbitrary quantum operation acting on auxiliary registers $\qbar^\prime$ and both $\cE[\qbar^\prime](\phi), \cE[\qbar^\prime](\psi)$ are defined, for any $\rho\models\cE[\qbar^\prime](\phi)$, we know that: $\cE(\rho)\models\phi$ and so $\sem{\prog}(\cE(\rho))\models\psi$. Since $\sem{\prog}$ and $\cE[\qbar^\prime]$ are two quantum operations acting on disjoint registers (i.e., the prime system and auxiliary system), $\sem{\prog}(\cE(\rho)) = \cE(\sem{\prog}(\rho))$, and thus $\cE(\sem{\prog}(\rho))\models\psi$ and hence $\sem{\prog}(\rho)\models\cE[\qbar^\prime](\psi)$. In summary, the judgment $\{\cE[\qbar^\prime](\phi)\}\prog\{\cE[\qbar^\prime](\psi)\}$ is valid, which we named it rule \textsc{UnCR}.
			\item \textit{\textbf{Pulling back}}: Using certain equivalence predicates to link the principal and auxiliary variables and then trace out the auxiliary variables. 
			There exist some BI formulas which have a globally equivalent form but with less variables, e.g. the cases of Proposition \ref{prop glo imp}. Generally, this step is not feasible for all predicates; however, with a proper choice of $\cE[\qbar^\prime]$, this step is suitable for lots of scenarios, in particular, it is feasible for all postcondition $\psi$ being projection or observable.
		\end{enumerate}
		
		\subsection{\textsc{PEPR}: simplified rule for projections}
		\label{rule PERP}
		
		When BI formula $\phi$ and $\psi$ appeared in \textsc{UnCR} are projections, we can derive the following rule by combining the ideas of modification and pulling back: 
		\begin{equation*}
		\textsc{PEPR}\ \frac{\{\Psi\}\prog\{\Phi\}\quad \var(\prog)\subseteq \qbar\quad \ptype(\qbar) = \ptype(\qbar^\prime)\quad \qbar\cap\qbar^\prime=\emptyset \quad \Psi\in\cP(\qbar,\qbar^\prime)\quad\Phi\in{\rm MES}(\qbar,\qbar^\prime)\quad Q\in\cP(\qbar)}{\left\{\proj\left[\dim(\qbar)\cdot\tr_{\qbar^\prime}\left((Q_{\qbar^\prime}\otimes\id_{\qbar})\Psi(Q_{\qbar^\prime}\otimes\id_{\qbar})\right)\right]\right\}\prog\{Q\}}	
		\end{equation*}
		as an instance of rule \textsc{UnCR}, where ${\rm MES}(\qbar,\qbar^\prime)$ stands for the set of maximally entangled states of two disjoint registers $\qbar,\qbar^\prime$ with the same type (i.e., the same dimension of their Hilbert space), and $\proj$ maps an observable to the projection onto its eigenspace of eigenvalue 1 (see Section \ref{sec basic Quantum app}). 
		\begin{proof}
			The simplified rule \textsc{PEPR} is indeed a combination of Modification and Pulling back discussed in Section \ref{sec app LR} whenever postcondition $Q\in\cP(\qbar)$. Suppose $Q$ has the diagonal decomposition $Q = \sum_i\lambda_i|\beta_i\>_\qbar\<\beta_i|$, and set $Q_{\qbar^\prime} = \sum_i\lambda_i|\beta_i\>_{\qbar^\prime}\<\beta_i|$, $N = \dim(\qbar)$ and $\Phi$ being the maximally entangled state $|\Phi\> = \frac{1}{\sqrt{N}}\sum_j|j\>_\qbar|j\>_{\qbar^\prime}$ with $\{|j\>\}$ being orthonormal basis of $\cH_\qbar$ and choose quantum operation \footnote{$\cE$ defined here is not a trace-preserving quantum operation, similar problems happens in this paragraph. However, it is always possible to add a scalar factor and this does not affect the conclusions. }
			$$\cE(\rho) = N\cdot \sum_{ij}\lambda_i|\ol{\beta_i}\>_{\qbar^\prime}\<j|\rho|j\>_{\qbar^\prime}\<\ol{\beta_i}|,$$
			then $\Phi[\cE[\qbar^\prime]] = Q_\qbar\otimes\id_{\qbar^\prime}$. First, follows by \cite{YZL18}, $\models\{\sem{\prog}^\ast(\Phi)\}\prog\{\Phi\}$ and moreover, $\sem{\prog}^\ast(\Phi)$ is the weakest precondition to make it valid. By employing rule \textsc{UnCR}, we obtain
			$$\{(\sem{\prog}^\ast(\Phi))[\cE[\qbar^\prime]]\}\prog\{\Phi[\cE[\qbar^\prime]]\}\qquad\text{or equvalently}\qquad
			\{(\sem{\prog}^\ast(\Phi))[\cE[\qbar^\prime]]\}\prog\{Q_\qbar\otimes\id_{\qbar^\prime}\}
			$$
			On the other hand, by premise $\{\Psi\}\prog\{\Phi\}$, we have
			$$\Psi\sqsubseteq\sem{\prog}^\ast(\Phi)$$
			since $\sem{\prog}^\ast(\Phi)$ is the weakest precondition and thus,
			\begin{align*}
			&\proj\left[N\cdot\tr_{\qbar^\prime}\left((Q_{\qbar^\prime}\otimes\id_{\qbar})\Psi(Q_{\qbar^\prime}\otimes\id_{\qbar})\right)\right]\otimes\id_{\qbar^\prime} \\
			\sqsubseteq\ & \proj\left[N\cdot\tr_{\qbar^\prime}\left((Q_{\qbar^\prime}\otimes\id_{\qbar})\sem{\prog}^\ast(\Phi)(Q_{\qbar^\prime}\otimes\id_{\qbar})\right)\right]\otimes\id_{\qbar^\prime} \\
			=\ & N\cdot(Q_{\qbar^\prime}\otimes\id_{\qbar})\sem{\prog}^\ast(\Phi)(Q_{\qbar^\prime}\otimes\id_{\qbar}) = (\sem{\prog}^\ast(\Phi))[\cE[\qbar^\prime]].
			\end{align*}
			or equivalently, $\proj\left[N\cdot\tr_{\qbar^\prime}\left((Q_{\qbar^\prime}\otimes\id_{\qbar})\Psi(Q_{\qbar^\prime}\otimes\id_{\qbar})\right)\right]\otimes\id_{\qbar^\prime}\rightarrow(\sem{\prog}^\ast(\Phi))[\cE[\qbar^\prime]]$ by Proposition \ref{prop axiom projection extend} (2). Then by rule \textsc{Weak} and Proposition \ref{prop glo imp app}, we have
			$$\vdash \{\proj\left[N\cdot\tr_{\qbar^\prime}\left((Q_{\qbar^\prime}\otimes\id_{\qbar})\Psi(Q_{\qbar^\prime}\otimes\id_{\qbar})\right)\right]\otimes\id_{\qbar^\prime}\}\prog\{Q_\qbar\otimes\id_{\qbar^\prime}\}$$
			According to Proposition \ref{prop axiom projection extend}(3) and \ref{prop axiom projection extend}(1) and applying
			rule \textsc{Weak} again, we conclude $$\vdash\left\{\proj\left[N\cdot\tr_{\qbar^\prime}\left((Q_{\qbar^\prime}\otimes\id_{\qbar})\Psi(Q_{\qbar^\prime}\otimes\id_{\qbar})\right)\right]\right\}\prog\{Q\}.$$
		\end{proof}
		
		\subsection{Verification of Example \ref{exam verify entanglement}}
		\label{sec app sub verification exam verify entanglement}
		First let us define the maximally entangled states $\Phi^\pm_{pq}$ as $|\Phi\>_{pq} = \frac{1}{\sqrt{2}}(|0\>_p|0\>_q \pm |1\>_p|1\>_q$. The program $\prog$ defined in Example \ref{exam verify entanglement} is:
		$$\prog\equiv {\sqrt{Z}}[q_1]; {\sqrt{Z}}[q_2].$$
		We aim to prove $\{\Phi^+_{q_1q_2}\}\prog\{\Phi^-_{q_1q_2}\}$. As discussed above, our proof has following steps:
		\begin{itemize}
			\item Local reasoning, pushing out. For subprogram ${\sqrt{Z}}[q_1]$, let us introduce an auxiliary qubit $q_1^\prime$. Using rule \textsc{Unit}, we have:
			$$\{\Psi_{q_1q_1^\prime}\}{\sqrt{Z}}[q_1]\{\Phi^+_{q_1q_1^\prime}\}\qquad\text{with}\qquad |\Psi\>_{pq} = \frac{1}{\sqrt{2}}(|0\>_p|0\>_q - \bi |1\>_p|1\>_q).$$
			Similarly, we have $\{\Psi_{q_2q_2^\prime}\}{\sqrt{Z}}[q_2]\{\Phi^+_{q_2q_2^\prime}\}$. With \textsc{Frame}, \textsc{Seq} and \textsc{Weak} and Proposition \ref{prop axiom projection}, we obtain:
			\begin{align*}
			&\vdash\{\Psi_{q_1q_1^\prime}\otimes\Psi_{q_2q_2^\prime}\}\{\Psi_{q_1q_1^\prime}\wedge\Psi_{q_2q_2^\prime}\}{\sqrt{Z}}[q_1]\{\Phi^+_{q_1q_1^\prime}\wedge\Psi_{q_2q_2^\prime}\}{\sqrt{Z}}[q_2]\{\Phi^+_{q_1q_1^\prime}\wedge\Phi^+_{q_2q_2^\prime}\}\{\Phi^+_{q_1q_1^\prime}\otimes\Phi^+_{q_2q_2^\prime}\}.
			\end{align*}
			\item modification and pulling back. Note that $\Phi^+_{q_1q_1^\prime}\otimes\Phi^+_{q_2q_2^\prime}\in{\rm MES}[q_1q_2, q_1^\prime q_2^\prime]$, we apply rule \textsc{PEPR} (an instance of \textsc{UnCR}, see Section \ref{rule PERP}) to obtain:
			$$\left\{\proj\left[4\cdot\tr_{q_1^\prime q_2^\prime}\left((\Phi^-_{q_1^\prime q_2^\prime}\otimes\id_{q_1q_2})\Psi(\Phi^-_{q_1^\prime q_2^\prime}\otimes\id_{q_1q_2})\right)\right]\right\}\prog\{\Phi^-_{q_1q_2}\}.$$
			A careful calculation shows that the precondition is exactly $\Phi^+_{q_1q_2}$ as we desired.
		\end{itemize}
		
		\section{Local Reasoning: Analysis of Variational
			Quantum Algorithms, details for Section \ref{sec-vqa-exam}}
		\label{sec app vqa}
		
		\subsection{VQA in the Tutorial of Cirq}
		\label{sec app sub VQA in the Tutorial of Cirq}
		
		The VQA presented in the tutorial of Cirq \footnote{\url{https://quantumai.google/cirq/tutorials/variational_algorithm}} deals with a 2D $+/-$ Ising model of size $N\times N$ with objective Hamiltonian (observable)
		$$H = \sum_{(i,j)}h_{ij}Z_{ij} + \sum_{(i,j;i^\prime,j^\prime)\in S}J_{ij;i^\prime j^\prime}Z_{ij}Z_{i^\prime j^\prime},$$
		where each index pair $(i,j)$ is associated with a vertex in a the $N\times N$ grid, $S$ is the set of all neighboring vertices in the grid, and all $h_{ij}$ and $J_{ij;i^\prime j^\prime}$ are either $+1$ or $-1$.
		The algorithm for preparing the ansatz state with real parameters $(\alpha,\beta,\gamma)$ given in the tutorial of Cirq can be rewritten in the quantum-{\bf while} language as follows:
		\begin{align*}
		{\rm VQA}(N)\equiv\ &{\bf for\ } j = 1,\cdots,N {\bf\ do}\ {\rm ProcC}(j)\ {\bf od};
		{\bf for\ } i = 1,\cdots,N {\bf\ do}\ {\rm ProcR}(i)\ {\bf od}
		\end{align*}
		with following subprograms:
		\begin{align*} 
		{\rm ProcC}(j)\equiv\ &{\bf for\ } i = 1,\cdots,N {\bf\ do} & {\rm ProcR}(i)\equiv\ &{\bf for\ } j = 1,\cdots,N-1 {\bf\ do} \\		
		&\quad q_{ij} = X^\alpha[q_{ij}]; & &\quad q_{ij} = X^{(Jc_{ij} = -1)}[q_{ij}]; \\		
		&\quad q_{ij} = \big(Z^\beta\big){}^{(h_{ij} = 1)}[q_{ij}] \quad {\bf od}; & &\quad q_{i(j+1)} = X^{(Jc_{ij} = -1)}[q_{i(j+1)}];\\		
		&{\bf for\ } i = 1,\cdots,N-1 {\bf\ do} & &\quad q_{ij},q_{i(j+1)} = CZ^\gamma[q_{ij},q_{i(j+1)}]; \\		
		&\quad q_{ij} = X^{(Jr_{ij} = -1)}[q_{ij}]; & &\quad q_{ij} = X^{(Jc_{ij} = -1)}[q_{ij}]; \\		
		&\quad q_{(i+1)j} = X^{(Jr_{ij} = -1)}[q_{(i+1)j}]; & &\quad q_{i(j+1)} = X^{(Jc_{ij} = -1)}[q_{i(j+1)}]; \\		
		&\quad q_{ij},q_{(i+1)j} = CZ^\gamma[q_{ij},q_{(i+1)j}]; & &{\bf od}; \\		
		&\quad q_{ij} = X^{(Jr_{ij} = -1)}[q_{ij}]; & &\\		
		&\quad q_{(i+1)j} = X^{(Jr_{ij} = -1)}[q_{(i+1)j}] \quad {\bf od}; & &
		\end{align*}
		where for simplicity, we write some logical judgments as superscripts; for example  $X^{(Jc_{ij} = -1)}$ means that if $Jc_{ij}=-1$, apply the $X$ gate, and otherwise skip. Since the parameters $h, Jc$ and $Jr$ are given a priori, this notation should not lead to any confusion. 
		
		\subsection{Specifying Incorrectness in Quantum Separation Logic}
		
		As pointed out at the beginning of this section, we can use our quantum separation logic to show that algorithm ${\rm VQA}(N)$ is indeed incorrect. Let us first describe its incorrectness in our logical language.  
		Suppose the Hamiltonian $H$ has eigenvalues $E_0,E1,...$ ranged in increasing order, with corresponding eigenspaces (projections) $Q_0, Q_1...$. 
		%	If we can find a precondition $P_0\in\cP$ such that $$\models\{P_0\} {\rm VQA}(N) \{Q_0\},$$ then showing that $|0\>$ (the initial state of quantum circuit) is far from $P_0$; that is, fidelity $\<0|P_0|0\> \le \delta$, enable us to conclude the approximate ground energy computed by ${\rm VQA}(N)$ is at least $\delta E_0 + (1-\delta)E_1$
		%	This idea can be generalised as follows: 
		If for each $i\leq n$, we can find a  precondition $P_i\in\cP$ such that $\models\{P_i\} {\rm VQA}(N) \{1-\sum_{k=0}^{i}Q_i\}\ (i=0,1,...,n)$, then by showing that $|0\>$ (the initial state of quantum circuit) is close to $P_i$; that is, $\<0|P_i|0\> \ge \delta_i$, we can conclude that the approximate ground energy computed by ${\rm VQA}(N)$ is at least
		\begin{equation}E_0 + \sum_{i=1}^{n}(E_{i+1}-E_{i})\delta_i.\end{equation}
		\begin{proof}
			For input $\rho = |0\>\<0|$, the output is $\sigma = \sem{{\rm VQA}(N)}(\rho)$. We have following observations of the energy of output by realizing $E_i$ is an increasing sequence and $\sum_{i\ge 0}Q_i = \id$:
			\begin{align*}
			\tr(H\sigma) &= \tr\left(\sum_{i\ge 0}E_iQ_i\sigma\right) \ge \tr\left(\sum_{i= 0}^nE_iQ_i\sigma\right) + \tr\left(E_{n+1}\sum_{i\ge n+1}Q_i\sigma\right) \\
			&= \tr(E_0\sigma) + \sum_{i=0}^n\tr\left((E_{i+1}-E_i)\left(1-\sum_{k=0}^iQ_i\right)\sigma\right).
			\end{align*}
			On the other hand, suppose $\models\{P_i\} {\rm VQA}(N) \{1-\sum_{k=0}^{i}Q_i\}\ (i=0,1,...,n)$. According to the lifting principle (see \cite{ZYY19} Theorem 3.2), we know that, for any two projections $P$ and $Q$, if $\{P\}\prog\{Q\}$ is valid, then for any input state $\rho$, $\tr(P\rho)\le\tr(Q\sem{\prog})(\rho)$. Thus, we obtain:
			$$\tr(P_i\rho)\le\tr\left(\left(1-\sum_{k=0}^{i}Q_i\right)\sigma\right)$$
			which implies that the energy of output
			$$\tr(H\sigma) \ge \tr(E_0\sigma) + \sum_{i=0}^n\tr\left((E_{i+1}-E_i)P_i\rho\right) = E_0 + \sum_{i=0}^n(E_{i+1}-E_i)\<0|P_i|0\>\ge E_0 + \sum_{i=0}^n(E_{i+1}-E_i)\delta_i$$
			if $\<0|P_i|0\>\ge\delta_i$ for all $0\le i\le n$, as we desired.
		\end{proof} 
		Therefore, whenever the quantity in (\ref{VQA-meters}) is far away from the real ground energy $E_0$, then ${\rm VQA}(N)$ is incorrect.
		
		To illustrate our idea more explicitly, let us consider the simplest case of $2\times 2$ grid ($N = 2$) with parameters:
		$$h = {\small\left[\begin{array}{cc}-1 & -1 \\ 1 & 1\end{array}\right]},\quad Jc = {\small\left[\begin{array}{c}-1 \\ -1\end{array}\right]},\quad Jr = \left[\begin{array}{cc}-1 & 1\end{array}\right]$$
		and $J_{ij;(i+1)j} = Jr_{ij}$ and $J_{ij;i(j+1)} = Jc_{ij}$. 
		%\todo{add a figure of the circuit?}
		The eigenvalues of the Hamiltonian $H$ in this case are $E_0,\cdots,E_5 = -6,-4,-2,0,2,4$ with corresponding eigenspaces $Q_0,Q_1,\cdots,Q_5$, respectively.  If we can find preconditions $P_0$ and $P_1$ satisfying: 
		\begin{equation}\label{VQA-preconditions}\begin{cases}&\models\{P_i\} {\rm VQA}(N) \left\{1-\sum_{k=0}^{i}Q_i\right\}\ (i=0,1),\\ 
		&\<0|P_0|0\> = 1-\frac{1}{16}sin(\alpha\pi)^4 \ge \frac{15}{16}, \\
		&\<0|P_1|0\> = 1-\frac{1}{32}(7+\cos(2\alpha\pi))\sin^2(\alpha\pi)\ge \frac{13}{16},
		\end{cases}\end{equation}
		then it follows from (\ref{VQA-meters}) that the approximate ground energy of VQA is at least $-2.5$, which is much higher than the real ground energy $E_0 = -6$.
		
		\subsection{Verifying Incorrectness in Quantum Separation Logic}
		\label{sec app sub Verifying Incorrectness in QSL}
		Now we use the inference rules of quantum separation logic to derive preconditions $P_0$ and $P_1$ required in (\ref{VQA-preconditions}) and thus show that ${\rm VQA}(N)$ is indeed incorrect. %In fact, given the postcondition being a (possibly entangled) projection, there is a standard and (relatively) complete way to implement three steps discussed in Section \ref{sec Local Reasoning Entangle}: pushing out, modification and pulling back.
		The derivation is given following the three steps outlined at the end of Subsection \ref{sec Local Reasoning Entangle}. 
		
		\vspace{0.15cm}
		
		\noindent\textbf {Pushing out}: This step is essentially \textit{local reasoning}. For each subprogram ${\rm ProcC}(j)$ or ${\rm ProcR}(i)$, We derive a certain precondition of it with (the projection onto the one-dimensional subspace spanned by) the \emph{maximally entanglement} as their postconditions, which plays the role of pushing out to connect the variables in the subprogram with some auxiliary variables. For example, consider: $${\rm ProcR}(1)\equiv q_{11} = X[q_{11}];\ q_{12} = X[q_{12}];\ q_{11},q_{12} = CZ^\gamma[q_{11},q_{12}];\ q_{11} = X[q_{11}];\ q_{12} = X[q_{12}]$$
		%with variables $q_{11},q_{12}$. 
		We introduce auxiliary variables $q_{11}^\prime,q_{12}^\prime$ with the same types as $q_{11},q_{12}$. For postcondition $\Phi^{R1} = |\Phi^{R1}\>\<\Phi^{R1}|$ with 
		$|\Phi^{R1}\> = \frac{1}{2}\sum_{i,j\in\{0,1\}}|i\>_{q_{11}}|j\>_{q_{12}}|i\>_{q_{11}^\prime}|j\>_{q_{12}^\prime},$ 
		we can use rules \textsc{Unit} and \textsc{Seq} to derive precondition $\Psi^{R1} = |\Psi^{R1}\>\<\Psi^{R1}|$ with 
		$|\Psi^{R1}\> = X_{q_{11}}X_{q_{12}}CZ^\gamma_{q_{11},q_{12}}X_{q_{11}}X_{q_{12}}|\Phi^{R1}\>$
		such that $\vdash\{\Psi^{R1}\}{\rm ProcR}(1)\{\Phi^{R1}\}.$ Similarly, we can derive preconditions $\Psi^{R2}, \Psi^{C1}, \Psi^{C2}$ such that  $$\vdash\{\Psi^{R2}\}{\rm ProcR}(2)\{\Phi^{R2}\},\qquad \vdash\{\Psi^{C1}\}{\rm ProcC}(1)\{\Phi^{C1}\},\qquad \vdash\{\Psi^{C2}\}{\rm ProcC}(2)\{\Phi^{C2}\}.$$ 	
		\noindent\textbf {Modification and pulling back: } 
		Now we apply \textsc{PEPR} (which is an instance of \textsc{UnCR}, see Section \ref{rule PERP}) to VQA. With \textsc{Frame}, \textsc{Seq} and \textsc{Weak} and Proposition \ref{prop axiom projection}, we obtain:
		\begin{align*}
		&\vdash\{\Psi^{R1}\otimes\Psi^{R2}\}\{\Psi^{R1}\wedge\Psi^{R2}\}{\rm ProcR}(1)\{\Phi^{R1}\wedge\Psi^{R2}\}{\rm ProcR}(2)\{\Phi^{R1}\wedge\Phi^{R2}\}\{\Phi^{R1}\otimes\Phi^{R2}\},
		\end{align*}
		%or in short $$\{\Psi^{R1}\otimes\Psi^{R2}\}{\rm ProcR}(1);{\rm ProcR}(2)\{\Phi^{R1}\otimes\Phi^{R2}\},$$ and 
		Note that $\otimes$ is the tensor product in mathematics and $\Psi^{R1}\otimes\Psi^{R2}$ and $\Phi^{R1}\otimes\Phi^{R2}$ are still projections.
		Similarly, it holds that $\vdash\{\Psi^{C1}\otimes\Psi^{C2}\}{\rm ProcC}(1);{\rm ProcC}(2)\{\Phi^{C1}\otimes\Phi^{C2}\}$.
		Realizing the fact that the tensor product of maximally entangled state is still a maximally entangled state, i.e., $\Phi_1\in{\rm MES}(\qbar_1,\qbar_1^\prime)$ and $\Phi_2\in{\rm MES}(\qbar_2,\qbar_2^\prime)$ imply $\Phi_1\otimes\Phi_2\in{\rm MES}(\qbar_1\qbar_2,\qbar_1^\prime\qbar_2^\prime)$, we can use rule \textsc{PEPR} to derive:
		$$\vdash\{R\}{\rm ProcR}(1);{\rm ProcR}(2)\{Q\},\quad R\triangleq \proj\left[16\cdot\tr_{\qbar^\prime}\left((Q_{\qbar^\prime}\otimes\id_{\qbar})(\Psi^{R1}\otimes\Psi^{R2})(Q_{\qbar^\prime}\otimes\id_{\qbar})\right)\right]$$
		where $\qbar = q_{11},q_{12},q_{21},q_{22}$ and $\qbar^\prime = q_{11}^\prime,q_{12}^\prime,q_{21}^\prime,q_{22}^\prime$. Similarly, we have:
		$$\vdash\{P\}{\rm ProcC}(1);{\rm ProcC}(2)\{R\},\quad P\triangleq \proj\left[16\cdot\tr_{\qbar^\prime}\left((R_{\qbar^\prime}\otimes\id_{\qbar})(\Psi^{C1}\otimes\Psi^{C2})(R_{\qbar^\prime}\otimes\id_{\qbar})\right)\right].$$
		
		The explicit expressions of $P,Q,R$ are involved. Here, we only display the closed form of $(P_0)_{11}$ and $(P_1)_{11}$ for $Q = \id - Q_0$ and $\id - Q_0 - Q_1$ respectively:
		$$(P_0)_{11} = 1-\frac{1}{16}sin(\alpha\pi)^4,\quad (P_1)_{11} =1-\frac{1}{32}(7+\cos(2\alpha\pi))\sin^2(\alpha\pi),$$
		since $\<0|P_i|0\> = (P_i)_{11}$ is what actually needed in (\ref{VQA-preconditions}).

		\section{Scalable Reasoning: Verification of Security, Details for Section \ref{sec-security-exam}}
		
		\subsection{Security of Quantum Secret Sharing}
		
		We here prove the validity of $\models P_S[p,q,r]\rightarrow (\unia[p]\wedge \unia[q]\wedge \unia[r])$ (see Section \ref{sec QSS sec uni}). First, for any pure state $|\psi\>_{pqr}\in P_S[p,q,r]$, it can be written as:
		$$|\psi\>_{pqr} = \sum_{i=0}^2\lambda_i|e_i\>_{pqr},$$
		with complex numbers $\lambda_i$ satisfies $\sum_{i=0}^2|\lambda_i|^2 = 1$. Then a straightforward calculation shows that:
		$$\rt{\left(|\psi\>_{pqr}\<\psi|\right)}{p} = \frac{\id_p}{3},$$
		which implies $|\psi\>_{pqr}\<\psi|\models\unia[p]$. Next, note that $\unia[p]\in{\rm CM}$, so for any $\rho\models P_S[p,q,r]$, $\rho\models\unia[p]$, or equivalently, $\models P_S[p,q,r]\rightarrow \unia[p]$. Similarly for $\unia[q]$ and $\unia[r]$.

		\subsection{Security against Eavesdropper} As one can imagine, verification of quantum secret sharing with eavesdroppers is harder. Let us consider a slightly more complicated situation than its original design in \cite{HBB99,KKI99,CGL99}: the quantum secret is unknown for the sender and thus cannot be re-prepared by the sender. A protocol for secret transmission in this case was recently proposed in \cite{YLZ19}, and an instance of it can be written as the following program:  
		\begin{align*}
		{\rm QSS\_ E}(n) \equiv\ &{\bf for}\ i=1,\cdots, n\ {\bf do} & \prog_0[p,q,r,h_i]\equiv\ & p,h_i := \perm(p,h_i\mapsto h_i,p)[p,h_i];\\
		&\quad {\bf Enc}[p,q,r]; & & q,r := U_{\rm rec}[q,r];\\
		&\quad c:= |0\>; c:= H[c]; & & p,q := \perm(p,q\mapsto q,p)[p,q];\\
		&\quad {\bf if}\ M[c] = 0\ \rightarrow\ \prog_0[p,q,r,h_i] & & \\
		&\quad \qquad\quad\ \square\,  1\ \rightarrow\ \prog_1[p,q,r,h_i];\ {\bf fi} & \qquad\prog_1[p,q,r,h_i]\equiv\ & q,h_i := \perm(q,h_i\mapsto h_i,q)[q,h_i];  \\
		&{\bf od} & & p,r := \perm(p,r\mapsto r,p)[p,r];
		\end{align*}
		This is a $n+1$-round protocol: in each round, Alice encodes a qutrit to $p,q,r$, and Eva tosses the fresh coin $c$ by Hadamard gate $H$ and measures it by computational basis $M$ to decide which qutrit he is going to steal; Alice first tries to send $p$ to Bob and then $q$ to Charlie. If the coin is head (0), then Eva steals $p$ and stores it in her own register $h_i$, and Alice recovers the message from $q,r$ and sets it as the secret for next round; and if the coin is tail (1), then Eva steals $q$ and stores in $h_i$, and Alice sets $r$ as the secret for next round. It can be shown at the end if Eva doesn't steal $p$ and $q$ at the $n+1$ round, the qutrit(s) that Bob and Charlie get are indeed what they want (anyone has no information about secret but they together can recover the secret).
		
		The security of ${\rm QSS\_ E}(n)$ can be expressed as the uniformity:
		\begin{equation}\vdash\label{qss-e-n}\{\top\}{\rm QSS\_ E}(n)\{\unia[h_1,\cdots,h_n]\},
		\end{equation}
		which means that all qutrits Eva stolen are in fact useless.
		We show how (\ref{qss-e-n}) can be proved in our quantum separation logic. First, by the frame rule \textsc{Const} and (\ref{eqn QSS single}), we have for all $i=1,\cdots,n$:
		$$\vdash\{\unia[h_1,\cdots,h_{i-1}]\}\ p,q,r = {\bf Enc}[p,q,r]\ \{P_S[p,q,r]\wedge\unia[h_1,\cdots,h_{i-1}]\}.$$
		Next, we consider the first branch of the ${\bf if}$-statement in ${\rm QSS\_ E}(n)$ and obtain:
		\begin{align}
		\vdash 
		&\{P_S[p,q,r]\wedge\unia[h_1,\cdots,h_{i-1}]\} && \nonumber\\
		&\bullet\quad \perm(p,h_i\mapsto h_i,p)[p,h_i]; && \textsc{Perm} \nonumber\\
		&\{P_S[h_i,q,r]\wedge\unia[h_1,\cdots,h_{i-1}]\} &&  \textsc{Weak} \nonumber\\
		&\{\unia[h_1,\cdots,h_i]\} && \label{qss-no}\\
		&\bullet\quad q,r := U_{\rm rec}[q,r];p,q := \perm(p,q\mapsto q,p)[p,q]; &&  \textsc{Unit},\ \textsc{Frame} \nonumber\\
		&\{\unia[h_1,\cdots,h_i]\} &&  \nonumber
		\end{align}
		Note that $(P_S[h_i,q,r]\wedge\unia[h_1,\cdots,h_{i-1}])\rightarrow\unia[h_1,\cdots,h_i]$ in assertion logic is derived from Proposition \ref{prop axiom projection extend} (5). 
		Similarly we have: \begin{equation}\label{qss-yes}\vdash\{P_S[h_i,q,r]\wedge\unia[h_1,\cdots,h_{i-1}]\} \prog_1[p,q,r,h_i] \{\unia[h_1,\cdots,h_i]\}\end{equation}
		the second branch of the ${\bf if}$-statement in ${\rm QSS\_ E}(n)$.
		Now, using rule \textsc{RIf} we can combine (\ref{qss-no}) and (\ref{qss-yes}) to derived the following for the ${\bf if}$-statement in ${\rm QSS\_ E}(n)$:
		$$\vdash\{(P_S[p,q,r]\wedge\unia[h_1,\cdots,h_{i-1}])\ast\id_c\}{\bf if}\cdots{\bf fi}\{\unia[h_1,\cdots,h_i]\}.$$
		Finally, we use \textsc{Weak} and \textsc{Seq} repeatedly to glue the above  judgments together:
		\begin{align*}\vdash 
		&\{\top\} &&\\
		&\bullet\quad {\bf for}\ i=1,\cdots, n\ {\bf do} && \textsc{Seq}\\
		&\qquad\qquad \{\unia[h_1,\cdots,h_{i-1}]\} && \\
		&\qquad\qquad \bullet\quad p,q,r = {\bf Enc}[p,q,r]; && \\
		&\qquad\qquad \{P_S[p,q,r]\wedge\unia[h_1,\cdots,h_{i-1}]\} && \textsc{Weak}\\
		&\qquad\qquad \!\left\{\big[(P_S[p,q,r]\wedge\unia[h_1,\cdots,h_{i-1}])\wedge\id_c\big]\wedge(\id_{pqrh_1\cdots h_{i-1}}\ast \id_{\emptyset})\right\} &&\\ 
		&\qquad\qquad \bullet\quad c:= |0\>;\ c:= H[c];  && \textsc{Unit},\textsc{Frame} \\ 
		&\qquad\qquad \{(P_S[p,q,r]\wedge\unia[h_1,\cdots,h_{i-1}])\ast\id_c\} \\
		&\qquad\qquad \bullet\quad {\bf if}\ M[c] = 0\ \rightarrow\ \prog_0[p,q,r,h_i]\ \square \  1\ \rightarrow\ \prog_1[p,q,r,h_i]\ {\bf fi} && \\
		&\qquad\qquad \{\unia[h_1,\cdots,h_i]\} \\
		&\bullet\quad {\bf od} && \\
		&\{\unia[h_1,\cdots,h_n]\}. && 
		\end{align*}
		
		\section{BI with domain - Constructing 2-BID Logic}
		\label{sec 2BID}
		
		As discussed in Section \ref{sec res}, the failure of existence of extensions makes some BI formulas nonrestrictive -- satisfaction relation $\rho\models\phi$ depends on the variables outside $\free{\phi}$. On the other hands, restriction property is so important in program logic that we need to focus on those restrictive BI formulas when we establish QSL. 
		
		Pointed out in Section \ref{sec:discussion}, a possible way is to modify the BI frame to make restriction property intrinsic. Here and the following context focus on the aim, along the line that first introduce 2-BID logic, and then construct QSL based on 2-BID. 
		
		As the first step, we define the assertion language as an extension of BI-logic, tailored for specifying properties of quantum states, with a special consideration of accommodating entanglement and separation together.

		\subsection{BI-Logic with nondeterministic composition}
		
		Here, we give an alternative definition of BI frame with nondeterministic composition, i.e.,  $x\circ y$ is a set of worlds rather than a single world.
		
		\begin{definition}[BI frame \cite{OP99,Doc19}]
			A BI frame is a tuple $\cX = (X,\circ,\preceq,e)$, where $X$ is a set equipped with a preorder $\preceq$, and $\circ: X\times X\rightarrow \wp(X)$ is a binary operation mapping to the power set of $X$ with an unit element $e$ and satisfying the following conditions:
			\begin{enumerate}
				\item (Commutativity) $z\in x\circ y\Rightarrow z\in y\circ x$;
				\item (Unit Existence) $x\in x\circ e$;
				\item (Coherence)  $x\in y\circ z\Rightarrow x\succeq y$;
				\item (Associativity) $t^\prime\succeq t\in x\circ y\wedge w\in t^\prime\circ z\Rightarrow \exists s,s^\prime,w^\prime (s^\prime\succeq s\in y\circ z\wedge w\succeq w^\prime\in x\circ s^\prime)$.
			\end{enumerate}
			Moreover, a BI frame is said to be Downwards Closed (DC) if it satisfies 
			\begin{enumerate}
				\item[(5)] (Downwards Closed) $z\in x\circ y\wedge x^\prime\preceq x\wedge y^\prime\preceq y\Rightarrow \exists z^\prime (z^\prime\in x^\prime\circ y^\prime \wedge z^\prime\preceq z)$
			\end{enumerate}
		\end{definition}
		
		The downwards closed property was identified in \cite{CCA17} to simplify the semantics of magic wand $\sepimp$; we also find it useful for proving the restriction property of $\ast$.
		
		A \emph{valuation} is a mapping $\cV: \AP \rightarrow \wp(X)$, and it is monotonic if $x\in\cV(p)$ and $y \succeq x$ implies $y \in \cV(p)$. A BI frame $\cX$ together with a monotonic valuation $\cV$ gives a BI model $\cM$.
		
		\begin{definition}[Satisfaction in BI models \cite{OP99,Doc19}]
			\label{def satisfaction BI nondeterministic}
			Given a BI formula $\phi$ and a BI model $\cM = (X,\circ,\preceq,e,\cV)$. For each $x\in X$, satisfaction relation $x\models_\cM \phi$ is defined by induction on $\phi$:
			\begin{align*}
			&  x\models_\cM p \ \text{iff}\ x\in\cV(p)\qquad\qquad
			x\models_\cM \top :\  \text{always} \qquad\qquad
			x\models_\cM \bot :\  \text{never} \\
			&  x\models_\cM \phi_1\wedge\phi_2 \  \text{iff}\ x\models_\cM\phi_1\ \text{and}\ x\models_\cM\phi_2 \qquad\qquad
			x\models_\cM \phi_1\vee\phi_2   \  \text{iff}\ x\models_\cM\phi_1\ \text{or}\ x\models_\cM\phi_2\\
			&  x\models_\cM \phi_1\rightarrow\phi_2  \  \text{iff\ for\ all}\ x^\prime\succeq x,\ x^\prime\models_\cM\phi_1\ \text{implies}\ x^\prime\models_\cM\phi_2\\
			&  x\models_\cM \phi_1\ast\phi_2  \  \text{iff\ exists}\ x^\prime,y,z\ \text{s.t.}\ x\succeq x^\prime\in y\circ z,\ y\models_\cM \phi_1\ \text{and}\ z\models_\cM\phi_2 \\
			&  x\models_\cM \phi_1\sepimp\phi_2  \  \text{iff\ for\ all}\ x^\prime,y,z\ \text{s.t.}\ x^\prime\succeq x\ \text{and}\ z\in x^\prime\circ y, y\models_\cM \phi_1\ \text{implies}\ z\models_\cM \phi_2. 
			\end{align*}		
		\end{definition}
		
		The judgment $\phi\models_\cM\psi$ asserts that for every $x\in X$, whenever $x\models_\cM \phi$, it follows $x\models_\cM \psi$. We write $\phi\models\psi$ iff $\phi\models_\cM\psi$ holds for all models, and we say $\phi$ is valid, written $\models \phi$, iff $\top\models\phi$.
		
		Though the nondeterministic composition is considered and the definition of BI frame and BI model are somewhat different from standard ones (see Section \ref{sec brief review of BI}), it enjoys the same proof system in Hilbert-style is presented in Fig. \ref{fig HR for BI}. In particular, it not only sound but also complete. 
		We use $\phi\vdash\psi$ to denote provability. In particular, we say $\phi$ is provable if $\top\vdash\phi$ is provable.
		
		\begin{theorem}[Soundness and Completeness of BI, c.f. \cite{OP99,Doc19}]
			\label{thm sound complete BI}
			For BI formulas $\phi$ and $\psi$, $\phi\models\psi$ iff $\phi\vdash\psi$ in Hilbert system shown in Fig. \ref{fig HR for BI}.
		\end{theorem}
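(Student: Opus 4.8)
The plan is to prove the two implications by the standard methods for relational (Kripke resource) models of propositional bunched logics, specialising the arguments of \cite{OP99,Doc19} to the nondeterministic-composition frames of Definition~\ref{def satisfaction BI nondeterministic}. A preliminary step needed on both sides is \emph{Kripke monotonicity}: if $x\models_{\cM}\phi$ and $y\succeq x$ then $y\models_{\cM}\phi$, proved by induction on $\phi$ using monotonicity of $\cV$ and the frame clauses (the $\ast$ case is immediate from transitivity of $\preceq$, the $\rightarrow$ and $\sepimp$ cases from the quantification over $\succeq$-successors).

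For \textbf{soundness} ($\phi\vdash\psi\Rightarrow\phi\models\psi$) I would induct on the length of a Hilbert derivation and check that each of the sixteen rules of Fig.~\ref{fig HR for BI} preserves validity in an arbitrary model. Rules~1--10 are verified exactly as in intuitionistic Kripke semantics, using only monotonicity and the clauses for $\wedge,\vee,\rightarrow$. The multiplicative rules are where the frame axioms enter: rule~11 follows from monotonicity of $\circ$ and the clause for $\ast$; rule~14 is Commutativity; rules~12 and~13 unwind directly from the clauses for $\ast$ and $\sepimp$ (the $\ast$--$\sepimp$ adjunction); rule~16 uses Unit Existence for $\phi\vdash\phi\ast\top$ and Coherence ($x\in y\circ z\Rightarrow x\succeq y$) for $\phi\ast\top\vdash\phi$; and rule~15 is precisely the semantic content of Associativity together with Kripke monotonicity. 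I expect this direction to be routine.

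For \textbf{completeness} ($\phi\models\psi\Rightarrow\phi\vdash\psi$) I would build a canonical (term) model. Call a set $x$ of formulas a \emph{prime theory} if it is $\vdash$-closed, proper ($\bot\notin x$), contains $\top$, and enjoys the disjunction property ($\phi\vee\psi\in x$ implies $\phi\in x$ or $\psi\in x$). The workhorse is an extension lemma: if $\phi\not\vdash\psi$ then there is a prime theory containing $\phi$ but not $\psi$ (Zorn's lemma, maximising $\vdash$-closed proper sets that contain $\phi$ and omit $\psi$; the disjunction property falls out at the maximum, and consistency of $\phi$ follows from $\phi\not\vdash\psi$). Let $X_c$ be the set of prime theories ordered by $\subseteq$, with $e_c=\{\chi:\vdash\chi\}$, composition $z\in x\circ_c y$ iff $\{\alpha\ast\beta:\alpha\in x,\ \beta\in y\}\subseteq z$, and valuation $\cV_c(p)=\{x:p\in x\}$. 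One checks $\cX_c$ is a BI frame (and in fact downwards closed): Commutativity from rule~14; Coherence and Unit Existence from rule~16 (using $\top\in z$ and $\alpha\ast\top\vdash\alpha$); Associativity from rule~15 together with the extension lemma. The crux is the \emph{Truth Lemma}, $x\models_{\cM_c}\phi$ iff $\phi\in x$, by induction on $\phi$ (the $\vee$ case uses the disjunction property, the $\rightarrow$ case uses the extension lemma together with the deduction rule~10); from it, $\phi\not\vdash\psi$ yields a prime theory separating $\phi$ from $\psi$, so $\phi\not\models\psi$.

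The main obstacle will be the $\ast$ clause of the Truth Lemma, and relatedly the Associativity check on $\cX_c$. Given $\phi_1\ast\phi_2\in x$ one must manufacture prime theories $y\ni\phi_1$ and $z\ni\phi_2$ with $\{\alpha\ast\beta:\alpha\in y,\ \beta\in z\}\subseteq x$, i.e.\ with $x\in y\circ_c z$: this is the bunched-logic analogue of the prime-filter-pair construction and requires an iterated Lindenbaum enumeration that saturates $y$ and $z$ simultaneously while maintaining, via rule~11 and the adjunction rules, the invariant that every product of their members stays in $x$. The $\sepimp$ clause is handled symmetrically through rule~13. Since the frame conditions and the Hilbert rules here coincide with the standard relational presentation of propositional BI, this construction is available off the shelf from \cite{OP99,Doc19}, and in the paper it suffices to instantiate their result; the sketch above only indicates where the real work sits.
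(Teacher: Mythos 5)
The paper itself contains no proof of this theorem: it is imported as a known result, cited to \cite{OP99} and \cite{Doc19}, so there is no in-paper argument to compare yours against. Your sketch follows the standard route those sources take — soundness by checking each Hilbert rule against the frame conditions (and your mapping is right: Commutativity for rule 14, Unit Existence plus Coherence for rule 16, Associativity with Kripke monotonicity for rule 15, the $\ast$/$\sepimp$ clauses for rules 12--13), and completeness via a prime-theory canonical model with the simultaneous-saturation argument for the $\ast$ case of the Truth Lemma — and you correctly conclude that in this paper one simply instantiates the cited result. The one point your sketch glosses is the canonical unit: with the paper's frame definition, which has a single unit $e$ with $x\in x\circ e$, your choice $e_c=\{\chi:\ \vdash\chi\}$ must itself be a prime theory, which requires the disjunction property of BI's theorems; the completeness arguments in \cite{Doc19} sidestep this by working with relational frames carrying a \emph{set} of units (or by going through algebraic duality), which does not literally match the single-unit frame stated here. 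Since you defer the details to the cited works, as the paper does, this is a presentational wrinkle to be aware of rather than a gap in your proposal.
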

		
		\subsection{BID: BI with Domains}
		
		As discussed before, restriction property is not intrinsic in a standard BI logic. 
		To preserve the restriction property of implication in the quantum setting, we need to introduce domains for both states (i.e. elements of a BI-frame) and atomic propositions in order to explicitly specify the (quantum) variables under consideration.    
		\begin{definition}[BID frame]
			\label{def BID frame}
			A BID frame (a BI frame with domains) is a \emph{downwards closed} BI frame $\cX = (X,\circ,\preceq,e)$ together with a domain function $\LTypeEs: X\rightarrow \wp(Y)$, where $Y$ is a nonempty set of (quantum) variables, that satisfies: 
			\begin{enumerate}
				\item (Monotonicity) $x\preceq y$ implies $\LTypeE{x}\subseteq\LTypeE{y}$;
				\item (Restriction) For any $S\subseteq Y$ and $x\in X$, there is exactly one $\rt{x}{S}\in X$ such that $\rt{x}{S}\preceq x$ and $\LTypeE{\rt{x}{S}} = S\cap\LTypeE{x}$;
				\item (Extension) For any $S\subseteq Y$ and $x\in X$ such that $\LTypeE{x}\subseteq S$, there exists $y\in X$ such that $x\preceq y$ and $\LTypeE{y} = S$;
				\item (Union) $z\in x\circ y$ implies $\LTypeE{z} = \LTypeE{x}\cup \LTypeE{y}$.
			\end{enumerate}
		\end{definition}	
		
		\begin{definition}[BID model]
			\label{def BID model}
			A BID model $\cM$ is a tuple $\cM = (X,\circ,\preceq,e,\LTypeEs, \LTypeFs,\cV)$, where $(X,\circ,\preceq, e,\LTypeEs,\cV)$ is a BID model, and  $\LTypeFs:\AP\rightarrow \wp(Y)$ is a domain function for atomic propositions, such that for any $x,y\in X$ and $p\in\AP$,
			\begin{enumerate}
				\item (Monotonicity) $x\preceq y$ and $x\in\cV(p)$ implies $y\in\cV(p)$;
				\item (Restriction) $x\in\cV(p)$ implies $\LTypeF{p}\subseteq\LTypeE{x}$ and $\rt{x}{\LTypeF{p}}\in\cV(p)$.
			\end{enumerate}
		\end{definition}
		
		Intuitively, $\LTypeFs(p)$ defines the minimal domain of atomic proposition $p$, in the sense that the domain of any state $x\in \cV(p)$ must contain $\LTypeF{p}$. The domain function $\LTypeFs$ can be extended from atomic propositions to all BI formula as follows:
		%[Basic domain of BI formulas]
		%	\label{def BI basic domain}
		%	The function $\LTypeFs$ maps a BI formula to its minimal domain, which is %
		%	defined inductively:
		\begin{enumerate}
			%\item if $\phi\equiv p\in\AP$, then $\LTypeF{\phi} = \LTypeF{p}$;
			\item if $\phi\equiv \top$ or $\bot$, then $\LTypeF{\phi} = \emptyset$;
			\item if $\phi\equiv \phi_1\wedge\phi_2,\ \phi_1\vee\phi_2,\ \phi_1\rightarrow\phi_2,\ \phi_1\ast\phi_2$, then $\LTypeF{\phi} = \LTypeF{\phi_1}\cup \LTypeF{\phi_2}$;
			\item if $\phi\equiv \phi_1\sepimp\phi_2$, then $\LTypeF{\phi} = \LTypeF{\phi_2}\backslash\LTypeF{\phi_1}$.
		\end{enumerate}
		
		Now we can define satisfaction of BI-formulas in BID models. Here, we adopt a basic idea in classical separation logic \cite{Bro07,IO01,ORY01,OHe07}: satisfaction $x\models\phi$ is only defined when $\LTypeE{x}\supseteq\LTypeF{\phi}$. 
		
		\begin{definition}[Satisfaction in BID models]
			\label{def satisfaction BID}
			Given a BID model $\cM = (X,\circ,\preceq,e,\cV,\LTypeEs,\LTypeFs)$. Let $x\in X$ and $\phi$ be a BI-formula with $\LTypeE{x}\supseteq \LTypeF{\phi}$. Then satisfaction relation $x\models_\cM \phi$ is defined by induction on $\phi$:
			\begin{enumerate}
				\item $x\models_\cM \phi$ is defined in the same way as in Definition \ref{def satisfaction BI nondeterministic} if $\phi=p, \top, \bot, \phi_1\wedge\phi_2, \phi_1\vee\phi_2, \phi_1\rightarrow\phi_2, \phi_1\ast\phi_2$; 
				\item $x\models_\cM \phi_1\sepimp\phi_2$ iff for all $x^\prime,y,z$ s.t. $x^\prime\succeq \rt{x}{\LTypeF{\phi_1\sepimp\phi_2}}$ and $z\in x^\prime\circ y$, $y\models_\cM \phi_1$ implies $z\models_\cM \phi_2$.
			\end{enumerate}
		\end{definition}
		
		Accordingly, judgment $\phi\models_\cM\psi$ now asserts that for every $x\in X$ such that $\LTypeE{x}\supseteq \LTypeF{\phi}\cup\LTypeF{\psi}$, whenever $x\models_\cM \phi$, it holds that $x\models_\cM \psi$. Moreover, $\phi\models\psi$ means that $\phi\models_\cM\psi$ holds for all BID models, and  $\phi$ is valid iff $\top\models\psi$ holds.
		% 	\begin{definition}[Validity]
		% 		\label{def validity BID}
		% 		A formula $\phi$ is valid in a BID model $\cM$ $(\models_\cM \phi)$ iff, for all $x\in X$ such that $\LTypeE{x}\supseteq \LTypeF{\phi}$, $x\models_\cM\phi$. A formula $\phi$ is valid $(\models \phi)$ iff, for all BID models $\cM$, $\models_\cM\phi$.
		% 	\end{definition}
		
		%     In particular, one can prove that, for any BID model $\cM$ and BI formulas $\phi, \psi$, $\phi\models_\cM\psi\triangleq\ \models_\cM \phi\rightarrow\psi$ iff for all $x\in X$ such that $\LTypeE{x}\supseteq \LTypeF{\phi}\cup\LTypeF{\psi}$, $x\models_\cM\phi$ implies $x\models_\cM\psi$.
		It is particularly important to see that monotonicity and the restriction property hold for satisfaction in BID models with the downwards closed   property.
		
		\begin{proposition}[Monotonicity and Restriction]
			\label{prop BI mon and res}
			Given a BID model $\cM$, 
			for all $x,y\in X$ and BI-formula $\phi$ such that $x\preceq y$ and $\LTypeF{\phi}\subseteq\LTypeE{x}$, 
			$x\models_\cM \phi$ if and only if $y\models_\cM \phi$.
		\end{proposition}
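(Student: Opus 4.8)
The plan is to induct on the structure of $\phi$, after first isolating an \emph{agreement lemma}: if $x\preceq y$ and $S\subseteq\LTypeE{x}$, then $\rt{x}{S}=\rt{y}{S}$. This uses only the Restriction axiom of Definition \ref{def BID frame}: both $\rt{x}{S}$ (which lies below $y$ by transitivity of $\preceq$) and $\rt{y}{S}$ are elements below $y$ whose domain equals $S=S\cap\LTypeE{y}$, so the uniqueness clause identifies them. Combined with the Monotonicity axiom, this lemma yields the working reformulation that for any formula $\psi$ and any $x$ with $\LTypeF{\psi}\subseteq\LTypeE{x}$ one has $x\models\psi$ iff $\rt{x}{\LTypeF{\psi}}\models\psi$ --- i.e.\ satisfaction of a subformula is a property of its restriction to its free variables. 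Throughout the induction the domain hypothesis propagates to subformulas, since $\LTypeF{\phi_i}\subseteq\LTypeF{\phi}\subseteq\LTypeE{x}\subseteq\LTypeE{y}$, so every intermediate satisfaction judgment is well-defined.

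For $\phi=p\in\AP$, the forward direction is exactly Monotonicity of $\cV$ in Definition \ref{def BID model}; conversely, from $y\in\cV(p)$ the Restriction clause gives $\rt{y}{\LTypeF{p}}\in\cV(p)$, the agreement lemma rewrites this as $\rt{x}{\LTypeF{p}}\in\cV(p)$, and Monotonicity lifts it back to $x\in\cV(p)$. The cases $\top,\bot$ are immediate, and $\wedge,\vee$ follow at once from the induction hypothesis. The case $\phi=\phi_1\sepimp\phi_2$ is essentially free: its clause in Definition \ref{def satisfaction BID} is stated entirely in terms of $\rt{x}{\LTypeF{\phi_1\sepimp\phi_2}}$ and a universally quantified triple $x',y,z$ that does not mention $x$ itself, so once the agreement lemma equates $\rt{x}{\LTypeF{\phi_1\sepimp\phi_2}}$ with $\rt{y}{\LTypeF{\phi_1\sepimp\phi_2}}$ the two satisfaction conditions are verbatim the same.

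For $\phi=\phi_1\ast\phi_2$, the forward direction is trivial: a decomposition $x\succeq x'\in u\circ v$ witnessing $x\models\phi$ also witnesses $y\models\phi$ by transitivity of $\preceq$. For the converse, starting from a witness $y\succeq y'\in u\circ v$ with $u\models\phi_1$ and $v\models\phi_2$, I would replace $u,v$ by the restrictions $u'=\rt{u}{\LTypeF{\phi_1}}$ and $v'=\rt{v}{\LTypeF{\phi_2}}$ (still satisfying $\phi_1,\phi_2$ by the induction hypothesis), use the Downwards-Closed axiom together with Union and uniqueness of restriction to obtain $\rt{y'}{\LTypeF{\phi}}\in u'\circ v'$, and then invoke the agreement lemma twice to rewrite $\rt{y'}{\LTypeF{\phi}}=\rt{y}{\LTypeF{\phi}}=\rt{x}{\LTypeF{\phi}}\preceq x$, producing the required decomposition below $x$.

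The genuinely delicate case, which I expect to be the main obstacle, is $\phi=\phi_1\rightarrow\phi_2$: its clause was \emph{not} rephrased via restriction and still quantifies over all $x'\succeq x$. The forward direction is again easy by transitivity of $\preceq$. For the converse, assume $y\models\phi_1\rightarrow\phi_2$ and take $x'\succeq x$ with $x'\models\phi_1$; the goal is $x'\models\phi_2$. The idea is to route everything through $\LTypeF{\phi_1}$ and $\LTypeF{\phi_2}$, both of which sit inside $\LTypeE{x}$: from $x'\models\phi_1$ the induction hypothesis gives $\rt{x'}{\LTypeF{\phi_1}}\models\phi_1$, the agreement lemma (along $x\preceq x'$ and $x\preceq y$) turns this into $\rt{y}{\LTypeF{\phi_1}}\models\phi_1$, and the induction hypothesis lifts it to $y\models\phi_1$; applying $y\models\phi_1\rightarrow\phi_2$ at $y$ itself (reflexivity of $\preceq$) yields $y\models\phi_2$, and running the same restriction--agreement--lifting chain back along $x\preceq y$ and $x\preceq x'$ delivers $x'\models\phi_2$. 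The work is entirely in orchestrating this chain and checking well-definedness at each step, which is where the domain hypotheses $\LTypeF{\phi_i}\subseteq\LTypeE{x}$ and the Downwards-Closed property of the frame are doing their job; crucially, no appeal to amalgamation or to existence of joint extensions is needed, which is precisely the payoff of building the domain assumption into the logic.
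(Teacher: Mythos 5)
Your proposal is correct and follows essentially the same route as the paper's proof, which splits the statement into a monotonicity lemma and a restriction lemma established by the same structural induction, using the uniqueness-of-restriction facts (your agreement lemma is Proposition \ref{pro-unique}(1)), the downwards-closed and union axioms for the $\ast$ case, and the same chain through $\rt{\cdot}{\LTypeF{\phi_1}}$ and $\rt{\cdot}{\LTypeF{\phi_2}}$ with reflexivity for the $\rightarrow$ case. One phrasing caveat: the ``working reformulation'' ($x\models\psi$ iff $\rt{x}{\LTypeF{\psi}}\models\psi$) does not follow from the agreement lemma plus the monotonicity axiom alone---it is an instance of the proposition itself---but since you only invoke it for proper subformulas via the induction hypothesis, the argument is sound as written.
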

		
		A proof system (for reasoning about validity of BI-logical formulas in BID models) can be obtained by modifying the proof system of BI-logic with appropriate domain assumptions. More precisely, it consists of rules 1--10 and 14--16 in Fig. \ref{fig HR for BI} together with rules 11$^\prime$--13$^\prime$ and 17 in Fig. \ref{fig HR for BID}. Note that rule 17 is introduced so that the ordinary conjunction $\wedge$ can be pulled into the separation conjunction $\ast$ (under certain domain conditions).
		\begin{figure}
			\small
			\begin{align*}
			&11^\prime. \ \frac{\xi\vdash\phi\quad\mu\vdash\psi\quad \LTypeF{\phi}\subseteq\LTypeF{\xi}\quad\LTypeF{\psi}\subseteq\LTypeF{\mu}}{\xi\ast\mu\vdash\phi\ast\psi} \quad
			12^\prime. \  \frac{\mu\ast\phi\vdash\psi\quad \LTypeF{\mu}\subseteq\LTypeF{\psi}\backslash\LTypeF{\phi}}{\mu\vdash\phi\sepimp\psi} \\[0.1cm]
			&13^\prime. \  \frac{\xi\vdash\phi\sepimp\psi\quad\mu\vdash\phi \quad \LTypeF{\psi}\backslash\LTypeF{\phi}\subseteq\LTypeF{\xi}\quad \LTypeF{\phi}\subseteq\LTypeF{\mu} }{\xi\ast\mu\vdash\psi} \quad
			17. \  \frac{\LTypeF{\mu}\subseteq\LTypeF{\phi}}{(\phi\ast\psi)\wedge\mu\vdash(\phi\wedge\mu)\ast\psi}
			\end{align*}
			\caption{Hilbert-style rules for BID. 
			}
			%In rules 11 -- 16, $\qm,\qmimp$  stands for $\ast,\sepimp$ or $\sd,\sdimp$.
			\label{fig HR for BID}
		\end{figure}
		
		\begin{theorem}[Soundness of BID]
			\label{thm sound BID}
			For any BI-formulas $\phi$ and $\psi$, if $\phi\vdash\psi$ is provable in the BID proof system, then $\phi\models\psi$ for BID models.
		\end{theorem}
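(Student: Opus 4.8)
The plan is to prove soundness by induction on the derivation of $\phi \vdash \psi$ in the BID proof system, checking that each axiom and each inference rule preserves validity over BID models: for a rule with premises $\phi_i \vdash \psi_i$ and conclusion $\phi \vdash \psi$, I would assume the induction hypotheses $\phi_i \models \psi_i$ and derive $\phi \models \psi$. The constant companion in every case is Proposition~\ref{prop BI mon and res} (monotonicity and restriction of satisfaction), which lets $x \models \chi$ transfer freely along $\preceq$ as long as the domain still contains $\free{\chi}$; this, together with the frame axioms and the domain axioms (Monotonicity, Restriction, Extension, Union) of Definitions~\ref{def BID frame} and~\ref{def BID model}, is what makes the domain bookkeeping manageable. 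A preliminary observation I would record is that under the stated side conditions the free-variable set of the conclusion never exceeds that of the premises, so the worlds over which validity is tested on the conclusion always have domains large enough to invoke the hypotheses.

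First I would dispatch the purely propositional rules (rules 1--10 of Fig.~\ref{fig HR for BI}): each merely unfolds the semantics of $\top,\bot,\wedge,\vee,\rightarrow$ at a single world $x$ whose domain already dominates the free variables of all formulas involved, so the hypotheses apply verbatim (e.g.\ in rule 6 the hypothesis $\phi \models \psi$ applies to $x$ since $\dom{x} \supseteq \free{\mu \wedge \phi} \cup \free{\psi} \supseteq \free{\phi} \cup \free{\psi}$). Next the structural $\ast$-axioms: rule 14 is immediate from Commutativity; rule 16 unfolds the $\ast$-clause and uses Unit Existence with the empty-domain world $e \models \top$ plus Proposition~\ref{prop BI mon and res}; rule $11^\prime$ unfolds the $\ast$-clause and transfers satisfaction of $\phi$ and $\psi$ along $\preceq$, its domain requirements being met because $\dom{y} \supseteq \free{\xi} \supseteq \free{\phi}$ and $\dom{z} \supseteq \free{\mu} \supseteq \free{\psi}$; and rule 15 is the one intricate structural case, where I would invoke the nondeterministic Associativity axiom to reassociate the composition and then use the Union axiom together with Proposition~\ref{prop BI mon and res} to lift the satisfaction of $\psi \ast \xi$ and then $\phi \ast (\psi\ast\xi)$ up to the relevant world.

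The core of the argument is the three rules touching $\sepimp$ together with the new exchange rule 17. For rule $12^\prime$ I would unfold the BID clause for $\sepimp$ from Definition~\ref{def satisfaction BID}: given $x \models \mu$, pick arbitrary $x', y, z$ with $x' \succeq \rt{x}{\free{\phi\sepimp\psi}}$, $z \in x' \circ y$, $y \models \phi$; since $\free{\mu} \subseteq \free{\psi}\setminus\free{\phi} = \free{\phi\sepimp\psi} \subseteq \dom{x'}$ and $\rt{x}{\free{\mu}} \preceq \rt{x}{\free{\phi\sepimp\psi}} \preceq x'$, the Restriction clause of Definition~\ref{def BID model} gives $x' \models \mu$, so $z \succeq z \in x' \circ y$ witnesses $z \models \mu \ast \phi$; the hypothesis $\mu \ast \phi \models \psi$ (applicable since $\dom{z} = \dom{x'} \cup \dom{y} \supseteq \free{\mu} \cup \free{\phi} \cup \free{\psi}$ by Union) then yields $z \models \psi$, which is exactly the clause to be proved. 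Rule $13^\prime$ runs the same way but additionally uses the Downwards Closed property to push $x_1 \in y \circ z$ down to some $x'' \preceq x_1$ with $x'' \in \rt{y}{\free{\phi\sepimp\psi}} \circ z$ before applying the $\sepimp$-hypothesis, and then lifts $x'' \models \psi$ back to $x$ via Proposition~\ref{prop BI mon and res}. Rule 17 hinges on the observation that if $x \models (\phi\ast\psi)\wedge\mu$ with $\free{\mu} \subseteq \free{\phi}$, and $x \succeq x_1 \in y \circ z$ with $y \models \phi$, $z \models \psi$, then $y \preceq x$ by Coherence and $\free{\mu} \subseteq \free{\phi} \subseteq \dom{y}$, so uniqueness in the Restriction axiom forces $\rt{y}{\free{\mu}} = \rt{x}{\free{\mu}}$; hence $x \models \mu$ propagates to $y \models \mu$, so $y \models \phi \wedge \mu$ and $x \succeq x_1 \in y \circ z$ re-witnesses $x \models (\phi\wedge\mu)\ast\psi$.

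The hard part will be the $\sepimp$ rules $12^\prime$ and $13^\prime$: because the BID semantics of $\sepimp$ tests only the restricted world $\rt{x}{\free{\phi\sepimp\psi}}$ rather than all of $x$, I will need the auxiliary fact that restrictions commute, $\rt{(\rt{x}{S})}{T} = \rt{x}{S \cap T}$ (itself a consequence of the uniqueness in the Restriction axiom of Definition~\ref{def BID frame}), and I must track, at every composition step, that the domains stay large enough for the satisfaction judgments in play to be defined. Getting these domain side conditions exactly right — rather than the underlying monoid algebra — is the delicate part; once $12^\prime$ and $13^\prime$ are in hand, the associativity case (rule 15) is a minor variation on the same theme and the remaining rules are routine.
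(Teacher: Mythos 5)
Your overall strategy—induction on the derivation, checking each rule of the BID system against the BID semantics using Proposition~\ref{prop BI mon and res} and the frame axioms—is the same as the paper's, and your treatment of the $\ast$/$\sepimp$ rules $11'$--$13'$, 14--17 is essentially correct (your detour through Downwards Closure in $13'$ is an unnecessary but harmless variation: the paper simply instantiates the $\sepimp$ clause at $y$ itself, since $y\succeq \rt{y}{\free{\phi\sepimp\psi}}$). However, your ``preliminary observation'' is stated backwards, and it hides a genuine gap. You claim that because the free-variable set of the conclusion never exceeds that of the premises, the worlds on which the conclusion is tested ``always have domains large enough to invoke the hypotheses.'' The implication runs the other way: when the conclusion's free-variable set is \emph{smaller} than the premises' (rules 5 and 9 of Fig.~\ref{fig HR for BI}), validity of the conclusion is only tested on worlds $x$ with $\dom{x}\supseteq\free{\phi}\cup\free{\psi_i}$ (resp.\ $\free{\mu}\cup\free{\psi}$), and such an $x$ need \emph{not} contain $\free{\psi_{2}}$ (resp.\ $\free{\phi}$), so the induction hypothesis is not even applicable at $x$ and your claim that rules 1--10 ``apply verbatim'' fails precisely there.

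The fix is the one the paper uses: for rules 5 and 9 invoke the Extension axiom of Definition~\ref{def BID frame} to obtain $y\succeq x$ with $\dom{y}$ containing all free variables of the premise, transfer $x\models\phi$ (resp.\ $x\models\mu$) up to $y$ by monotonicity, apply the induction hypothesis at $y$, and then pull the conclusion back down to $x$ by the restriction half of Proposition~\ref{prop BI mon and res}, which is legitimate because $\dom{x}$ does contain the free variables of the conclusion formula. You list Extension among your tools but never actually deploy it, and with your stated observation you would never see the need to; since this domain bookkeeping is exactly the delicate content of soundness in the setting with domains, you should make these two cases explicit. With that repair the rest of your outline goes through as in the paper.
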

		
		It should be pointed out that the interpretation of separation implication $\sepimp$ in BID (see Definition \ref{def satisfaction BID}) is different from that in BI (see Definition \ref{def satisfaction BI nondeterministic}). Thus, rules 11--13 are in general not sound for BID, and the domain assumptions in 11$^\prime$--13$^\prime$ are necessary for soundness of these rules. Moreover, soundness of rule 17 is ensured by the restriction property.
		
		%\todo{Completeness??: learn Thm 7.1 in \cite{Doc19}. However, I'm afraid it is not complete.}

		\subsection{2-BID}
		\label{sec BI extension}
		
		The BID models defined in the above subsection are still not strong enough for description of quantum states with entanglement resides between subsystems. 
		In order to distinguish separable quantum states and entangled quantum states, we use $\ast$ in BI-logic as an uncorrelated conjunction and introduce $\sd$ as a general (possibly entangled) conjunction (a detailed discussion why $\sd$ is employed can be found in Section \ref{sec app sub Quantum Interpretation of 2-BID Connectives}):
		
		\begin{definition}[Syntax of 2-BID]
			The 2-BID formulas are generated by the following syntax:
			$$
			\phi,\psi ::= p\in\AP\ |\ \top\ |\ \bot\ |\ \phi\wedge\psi\ |\ \phi\vee\psi\ |\ \phi\rightarrow\psi\ |\ \phi\ast\psi\ |\ \phi\sepimp\psi\ |\ \phi \sd \psi\ |\ \phi \sdimp \psi.
			$$
		\end{definition}
		%The set of all 2-BID formulas is denoted by $\form$. 
		Entanglement can now be expressed as a 2-BID formula of the form $(\phi\sd\psi)\wedge\neg\chi$, where $\chi$ describes the probabilistic combination of a family of formulas $\phi_i\ast\psi_i$ with $\phi_i$ and $\psi_i$ depicting certain properties of the subsystems.  
		To define the semantics of 2-BID formulas, we propose a 2-BID frame as a tuple $\cX = (X,\circ,\bullet,\preceq,e,\LTypeEs: X\rightarrow \wp(Y))$, where both $(X,\circ,\preceq,e,\LTypeEs)$ and $(X,\bullet,\preceq,e,\LTypeEs)$ are BID frames and they are related by the following condition: 
		\begin{itemize}
			\item[] (Weakening) $z\in x\circ y\rightarrow z\in x\bullet y$.
		\end{itemize}
		%The domain function $\LTypeEs$ of density operators, projections and quantum predicate are already defined in Section \ref{sec-def}; formally, $\LTypeEs: \cD \rightarrow \wp(\var)$ for density operators, $\LTypeEs: \AP \rightarrow \wp(\var)$ for atomic propositions, and  $\LTypeEs: \cO \rightarrow \wp(\var)$ for quantum predicate.
		
		Various syntactic and semantic notions defined in the previous subsections can be straightforwardly generalised to 2-BID. First, a domain function $\LTypeFs: \AP \rightarrow \wp(\vars)$ for atomic propositions can be extended to all 2-BID formulas with the following additional clauses for $\sd$ and $\sdimp$:
		\begin{enumerate}
			\item if $\phi\equiv \phi_1\sd\phi_2$, then $\LTypeF{\phi} = \LTypeF{\phi_1}\cup \LTypeF{\phi_2}$;
			\item if $\phi\equiv \phi_1\sdimp\phi_2$, then $\LTypeF{\phi} = \LTypeF{\phi_2}\backslash\LTypeF{\phi_1}$.
		\end{enumerate}
		Next, the satisfaction relation $x\models_\cM\phi$ for states $x$ in a 2-BID model $\cM$ with $\LTypeE{x}\supseteq\LTypeF{\phi}$ can be added by introducing the following additional clauses:
		\begin{enumerate}
			\item $x\models \phi_1\sd\phi_2$ iff exists $x^\prime,x_1,x_2$ s.t. $x\succeq x^\prime\in x_1\bullet x_2$, $x_1\models \phi_1$ and $x_2\models\phi_2$;
			\item $x\models \phi_1\sdimp\phi_2$ iff for all $x^\prime,x_1,x_2$ s.t. $x^\prime\succeq \rt{x}{\LTypeF{\phi_1\sdimp\phi_2}}$ and $x_2\in x^\prime\bullet x_1$, $x_1\models \phi_1$ implies $x_2\models \phi_2$.
		\end{enumerate}
		We write $\models\phi$ when $\phi$ holds for all state $x$ such that $\LTypeE{x}\supseteq\LTypeF{\phi}$. As $(\cD,\circ,\preceq,1,\LTypeEs)$ and $(\cD,\bullet,\preceq,1,\LTypeEs)$ are both BID frames, all properties of BID models automatically hold for 2-BID models. In particular, Proposition \ref{prop BI mon and res} (monotonicity and the restriction property) is still true for 2-BID.
		
		A Hilbert-style proof system for 2-BID can be introduced as a combination of two subsystems, one for $\ast,\sepimp$ and one for $\sd,\sdimp$, related by a rule defining the entailment between $\ast$ and $\sd$. More precisely, it comprises all BID rules for (i.e. rules 1-10, 11$^\prime$--13$^\prime$ and 14--17 in Fig. \ref{fig HR for BI}  and \ref{fig HR for BID}) and their variants for $\sd,\sdimp$ as well as the following additional rule: 
		$$
		({\rm Conjunction\ Weakening}).\quad \frac{}{\phi\ast\psi\vdash\phi\sd\psi}.
		$$ 
		%The deduction theorem also holds for 2-BID; that is, for any 2-BID formulas $\phi$ and $\psi$, $\phi\vdash\psi$ is provable if and only if $\phi\rightarrow\psi$ is provable. 
		The soundness of this proof system for 2-BID is a direct corollary of Theorem \ref{thm sound BID} together with a trivial checking of the rule (Conjunction Weakening). For convenience, we present several useful derived rules in the following:
		\begin{proposition}
			\label{prop useful HR}
			\begin{enumerate}
				\item If $\models\phi\rightarrow\psi$ and $\models\psi\rightarrow\mu$, then $\models\phi\rightarrow\mu$.
				\item $\models (\phi\wedge\psi)\ast\mu\rightarrow(\phi\ast\mu)\wedge(\psi\ast\mu)$, \quad
				$\models (\phi\wedge\psi)\sd\mu\rightarrow(\phi\sd\mu)\wedge(\psi\sd\mu);$
				\item $\models \phi\ast\psi\rightarrow\phi\sd\psi, \quad \models \phi\sd\psi\rightarrow\phi\wedge\psi;$
				\item $\models \phi\ast\psi\rightarrow\phi, \quad \models \phi\sd\psi\rightarrow\phi$;
				\item If $\models\phi\leftrightarrow\psi$ and $\LTypeF{\phi} = \LTypeF{\psi}$, then for any $\mu$, $\models\mu\leftrightarrow\mu[\phi/\psi]$ where $\mu[\phi/\psi]$ is obtained by replacing all $\psi$ in $\mu$ by $\phi$.
			\end{enumerate}
		\end{proposition}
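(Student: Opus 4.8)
The plan is to derive all five items directly from the 2-BID semantics of Definition~\ref{def satisfaction BID} (with the additional clauses for $\sd$ and $\sdimp$), together with the structural axioms of a 2-BID frame --- in particular Coherence, Unit Existence, Weakening ($z\in x\circ y$ implies $z\in x\bullet y$) and, for the quantum instance, the Monotonicity, Restriction, Extension and Union axioms of Definition~\ref{def BID frame} --- and with Proposition~\ref{prop BI mon and res} (monotonicity and restriction of satisfaction). Items (2)--(4) could alternatively be read off the 2-BID Hilbert system (rules $1$--$17$ of Figs.~\ref{fig HR for BI}--\ref{fig HR for BID}, their $\sd,\sdimp$ variants, and Conjunction Weakening) and then transported to validity by soundness (the corollary of Theorem~\ref{thm sound BID}), using that rules $6$ and $10$ turn $\phi\vdash\psi$ into $\top\vdash\phi\rightarrow\psi$; I would take whichever route is shorter per item. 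The recurring point of care is the \emph{domain discipline}: in a BID model $x\models\phi$ is declared only when $\LTypeE{x}\supseteq\LTypeF{\phi}$, so whenever an argument passes through an auxiliary formula whose free variables need not lie in $\LTypeE{x}$, I must first invoke the Extension axiom to move to a state of larger domain, argue there, and descend again via Proposition~\ref{prop BI mon and res}.

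For item (1) I would fix a 2-BID model and a state $x$ with $\LTypeE{x}\supseteq\LTypeF{\phi}\cup\LTypeF{\mu}$, take an arbitrary $x'\succeq x$ with $x'\models\phi$, use Extension to get $y\succeq x'$ with $\LTypeE{y}\supseteq\LTypeF{\phi}\cup\LTypeF{\psi}\cup\LTypeF{\mu}$, push $x'\models\phi$ up to $y\models\phi$ by monotonicity, apply $\models\phi\rightarrow\psi$ at $y$ (using $y\succeq y$) to obtain $y\models\psi$, then $\models\psi\rightarrow\mu$ to obtain $y\models\mu$, and finally restrict back down to $x'\models\mu$; hence $x\models\phi\rightarrow\mu$. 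For item (4), an $x$ with $x\models\phi\ast\psi$ exhibits $x\succeq x'\in x_1\circ x_2$ with $x_1\models\phi$; Coherence gives $x'\succeq x_1$, so $x\succeq x_1$, and since $\LTypeF{\phi}\subseteq\LTypeE{x_1}$, monotonicity yields $x\models\phi$ --- the $\sd$ case is verbatim with $\bullet$ for $\circ$. For item (3), the first implication is the Weakening axiom transported through the definitions of $\ast$ and $\sd$ (the same witness triple certifies $\phi\sd\psi$), and the second follows from (4) for $\sd$ applied to both conjuncts, together with the semantic commutativity of $\sd$ (from commutativity of $\bullet$) and the definition of $\wedge$. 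For item (2) I would unfold $x\models(\phi\wedge\psi)\ast\mu$, split the witness $x_1\models\phi\wedge\psi$ into $x_1\models\phi$ and $x_1\models\psi$, and reuse the same decomposition $x\succeq x'\in x_1\circ x_2$ to certify both $x\models\phi\ast\mu$ and $x\models\psi\ast\mu$; the $\sd$ clause is identical with $\bullet$.

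Item (5) I would prove by induction on the structure of $\mu$, carrying the side invariant $\LTypeF{\mu}=\LTypeF{\mu[\phi/\psi]}$ (an easy separate induction from $\LTypeF{\phi}=\LTypeF{\psi}$). The base cases are $\mu\equiv\psi$ (where $\mu[\phi/\psi]\equiv\phi$ and the claim is the hypothesis read backwards) and $\mu$ atomic with $\mu\not\equiv\psi$, or $\mu\in\{\top,\bot\}$ (where $\mu[\phi/\psi]\equiv\mu$). The inductive step reduces to a \emph{congruence lemma}: if $\models\alpha\leftrightarrow\alpha'$, $\models\beta\leftrightarrow\beta'$, $\LTypeF{\alpha}=\LTypeF{\alpha'}$ and $\LTypeF{\beta}=\LTypeF{\beta'}$, then $\models(\alpha\wedge\beta)\leftrightarrow(\alpha'\wedge\beta')$ and likewise for $\vee,\rightarrow,\ast,\sepimp,\sd,\sdimp$. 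Each case is checked directly from the semantics: for $\wedge,\vee,\rightarrow$ it is immediate once one notes that the equal-domain hypotheses guarantee that any state evaluating one side has domain large enough to evaluate the other; for $\ast,\sd$ one transfers a witness triple by re-satisfying $\alpha',\beta'$ on $x_1,x_2$ (whose domains contain $\LTypeF{\alpha}=\LTypeF{\alpha'}$, resp.\ $\LTypeF{\beta}=\LTypeF{\beta'}$); and for $\sepimp,\sdimp$ one uses additionally that $\LTypeF{\alpha\sepimp\beta}=\LTypeF{\beta}\setminus\LTypeF{\alpha}=\LTypeF{\alpha'\sepimp\beta'}$, so the truncation $\rt{x}{\LTypeF{\alpha\sepimp\beta}}$ agrees on both sides, and the Union axiom bounds the domain of $z\in x''\circ y$ from below so that $z\models\beta$ transfers to $z\models\beta'$. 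Feeding the inductive hypotheses for $\mu_1,\mu_2$ into this lemma closes the inductive step for every binary connective.

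I expect the main obstacle to be not any isolated step but the cumulative domain bookkeeping: the BID satisfaction relation is partial, so every transitivity-style move in item (1) and in the $\rightarrow,\sepimp,\sdimp$ cases of item (5) needs an Extension--then--Restriction sandwich, and the substitution lemma works precisely because $\LTypeF{\phi}=\LTypeF{\psi}$ is exactly what keeps the free-variable side conditions implicit in the $\sepimp/\sdimp$ clauses benign --- drop that hypothesis and $\sepimp,\sdimp$ cease to be congruences. The $\sepimp$ and $\sdimp$ cases of item (5), where $\LTypeF{\cdot}$ involves a set difference, are where a slip is most likely; everything else is a routine unfolding with no deeper content, the proposition being essentially a toolbox of derived rules for later use.
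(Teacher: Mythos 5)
Your proposal is correct, but it takes a different route from the paper for items (1)--(4): you verify everything directly against the 2-BID semantics (witness triples, Coherence, Weakening, the Extension--then--Restriction sandwich via Proposition~\ref{prop BI mon and res}), whereas the paper derives these items inside the Hilbert system and transports them to validity by soundness --- item (1) ``trivially using rules 6, 9 and 10'', item (2) from rule $11^\prime$ plus rule 4, the projection $\phi\sd\psi\rightarrow\phi$ from rule $11^\prime$ ($\phi\sd\psi\vdash\phi\sd\top$) plus rule 16, and $\phi\ast\psi\rightarrow\phi\sd\psi$ from the Conjunction Weakening rule, with $\phi\sd\psi\rightarrow\phi\wedge\psi$ then following from the projections and rule 4. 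The proof-theoretic route is shorter where it applies, but your semantic argument is arguably more self-contained for item (1): its hypotheses $\models\phi\rightarrow\psi$, $\models\psi\rightarrow\mu$ are semantic, so reducing it to Hilbert rules strictly speaking presupposes replaying the soundness argument for those rules (or completeness, which the paper does not establish for BID), and your explicit Extension/monotonicity/restriction chain is exactly that replay done once and cleanly; your observation that the extension step cannot be skipped (since $\LTypeF{\psi}$ need not lie in the domain of the given state) is the one genuine subtlety here and you handle it as the paper handles the analogous issue in its soundness proofs of rules 5 and 9. For item (5) the two proofs coincide in spirit: the paper notes $\sem{\phi}=\sem{\psi}$ and declares the induction on $\mu$ straightforward, omitting it; your congruence lemma, with the check that $\LTypeF{\alpha\sepimp\beta}=\LTypeF{\beta}\setminus\LTypeF{\alpha}$ is preserved under equal-domain replacement so the truncation $\rt{x}{\LTypeF{\cdot}}$ in the $\sepimp/\sdimp$ clauses is unchanged, is precisely the content the paper leaves implicit (the appeal to the Union axiom there is harmless but not needed, since $z\models\beta$ already forces $\LTypeF{\beta}\subseteq\LTypeE{z}$).
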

		
		\section{Quantum Interpretation of 2-BID Logic}
		\label{sec-q-interpret 2BID}
		
		As said before, 2-BID logic is designed as the assertion language of our quantum separation logic. More precisely, it is used to describe properties of the states of quantum programs. An abstract semantics of 2-BID was defined in the previous section in terms of 2-BID frames. In this section, this semantics will be concretised by defining a quantum frame.   
		
		\subsection{2-BID Frame of Quantum States}
		
		Basically, we consider the quantum states over specific registers as resources. Then two kinds of conjunction should be introduced to model combinations of spatially separate quantum resources (quantum states over disjoint registers):  a uncorrelated conjunction ``tensor product'' $\circ$ and a general conjunction ``coupling'' $\bullet$. Formally, they are defined as follows:
		
		\begin{definition}
			\label{def tensor coupling 2BID}
			The binary function $\circ$ and $\bullet: \cD\times\cD\rightarrow \wp(\cD)$ mapping each  pair of quantum states to a set of quantum states are defined by: 
			\begin{enumerate}
				\item $\rho_1\circ\rho_2\triangleq\left\{\rho_1\otimes\rho_2\right\}$ if $\type{\rho_1}\cap \type{\rho_2}=\emptyset$; otherwise, $\rho_1\circ\rho_2\triangleq\emptyset$;
				\item $\rho_1\bullet\rho_2\triangleq\big\{\rho\in\cD(\type{\rho_1}\cup \type{\rho_2})\ \big| \rt{\rho}{\type{\rho_1}} = \rho_1, \rt{\rho}{\type{\rho_2}} = \rho_2 \big\}$ if $\type{\rho_1}\cap \type{\rho_2}=\emptyset$; otherwise, $\rho_1\bullet\rho_2\triangleq\emptyset$;
			\end{enumerate}		
			where $\types$ is the domain function which specifies  the quantum register that a quantum state lies on. 
		\end{definition}	
		
		The functions $\circ$ and $\bullet$ are used to collect the tensor product and the couplings of two quantum states,  respectively, whenever they exist. If the domains of $\rho_1$ and $\rho_2$ have a nonempty overlap, then their tensor product and couplings are not well-defined and thus $\rho_1\circ\rho_2$ and $\rho_1\bullet\rho_2$ return the empty set. As a comparison, $\circ$ defined here and in Definition \ref{def tensor coupling} are the same in principle.  
		
		A partial order over quantum states considered as resources is the same as we defined in main text (Section \ref{sec-q-interpret}, Definition \ref{def partial order quantum state}).
		The partial order $\preceq$ is preserved under restriction:
		\begin{proposition}
			\label{pro-order-rt}
			\begin{enumerate}
				\item For any $S\subseteq \vars$ and $\rho\in\cD$,  $\rt{\rho}{S}\preceq\rho$. Indeed,  $\rt{\rho}{S}$ is the unique quantum state with domain  $S\cap\type{\rho}$ and $\rt{\rho}{S}\preceq\rho$.
				\item For any $S\subseteq \vars$ and $\rho,\rho^\prime\in\cD$, if $\rho\preceq\rho^\prime$, then $\rt{\rho}{S}\preceq\rt{\rho^\prime}{S}$.
			\end{enumerate}
		\end{proposition}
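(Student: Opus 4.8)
The plan is to reduce both claims to the two composition laws for partial traces recorded in Proposition \ref{pro-rt}, together with the bookkeeping identity $\dom{\rt{\rho}{S}} = S\cap\dom{\rho}$ that is built into the definition of restriction. Throughout I will unfold $\rho\preceq\rho^\prime$ as ``$\dom{\rho}\subseteq\dom{\rho^\prime}$ and $\rho = \rt{\rho^\prime}{\dom{\rho}}$'', and use that $\tr_{\dom{\rho}\backslash S}$ discards only variables actually present in $\rho$, so that $\rt{\rho}{S} = \rt{\rho}{S\cap\dom{\rho}}$ directly from the definition.

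For part (1), I would first observe that $\dom{\rt{\rho}{S}} = S\cap\dom{\rho}\subseteq\dom{\rho}$, so the domain inclusion required for $\rt{\rho}{S}\preceq\rho$ holds automatically. It then remains to check $\rt{\rho}{S} = \rt{\rho}{\dom{\rt{\rho}{S}}} = \rt{\rho}{S\cap\dom{\rho}}$, which is immediate since $\dom{\rho}\backslash(S\cap\dom{\rho}) = \dom{\rho}\backslash S$. Uniqueness is then forced: if $\sigma\in\cD$ has $\dom{\sigma} = S\cap\dom{\rho}$ and $\sigma\preceq\rho$, then by the very definition of $\preceq$ we have $\sigma = \rt{\rho}{\dom{\sigma}} = \rt{\rho}{S\cap\dom{\rho}} = \rt{\rho}{S}$.

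For part (2), assume $\rho\preceq\rho^\prime$, so $\dom{\rho}\subseteq\dom{\rho^\prime}$ and $\rho = \rt{\rho^\prime}{\dom{\rho}}$. The domain side of $\rt{\rho}{S}\preceq\rt{\rho^\prime}{S}$ reads $S\cap\dom{\rho}\subseteq S\cap\dom{\rho^\prime}$, which is clear. For the state side I would compute $\rt{(\rt{\rho^\prime}{S})}{\dom{\rt{\rho}{S}}} = \rt{(\rt{\rho^\prime}{S})}{S\cap\dom{\rho}}$, invoke the commuting/intersecting law of Proposition \ref{pro-rt}(1) to rewrite this as $\rt{\rho^\prime}{S\cap\dom{\rho}} = \rt{(\rt{\rho^\prime}{\dom{\rho}})}{S}$, and finally substitute $\rt{\rho^\prime}{\dom{\rho}} = \rho$ to obtain $\rt{\rho}{S}$, which is exactly the second condition of $\rt{\rho}{S}\preceq\rt{\rho^\prime}{S}$.

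I do not anticipate a genuine obstacle: the content is entirely the monotonicity and composition behaviour of the partial trace, i.e.\ Proposition \ref{pro-rt}. The only place needing care is keeping the domain-intersection terms straight---distinguishing $S$ from $S\cap\dom{\rho}$ and from $S\cap\dom{\rho^\prime}$---so that each application of Proposition \ref{pro-rt}(1) uses the correct index sets; once that is done, both parts collapse to one-line identities.
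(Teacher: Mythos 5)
Your proof is correct and matches what the paper intends: the paper states Proposition \ref{pro-order-rt} without an explicit proof, treating it (like Proposition \ref{pro-rt}, whose proof is labelled trivial) as an immediate consequence of the definition of restriction via partial trace and of the order $\preceq$, and your verification---domain bookkeeping plus the composition law $\rt{(\rt{\rho^\prime}{S_2})}{S_1}=\rt{\rho^\prime}{S_1\cap S_2}$---is exactly that routine argument, including the correct uniqueness step forced by the definition of $\preceq$. Your careful handling of $S$ versus $S\cap\dom{\rho}$ is the only delicate point, and you got it right.
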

		
		Combining all of the ingredients defined above, we obtain: 
		
		\begin{proposition} $(\cD,\circ,\bullet, \preceq,1,\types)$ forms a 2-BID frame, where scalar number $1$ is understood as a state over the empty register, and $\types$ denotes for the domain of quantum states. \end{proposition}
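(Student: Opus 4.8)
The plan is to peel off the definition of a 2-BID frame layer by layer: I must show that both $(\cD,\circ,\preceq,1,\doms)$ and $(\cD,\bullet,\preceq,1,\doms)$ are downwards-closed (nondeterministic) BID frames in the sense of Definition~\ref{def BID frame}, and then check the single linking axiom, Weakening, namely $z\in\rho_1\circ\rho_2\Rightarrow z\in\rho_1\bullet\rho_2$. The last point is immediate from Definition~\ref{def tensor coupling 2BID}: $\rho_1\circ\rho_2=\{\rho_1\otimes\rho_2\}$, and $\rho_1\otimes\rho_2$ has $\rho_1$ and $\rho_2$ as its reduced states over the disjoint domains $\dom{\rho_1}$ and $\dom{\rho_2}$, so it belongs to $\rho_1\bullet\rho_2$.

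For the $\circ$-component, essentially everything has already been verified in Proposition~\ref{prop quantum BI frame}: commutativity, associativity, and unit existence of the (domain-tagged) tensor product are routine, coherence holds because $\rt{(\rho_1\otimes\rho_2)}{\dom{\rho_1}}=\rho_1$ so $\rho_1\otimes\rho_2\succeq\rho_1$, and the downwards-closed clause is exactly the ``compatible with $\preceq$'' condition of Definition~\ref{def BI frame}, which reduces to the identity $\rt{(\rho_1\otimes\rho_2)}{S}=\rt{\rho_1}{S}\otimes\rt{\rho_2}{S}$ from Proposition~\ref{pro-rt}. The domain-function axioms are equally direct: Monotonicity is built into Definition~\ref{def partial order quantum state}, Restriction (existence and uniqueness of $\rt\rho S$) is Proposition~\ref{pro-order-rt}, Extension follows by tensoring $\rho$ with the completely mixed state on $S\setminus\dom\rho$, and Union holds since $\dom{\rho_1\otimes\rho_2}=\dom{\rho_1}\cup\dom{\rho_2}$.

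The substance is the $\bullet$-component, where $\rho_1\bullet\rho_2$ is the set of all states over $\dom{\rho_1}\cup\dom{\rho_2}$ with marginals $\rho_1$ and $\rho_2$ (empty when the domains overlap). Commutativity, unit existence, and coherence are as easy as before (coherence: any $\rho\in\rho_1\bullet\rho_2$ has $\rt\rho{\dom{\rho_1}}=\rho_1$, hence $\rho\succeq\rho_1$); Union is immediate from the definition of $\bullet$, and the domain axioms Monotonicity, Restriction, Extension are shared with the $\circ$ case. For the downwards-closed property, given $z\in\rho_1\bullet\rho_2$, $\rho_1'\preceq\rho_1$, $\rho_2'\preceq\rho_2$, I would take $z'\triangleq\rt{z}{\dom{\rho_1'}\cup\dom{\rho_2'}}$ and use Propositions~\ref{pro-rt} and~\ref{pro-order-rt} to see $\rt{z'}{\dom{\rho_i'}}=\rt{\rho_i}{\dom{\rho_i'}}=\rho_i'$ for $i=1,2$, so $z'\in\rho_1'\bullet\rho_2'$ and $z'\preceq z$.

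The associativity clause for $\bullet$ is the step I expect to be the main obstacle, since it mixes $\succeq$ with nondeterministic composition and so forces a careful choice of witnesses. Unwinding it: from $t\in\rho_x\bullet\rho_y$, $t'\succeq t$, and $w\in t'\bullet\rho_z$, I must exhibit $s\in\rho_y\bullet\rho_z$, $s'\succeq s$, and $w'\preceq w$ with $w'\in\rho_x\bullet s'$. The key observation is that everything can be read off the single joint state $w$: put $s\triangleq\rt{w}{\dom{\rho_y}\cup\dom{\rho_z}}$, $s'\triangleq\rt{w}{\dom w\setminus\dom{\rho_x}}$, and $w'\triangleq w$. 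Using $\dom{\rho_x}\cap\dom{\rho_y}=\emptyset$, $\dom{t'}\cap\dom{\rho_z}=\emptyset$, and the chain $\dom{\rho_y}\subseteq\dom t\subseteq\dom{t'}\subseteq\dom w$, the composition law for partial traces gives $\rt{s}{\dom{\rho_y}}=\rho_y$ and $\rt{s}{\dom{\rho_z}}=\rho_z$, hence $s\in\rho_y\bullet\rho_z$; then $\dom{s'}\supseteq\dom{\rho_y}\cup\dom{\rho_z}=\dom s$ and $\rt{s'}{\dom s}=s$, so $s'\succeq s$; and finally $\dom{\rho_x}\cap\dom{s'}=\emptyset$, $\dom{\rho_x}\cup\dom{s'}=\dom w$, $\rt{w}{\dom{\rho_x}}=\rho_x$, $\rt{w}{\dom{s'}}=s'$, so $w=w'\in\rho_x\bullet s'$. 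Once these domain-bookkeeping identities are pinned down the verification goes through; the only real care lies in tracking which domains are disjoint and the nesting of successive restrictions.
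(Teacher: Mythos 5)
Your proof is correct, and it supplies detail the paper itself omits: the paper states this proposition without proof (the supplementary material only verifies the $\circ$-only frame $(\cD,\circ,\preceq,1)$ and the partial-trace facts in Propositions~\ref{pro-rt} and~\ref{pro-order-rt}, treating the 2-BID extension as routine). Your decomposition --- reuse the earlier verification for the $\circ$-component, check the domain axioms via restriction/extension of states, confirm Weakening from the definitions of $\circ$ and $\bullet$, and then handle the genuinely new content, namely downwards-closure and nondeterministic associativity for $\bullet$ --- is exactly the argument the paper leaves implicit, and your witnesses for $\bullet$-associativity ($s=\rt{w}{\dom{\rho_y}\cup\dom{\rho_z}}$, $s'=\rt{w}{\dom{w}\setminus\dom{\rho_x}}$, $w'=w$) check out: the domain disjointness conditions and the marginal identities $\rt{w}{\dom{\rho_x}}=\rho_x$, $\rt{s}{\dom{\rho_y}}=\rho_y$, $\rt{s}{\dom{\rho_z}}=\rho_z$ all follow from $\rt{t'}{\dom{t}}=t$ and Proposition~\ref{pro-rt}. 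One cosmetic remark: for the Extension axiom you invoke the completely mixed state on $S\setminus\dom{\rho}$, which presupposes finite-dimensional types; tensoring with any fixed state (e.g.\ $|0\rangle\langle 0|$) works identically and avoids even that caveat.
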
	
		
		\subsection{Atomic Propositions about Quantum States}
		\label{sec free choice of AP 2BID}
		
		Now we can interpret 2-BID logic in the quantum frame $(\cD,\circ,\bullet, \preceq,1,\types)$. As a common practice, we need to choose different sets of atomic propositions in different applications of our 2-BID logic. But the following assumptions about atomic propositions should be satisfied:
		\begin{enumerate}
			\item The domain function for atomic propositions $\Vs:\AP\rightarrow\wp(\vars)$ is defined so that for each atomic proposition $p\in\AP$, $\Vs(p)$ is a family of sets of quantum variables;  
			\item The interpretation $\sem{-}: \AP\rightarrow\wp(\cD)$ of atomic propositions is given so that for each atomic proposition $p\in\AP$, $\sem{p}$ is a set of quantum states that is upward-closed and closed under restriction: for any $\rho,\sigma\in\cD$ and $p\in\AP$,
			\begin{enumerate}
				\item $\rho\preceq \sigma$ and $\rho\in\sem{p}$ implies $\sigma\in\sem{p}$;
				\item $\rho\in\sem{p}$ implies $\V{p}\subseteq\type{\rho}$ and  $\rt{\rho}{\V{p}}\in\sem{p}$.
			\end{enumerate}
		\end{enumerate}
		
		\subsection{Quantum Interpretation of 2-BID Connectives}
		\label{sec app sub Quantum Interpretation of 2-BID Connectives}
		We saw in Section \ref{sec BI extension} that the main difference between BI logic and 2-BID logic comes from spatial (separating) conjunctions and implications. Now we can further examine the difference in terms of their quantum interpretations with the domain assumption.    
		
		{\vskip 3pt}
		
		\noindent\textbf{Spatial Conjunctions}: Only one spatial conjunction is needed in both classical and probabilistic separation logic. However, entanglement between quantum systems forces us to consider two different spatial conjunctions. 
		
		First, independence between registers in probabilistic separation logic \cite{BHL19} can be naturally generalised into the quantum setting: for two formulas $\phi_1$ and $\phi_2$ with disjoint domains, $\phi_1\ast\phi_2$ holds in quantum state $\rho$ if $\rho$ can be split into two uncorrelated states $\rho_1$ and $\rho_2$ that satisfy $\phi_1$ and $\phi_2$, respectively. Formally, $\rho\models\phi_1\ast\phi_2$ if and only if there exist two states $\rho_1$ and $\rho_2$ such that $\rho\succeq\rho_1\circ\rho_2$ and $\rho_i\models\phi_i$ for $i=1,2$ (see Definition \ref{def tensor coupling 2BID} for operation $\circ$). 
		
		To enable local reasoning in the presence of entanglement, we here introduce an additional spatial conjunction $\sd $. One might question why no $\sd$ is employed in main text (see Section \ref{sec-q-interpret}), basically there are two reasons:
		1. $\sd$ can be encoded by $\wedge$ and a side condition for free variables, i.e., $\phi\sd\psi\leftrightarrow\phi\wedge\psi$ if we assume $\free {\phi}\cap\free{\psi} = \emptyset$; 2. the side condition for free variables can be easily checked syntactically if no subscripting/aliasing is allowed in program logic; however, as we point out in Conclusion (Section \ref{sec conclusion}), we aim to verify programs with subscripting/aliasing, which would make checking side conditions for free variables difficult, at least syntacticlly difficult. Therefore, we introduce $\sd$ rather than use $\wedge$.
		For two formulas $\phi_1$ and $\phi_2$ with disjoint domains, a quantum state $\rho\in\cD(S)$ satisfies $\phi_1\sd\phi_2$ if its restrictions on two disjoint subsystems $S_1,S_2\subseteq S$, $\rt{\rho}{S_1}$ and $\rt{\rho}{S_2}$ satisfy $\phi_1$ and $\phi_2$, respectively. 
		Note that unlike in the case of independent conjunction $\ast$, here state $\rho$ can be entangled. This enables $\sd$ to be used in a situation where a program can be divided into several parts acting on different registers, but its input is often entangled between these subsystems. In fact, many of the existing quantum algorithms are designed in such a way. 
		%{\color{blue}An additional consideration for $\sd$ is the possible extension of our logic when subscripting/aliasing is allowed in quantum programming languages, since subscripts and aliases are widely used for large quantum programs in most of the current quantum programming platforms. Similar to the motivation of pointer separation logic, the domain side conditions can no longer be syntactically checked when subscripting/aliasing is allowed and then the separation conjunction $\sd$ is helpful and necessary: spatial separation can be encoded in $\sd$ and thus some involving domain checking may be handled easier elsewhere.}
		
		\vspace{0.15cm}
		
		\noindent\textbf{Spatial Implications}: Usually, an implication is linked to its corresponding conjuction through a Galois connection. In BI-logic, the semantics of spatial $\sepimp$ corresponding to spatial conjunction $\ast$ is defined as follows:
		\begin{equation}
		\label{def s-implication}
		\rho\models \phi_1\sepimp\phi_2\ \text{iff\ for\ all}\ \rho^\prime,\rho_1,\rho_2\ \text{s.t.}\ \rho\preceq \rho^\prime\ \text{and}\ \rho_2\in \rho^\prime\circ \rho_1:\ \rho_1\models \phi_1\ \text{implies}\ \rho_2\models \phi_2 %\\
		%&  \rho\models \phi_1\sdimp\phi_2 && \qquad \text{iff\ for\ all}\ %\rho^\prime,\rho_1,\rho_2\ \text{s.t.}\ \rho\preceq \rho^\prime\ \text{and}\ %\rho_2\in \rho^\prime\bullet \rho_1:\ 
		%\rho_1\models \phi_1\ \text{implies}\ \rho_2\models \phi_2 
		\end{equation}
		In 2-BID logic, however, we have to reconsider the above definition with the domain assumption. It is reasonable to set the domain of spatial implication $\V{\phi_1\sepimp\phi_2} = \V{\phi_2}\backslash \V{\phi_1}$. If we still adopt equation (\ref{def s-implication}) to define the semantics of $\sepimp$, then the   restriction property will be violated. We choose to modify defining equation (\ref{def s-implication}) as follows:
		\begin{equation}
		\label{def ss-implication}
		%&  \rho\models \phi_1\sdimp\phi_2   && \qquad \text{iff\ for\ all}\ %\rho^\prime,\rho_1,\rho_2\ \text{s.t.}\ {\color{red}\rho^\prime\succeq %\rt{\rho}{\V{\phi_1\sdimp\phi_2}}}\ \text{and}\ \rho_2\in \rho^\prime\bullet \rho_1, %\rho_1\models \phi_1\ \text{implies}\ \rho_2\models \phi_2 \\
		\rho\models \phi_1\sepimp\phi_2 \ \text{iff\ for\ all}\ \rho^\prime,\rho_1,\rho_2\ \text{s.t.}\ \rho^\prime\succeq \rt{\rho}{\V{\phi_1\sepimp\phi_2}}\ \text{and}\ \rho_2\in \rho^\prime\circ \rho_1, \rho_1\models \phi_1\ \text{implies}\ \rho_2\models \phi_2.
		\end{equation}
		Note that in equation (\ref{def ss-implication}) $\rho^\prime$ is required to range over all states $\succeq \rt{\rho}{\V{\phi_1\sepimp\phi_2}}$. Thus, the restriction property is automatically satisfied by $\sepimp$.
		
		The above discussion also applies to the spatial implication $\sdimp$ corresponding to  conjunction $\sd$.

		%\cite{Fri17,CFS16}: quantum resource
		%\cite{Hon15,PZ09,Per08}: entanglement analysis
		%\cite{BKK19,TH19,BHL19}: probabilistic separation logic
		%\cite{DFL19}: separation logic used by Facebook
		%\cite{OHearn07,BO16,OHearn19}: local reasoning, review of concurrent separation logic, review of separation logic
		%\cite{SSC18}: quantum allocation
		
		\subsection{Quantum Modification of 2-BID Formulas}
		
		We conclude this section by presenting a technique of modifying 2-BID formulas, similar to the modification of BI formulas (see Section \ref{sec modification BI formulas}) but much more general since implication and separating implications are considered.
		
		\begin{definition}[Modification of atomic propositions]
			\label{def sub atomic prop 2BID}
			Let $\prog$ be a unitary transformation $\qU$ or an initialisation $\qI$. For any atomic proposition $p\in\AP$, if there exists a 2-BID formula $\phi$ such that:
			\begin{enumerate}
				\item $p$ and $\phi$ have the same domain: $\V{p} = \V{\phi}$;
				\item for all $\rho\in\cD(\V{p}\cup\var(\prog))$, $\rho\models \phi$ if and only if $\sem{\prog}(\rho)\models p$;%\footnote{see Definition \ref{???} for the semantic function $\sem{\prog}$ of  $\prog =$ unitary transformation $\qU$ or initialisation $\qI$.};
			\end{enumerate}
			then we say that $\phi$ is an $\prog$-modification of $p$ and write  $p[\prog]\triangleq \phi$.
		\end{definition}
		
		The modification of some atomic propositions may not exists.
		We write $\phi[\prog]\Mexist$ whenever $\phi[\prog]$ is defined.
		The following examples give the modifications of those atomic propositions needed in the applications of quantum separation logic considered in this paper.

		The notion of modification can be easily extended to all 2-BID formulae: 
		\begin{definition}[Modification of 2-BID formulas]
			\label{def sub 2-BID form} 
			Let $\prog$ be unitary transformation $\qU$ or initialisation $\qI$. The modification $\phi[\prog]$ of 2-BID formula $\phi$ is defined by induction on the structure of $\phi$:
			\begin{enumerate}
				\item if $\phi\equiv \top$ or $\bot$, then $\phi[\prog] \equiv \phi$;
				\item if $\phi\equiv p\in\AP$, then $\phi[\prog]$ is defined according to Definition \ref{def sub atomic prop 2BID};
				\item if $\phi\equiv \phi_1\ \triangle\ \phi_2$ where $\triangle\in\{ \wedge,\vee,\rightarrow, \sd\}$ and $\phi_1[\prog]\Mexist$ and $\phi_2[\prog]\Mexist$, then $\phi[\prog]\equiv\phi_1[\prog] \ \triangle\ \phi_2[\prog]$;
				\item if $\phi\equiv \phi_1\ast\phi_2$, then
				\begin{enumerate}
					\item if $\prog\equiv\qU$ and $\phi_i[\prog]\Mexist$ and $\qbar\subseteq\free{\phi_i}$ or $\qbar\cap\free{\phi_i}=\emptyset$ for $i=1,2$, then $\phi[\prog]\equiv\phi_1[\prog] \ast \phi_2[\prog]$;
					\item if $\prog\equiv\qI$, then 
					$$\phi[\prog]\equiv\left\{\begin{array}{ll}(\phi_1[\prog] \sd \phi_2[\prog])\wedge(\id_{\V{\phi_1}\backslash q}\ast\id_{\V{\phi_2}\backslash q}) & \text{if\ }q\in\V{\phi_1}\cup\V{\phi_2},\ \phi_1[\prog]\Mexist\text{and}\ \phi_2[\prog]\Mexist\\
					\phi &\text{if\ }q\notin\V{\phi_1}\cup\V{\phi_2}
					\end{array}\right.$$\end{enumerate}
				\item if $\phi\equiv \phi_1\ \triangle\ \phi_2$ where $\triangle\in\{\sdimp, \sepimp\}$ and $\prog\equiv\qU$ or $\qI$, then 
				$$\phi[\prog] \equiv \left\{\begin{array}{ll} \phi_1\ \triangle\ \phi_2[\prog]  &\quad \text{if}\ \var(\prog)\subseteq\V{\phi_2}\backslash\V{\phi_1}\ \text{and}\ \phi_2[\prog]\Mexist \footnotemark\\
				\phi &\quad\text{if}\ \var(\prog)\cap\V{\phi_2}\backslash\V{\phi_1}=\emptyset 
				\end{array}\right. $$
			\end{enumerate}
			\footnotetext{ $\phi_1[\prog]\ \triangle\ \phi_2[\prog]$ also works for this case, but it is weaker since additional $\phi_1[\prog]\Mexist$ should be assumed. }
		\end{definition}
		
		The modification of 2-BID formula is not as convention. Since implication is considered, it is necessary to find the weakest precondition of $\phi$, to make the proof rules \textsc{Init} and \textsc{Unit} sound. For example, an initialization command $\qI$ makes $q$ uncorrelated with all other registers. As a consequence, the postcondition $\id_{q}\ast\id_{q^\prime}$ which asserts that two registers $q$ and $q^\prime$ are independent does not imply the precondition should assert the independence of $q$ and $q^\prime$, e.g., $\id_{q}\ast\id_{q^\prime}$, even if $\id_{q}[\qI] = \id_{q}$ and $\id_{q^\prime}[\qI] = \id_{q^\prime}$. In addition, the assumption of command variables and domains is declared for modification of $\sdimp$ and $\sepimp$, since we failed to derive the modified formula for the case $\var(\prog)\supset\V{\phi_2}\backslash\V{\phi_1}$.
		
		A close connection between the semantics of a 2-BID formula $\phi$ and its modification $\phi[\prog]$ is shown in the   following:  
		
		\begin{proposition}
			\label{pro modification 2BID}
			Let $\prog$ be unitary transformation $\qU$ or initialisation $\qI$, and $\phi$ be any 2-BID formula. If its modification  $\phi[\prog]$ is defined, then:
			\begin{enumerate}
				\item $\phi$ and $\phi[\prog]$ have the same domain: $\V{\phi} = \V{\phi[\prog]}$;
				\item for all $\rho\in\cD(\V{\phi}\cup\var(\prog))$, $\rho\models \phi[\prog]$ if and only if $\sem{\prog}(\rho)\models \phi$.
			\end{enumerate}
		\end{proposition}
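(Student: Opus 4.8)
The plan is to establish (1) and (2) together by structural induction on the 2-BID formula $\phi$, after recording a few routine facts about the semantic functions of $\qU$ and $\qI$. Statement (1) is the easy part: the base case $\phi\equiv p\in\AP$ is built into Definition~\ref{def sub atomic prop 2BID}, which requires $\V{p[\prog]}=\V{p}$, and every clause of Definition~\ref{def sub 2-BID form} manifestly preserves the free-variable set once the inductive hypothesis $\V{\phi_i[\prog]}=\V{\phi_i}$ is available. The only clauses needing a second look are the $\qI$-modification of $\phi_1\ast\phi_2$, where the extra conjunct $\bD[\V{\phi_1}\setminus q]\ast\bD[\V{\phi_2}\setminus q]$ together with the side condition $q\in\V{\phi_1}\cup\V{\phi_2}$ recovers exactly $\V{\phi_1}\cup\V{\phi_2}$, and the modifications of $\sepimp,\sdimp$, where $\V{\phi_1\sepimp\phi_2}=\V{\phi_2}\setminus\V{\phi_1}$ is preserved because only $\phi_2$ is rewritten.

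For statement (2), which unlike Proposition~\ref{pro modification} is a biconditional, I would first isolate the tools used in every case: (i) $\rt{\sem{\prog}(\rho)}{S}=\rt{\rho}{S}$ when $S\cap\var(\prog)=\emptyset$, and $\sem{\prog}(\rt{\rho}{S})=\rt{\sem{\prog}(\rho)}{S}$ when $S\supseteq\var(\prog)$; (ii) $\sem{\prog}(\rho_0\otimes\rho_1)=\sem{\prog}(\rho_0)\otimes\rho_1$ when $\var(\prog)\subseteq\dom{\rho_0}$; (iii) the explicit descriptions $\sem{\qU}(\rho)=U\rho U^\dag$, so that $\sem{\qU}$ is a bijection of $\cD$ with inverse $\sem{\qbar:=U^\dag[\qbar]}$, and $\sem{\qI}(\rho)=|0\>_q\<0|\otimes\tr_q(\rho)$; and (iv) the characterisation of $\ast$ and $\sd$ through marginals (the 2-BID analog of Lemma~\ref{lem sound proof 8}, via Proposition~\ref{pro-rt}). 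I also use throughout the restriction property of 2-BID (Proposition~\ref{prop BI mon and res}): since satisfaction of a formula by $\rho$ depends only on the restriction of $\rho$ to that formula's free variables, the inductive hypothesis --- stated for $\rho$ of domain exactly $\V{\phi_i}\cup\var(\prog)$ --- lifts to all larger domains by the same cylinder-extension argument used in the proof of Theorem~\ref{thm eq glb var set}.

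With these in place the cases run as follows. $\top,\bot$ and the atomic case are immediate; $\wedge$ and $\vee$ follow by applying the inductive hypothesis to each subformula. For $\rightarrow$, the restriction property (using $\V{\phi_1}\cup\V{\phi_2}\subseteq\dom{\rho}$) collapses both sides to ``$\rho\models\phi_1[\prog]$ implies $\rho\models\phi_2[\prog]$'' versus ``$\sem{\prog}(\rho)\models\phi_1$ implies $\sem{\prog}(\rho)\models\phi_2$'', which match by the inductive hypothesis; $\sd$ is treated like $\wedge$ after noting that on a domain containing all free variables $\phi_1\sd\phi_2$ is equivalent to $\phi_1\wedge\phi_2$. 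For $\phi_1\ast\phi_2$ with $\prog\equiv\qU$, fact (iv), bijectivity of $\sem{\qU}$, and the side condition $\qbar\subseteq\V{\phi_i}$ or $\qbar\cap\V{\phi_i}=\emptyset$ let one transport both the per-conjunct satisfactions and the factorisation condition; for $\phi_1\ast\phi_2$ with $\prog\equiv\qI$ one substitutes $\sem{\qI}(\rho)=|0\>_q\<0|\otimes\tr_q(\rho)$ into (iv) and checks that ``$\sem{\qI}(\rho)$ factors across $\V{\phi_1}\mid\V{\phi_2}$'' is equivalent to ``$\rho$ factors across $\V{\phi_1}\setminus q\mid\V{\phi_2}\setminus q$'' --- which is precisely what the added conjunct $\bD[\V{\phi_1}\setminus q]\ast\bD[\V{\phi_2}\setminus q]$ asserts --- while $\rho\models\phi_i[\qI]\iff\sem{\qI}(\rho)\models\phi_i$ comes from the inductive hypothesis. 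For $\sepimp$ and $\sdimp$ I split on the side condition in Definition~\ref{def sub 2-BID form}: if $\var(\prog)\cap(\V{\phi_2}\setminus\V{\phi_1})=\emptyset$ then $\phi[\prog]=\phi$ and the claim follows from (i) and the restriction property; if $\var(\prog)\subseteq\V{\phi_2}\setminus\V{\phi_1}$, I unfold the 2-BID semantics of $\sepimp$ (which quantifies over all extensions of $\rt{\rho}{\V{\phi}}$), push $\sem{\prog}$ through the tensor with the tested resource by (ii), apply the inductive hypothesis to $\phi_2$, and then use the restriction property together with $\var(\prog)\subseteq\V{\phi}\subseteq\dom{\rho}$ to conclude that the marginal on $\V{\phi_2}$ deciding the truth of $\phi_2$ is the same whether one extends $\rho$ or $\sem{\prog}(\rho)$, so the two universally quantified families deliver the same verdict.

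I expect the genuinely delicate part to be the interaction of $\qI$ with the spatial connectives $\ast$, $\sepimp$, $\sdimp$. Initialisation is irreversible --- it erases $q$ and forces it independent of everything --- so the modification cannot be a homomorphic image, and a naive bijection between extensions of $\rho$ and of $\sem{\qI}(\rho)$ is unavailable: it would amount to gluing compatible quantum marginals, which can fail (Example~\ref{prop noexist extension}). What rescues these cases is exactly the domain discipline of 2-BID: satisfaction is always evaluated on a marginal that already contains $\var(\prog)$, so the no-extension obstruction is never triggered, and with the explicit form of $\sem{\qI}$ plus careful tracking of whether $q$ lies in $\V{\phi_1}$ or $\V{\phi_2}$ the required equivalences reduce to identities among partial traces. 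Making those identities clean, and setting up the promotion of the inductive hypothesis to arbitrary domains so that it can be invoked uniformly inside the $\sepimp$/$\sdimp$ unfoldings, is where most of the actual work sits.
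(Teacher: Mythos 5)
Your proposal follows the paper's own proof essentially step for step: part (1) by structural induction, part (2) by the same induction with the same toolkit (the partial-trace commutation facts, pushing $\sem{\prog}$ through a disjoint tensor factor, the marginal characterizations of $\ast$, $\sd$, $\sepimp$, $\sdimp$, and monotonicity/restriction), and the same case split on the side conditions for the spatial connectives. The one point your sketch leaves implicit --- that in the $\sdimp$ case under $\qI$ the tested combination is an arbitrary $\bullet$-coupling rather than a tensor, so one must argue that every coupling of $\sem{\qI}(\rt{\rho}{\free{\phi}})$ with the tested resource arises as the image of a coupling of $\rt{\rho}{\free{\phi}}$ --- is treated by the paper in exactly the same spirit (via its lemma on commuting $\sem{\prog}$ with a disjoint tensor factor), so your route coincides with the paper's rather than departing from it.
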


		We can also generalize the concept of modification to quantum operation. Formally, we define the $\cE$-Modification as follows:
		\begin{definition}[$\cE$-Modification]
			\label{def qo modification 2BID}
			Let $\cE$ be quantum operation on $\qbar$. The $\cE$-Modification of a 2-BID formula $\phi$ is defined inductively:
			\begin{enumerate}
				\item (Atomic Propositions) For atomic proposition $p\in\AP$, if there exists 2-BID formula $\psi$ such that:
				\begin{enumerate}
					\item $p$ and $\psi$ have the same domain: $\V{p} = \V{\psi}$;
					\item for all $\rho\in\cD(\V{p}\cup\qbar)$, $\rho\models \psi$ if and only if $\cE(\rho)\models p$;
				\end{enumerate}
				then we say that $\psi$ is an $\cE$-modification of $p$ and write  $p[\cE[\qbar]]\triangleq\psi$.
				\item (Induction step) We write $\phi[\cE[\qbar]]\Mexist$ if $\phi[\cE[\qbar]]$ is defined.
				\begin{enumerate}			
					\item if $\phi\equiv \top$ or $\bot$, then $\phi[\cE[\qbar]] \equiv \phi$;
					\item if $\phi\equiv p\in\AP$, then $\phi[\cE[\qbar]]$ is defined according to Clause {\rm (1)};
					\item if $\phi\equiv \phi_1\ \triangle\ \phi_2$ where $\triangle\in\{ \wedge,\vee,\rightarrow, \sd\}$ and both $\phi_1[\cE[\qbar]]\Mexist$ and $\phi_2[\cE[\qbar]]\Mexist$, then $\phi[\cE[\qbar]]\equiv\phi_1[\cE[\qbar]] \ \triangle\ \phi_2[\cE[\qbar]].$
				\end{enumerate}
			\end{enumerate}
		\end{definition}
		Intuitively, if $\cE[\qbar](\phi)\Mexist$, then for any state $\rho$, $\cE(\rho)\models\phi$ if and only if $\rho\models \cE[\qbar](\phi)$.
		
		\section{Separation Logic for Quantum Programs with 2-BID as assertion logic}
		\label{sec QSL 2BID}
		
		Now we are ready to present our separation logic for quantum programs with 2-BID logic interpreted in the quantum frame defined in the last section as the assertion language. 
		
		Since all 2-BID formulas are restrictive, in contrast to Section \ref{sec QSL}, now a judgment is a Hoare triple of the form $\{\phi\}\prog\{\psi\}$ with both precondition $\phi$ and postcondition $\psi$ being 2-BID formulas.   
		
		\begin{definition}[Validity] Let $\vars$ be a set of quantum variables with $\V{\phi},\V{\psi},\var(\prog)\subseteq\vars$. Then a correctness formula $\{\phi\}\prog\{\psi\}$ is true in the sense of partial correctness with respect to $\vars$, written 
			$\vars \models\{\phi\}\prog\{\psi\}$, 
			if we have:
			$$\forall \rho\in\cD(\vars),\quad \rho\models\phi\Rightarrow\sem{\prog}_{\vars}(\rho)\models\psi.$$ Here, satisfaction relation $\rho\models\phi$ and $\sem{\prog}_{\vars}(\rho)\models\psi$ are defined according to the quantum interpretation of 2-BID logic given in Section \ref{sec-q-interpret 2BID}.
		\end{definition}
		
		Similarly, satisfaction does not depends on auxiliary variables.
		\begin{theorem}
			\label{thm eq glb var set 2BID}
			For any two sets $\vars$ and $\vars^\prime$ of variables,
			$\vars\models\{\phi\}\prog\{\psi\} \text{\ if\ and\ only\ if\ }\vars^\prime\models\{\phi\}\prog\{\psi\}.$
		\end{theorem}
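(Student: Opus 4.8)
The plan is to rerun, essentially verbatim, the argument used for Theorem~\ref{thm eq glb var set}, the only --- but crucial --- difference being that in the 2-BID setting the monotonicity and restriction properties (Proposition~\ref{prop BI mon and res}) are available for \emph{every} formula, not merely for the fragment $\res$; this is precisely what the domain assumption built into the 2-BID semantics buys us. As in the $\vars$-indexed notion of validity, I read the statement under the standing convention $\V{\phi}\cup\V{\psi}\cup\var(\prog)\subseteq\vars$ and $\subseteq\vars'$. First I would reduce to the case $\vars\subseteq\vars'$: it suffices to prove that $\vars\models\{\phi\}\prog\{\psi\}$ is equivalent to $(\vars\cup\vars')\models\{\phi\}\prog\{\psi\}$, and likewise for $\vars'$, and then compose the two equivalences. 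So assume $\vars\subseteq\vars'$.

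For the extension direction, suppose $\vars\models\{\phi\}\prog\{\psi\}$ and take $\rho\in\cD(\vars')$ with $\rho\models\phi$. Since $\V{\phi}\subseteq\vars$, Proposition~\ref{prop BI mon and res} gives $\rt{\rho}{\vars}\models\phi$, hence $\sem{\prog}(\rt{\rho}{\vars})\models\psi$ by hypothesis. Then I would invoke Proposition~\ref{prop sem qo} (the denotational semantics is a cylindric extension) together with the partial-trace bookkeeping of Proposition~\ref{pro-order-rt} to derive the identity
\[
\sem{\prog}(\rt{\rho}{\vars})=\rt{\left[(\cE_\prog\otimes\cI_{\vars'\backslash\var(\prog)})(\rho)\right]}{\vars}=\rt{\sem{\prog}(\rho)}{\vars},
\]
so $\sem{\prog}(\rt{\rho}{\vars})\preceq\sem{\prog}(\rho)$, and monotonicity (Proposition~\ref{prop BI mon and res}, using $\V{\psi}\subseteq\vars$) lifts $\sem{\prog}(\rt{\rho}{\vars})\models\psi$ to $\sem{\prog}(\rho)\models\psi$.

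For the restriction direction, suppose $\vars'\models\{\phi\}\prog\{\psi\}$ and take $\rho\in\cD(\vars)$ with $\rho\models\phi$. I would pick any extension $\rho'\in\cD(\vars')$ with $\rho\preceq\rho'$ --- its existence is exactly the Extension clause of Definition~\ref{def BID frame}, and concretely one may take $\rho'=\rho\otimes\sigma$ for an arbitrary $\sigma\in\cD(\vars'\backslash\vars)$. Monotonicity gives $\rho'\models\phi$, hence $\sem{\prog}(\rho')\models\psi$ by hypothesis; the same cylindric-extension identity as above yields $\sem{\prog}(\rho)=\rt{\sem{\prog}(\rho')}{\vars}\preceq\sem{\prog}(\rho')$, and since $\V{\psi}\subseteq\vars$ the restriction property of Proposition~\ref{prop BI mon and res} gives $\sem{\prog}(\rho)\models\psi$.

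The only genuinely delicate point --- and the conceptual reason the theorem goes through here but not for the plain BI interpretation of Section~\ref{sec-q-interpret} --- is the appeal to Proposition~\ref{prop BI mon and res} for \emph{all} 2-BID connectives, in particular the implications $\rightarrow$, $\sepimp$ and $\sdimp$. That proposition is where the downwards-closed axiom of the 2-BID frame and the domain-guarded semantics of the implications (Definition~\ref{def satisfaction BID}, equation~(\ref{def ss-implication})) are actually used; once it is in hand, the present theorem is pure partial-trace bookkeeping, essentially identical to the proof of Theorem~\ref{thm eq glb var set}. I therefore expect the substantive work to sit in confirming Proposition~\ref{prop BI mon and res}, not in this theorem itself.
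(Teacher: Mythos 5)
Your proposal is correct and follows essentially the same route as the paper, whose proof of Theorem~\ref{thm eq glb var set 2BID} simply reruns the argument of Theorem~\ref{thm eq glb var set} with Proposition~\ref{prop BI mon and res} replacing Proposition~\ref{pro BI res mon}; your reduction to $\vars\subseteq\vars'$, the cylindric-extension identity via Proposition~\ref{prop sem qo}, and the use of monotonicity/restriction for all 2-BID formulas match the intended proof. You also correctly locate the real content in Proposition~\ref{prop BI mon and res} rather than in the theorem itself.
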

		
		As a consequence, we can drop $\vars$ from $\vars\models\{\phi\}\prog\{\psi\}$ and simply write $\models\{\phi\}\prog\{\psi\}$.

		\subsection{Inference Rules}
		
		Most of the inference rules shown in main text (see Section \ref{sec QSL}, Figs. \ref{fig proof system 1}, \ref{fig proof system 2} and \ref{fig proof system 3}) are sound. We list the different rules here and comment them in a few words.
		
		\begin{figure}\centering
			\begin{equation*}\begin{split}
			&\textsc{Perm}\quad \frac{}{\{\phi[\qbar^\prime\mapsto\qbar]\}\qbar:=\perm(\qbar\mapsto\qbar^\prime)[\qbar]\{\phi\}} \qquad
			\textsc{RLoop$^\prime$}\quad\frac{\{\phi\ast M_1\}\prog\{\phi\ast\id_{\qbar}\}\quad\phi\in\text{CM}}{\{\phi\ast\id_{\qbar}\}\mathbf{while}\{\phi\sd M_0\}} \\
			&\textsc{Weak$^\prime$}\quad  \frac{\phi\rightarrow\phi^\prime\quad \{\phi^{\prime}\}\prog\{\psi^{\prime}\}\quad 
				\psi^{\prime}\rightarrow\psi  }{\{\phi\}\prog\{\psi\}} \\[0.1cm]
			&\textsc{FrameE}\quad \frac{\{\phi\}\prog\{\psi\}\quad\V{\mu}\cap\var(\prog)=\emptyset\quad\V{\psi}\subseteq\V{\phi}\cup\var(\prog)}{\{\phi\sd\mu\}\prog\{\psi\sd\mu\}} 
			\end{split}\end{equation*}
			\caption{Inference Rules for QSL of 2-BID. In \textsc{RLoop}, $\mathbf{while}$ is the abbreviation of $\mathbf{while}\ M[\qbar]=1\ \mathbf{do}\ \prog\ \mathbf{od}$, and  $M_0,M_1$ in assertions are regarded as projective predicates acting on $\qbar$. In \textsc{Perm}, $\perm(\qbar\mapsto\qbar^\prime)[\qbar]$ stands for the unitary transformation which permutes the variables from $\qbar$ to $\qbar^\prime$ (see Section \ref{sec basic Quantum app} for details).
			}
			\label{fig proof system 4}
		\end{figure}
		
		\begin{itemize}
			
			\item\textbf{Rule} \textsc{Perm}: At the first glance, one may think that this rule is a special case of rule \textsc{Unit} because permutation is a unitary transformation. Indeed, it is strictly stronger that what can be derived from \textsc{Unit} because entanglement is not invariant under a permutation between quantum registers; in particular when the 2-BID formulas describing the involved quantum systems contain independence conjunction $\ast$ and implication $\sepimp$.
			
			\item\textbf{Rules} \textsc{RLoop$^\prime$}: This one is slightly different than the one shown in Fig. \ref{fig proof system 1} since if fact, $\phi$ and $M_0$ have the disjoint domains, which leads to $\phi\wedge M_0$ equivalent to $\phi\sd M_0$.
			
			\item\textbf{Rules} \textsc{Weak$^\prime$}: note that the satisfaction relation for implication in 2-BID is different from it in BI, the $\phi\rightarrow\psi$ is exactly equivalent to $\phi\gimp\psi$. Thus, we can directly use the $\rightarrow$.	
			
			\item\textbf{Rules} \textsc{FrameE}: The conditions $\V{\mu}\cap\var(\prog)=\emptyset$ and $\V{\psi}\subseteq\V{\phi}\cup\var(\prog)$ in the premise ensur that (1) satisfaction of $\mu$ is unchanged after executing $\prog$; and (2) if $\phi\sd\mu$ has a non-empty interpretation, then $\psi\sd\mu$ is well-defined in the sense that the domains of $\psi$ and $\mu$ do not overlap: $\V{\psi}\cap\V{\mu}=\emptyset$.
		\end{itemize}
		
		Since all formulas considered here are 2-BID formulas, the set of CM and SP (see Definition \ref{def CM} and \ref{def SP}) can be generalized to larger sets:
		\begin{proposition}
			\label{prop CM 2BID}
			The formulas generated by following grammar are ${\rm CM}$. 
			$$
			\phi,\psi ::= p\in\cP\cup\cU\cup\cU_p\ |\ \top\ |\ \bot\ |\ \phi\wedge\psi\ |\ \phi\sd\psi\ |\ \mu \sepimp \psi\ |\ \phi\in{\rm SP}\ |\ \mu_1\ast\phi
			$$
			where $\mu$ is an arbitrary 2-BID formula, and $\mu_1\in {\rm SP}$.
		\end{proposition}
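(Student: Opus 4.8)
The plan is to prove the statement by structural induction on the grammar, after reformulating ``$\phi\in{\rm CM}$'' in a domain-free way. Specifically, I will show that for every formula $\phi$ generated by the grammar the set $\sem{\phi}\cap\cD(\free{\phi})$ is convex. This is equivalent to the claim: since all 2-BID formulas are restrictive (Proposition \ref{prop BI mon and res}) and the partial trace is affine, for any $\rho,\rho'$ with $\dom{\rho}=\dom{\rho'}\supseteq\free{\phi}$ we have $\rho\models\phi\iff\rt{\rho}{\free{\phi}}\in\sem{\phi}\cap\cD(\free{\phi})$, and $\rt{(\lambda\rho+(1-\lambda)\rho')}{\free{\phi}}=\lambda\rt{\rho}{\free{\phi}}+(1-\lambda)\rt{\rho'}{\free{\phi}}$; so ${\rm CM}$ for $\phi$ says exactly that $\sem{\phi}\cap\cD(\free{\phi})$ is closed under convex combinations of states with the common domain $\free{\phi}$. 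The recurring tool throughout is that the partial trace $\rho\mapsto\rt{\rho}{T}$ and the padding map $\rho\mapsto\rho\otimes\sigma$ (for fixed $\sigma$) are affine, so a preimage of a convex set under either of them is convex, and finite intersections of convex sets are convex.

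For the base cases: $\sem{\top}\cap\cD(\emptyset)=\cD(\emptyset)$ and $\sem{\bot}\cap\cD(\emptyset)=\emptyset$ are trivially convex; for an atomic proposition $P\in\cP$, $\sem{P}\cap\cD(\free{P})=\{\rho:\supp(\rho)\subseteq P\}$ is convex because $\supp(\lambda\rho+(1-\lambda)\rho')=\supp(\rho)\sqcup\supp(\rho')$ for $\lambda\in(0,1)$ and the closed subspace $P$ is closed under $\sqcup$; for the uniformity-style atoms in $\cU$ and $\cU_p$ the condition pins the reduced state on the domain to a single density operator, so the set is a singleton. The case $\phi\in{\rm SP}$ is immediate from Definition \ref{def SP}: $\sem{\phi}\cap\cD(\free{\phi})$ is empty or a singleton, hence convex.

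For the inductive step, $\phi_1\wedge\phi_2$ and $\phi_1\sd\phi_2$ are handled the same way: a state $\rho$ of domain $\free{\phi_1}\cup\free{\phi_2}$ (the two parts disjoint in the $\sd$ case) lies in the relevant set iff $\rt{\rho}{\free{\phi_i}}\in\sem{\phi_i}\cap\cD(\free{\phi_i})$ for $i=1,2$, which is an intersection of preimages of inductively convex sets under affine partial-trace maps. For $\mu_1\ast\phi$ with $\mu_1\in{\rm SP}$: the set is empty unless $\free{\mu_1}\cap\free{\phi}=\emptyset$ and $\sem{\mu_1}\neq\emptyset$, in which case Lemma \ref{lem sound proof 8} together with the ${\rm SP}$ property of $\mu_1$ forces every member to have the form $\sigma\otimes\tau$ with $\sigma$ the fixed least element of $\sem{\mu_1}$ and $\tau\in\sem{\phi}\cap\cD(\free{\phi})$; convexity then follows from $\lambda(\sigma\otimes\tau)+(1-\lambda)(\sigma\otimes\tau')=\sigma\otimes(\lambda\tau+(1-\lambda)\tau')$ and the inductive hypothesis on $\phi$.

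The case that needs genuine care, and which I expect to be the \emph{main obstacle}, is the separating implication $\mu\sepimp\psi$ ($\psi$ from the grammar, $\mu$ arbitrary). The key observation is that satisfaction of $\mu\sepimp\psi$ depends only on a \emph{marginal}. Put $S=\free{\mu\sepimp\psi}=\free{\psi}\setminus\free{\mu}$, so that $\free{\psi}$ is the disjoint union of $S$ and $\free{\psi}\cap\free{\mu}$. In the 2-BID semantics of $\sepimp$, the witnessing state $\rho'$ ranges over all states with $\rt{\rho'}{S}=\rho$, and $\rho_2=\rho'\otimes\rho_1$ with $\rho_1\models\mu$ on a disjoint domain; restricting $\rho_2$ to $\free{\psi}$ traces out everything outside $S\cup\free{\mu}$ and leaves $\rho\otimes\rt{\rho_1}{\free{\psi}\cap\free{\mu}}$, in which $\rho'$ has disappeared. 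Using that $\psi$ is restrictive, and writing $C_\mu:=\{\rt{\rho_1}{\free{\psi}\cap\free{\mu}}:\rho_1\in\sem{\mu}\cap\cD(\free{\mu})\}$ (which is empty precisely when $\sem{\mu}$ is, by the restriction property), this yields
\[
\rho\models\mu\sepimp\psi\iff\forall\,\sigma\in C_\mu:\ \rho\otimes\sigma\in\sem{\psi}\cap\cD(\free{\psi}).
\]
For each fixed $\sigma\in C_\mu$, the set $\{\rho\in\cD(S):\rho\otimes\sigma\in\sem{\psi}\cap\cD(\free{\psi})\}$ is the preimage of an inductively convex set under the affine padding map, hence convex; intersecting over $\sigma\in C_\mu$ preserves convexity, and if $C_\mu=\emptyset$ the set is all of $\cD(S)$. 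This completes the induction. The delicate point is exactly the displayed marginal reformulation of $\sepimp$: one must verify the set identities $\free{\psi}=S\cup(\free{\psi}\cap\free{\mu})$ (disjoint), $\free{\psi}\cap\dom{\rho'}=S$ and $\free{\psi}\cap\dom{\rho_1}=\free{\psi}\cap\free{\mu}$, and check carefully that restricting $\rho'\otimes\rho_1$ to $\free{\psi}$ really does eliminate every dependence on the extension $\rho'$ beyond its $S$-marginal.
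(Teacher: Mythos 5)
Your proof is correct and follows essentially the same route as the paper's: structural induction with the same treatment of atoms, SP formulas, $\wedge$/$\sd$, and $\mu_1\ast\phi$, and your marginal reformulation of $\mu\sepimp\psi$ is exactly the equivalent form the paper isolates (Proposition \ref{prop alt form}(4)) before applying the induction hypothesis on $\psi$. Recasting CM as convexity of $\sem{\phi}\cap\cD(\free{\phi})$ via restriction/monotonicity and affinity of the partial trace is only a cosmetic repackaging of the paper's argument, not a different method.
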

		\begin{proposition}
			\label{prop SP 2BID}
			The formulas generated by following grammar are ${\rm SP}$: 
			$$
			\phi,\psi ::= p\in\cU\ |\ p\in\cP \text{of rank 1}\ |\ \top\ |\ \bot\ |\ \phi\ast\psi\ |\ \mu_1 \sepimp \phi\ |\ \mu_1 \sdimp \phi
			$$
			where $\cP \text{of rank 1}$ consists all rank 1 projections, and $\mu_1$ is formula with non-empty interpretation.
		\end{proposition}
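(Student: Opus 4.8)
The plan is to prove Proposition~\ref{prop SP 2BID} by structural induction on $\phi$, proving the slightly stronger statement that whenever $\sem{\phi}\ne\emptyset$ it has a least element and that least element has domain exactly $\free{\phi}$. The strengthening is free: if $\sigma$ is a least element of $\sem{\phi}$, then the restriction property for $2$-BID (Proposition~\ref{prop BI mon and res}) gives $\rt{\sigma}{\free{\phi}}\models\phi$, hence $\sigma\preceq\rt{\sigma}{\free{\phi}}$; combined with $\rt{\sigma}{\free{\phi}}\preceq\sigma$ this forces $\sigma=\rt{\sigma}{\free{\phi}}$, so $\dom{\sigma}=\free{\phi}$. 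Conversely a least element of domain $\free{\phi}$ supplies the set $S=\free{\phi}$ and the witness $\rho=\sigma$ demanded by Definition~\ref{def SP}, since two states of the same domain that are $\succeq$-comparable are equal.

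The base cases and the case $\phi\ast\psi$ are exactly as in Proposition~\ref{prop SP}: for $\unia[S]$ the least element is $\tfrac{\id_S}{\dim(S)}$; for a rank-one projection $P$ it is $P$ regarded as a pure state, using that $\supp(\rt{\rho}{\free{P}})\subseteq P$ together with $\rank P=1$ forces $\rt{\rho}{\free{P}}=P$; for $\top$ it is the scalar $1$; $\bot$ is vacuous. For $\phi\ast\psi$, the $2$-BID analogue of Lemma~\ref{lem sound proof 8} (valid because $\circ$ is deterministic tensor) shows that $\sem{\phi\ast\psi}\ne\emptyset$ implies $\free{\phi}\cap\free{\psi}=\emptyset$ and, by restriction, that $\sem{\phi}$ and $\sem{\psi}$ are nonempty with least elements $\sigma_\phi,\sigma_\psi$ of domains $\free{\phi},\free{\psi}$; then $\sigma_\phi\otimes\sigma_\psi$ lies in $\sem{\phi\ast\psi}$ and is its least element, because any $\rho\models\phi\ast\psi$ satisfies $\rt{\rho}{\free{\phi}}=\sigma_\phi$, $\rt{\rho}{\free{\psi}}=\sigma_\psi$ and $\rho\succeq\rt{\rho}{\free{\phi}}\otimes\rt{\rho}{\free{\psi}}$.

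The new content is the two implication cases $\mu_1\sepimp\phi$ and $\mu_1\sdimp\phi$, where $\mu_1$ has nonempty interpretation and, by the induction hypothesis, $\phi$ is ${\rm SP}$ with least element $\sigma_\phi$ of domain $\free{\phi}$. Put $S\triangleq\free{\phi}\setminus\free{\mu_1}=\V{\mu_1\sepimp\phi}$ and $T\triangleq\free{\phi}\cap\free{\mu_1}$, so that $\free{\phi}=S\sqcup T$ and $S\cap\free{\mu_1}=\emptyset$. I claim that when $\sem{\mu_1\sepimp\phi}\ne\emptyset$ its least element is $\rt{\sigma_\phi}{S}$. Since $\mu_1$ is restrictive with nonempty interpretation, fix one $\rho_1\models\mu_1$ of domain $\free{\mu_1}$. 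For any $\rho\in\cD(S)$ with $\rho\models\mu_1\sepimp\phi$, instantiate the satisfaction clause with $\rho'=\rt{\rho}{S}=\rho$, this $\rho_1$, and $\rho_2=\rho\otimes\rho_1\in\rho\circ\rho_1$ (defined, of domain $\supseteq\free{\phi}$): the clause forces $\rho_2\models\phi$, i.e.\ $\rt{\rho_2}{\free{\phi}}=\rho\otimes\rt{\rho_1}{T}=\sigma_\phi$ (using that $\phi$ is ${\rm SP}$ with least element of domain $\free{\phi}$), hence $\rho=\rt{\sigma_\phi}{S}$. Thus at most one state of $\cD(S)$ satisfies $\mu_1\sepimp\phi$; and by the restriction property any $\sigma\models\mu_1\sepimp\phi$ has $\rt{\sigma}{S}$ equal to this unique state, so $\sigma\succeq\rt{\sigma_\phi}{S}$. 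The case $\mu_1\sdimp\phi$ is the same mutatis mutandis, once one observes that $\bullet$ is nondeterministic but the product $\rho\otimes\rho_1$ is always among the couplings in $\rho\bullet\rho_1$, and that is the only coupling the argument uses.

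The main obstacle is the bookkeeping in the implication cases: keeping the overlap $T$ of $\free{\mu_1}$ and $\free{\phi}$ straight, checking that every $\circ$- and $\bullet$-composition invoked is defined (disjointness of $S$ and $\free{\mu_1}$, and $\dom{\rho_2}\supseteq\free{\phi}$), and noting that a single witness $\rho_1\models\mu_1$ already pins $\rho$ down --- which is precisely why ``$\mu_1$ has nonempty interpretation'' (rather than, say, $\mu_1\in{\rm SP}$) is the correct side condition in the grammar, and why the clause would fail outright if $\sem{\mu_1}$ were empty. The rest is routine given Proposition~\ref{prop BI mon and res} and the deterministic/nondeterministic structure of $\circ$ and $\bullet$.
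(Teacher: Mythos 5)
Your proposal is correct and follows essentially the same route as the paper's proof: structural induction with the same least-element witnesses in every case, in particular $\rt{\sigma_\phi}{\free{\phi}\backslash\free{\mu_1}}$ for both $\mu_1\sepimp\phi$ and $\mu_1\sdimp\phi$, exactly where the paper says "it is not difficult to realize." The only difference is that you spell out the single-witness argument (fixing one $\rho_1\models\mu_1$ of domain $\free{\mu_1}$ and forcing $\rho\otimes\rho_1\models\phi$) and the domain bookkeeping that the paper leaves implicit, which is a welcome but not essentially different elaboration.
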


		To conclude this section, we show that quantum separation logic QSL consisting of all the proof rules listed in Figure \ref{fig proof system 1}, \ref{fig proof system 2} and \ref{fig proof system 3} and \ref{fig proof system 4} (\textsc{RLoop} and \textsc{Weak} are replaced by \textsc{RLoop$^\prime$} and \textsc{Weak$^\prime$}, respectively) are sound.
		
		\begin{theorem}[Soundness of QSL]
			\label{thm sound QSL 2BID}
			The proof system QSL of 2-BID is sound for terminating programs; that is, 
			$\text{if}\ \prog\ \text{is a terminating program, then}\ \vdash\{\phi\}\prog\{\psi\}\text{\ implies\ }\models\{\phi\}\prog\{\psi\}$.
		\end{theorem}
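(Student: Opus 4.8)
The plan is to proceed by induction on the derivation of $\vdash\{\phi\}\prog\{\psi\}$, showing that each inference rule of the proof system preserves validity. By Theorem~\ref{thm eq glb var set 2BID} I would fix once and for all a global variable set $\vars$ containing every variable occurring in the programs and formulas in play and argue about $\vars$-validity; this lets me reuse the auxiliary facts from the BI development, in particular Lemma~\ref{lem sound proof 1} (restriction commutes with $\sem{\prog}$ on variable sets disjoint from, or containing, $\var(\prog)$) and Lemma~\ref{lem sound proof 4} (a measurement leaves a state satisfying $M_m$ unchanged and returns outcome $m$ with certainty). Many rules are then handled exactly as in the proof of Theorem~\ref{thm sound QSL}: \textsc{Skip}, \textsc{Seq}, \textsc{Conj}, \textsc{Disj}, \textsc{Const}, \textsc{Perm}, \textsc{DIf}, \textsc{DLoop} and \textsc{RIf} go through essentially verbatim, since their arguments only use monotonicity, the restriction property---now guaranteed intrinsically for \emph{all} 2-BID formulas by Proposition~\ref{prop BI mon and res}---and closure under mixtures. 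For \textsc{Init} and \textsc{Unit} I would invoke Proposition~\ref{pro modification 2BID}: if $\rho\models\phi[\prog]$ then $\sem{\prog}(\rho)\models\phi$, which is precisely the backward rule. For \textsc{Weak$^\prime$} the key observation is that, because satisfaction in a 2-BID model is only defined on states whose domain contains $\free{\phi}$, the ordinary implication $\rightarrow$ already has the domain assumption built in and coincides with the global implication used in the BI version, so the standard weakening argument applies directly.

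The genuinely new structural rules are \textsc{RLoop$^\prime$}, \textsc{FrameE} and \textsc{UnCR}. For \textsc{RLoop$^\prime$} I would reuse the RLoop argument: setting $\cE_i(\cdot)=M_i(\cdot)M_i^\dagger$, show from the premise that $\rho\models\phi\ast\id_\qbar$ implies $\cE_0(\rho)\models\phi\ast M_0$, $\cE_1(\rho)\models\phi\ast M_1$ and $\sem{\prog}(\cE_1(\rho))\models\phi\ast\id_\qbar$; iterate to get $\cE_0\circ(\sem{\prog}\circ\cE_1)^i(\rho)\models\phi\ast M_0$ for all $i$, then sum over the limit characterization $\sem{\mathbf{while}}(\rho)=\sum_{i\ge 0}\cE_0\circ(\sem{\prog}\circ\cE_1)^i(\rho)$ (which, since $\prog$ is terminating, is an honest probability-preserving mixture) and use $\phi,M_0\in\mathrm{CM}$ to conclude $\sem{\mathbf{while}}(\rho)\models\phi\wedge M_0$, equivalently $\phi\sd M_0$ because $\free{\phi}$ and $\qbar$ are disjoint. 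For \textsc{FrameE} the side conditions $\free{\mu}\cap\var(\prog)=\emptyset$ and $\free{\psi}\subseteq\free{\phi}\cup\var(\prog)$ ensure, respectively, that $\mu$ is preserved (its truth depends only on $\rt{\rho}{\free{\mu}}$, which $\prog$ does not touch) and that $\free{\psi}\cap\free{\mu}=\emptyset$ so that $\psi\sd\mu$ is well formed; the remaining work is to check, using the partial-trace identities of Lemma~\ref{lem sound proof 1}, that $\rt{\sem{\prog}(\rho)}{\free{\psi}\cup\free{\mu}}$ is the restriction of $\rt{\sem{\prog}(\rho)}{\free{\phi}\cup\var(\prog)\cup\free{\mu}}$ and that the latter splits as a product across $\free{\mu}$. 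For \textsc{UnCR} I would use the characterization behind Definition~\ref{def qo modification 2BID}: if $\rho\models\phi[\cE[\qbar]]$ then $\cE(\rho)\models\phi$, hence $\sem{\prog}(\cE(\rho))\models\psi$; since $\qbar\cap\var(\prog)=\emptyset$ the two operations commute, so $\cE(\sem{\prog}(\rho))\models\psi$ and therefore $\sem{\prog}(\rho)\models\psi[\cE[\qbar]]$.

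The hard part will be the \textsc{Frame} rule under the supported-assertion side condition $\psi\in\mathrm{SP}$, which carries over the nontrivial \emph{purification} argument from the soundness proof of the BI version. Here I would first squeeze variable information out of the premise $\{\phi\}\prog\{\psi\}$: since $\psi$ is supported one shows $\free{\phi},\free{\psi}\subseteq\var(\prog)$ may be assumed without loss of generality. Writing $A=\free{\phi}$, $A^\prime=\free{\psi}$, $B=\var(\prog)\setminus\free{\phi}$ and $C=\free{\mu}$, I would dilate $\sem{\prog}$ to a unitary $U_{RAB}$ on an ancilla $R$ and adjoin reference systems $R_A,R_B,R_C$. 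The crux is a lemma: for any purification $|\Psi_{AR_A}\rangle$ of a state $\rho_A\models\phi$, any pure $|\Psi_{BR_BCR_C}\rangle$ and any unitary $U_{R_ABR_BR_C}$, the reduced state of the output on $A^\prime C$ equals $\sigma_{A^\prime}\otimes\sigma_C$, where $\sigma_{A^\prime}$ is the unique state satisfying $\psi$ and $\sigma_C=\rt{|\Psi_{BR_BCR_C}\rangle\langle\Psi_{BR_BCR_C}|}{C}$. This follows because the premise forces the output pure state to have the form $\sum_i\sqrt{\lambda_i}\,|\alpha_i\rangle_{A^\prime}\,(X_i\otimes I_C)\,|\Psi_{BR_BCR_C}\rangle$ with the $X_i$ isometries, and freeness of $|\Psi_{BR_BCR_C}\rangle$ upgrades the resulting norm and orthogonality relations to the operator identities $X_i^\dagger X_i=I$ and $X_j^\dagger X_i=0$ for $i\neq j$, which collapse the partial trace over the reference systems to the claimed tensor product. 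Finally, I would observe that \emph{every} input $\rho_{ABC}\models\phi\ast\mu$ can be brought into this standard form by the freedom of purification (two purifications with a common reference system differ by a local unitary), so $\rt{\sem{\prog}(\rho_{ABC})}{A^\prime C}=\sigma_{A^\prime}\otimes\rho_C$ and hence $\sem{\prog}(\rho_{ABC})\models\psi\ast\mu$. Assembling all rule cases then completes the induction. The main technical risk is bookkeeping of the many variable sets and ancillas in this last step; the structural reasoning itself is the same calculation already used for the BI interpretation.
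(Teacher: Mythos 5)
Your plan follows the paper's own route essentially step for step: soundness is proved rule by rule over a fixed global variable set (Theorem~\ref{thm eq glb var set 2BID}), the rules inherited from Figures~\ref{fig proof system 1}--\ref{fig proof system 3} are discharged exactly as in the proof of Theorem~\ref{thm sound QSL} (including \textsc{Init}/\textsc{Unit} via the modification proposition, \textsc{UnCR} via commutation of $\cE$ with $\sem{\prog}$, and the purification argument for \textsc{Frame} with $\psi\in{\rm SP}$, which you reproduce faithfully), and the genuinely new cases are \textsc{Weak$^\prime$}, \textsc{RLoop$^\prime$} and \textsc{FrameE}, which you treat the same way the paper does.

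The one step in your plan that would fail is in \textsc{FrameE}: you propose to show that $\rt{\sem{\prog}(\rho)}{\free{\phi}\cup\var(\prog)\cup\free{\mu}}$ \emph{splits as a tensor product across} $\free{\mu}$. That factorization is neither available nor needed. The precondition here is $\phi\sd\mu$, not $\phi\ast\mu$, so the input carries no independence between $\free{\phi}\cup\var(\prog)$ and $\free{\mu}$ (take $\prog\equiv\mathbf{skip}$, $\phi\equiv\bD[p]$, $\mu\equiv\bD[q]$ and $\rho$ entangled across $p,q$: the premise holds but no product split exists). Fortunately the conclusion is also weaker: by the characterization of $\sd$ (Proposition~\ref{prop alt form}, clause 1), $\sem{\prog}(\rho)\models\psi\sd\mu$ only requires $\free{\psi}\cap\free{\mu}=\emptyset$ (which follows from $\free{\psi}\subseteq\free{\phi}\cup\var(\prog)$ and $\free{\mu}\cap(\free{\phi}\cup\var(\prog))=\emptyset$) together with $\sem{\prog}(\rho)\models\psi$ and $\sem{\prog}(\rho)\models\mu$, which you already obtain from the premise and the \textsc{Const}-style preservation of $\mu$ via Lemma~\ref{lem sound proof 1}. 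Dropping the product-split step and concluding directly as in the paper repairs the case; everything else in your proposal is sound.
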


		\section{Deferred Proofs for Section \ref{sec 2BID} \ref{sec-q-interpret 2BID} and \ref{sec QSL 2BID}}
		
		\noindent\textbf{Most of the proofs in this part are tedious, and some of them are similar to previous proofs. Please find the proofs if needed.}
		
		\begin{proposition}
			\label{pro-unique}
			Suppose $\cX = (X,\circ,\preceq,e,\Ldoms: X\mapsto \wp(Y))$ is a BI frame with domain. Then the following statements hold:
			\begin{enumerate}
				\item for any $x,x^\prime\in X$ and $S\subseteq Y$, if $x^\prime\preceq x$ and $S\subseteq\Ldom{x^\prime}$, then $\rt{x^\prime}{S} = \rt{x}{S}$.
				\item for any $S\subseteq Y$ and $x\in X$ such that $S\subseteq\Ldom{x}$, $\rt{x}{S}$ is the unique least element of set $\{x^\prime\ |\ x^\prime\preceq x \text{\ and\ } S\subseteq\Ldom{x^\prime}\}$.
				\item for any $x,x^\prime\in X$ and $S\subseteq Y$, if $x^\prime\preceq x$, then $\rt{x^\prime}{S} \preceq \rt{x}{S}$.
				\item for any $x\in X$ and $S^\prime\subseteq S\subseteq Y$, $\rt{x}{S^\prime}\preceq \rt{x}{S}$.
			\end{enumerate}
		\end{proposition}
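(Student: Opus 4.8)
The plan is to derive all four statements from just two clauses of the BID-frame definition together with the fact that $\preceq$ is a preorder: \emph{Monotonicity} ($x\preceq y$ implies $\dom{x}\subseteq\dom{y}$) and \emph{Restriction} (for every $S$ and $x$ there is a \emph{unique} $\rt{x}{S}$ with $\rt{x}{S}\preceq x$ and $\dom{\rt{x}{S}}=S\cap\dom{x}$). The one move used repeatedly is: whenever some $z$ satisfies $z\preceq x$ and $\dom{z}=S\cap\dom{x}$, the uniqueness clause of Restriction forces $z=\rt{x}{S}$ --- informally, ``a restriction of a restriction is a restriction.'' I would prove the four items in the order (4), (1), (3), (2), since (3) needs the first two and (2) needs (1).

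First I would prove (4). Put $y:=\rt{x}{S}$, so $y\preceq x$ and $\dom{y}=S\cap\dom{x}$. Applying Restriction to $y$ and $S'$ gives $\rt{y}{S'}\preceq y$ with $\dom{\rt{y}{S'}}=S'\cap\dom{y}=S'\cap\dom{x}$, using $S'\subseteq S$. Then $\rt{y}{S'}\preceq y\preceq x$ and its domain is exactly $S'\cap\dom{x}$, so by uniqueness for the pair $(x,S')$ we get $\rt{y}{S'}=\rt{x}{S'}$, whence $\rt{x}{S'}=\rt{y}{S'}\preceq y=\rt{x}{S}$. For (1), assume $x'\preceq x$ and $S\subseteq\dom{x'}$; Monotonicity gives $\dom{x'}\subseteq\dom{x}$, so $\dom{\rt{x'}{S}}=S\cap\dom{x'}=S=S\cap\dom{x}$, and since $\rt{x'}{S}\preceq x'\preceq x$, uniqueness for $(x,S)$ yields $\rt{x'}{S}=\rt{x}{S}$.

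Statement (3) then follows by combining (1) and (4): for arbitrary $x'\preceq x$ and $S$, observe that $\rt{x'}{S}$ and $\rt{x'}{S\cap\dom{x'}}$ are both the unique element $\preceq x'$ of domain $S\cap\dom{x'}$, hence equal; applying (1) with the set $S\cap\dom{x'}\subseteq\dom{x'}$ gives $\rt{x'}{S}=\rt{x}{S\cap\dom{x'}}$, and applying (4) with $S\cap\dom{x'}\subseteq S$ gives $\rt{x}{S\cap\dom{x'}}\preceq\rt{x}{S}$, so $\rt{x'}{S}\preceq\rt{x}{S}$. For (2), write $A=\{x'\mid x'\preceq x,\ S\subseteq\dom{x'}\}$. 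Since $S\subseteq\dom{x}$ we have $\dom{\rt{x}{S}}=S$, so $\rt{x}{S}\in A$; and for any $x'\in A$, statement (1) gives $\rt{x}{S}=\rt{x'}{S}\preceq x'$, so $\rt{x}{S}$ is a lower bound of $A$, i.e.\ a least element. For the uniqueness part, if $y$ is any least element of $A$ then $y\preceq\rt{x}{S}$, so $\dom{y}\subseteq\dom{\rt{x}{S}}=S$ by Monotonicity; with $S\subseteq\dom{y}$ this gives $\dom{y}=S=S\cap\dom{x}$, and since $y\preceq x$ the uniqueness clause of Restriction forces $y=\rt{x}{S}$.

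I do not expect a serious obstacle. The only delicate point is the uniqueness claim in (2): because $\preceq$ is merely a preorder (no antisymmetry), one cannot conclude a least element is unique by the usual order-theoretic argument, and must instead pin the candidate down by its domain and invoke the uniqueness built into the Restriction axiom. Everything else is careful bookkeeping, making sure at each application of Restriction that the auxiliary element constructed has precisely the domain $S\cap\dom{\cdot}$ that the axiom requires.
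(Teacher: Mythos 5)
Your proof is correct and uses essentially the same ingredients as the paper's own argument: domain monotonicity, transitivity of the preorder, and above all the uniqueness clause of the Restriction axiom to identify any element below $x$ with domain $S\cap\dom{x}$ as $\rt{x}{S}$. The only difference is the bookkeeping order — the paper proves (1),(2) first and then derives (3),(4) by appealing to (2), whereas you prove (4) directly and obtain (3) from (1) and (4) — an equally valid, and if anything slightly more explicit, arrangement of the same idea.
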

		\begin{proof}
			1. Note that $\rt{x^\prime}{S}\preceq x^\prime\preceq x$ and $\Ldom{\rt{x^\prime}{S}} = S$, so $\rt{x^\prime}{S} = \rt{x}{S}$ according to the uniqueness of domain restriction.
			
			2. For any $x^\prime$ such that $x^\prime\preceq x$ and $S\subseteq\Ldom{x^\prime}$, $\rt{x}{S} = \rt{x^\prime}{S} \preceq x^\prime$, so $\rt{x}{S}$ is a least element of the set. Moreover, suppose $y$ is another least element of the set, then $\rt{x}{S}\preceq y$ and $y\preceq \rt{x}{S}$, by domain monotonicity, $\Ldom{y} = \Ldom{\rt{x}{S}} = S$, so $y = \rt{x}{S}$ according to the uniqueness of domain restriction. Therefore, $\rt{x}{S}$ is the unique least element of the set.
			
			3. $x^\prime\preceq x$ implies $\Ldom{x^\prime}\subseteq\Ldom{x}$. Note that $\Ldom{\rt{x^\prime}{S}} = \Ldom{x^\prime}\cap S \subseteq \Ldom{x}\cap S = \Ldom{\rt{x}{S}}$, and $\rt{x^\prime}{S}\preceq x^\prime\preceq x$, $\rt{x}{S}\preceq x$, so $\rt{x^\prime}{S}\preceq \rt{x}{S}$ according to 2.
			
			4. Note that $\Ldom{\rt{x}{S^\prime}} = \Ldom{x}\cap S^\prime \subseteq \Ldom{x}\cap S = \Ldom{\rt{x}{S}}$, and $\rt{x}{S^\prime}\preceq x$, $\rt{x}{S}\preceq x$, so $\rt{x}{S^\prime}\preceq \rt{x}{S}$ according to 2.
		\end{proof}
		
		\vspace{0.5cm}
		
		\begin{claim}
			$\phi\models_\cM\psi$ if and only if $\models_\cM\phi\rightarrow\psi$.
		\end{claim}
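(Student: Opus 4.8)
The plan is to unfold both sides against the BID satisfaction relation (Definition \ref{def satisfaction BID}) and observe that the equivalence collapses to two elementary frame properties: domain monotonicity of $\preceq$ and reflexivity of $\preceq$. The first bookkeeping step I would carry out is to record the domain of an implication, $\free{\phi\rightarrow\psi}=\free{\phi}\cup\free{\psi}$, and to spell out the right-hand side: $\models_\cM\phi\rightarrow\psi$ abbreviates $\top\models_\cM\phi\rightarrow\psi$, and since $\free{\top}=\emptyset$ and $x\models_\cM\top$ holds whenever it is defined, this says precisely that \emph{for every} $x$ with $\dom{x}\supseteq\free{\phi}\cup\free{\psi}$ we have $x\models_\cM\phi\rightarrow\psi$. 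Thus the states quantified over on the right match exactly the states quantified over in the entailment $\phi\models_\cM\psi$, which makes the two statements directly comparable.

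For the forward direction ($\phi\models_\cM\psi\Rightarrow\models_\cM\phi\rightarrow\psi$), I would fix an arbitrary $x$ with $\dom{x}\supseteq\free{\phi}\cup\free{\psi}$ and verify $x\models_\cM\phi\rightarrow\psi$ directly from its defining clause: given any $x'\succeq x$ with $x'\models_\cM\phi$, domain monotonicity gives $\dom{x'}\supseteq\dom{x}\supseteq\free{\phi}\cup\free{\psi}$, so $x'$ falls within the scope of $\phi\models_\cM\psi$, hence $x'\models_\cM\psi$ as required. For the backward direction ($\models_\cM\phi\rightarrow\psi\Rightarrow\phi\models_\cM\psi$), I would fix $x$ with $\dom{x}\supseteq\free{\phi}\cup\free{\psi}=\free{\phi\rightarrow\psi}$ and $x\models_\cM\phi$; by hypothesis $x\models_\cM\phi\rightarrow\psi$, and instantiating its clause at $x'=x$ — legitimate because $\preceq$ is a preorder and therefore reflexive — yields $x\models_\cM\psi$.

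There is essentially no hard step here; the only point that needs genuine care, and the one respect in which the BID setting differs from plain BI, is that $x\models_\cM\chi$ is defined only when $\dom{x}\supseteq\free{\chi}$, so one must check that every satisfaction judgment written down is in fact defined. This is handled uniformly by the identity $\free{\phi\rightarrow\psi}=\free{\phi}\cup\free{\psi}$ together with the monotonicity axiom $x\preceq y\Rightarrow\dom{x}\subseteq\dom{y}$ of a BID frame (Definition \ref{def BID frame}), both of which I would invoke explicitly at the two places above where a larger state $x'\succeq x$ or the state $x$ itself is fed into the satisfaction relation. The same argument goes through verbatim in the 2-BID setting, since neither the clause for $\rightarrow$ nor the definition of the judgment $\phi\models_\cM\psi$ changes there.
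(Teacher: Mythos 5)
Your proposal is correct and follows essentially the same route as the paper's proof: both directions reduce to unfolding the satisfaction clause for $\rightarrow$, using domain monotonicity of the BID frame to see that any $x'\succeq x$ stays within the scope of the entailment, and reflexivity of $\preceq$ to instantiate the implication clause at $x$ itself, with the identity $\free{\phi\rightarrow\psi}=\free{\phi}\cup\free{\psi}$ handling the definedness bookkeeping exactly as the paper does.
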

		\begin{proof}
			At first, by the definition of domain for BI formula, $\Lfree{\phi\rightarrow\psi} = \Lfree{\phi}\cup\Lfree{\psi}$.
			
			(if part). For any $x$ such that $\Ldom{x}\supseteq\Lfree{\phi\rightarrow\psi}$ and $x^\prime\succeq x$, note that by domain monotonicity (see Definition \ref{def BID frame} (1)), $\Ldom{x^\prime}\supseteq\Ldom{x}\supseteq\Lfree{\phi\rightarrow\psi}$, thus by assumption, $x^\prime\models_\cM\phi$ implies $x^\prime\models_\cM\psi$, which leads to $\models_\cM \phi\rightarrow\psi$.
			
			(only if part). If $\models_\cM \phi\rightarrow\psi$, then for any $x$ such that $\Ldom{x}\supseteq\Lfree{\phi\rightarrow\psi}$, $x\models_\cM \phi\rightarrow\psi$. Note that $x\succeq x$ by reflexivity of preorder, so by the definition of satisfaction relation, $x\models_\cM\phi$ implies $x\models_\cM\psi$.
		\end{proof}
		
		\vspace{0.5cm}
		
		\noindent\textbf{Proof of Proposition \ref{prop BI mon and res}}
		
		\begin{proposition}[Monotonicity and Restriction, Proposition \ref{prop BI mon and res}]
			Given a BID model $\cM$, 
			for all $x,y\in X$ and BI-formula $\phi$ such that $x\preceq y$ and $\Lfree{\phi}\subseteq\Ldom{x}$, 
			$x\models_\cM \phi$ if and only if $y\models_\cM \phi$.
		\end{proposition}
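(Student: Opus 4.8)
The plan is to strengthen the statement to a claim amenable to structural induction: for every BI-formula $\phi$ and every $x\in X$ with $\Lfree{\phi}\subseteq\Ldom{x}$, one has $x\models_\cM\phi$ if and only if $\rt{x}{\Lfree{\phi}}\models_\cM\phi$. This yields the Proposition at once: if $x\preceq y$ and $\Lfree{\phi}\subseteq\Ldom{x}\subseteq\Ldom{y}$, then Proposition \ref{pro-unique}(1) gives $\rt{x}{\Lfree{\phi}}=\rt{y}{\Lfree{\phi}}$, so both $x\models_\cM\phi$ and $y\models_\cM\phi$ are equivalent to the single condition $\rt{x}{\Lfree{\phi}}\models_\cM\phi$. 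I would prove the strengthened claim by induction on the structure of $\phi$, using throughout that $\rt{\rt{x}{\Lfree{\phi}}}{\Lfree{\phi_i}}=\rt{x}{\Lfree{\phi_i}}$ whenever $\Lfree{\phi_i}\subseteq\Lfree{\phi}$ (an instance of Proposition \ref{pro-unique}(1)).

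For the base cases: when $\phi$ is $\top$ or $\bot$ both sides are trivially equivalent; when $\phi=p\in\AP$, the forward direction is the Restriction clause of Definition \ref{def BID model} and the backward direction is its Monotonicity clause combined with $\rt{x}{\Lfree{p}}\preceq x$. The $\wedge$ and $\vee$ cases are immediate from the induction hypotheses for $\phi_1,\phi_2$ together with the composition law for restrictions just recalled. The separating-implication case $\phi=\phi_1\sepimp\phi_2$ needs no induction at all: by Definition \ref{def satisfaction BID}(2) the truth of $\phi_1\sepimp\phi_2$ at $x$ is defined purely through $\rt{x}{\Lfree{\phi_1\sepimp\phi_2}}$, and restricting again changes nothing.

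This leaves the two interesting connectives. For $\phi=\phi_1\ast\phi_2$ the backward direction is immediate (the semantics of $\ast$ quantifies over states below $x$, and $\rt{x}{\Lfree{\phi}}\preceq x$); the forward direction is where the Downwards-Closed axiom of the BID frame is used. Given witnesses $x\succeq x'\in x_1\circ x_2$ with $x_i\models_\cM\phi_i$, I apply the induction hypothesis to replace each $x_i$ by $\rt{x_i}{\Lfree{\phi_i}}$, then invoke Downwards-Closed to obtain $z'\preceq x'$ with $z'\in\rt{x_1}{\Lfree{\phi_1}}\circ\rt{x_2}{\Lfree{\phi_2}}$; the Union axiom forces $\Ldom{z'}=\Lfree{\phi_1}\cup\Lfree{\phi_2}=\Lfree{\phi}$, so $z'=\rt{x}{\Lfree{\phi}}$ by uniqueness of restriction, and $z'$ witnesses $\rt{x}{\Lfree{\phi}}\models_\cM\phi_1\ast\phi_2$.

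The genuinely delicate case — and the one I expect to be the main obstacle — is ordinary implication $\phi=\phi_1\rightarrow\phi_2$, precisely because this is the connective for which the analogous restriction property fails in plain BI (Example \ref{exam failure restriction implication}); it goes through here only thanks to the domain assumption and the Extension axiom. Writing $z:=\rt{x}{\Lfree{\phi}}$, the nontrivial direction is to derive $z\models_\cM\phi_1\rightarrow\phi_2$ from $x\models_\cM\phi_1\rightarrow\phi_2$. Given $z'\succeq z$ with $z'\models_\cM\phi_1$, I would use Extension to pick $x'\succeq x$ with $\Ldom{x'}=\Ldom{x}\cup\Ldom{z'}$; then $x'$ and $z'$ have the same restriction $z$ to $\Lfree{\phi}$, so by the induction hypothesis for $\phi_1$ (whose free variables lie inside $\Lfree{\phi}$) we get $x'\models_\cM\phi_1$, hence $x'\models_\cM\phi_2$, and then the induction hypothesis for $\phi_2$ together with $\rt{x'}{\Lfree{\phi_2}}=\rt{z'}{\Lfree{\phi_2}}$ gives $z'\models_\cM\phi_2$. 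The care required is exactly in verifying that satisfaction of $\phi_1$ and $\phi_2$ is insensitive to the discrepancy between $x'$ and $z'$ on variables outside $\Lfree{\phi}$, which is what the strengthened induction hypothesis is designed to supply.
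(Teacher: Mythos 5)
Your proposal is correct and takes essentially the same route as the paper: the paper proves the statement via two lemmas (Kripke monotonicity plus the restriction property $x\models_\cM\phi\Rightarrow\rt{x}{\free{\phi}}\models_\cM\phi$), which together amount exactly to your strengthened claim $x\models_\cM\phi\Leftrightarrow\rt{x}{\free{\phi}}\models_\cM\phi$, and your handling of the key cases ($\ast$ via Downwards-Closed, the Union axiom and uniqueness of restriction; $\sepimp$ via its restriction-only semantics; atoms via the valuation's monotonicity and restriction clauses) coincides with the paper's. The only deviation is in the $\rightarrow$ case, where your appeal to the Extension axiom is superfluous: since $\rt{x}{\free{\phi_i}}=\rt{z'}{\free{\phi_i}}$ already holds, one can take $x'=x$ and evaluate the implication at $x$ itself, which is what the paper does.
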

		
		\begin{proof}
			The monotonicity holds as usual.
			\begin{lemma}[Monotonicity]
				\label{lem BI monotonicity}
				Monotonicity extends to all formulas with respect to BID semantics. That is, for all BI-formula $\phi$ and $x,y\in X$, $x\preceq y$ and $x\models_\cM\phi$ implies $y\models_\cM\phi$.
			\end{lemma}
			
			Moreover, with the downwards closed property, we can prove the restriction lemma for a BID model.
			
			\begin{lemma}[Restriction] 
				\label{lem BI retriction}
				Given a BID model $\cM$, for all $x\in X$ and BI-formula $\phi$,
				$x\models_\cM \phi$ implies for any $x^\prime\preceq x$ such that $\Ldom{x^\prime}\supseteq\Lfree{\phi}$, $x^\prime\models_\cM\phi$.
			\end{lemma}
			
			\noindent Proof of Lemma \ref{lem BI monotonicity}
			
			\vspace{0.2cm}
			
			It is a corollary of the case for original BI \cite{OP99,Pym02}. 
			We prove it here by induction on the structure of the formulas.
			\begin{itemize}
				\item[--] $\phi \equiv p\in\AP.$ $x\models_\cM p$ implies $\Ldom{x}\supseteq\Lfree{p}$ and $x\in\cV(p)$, so $\Ldom{y}\supseteq\Lfree{p}$ and $y\in\cV(p)$ due to the monotonicity of domain and $\cV$, or equivalently, $y\models_\cM p$.
				\item[--] $\phi\equiv \top (\bot)$. Trivial.
				\item[--] $\phi \equiv \phi_1\wedge\!(\vee)\ \phi_2.$ $x\models_\cM \phi_1\wedge\!(\vee)\ \phi_2$ implies $\Ldom{x}\supseteq\Lfree{\phi_1\wedge\!(\vee)\ \phi_2}$ and $x\models_\cM \phi_1$ and(or) $x\models_\cM \phi_2$, by induction hypothesis and monotonicity of domain, $\Ldom{y}\supseteq\Lfree{\phi_1\wedge\!(\vee)\ \phi_2}$ and $y\models_\cM \phi_1$ and(or) $y\models_\cM \phi_2$, so $y\models_\cM \phi_1\wedge\!(\vee)\ \phi_2$.
				\item[--] $\phi \equiv \phi_1\rightarrow\phi_2.$ $x\models_\cM \phi_1\rightarrow\phi_2$ implies that, $\Ldom{x}\supseteq\Lfree{\phi_1\rightarrow\phi_2}$ and for all $x\preceq x^\prime$, $x^\prime\models_\cM\phi_1$ implies $x^\prime\models_\cM\phi_2$. By monotonicity of domain, $\Ldom{y}\supseteq\Lfree{\phi_1\rightarrow\phi_2}$. Moreover, for any $y\preceq y^\prime$, $y^\prime$ must satisfy $x\preceq y^\prime$, therefore, $y^\prime\models_\cM\phi_1$ implies $y^\prime\models_\cM\phi_2$, which concludes $y\models_\cM \phi_1\rightarrow\phi_2$.
				\item[--] $\phi \equiv \phi_1\ast\phi_2.$ $x\models_\cM \phi_1\ast \phi_2$ implies that, $\Ldom{x}\supseteq\Lfree{\phi_1\ast\phi_2}$ and there exists $x^\prime, x_1, x_2$ such that $x\succeq x^\prime\in x_1\circ x_2$, $x_1\models_\cM\phi_1$ and $x_2\models_\cM\phi_2$. 
				Note that $\Ldom{y}\supseteq\Ldom{x}\supseteq\Lfree{\phi_1\ast\phi_2}$ and $y\succeq x\succeq x^\prime$, so $y\models_\cM \phi_1\ast\phi_2$. \\
				\item[--] $\phi \equiv \phi_1\sepimp \phi_2.$ $x\models_\cM \phi_1\sepimp \phi_2$ implies that, $\Ldom{x}\supseteq\Lfree{\phi_1\sepimp \phi_2}$ and for all $x^\prime,x^{\prime\prime},z$ s.t. $x^\prime\succeq \rt{x}{\Lfree{\phi_1\sepimp\phi_2}}$ and $z\in x^\prime\circ x^{\prime\prime}$, $x^{\prime\prime}\models_\cM \phi_1\Rightarrow z\models_\cM \phi_2$. As $y\succeq x$, so $\Ldom{y}\supseteq\Lfree{\phi_1\sepimp \phi_2}$ and $\rt{y}{\Lfree{\phi_1\sepimp \phi_2}}\succeq \rt{x}{\Lfree{\phi_1\sepimp \phi_2}}$, and then trivially $y\models_\cM \phi_1\sepimp \phi_2$. 
			\end{itemize}

			\vspace{0.2cm}
			
			\noindent Proof of Lemma \ref{lem BI retriction}
			
			\vspace{0.2cm}
			
			We prove this by induction on the structure of $\phi$. Suppose $x\models_\cM \phi$, due to Proposition \ref{pro-unique} and monotonicity (Lemma \ref{lem BI monotonicity}), it is sufficient to show $\rt{x}{\Lfree{\phi}}\models_\cM \phi$ (note that $x\models_\cM \phi$ implies $\Ldom{x}\supseteq\Lfree{\phi}$, so $\rt{x}{\Lfree{\phi}}$ is well-defined).
			\begin{enumerate}
				\item[--] $\phi \equiv p\in\AP.$ By the restriction property of $\cV$.
				\item[--] $\phi\equiv \top (\bot)$. Trivial.
				\item[--] $\phi \equiv \phi_1\wedge\!(\vee)\ \phi_2.$ $x\models_\cM \phi_1\wedge\!(\vee)\ \phi_2$ implies $\Ldom{x}\supseteq\Lfree{\phi_1\wedge\!(\vee)\ \phi_2}$ and $x\models_\cM \phi_1$ and(or) $x\models_\cM \phi_2$. By induction hypothesis, $\rt{x}{\Lfree{\phi_1}}\models_\cM \phi_1$ and(or) $\rt{x}{\Lfree{\phi_2}}\models_\cM \phi_2$. By Proposition \ref{pro-unique}, $\rt{x}{\Lfree{\phi_1\wedge(\vee) \phi_2}} \succeq \rt{x}{\Lfree{\phi_1}}$ and $\rt{x}{\Lfree{\phi_1\wedge(\vee) \phi_2}} \succeq \rt{x}{\Lfree{\phi_2}}$, by monotonicity, $\rt{x}{\Lfree{\phi_1\wedge(\vee) \phi_2}}\models_\cM \phi_1$ and(or) $\rt{x}{\Lfree{\phi_1\wedge(\vee) \phi_2}}\models_\cM \phi_2$, or equivalently, $\rt{x}{\Lfree{\phi_1\wedge(\vee) \phi_2}}\models_\cM \phi_1\wedge\!(\vee)\ \phi_2$.
				\item[--] $\phi \equiv \phi_1\rightarrow \phi_2.$ $x\models_\cM \phi_1\rightarrow \phi_2$ implies $\Ldom{x}\supseteq\Lfree{\phi_1\rightarrow\phi_2} = \Lfree{\phi_1}\cup\Lfree{\phi_2}$ and $x\models_\cM \phi_1\Rightarrow x\models_\cM \phi_2$. So $\Ldom{\rt{x}{\Lfree{\phi_1\rightarrow\phi_2}} }\supseteq\Lfree{\phi_1\rightarrow\phi_2}$.
				For any $x^\prime\succeq \rt{x}{\Lfree{\phi_1\rightarrow\phi_2}}$, note that $\rt{x^\prime}{\Lfree{\phi_1}} = \rt{\left(\rt{x}{\Lfree{\phi_1\rightarrow\phi_2}}\right)}{\Lfree{\phi_1}} = \rt{x}{\Lfree{\phi_1}}$ and similarly $\rt{x^\prime}{\Lfree{\phi_2}} = \rt{x}{\Lfree{\phi_1}}$ according to Proposition \ref{pro-unique}. By inductive hypothesis and monotonicity, $x^\prime\models_\cM \phi_1 \Leftrightarrow \rt{x^\prime}{\Lfree{\phi_1}}\models_\cM \phi_1 \Leftrightarrow \rt{x}{\Lfree{\phi_1}}\models_\cM \phi_1 \Leftrightarrow x\models_\cM \phi_1$
				and similarly $x^\prime\models_\cM \phi_2 \Leftrightarrow x\models_\cM \phi_2$, thus $x^\prime\models_\cM \phi_1\Rightarrow x^\prime\models_\cM \phi_2$. In summary, $\rt{x}{\Lfree{\phi_1\rightarrow\phi_2}} \models_\cM \phi_1\rightarrow \phi_2.$
				\item[--] $\phi \equiv \phi_1\ast \phi_2.$ $x\models_\cM \phi_1\ast \phi_2$ implies that, $\Ldom{x}\supseteq\Lfree{\phi_1\ast \phi_2} = \Lfree{\phi_1}\cup\Lfree{\phi_2}$ and there exists $x^\prime, x_1, x_2$ such that $x\succeq x^\prime\in x_1\circ x_2$, $x_1\models_\cM\phi_1$ and $x_2\models_\cM\phi_2$. 
				By inductive hypothesis and monotonicity, $\rt{x_1}{\Lfree{\phi_1}}\models_\cM\phi_1$ and $\rt{x_2}{\Lfree{\phi_2}}\models_\cM\phi_2$. Note that $x^\prime\in x_1\circ x_2$ and $\rt{x_1}{\Lfree{\phi_1}}\preceq x_1$ and $\rt{x_2}{\Lfree{\phi_2}}\preceq x_2$, by downwards closed property of $\circ$, there exists $z\in \rt{x_1}{\Lfree{\phi_1}}\circ \rt{x_2}{\Lfree{\phi_2}}$ such that $z\preceq x^\prime \preceq x$, and obviously, $z\models_\cM \phi_1\ast \phi_2$. Moreover, by domain union of $\circ$, $\Ldom{z} = \Ldom{\rt{x_1}{\Lfree{\phi_1}}}\cup\Ldom{\rt{x_2}{\Lfree{\phi_2}}} = \Lfree{\phi_1}\cup\Lfree{\phi_2}$, and by the uniqueness of domain restriction, $z = \rt{x}{\Lfree{\phi_1}\cup\Lfree{\phi_2}} = \rt{x}{\Lfree{\phi_1\ast \phi_2}}$, therefore, $\rt{x}{\Lfree{\phi_1\ast \phi_2}}\models_\cM \phi_1\ast \phi_2$.
				\item[--] $\phi \equiv \phi_1\sepimp \phi_2.$ $x\models_\cM \phi_1\sepimp \phi_2$ implies $\Ldom{x}\supseteq\Lfree{\phi_1\sepimp \phi_2}$, and thus $\Ldom{\rt{x}{\Lfree{\phi_1\sepimp \phi_2}}}\supseteq\Lfree{\phi_1\sepimp \phi_2}$. By definition, $\rt{x}{\Lfree{\phi_1\sepimp \phi_2}}\models_\cM \phi_1\sepimp \phi_2$ if we realize $\rt{\left(\rt{x}{\Lfree{\phi_1\sepimp \phi_2}}\right)}{\Lfree{\phi_1\sepimp \phi_2}} = \rt{x}{\Lfree{\phi_1\sepimp \phi_2}}$.
			\end{enumerate}		
			
		\end{proof}
		
		\vspace{0.5cm}
		
		\begin{theorem}[Deduction Theorem for BID]
			\label{thm deduction for BI}
			For any BI formulas $\phi$ and $\psi$, $\phi\vdash\psi$ is provable iff $\phi\rightarrow\psi$ is provable.
		\end{theorem}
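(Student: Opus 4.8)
The plan is to establish both directions purely syntactically, working entirely inside the propositional fragment of the BID proof system. The only rules I would invoke are rules 1, 2, 6, 9 and 10 of Fig.~\ref{fig HR for BI}, together with the admissibility of cut (transitivity of $\vdash$). In particular, neither the separating connectives $\ast,\sepimp,\sd,\sdimp$ nor the domain side conditions attached to rules 11$^\prime$--13$^\prime$ and 17 of Fig.~\ref{fig HR for BID} ever enter, so the argument is verbatim the one for standard BI and is insensitive to what $\phi$ and $\psi$ contain internally. Throughout, ``$\phi\rightarrow\psi$ is provable'' abbreviates $\top\vdash\phi\rightarrow\psi$.

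For the direction $\phi\vdash\psi\Rightarrow\ \top\vdash\phi\rightarrow\psi$, I would take a derivation of $\phi\vdash\psi$, apply rule 6 with $\mu:=\top$ to get $\top\wedge\phi\vdash\psi$, and then apply rule 10 with $\mu:=\top$ to conclude $\top\vdash\phi\rightarrow\psi$; this half needs no cut. For the converse, starting from $\top\vdash\phi\rightarrow\psi$: rule 2 gives $\phi\vdash\top$; composing this with $\top\vdash\phi\rightarrow\psi$ by cut gives $\phi\vdash\phi\rightarrow\psi$; rule 1 gives $\phi\vdash\phi$; and rule 9 (modus ponens) with $\mu:=\phi$ applied to $\phi\vdash\phi\rightarrow\psi$ and $\phi\vdash\phi$ yields $\phi\vdash\psi$.

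The one non-immediate ingredient, and the step I expect to be the main obstacle, is the admissibility of cut in the BID proof system: if $\phi\vdash\chi$ and $\chi\vdash\psi$ are both provable, then so is $\phi\vdash\psi$. I would prove this in the standard way, by induction on the derivation of the right premise $\chi\vdash\psi$, the cases for the BID-specific rules 11$^\prime$--13$^\prime$ and 17 being handled exactly as the corresponding cases for rules 11--16 in BI; alternatively, for the BI fragment cut-admissibility is immediate from Theorem~\ref{thm sound complete BI} (soundness and completeness) since semantic entailment $\models$ is transitive. As a sanity check I would also confirm that the domain side conditions never obstruct the manipulations above — they do not, since those manipulations only touch the top-level $\wedge$/$\rightarrow$/$\top$ structure and rules 1, 2, 6, 9, 10 carry no domain constraints.
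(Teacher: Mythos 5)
Your forward direction is exactly the paper's (rule 6 with $\mu:=\top$, then rule 10). For the converse you take a mildly different, cut-based route: the paper never invokes transitivity of $\vdash$, but instead internalizes the hypothesis $\top$, deriving $\phi\wedge\top\vdash\phi\rightarrow\psi$ by rule 6, then $\phi\vdash\top\rightarrow(\phi\rightarrow\psi)$ by rule 10, then $\phi\vdash\phi\rightarrow\psi$ by rule 9 together with $\phi\vdash\top$ (rule 2), and finally $\phi\vdash\psi$ by rules 1 and 9. Your version can be completed, because cut is indeed admissible, but not really by the routes you sketch: the semantic detour through Theorem~\ref{thm sound complete BI} only covers the BI system, whereas the theorem concerns provability in the BID system (rules 11$^\prime$--13$^\prime$ and 17), for which the paper establishes soundness (Theorem~\ref{thm sound BID}) but no completeness; and a naive induction on the derivation of the right premise is trickier than you suggest---the awkward cases are not the separating rules at all but rule 3 (cut formula $\bot$) and rule 7 (cut formula a disjunction), where the induction hypothesis does not apply directly. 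The decisive observation is that cut is a three-line \emph{derived} rule in any system containing rules 6, 9, 10, with no induction needed: from $\chi\vdash\psi$, rule 6 gives $\phi\wedge\chi\vdash\psi$, rule 10 gives $\phi\vdash\chi\rightarrow\psi$, and rule 9 with the left premise $\phi\vdash\chi$ gives $\phi\vdash\psi$; since these rules carry no domain side conditions, this works verbatim for BID. Instantiating this derivation with $\chi:=\top$ inside your converse direction reproduces exactly the paper's cut-free proof, so once you justify cut this way the two arguments coincide; what your phrasing buys is a reusable transitivity lemma, at the cost of having to prove it correctly rather than by the appeals you propose.
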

		
		\noindent\textbf{Proof of Theorem \ref{thm deduction for BI}}
		
		\vspace{0.2cm}
		
		Indeed, with Hilbert rule 1, 2, 6, 9, 10 shown in Figure \ref{fig HR for BI}, the deduction theorem holds. For example, we may take the following proofs:
		
		(only if part): $\phi\vdash\psi$, $\top\wedge\phi\vdash\psi$ (6), $\top\vdash\phi\rightarrow\psi$ (10).
		
		(if part): $\top\vdash\phi\rightarrow\psi$, $\phi\wedge\top\vdash\phi\rightarrow\psi$ (6), $\phi\vdash\top\rightarrow(\phi\rightarrow\psi)$ (10), $\phi\vdash\top$ (2), $\phi\vdash\phi\rightarrow\psi$ (9), $\phi\vdash\phi$ (1), $\phi\vdash\psi$ (10).

		\vspace{0.5cm}
		
		\noindent\textbf{Proof of Theorem \ref{thm sound BID}}
		
		\begin{theorem}[Soundness of BID, Theorem \ref{thm sound BID}]
			For any BI-formulas $\phi$ and $\psi$, if $\phi\vdash\psi$ is provable in the BID proof system, then $\phi\models\psi$ for BID models.
		\end{theorem}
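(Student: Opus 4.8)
The plan is to prove soundness by induction on the length of a BID derivation of $\phi\vdash\psi$, showing that every axiom scheme of the BID proof system is valid in an arbitrary BID model $\cM$ and that every inference rule preserves validity. Combining the Deduction Theorem for BID (Theorem~\ref{thm deduction for BI}) with the observation that $\phi\models_\cM\psi$ holds iff $\models_\cM\phi\rightarrow\psi$, it is enough to establish the sequent form $\phi\vdash\psi\ \Rightarrow\ \phi\models\psi$, so I treat the rules of Fig.~\ref{fig HR for BI} (numbers 1--10, 14--16) and Fig.~\ref{fig HR for BID} (numbers $11'$--$13'$, 17) one at a time. Throughout I use Proposition~\ref{prop BI mon and res} (monotonicity and restriction for BID satisfaction), whose proof already isolates the role of the downwards-closed property of the frame and of uniqueness of domain restriction; this is the workhorse that lets me pass freely between a state $x$ and its restriction $\rt{x}{S}$ for whichever domain $S$ is relevant.

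For the rules inherited unchanged from BI, soundness follows essentially as in the standard treatment, the only extra obligation being to keep track of domains: Definition~\ref{def satisfaction BID} only evaluates a satisfaction relation $x\models\phi$ when $\Ldom{x}\supseteq\Lfree{\phi}$, and the definitions $\Lfree{\phi_1\wedge\phi_2}=\Lfree{\phi_1}\cup\Lfree{\phi_2}$, $\Lfree{\phi_1\ast\phi_2}=\Lfree{\phi_1}\cup\Lfree{\phi_2}$, etc., guarantee that whenever the conclusion's antecedent is satisfiable at $x$, the domains needed to evaluate the premises' formulas are present; the restriction lemma then transfers satisfaction between $x$ and the restricted states that witness $\ast$ or that are quantified over by $\rightarrow$.

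The genuinely new work is in rules $11'$, $12'$, $13'$ and $17$. For $11'$, given $x\models\xi\ast\mu$ I take a splitting $x\succeq x'\in x_1\circ x_2$ with $x_1\models\xi$, $x_2\models\mu$, apply the induction hypotheses to get $x_1\models\phi$, $x_2\models\psi$, then restrict $x_1$ to $\Lfree{\phi}$ and $x_2$ to $\Lfree{\psi}$ and recombine using downwards closure of $\circ$---exactly the argument used for the $\ast$ case in the proof of Proposition~\ref{prop BI mon and res}; the hypotheses $\Lfree{\phi}\subseteq\Lfree{\xi}$, $\Lfree{\psi}\subseteq\Lfree{\mu}$ ensure the restrictions are still defined. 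For $12'$ and $13'$ the subtlety is that the BID interpretation of $\sepimp$ quantifies over extensions of $\rt{x}{\Lfree{\phi_1\sepimp\phi_2}}$ rather than of $x$; unfolding Definition~\ref{def satisfaction BID} on both sides, the domain side conditions $\Lfree{\mu}\subseteq\Lfree{\psi}\backslash\Lfree{\phi}$ (for $12'$) and $\Lfree{\psi}\backslash\Lfree{\phi}\subseteq\Lfree{\xi}$, $\Lfree{\phi}\subseteq\Lfree{\mu}$ (for $13'$) are precisely what is needed to align the restricted states across the Galois adjunction between $\ast$ and $\sepimp$. Rule $17$, $(\phi\ast\psi)\wedge\mu\vdash(\phi\wedge\mu)\ast\psi$ under $\Lfree{\mu}\subseteq\Lfree{\phi}$, is where the restriction property is used most essentially: from $x\models\phi\ast\psi$ one obtains a splitting $x\succeq x_1\circ x_2$ with $x_1\models\phi$, and since $\Lfree{\mu}\subseteq\Lfree{\phi}\subseteq\Ldom{x_1}$, Proposition~\ref{prop BI mon and res} applied to $x\models\mu$ gives $x_1\models\mu$, hence $x_1\models\phi\wedge\mu$, and the same splitting witnesses the conclusion.

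I expect the main obstacle to be the bookkeeping in the $\sepimp$-rules $12'$ and $13'$: because the magic-wand semantics was deliberately altered in BID to restore restriction, the plain BI soundness proof does not transfer, and one has to check that the stated free-variable inclusions are exactly strong enough to make the two definitions match---no weaker, and with no residual appeal to the original BI semantics. Once that is verified, and given Proposition~\ref{prop BI mon and res} and the downwards-closed property, the remaining cases are routine.
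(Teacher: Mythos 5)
Your overall strategy --- rule-by-rule verification of the BID system against the domain-sensitive semantics, with Proposition \ref{prop BI mon and res} as the workhorse and the downwards-closed property handling the $\ast$-related cases --- is the same as the paper's, and your treatments of rules $17$ and $11'$ essentially match it (for $11'$ the restrict-and-recombine step is unnecessary: the original splitting $x\succeq x'\in x_1\circ x_2$ already witnesses $\phi\ast\psi$, because $\Lfree{\phi}\subseteq\Lfree{\xi}\subseteq\Ldom{x_1}$ and $\Lfree{\psi}\subseteq\Lfree{\mu}\subseteq\Ldom{x_2}$).

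There is, however, a genuine gap in your dismissal of the rules inherited from BI. You assert that whenever the conclusion's antecedent is satisfiable at $x$, ``the domains needed to evaluate the premises' formulas are present.'' This is false for the elimination-style rules, e.g.\ rule 5 (from $\phi\vdash\psi_1\wedge\psi_2$ infer $\phi\vdash\psi_i$) and rule 9 (modus ponens). For rule 5 the conclusion quantifies over states $x$ with $\Ldom{x}\supseteq\Lfree{\phi}\cup\Lfree{\psi_1}$, whereas the induction hypothesis only speaks about states whose domain also contains $\Lfree{\psi_2}$; such an $x$ may lack $\Lfree{\psi_2}$ entirely, so the premise cannot be applied at $x$. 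The same problem occurs in rule 9, where $\Lfree{\phi}$ need not be contained in $\Ldom{x}$. The missing idea is the Extension condition of Definition \ref{def BID frame}: one must extend $x$ to some $y\succeq x$ whose domain covers all free variables appearing in the premises, apply the premise at $y$, and then descend back to $x$ via the restriction half of Proposition \ref{prop BI mon and res} (legitimate because $\Lfree{\psi_1}\subseteq\Ldom{x}$, resp.\ $\Lfree{\psi}\subseteq\Ldom{x}$). This up-and-down move is precisely where the domain semantics departs from plain BI, so it cannot be waved through as routine bookkeeping --- indeed it is the only place the Extension axiom of a BID frame is used. Separately, your handling of $12'$ and $13'$ stays at the level of intent (``the side conditions are precisely what is needed'') rather than actually unfolding the modified wand semantics as the paper does, though the strategy you outline for those cases is the right one.
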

		
		\begin{figure}[h]
			\small
			\begin{align*}
			&1. \quad \frac{}{\phi\vdash\phi}  \qquad
			2. \quad \frac{}{\phi\vdash\top} \qquad 
			3. \quad \frac{}{\bot\vdash\phi} \qquad
			4. \quad \frac{\mu\vdash\phi\quad\mu\vdash\psi}{\mu\vdash\phi\wedge\psi} \qquad
			5. \quad \frac{\phi\vdash\psi_1\wedge\psi_2}{\phi\vdash\psi_i} \\	
			&6. \quad \frac{\phi\vdash\psi}{\mu\wedge\phi\vdash\psi} \qquad
			7. \quad \frac{\mu\vdash\psi\quad\phi\vdash\psi}{\mu\vee\phi\vdash\psi} \qquad
			8. \quad \frac{\phi\vdash\psi_i}{\phi\vdash\psi_1\vee\psi_2} \qquad
			9. \quad \frac{\mu\vdash\phi\rightarrow\psi\quad\mu\vdash\phi}{\mu\vdash\psi} \qquad
			10. \quad \frac{\mu\wedge\phi\vdash\psi}{\mu\vdash\phi\rightarrow\psi} \\ 
			&11^\prime. \quad \frac{\xi\vdash\phi\quad\mu\vdash\psi\quad \Lfree{\phi}\subseteq\Lfree{\xi}\quad\Lfree{\psi}\subseteq\Lfree{\mu}}{\xi\ast\mu\vdash\phi\ast\psi} \qquad
			12^\prime. \quad \frac{\mu\ast\phi\vdash\psi\quad \Lfree{\mu}\subseteq\Lfree{\psi}\backslash\Lfree{\phi}}{\mu\vdash\phi\sepimp\psi} \\
			&13^\prime. \quad \frac{\xi\vdash\phi\sepimp\psi\quad\mu\vdash\phi \quad \Lfree{\psi}\backslash\Lfree{\phi}\subseteq\Lfree{\xi}\quad \Lfree{\phi}\subseteq\Lfree{\mu} }{\xi\ast\mu\vdash\psi} \qquad
			%12. \quad& \frac{\mu\qm\phi\vdash\psi}{\mu\vdash\phi\qmimp\psi} \\[0.2cm]
			%13. \quad& \frac{\xi\vdash\phi\qmimp\psi\quad\mu\vdash\phi}{\xi\qm\mu\vdash\psi} & 
			14. \quad \frac{}{\phi\ast\psi\vdash\psi\ast\phi} \\
			&15. \quad \frac{}{(\phi\ast\psi)\ast\xi\vdash\phi\ast(\psi\ast\xi)} \qquad
			16. \quad \frac{}{\phi\ast\top\dashv\vdash\phi} \qquad
			17. \quad  \frac{\Lfree{\mu}\subseteq\Lfree{\phi}}{(\phi\ast\psi)\wedge\mu\vdash(\phi\wedge\mu)\ast\psi}
			\end{align*}
			\caption{Hilbert-style rules for BID. $i=1$ or $2$ for rules 5 and 8. 
			}
			%In rules 11 -- 16, $\qm,\qmimp$  stands for $\ast,\sepimp$ or $\sd,\sdimp$.
		\end{figure}
		
		\begin{proof}
			
			Due to the domain assumption, we write $\phi\models_\cM\psi$ iff for all $x$ such that $\Ldom{x}\supseteq\Lfree{\phi}\cup\Lfree{\psi}$, $x\models_\cM\phi$ implies $x\models_\cM\psi$. Indeed, in \cite{IO01}, the authors explained $\phi\models_\cM\psi$ in the same way, see Section 3.2.
			
			It is enough to show the soundness of each rule shown in Figure \ref{fig HR for BID}. Suppose $\cM$ is an arbitrary 2-BID model and let $x$ be an arbitrary state in $\cM$. As shown in Definition \ref{def satisfaction BID}, we will use the fact:  $\phi\models_\cM\psi\triangleq\ \models_\cM \phi\rightarrow\psi$ iff for all $x\in X$ such that $\Ldom{x}\supseteq \Lfree{\phi\rightarrow\psi} = \Lfree{\phi}\cup\Lfree{\psi}$, $x\models_\cM\phi$ implies $x\models_\cM\psi$.
			\begin{itemize}
				\item[--] rules 1-3: obvious.
				\item[--] rule 4: by assumptions, 1. $\forall x: \Ldom{x}\supseteq\Lfree{\mu}\cup\Lfree{\phi}, \ x\models_\cM\mu\Rightarrow x\models_\cM\phi$ and 2. $\forall x: \Ldom{x}\supseteq\Lfree{\mu}\cup\Lfree{\psi}, \ x\models_\cM\mu\Rightarrow x\models_\cM\psi$. So for any $x$ such that $\Ldom{x}\supseteq\Lfree{\mu}\cup\Lfree{\phi}\cup\Lfree{\psi}$, if $x\models_\cM\mu$, then $x\models_\cM\phi$ and $x\models_\cM\psi$, or equivalently, $x\models_\cM\phi\wedge\psi$.
				\item[--] rule 5: by assumptions, $\forall x: \Ldom{x}\supseteq\Lfree{\phi}\cup\Lfree{\psi_1}\cup\Lfree{\psi_2}, \ x\models_\cM\phi\Rightarrow x\models_\cM\psi_1$ and $x\models_\cM\psi_2$. For any $x$ such that $\Ldom{x}\supseteq\Lfree{\phi}\cup\Lfree{\psi_1}$, by existence of domain extension, there exists $y\succeq x$ such that $\Ldom{y}\supseteq\Lfree{\phi}\cup\Lfree{\psi_1}\cup\Lfree{\psi_2}$. If $x\models_\cM\phi$, by Proposition \ref{prop BI mon and res}, $y\models_\cM\phi$, by assumption, $y\models_\cM\psi_1$, so $x\models_\cM\psi_1$. Similar for $\phi\vdash\psi_2$.
				\item[--] rule 6: by assumptions, $\forall x: \Ldom{x}\supseteq\Lfree{\phi}\cup\Lfree{\psi}, \ x\models_\cM\phi\Rightarrow x\models_\cM\psi$. For any $x$ such that $\Ldom{x}\supseteq\Lfree{\mu}\cup\Lfree{\phi}\cup\Lfree{\psi}$, if $x\models_\cM\mu\wedge\phi$, then $x\models_\cM\phi$, by assumption, $x\models_\cM\psi$.
				\item[--] rule 7: by assumptions, 1. $\forall x: \Ldom{x}\supseteq\Lfree{\mu}\cup\Lfree{\psi}, \ x\models_\cM\mu\Rightarrow x\models_\cM\psi$ and 2. $\forall x: \Ldom{x}\supseteq\Lfree{\phi}\cup\Lfree{\psi}, \ x\models_\cM\phi\Rightarrow x\models_\cM\psi$. So for any $x$ such that $\Ldom{x}\supseteq\Lfree{\mu}\cup\Lfree{\phi}\cup\Lfree{\psi}$, if $x\models_\cM\mu\vee\phi$, then $x\models_\cM\mu$ or $x\models_\cM\psi$, by assumption, either of them implies $x\models_\cM\psi$.
				\item[--] rule 8: suppose $\phi\models_\cM\psi_1$, then $\forall x: \Ldom{x}\supseteq\Lfree{\phi}\cup\Lfree{\psi_1}, \ x\models_\cM\phi\Rightarrow x\models_\cM\psi_1$. So for any $x$ such that $\Ldom{x}\supseteq\Lfree{\phi}\cup\Lfree{\psi_1}\cup\Lfree{\psi_2}$, if $x\models_\cM\phi$, then by assumption, $x\models_\cM\psi_1$, so $x\models_\cM\psi_1\vee\psi_2$.
				\item[--] rule 9: by assumptions, 1. $\forall x: \Ldom{x}\supseteq\Lfree{\mu}\cup\Lfree{\phi}\cup\Lfree{\psi}, \ x\models_\cM\mu\Rightarrow x\models_\cM\phi\rightarrow\psi$ and 2. $\forall x: \Ldom{x}\supseteq\Lfree{\mu}\cup\Lfree{\phi}, \ x\models_\cM\mu\Rightarrow x\models_\cM\phi$. For any $x$ such that $\Ldom{x}\supseteq \Lfree{\mu}\cup\Lfree{\psi}$, by existence of domain extension, there exists $y\succeq x$ such that $\Ldom{y}\supseteq\Lfree{\mu}\cup\Lfree{\phi}\cup\Lfree{\psi}$. If $x\models_\cM\mu$, by Proposition \ref{prop BI mon and res}, $y\models_\cM\mu$, by assumptions, $y\models_\cM\phi$, $y\models_\cM\phi\rightarrow\psi$, and note $y\preceq y$, so $y\models_\cM\psi$ by definition, thus, $x\models_\cM\psi$ as desired.
				\item[--] rule 10: by assumption,  $\forall x: \Ldom{x}\supseteq\Lfree{\mu}\cup\Lfree{\phi}\cup\Lfree{\psi}, \ x\models_\cM\mu\wedge\phi\Rightarrow x\models_\cM\psi$. For any $x$ such that $\Ldom{x}\supseteq\Lfree{\mu}\cup\Lfree{\phi}\cup\Lfree{\psi}$, if $x\models_\cM \mu$, then for all $x^\prime \succeq x$, if $x^\prime\models_\cM\phi$, then by monotonicity, $x^\prime\models_\cM \mu$, so $x^\prime\models_\cM\mu\wedge\phi$, by assumption, $x^\prime\models_\cM\psi$, therefore, $x\models_\cM \phi\rightarrow\psi$.
				\item[--] rule 11$^\prime$: by assumptions, 1. $\forall x: \Ldom{x}\supseteq\Lfree{\xi}\cup\Lfree{\phi} = \Lfree{\xi}, \ x\models_\cM\xi\Rightarrow x\models_\cM\phi$ and 2. $\forall x: \Ldom{x}\supseteq\Lfree{\mu}\cup\Lfree{\psi} = \Lfree{\mu}, \ x\models_\cM\mu\Rightarrow x\models_\cM\psi$. For any $x$ such that $\Ldom{x}\supseteq \Lfree{\xi}\cup\Lfree{\mu}\cup\Lfree{\phi}\cup\Lfree{\psi} = \Lfree{\xi}\cup\Lfree{\mu}$, if $x\models_\cM\xi\ast\mu$, then there exists $x^\prime,x_1,x_2$ such that $x\succeq x^\prime \in x_1\circ x_2$, $x_1\models_\cM \xi$ and $x_2\models_\cM \mu$. Note that $\Ldom{x_1}\supseteq\Lfree{\xi}$,  so $x_1\models_\cM \phi$, and similarly, $x_2\models_\cM \psi$, therefore, $x\models_\cM\phi\ast\psi$.
				\item[--] rule 12$^\prime$: by assumption, 1. $\Lfree{\mu}\subseteq\Lfree{\psi}\backslash\Lfree{\phi} = \Lfree{\phi\sepimp\psi}$, and so 2. $\Lfree{\mu}\cup\Lfree{\phi}\cup\Lfree{\psi} = \Lfree{\phi}\cup\Lfree{\psi}$, and 3. $\forall x: \Ldom{x}\supseteq\Lfree{\phi}\cup\Lfree{\psi}, x\models_\cM \mu\ast\phi\Rightarrow x\models_\cM\psi$. For any $x$ such that $\Ldom{x}\supseteq\Lfree{\mu}\cup\Lfree{\phi\sepimp\psi} = \Lfree{\phi\sepimp\psi}$, if $x\models_\cM\mu$, then for any $x^\prime\succeq \rt{x}{\Lfree{\phi\sepimp\psi}}$, note that $\rt{x}{\Lfree{\phi\sepimp\psi}}\succeq \rt{x}{\Lfree{\mu}}$ by Proposition \ref{pro-unique}, so according to Proposition \ref{prop BI mon and res}, we have $x\models_\cM\mu\Leftrightarrow \rt{x}{\Lfree{\mu}}\models_\cM\mu\Leftrightarrow \rt{x}{\Lfree{\phi\sepimp\psi}}\models_\cM\mu\Leftrightarrow x^\prime\models_\cM\mu$. For any $y\models_\cM\phi$ and $z\in x^\prime\circ y$, by definition, $z\models_\cM\mu\ast\phi$, and note that $\Ldom{z} = \Ldom{x^\prime}\cup\Ldom{y}\supseteq \Lfree{\phi\sepimp\psi}\cup\Lfree{\phi} = \Lfree{\psi}\cup\Lfree{\phi}$, so $z\models_\cM\psi$. Thus, we have $\mu\models_\cM \phi\sepimp\psi$.
				\item[--] rule 13$^\prime$: for any $x$ such that $\Ldom{x}\supseteq \Lfree{\xi\ast\mu}\cup\Lfree{\phi}$, suppose $x\models_\cM\xi\ast\mu$, then there exist $x^\prime,y,z$ such that $x\succeq x^\prime\in y\circ z$, $y\models_\cM\xi$, $z\models_\cM\mu$. On the one hand, by assumptions $\mu\models_\cM\phi$ and $\Lfree{\phi}\subseteq\Lfree{\mu}$, so $\Ldom{z}\supseteq\Lfree{\mu} = \Lfree{\mu}\cup\Lfree{\phi}$, and $z\models_\cM\phi$. On the other hand, by another two assumptions, we realize that $\Ldom{y}\supseteq \Lfree{\xi} = \Lfree{\xi}\cup(\Lfree{\psi}\backslash\Lfree{\phi}) = \Lfree{\xi}\cup\Lfree{\phi\sepimp\psi}$ and thus $y\models_\cM\phi\sepimp\psi$. Recall that $y\succeq \rt{y}{\Lfree{\phi\sepimp\psi}}$, so $z\models_\cM\mu$ and $x^\prime\in y\circ z$ imply $x^\prime\models_\cM\psi$. Finally by monotonicity, $x\models_\cM \psi$.
				\item[--] rule 14: for any $x$ such that $\Ldom{x}\supseteq\Lfree{\phi}\cup\Lfree{\psi}$, if $x\models_\cM\phi\ast\psi$, then there exists $x^\prime,x_1,x_2$ such that $x\succeq x^\prime \in x_1\circ x_2$, $x_1\models_\cM \phi$ and $x_2\models_\cM \psi$. The commutativity of $\circ$ ensures that $x^\prime \in x_2\circ x_1$, therefore, $x\models_\cM\psi\ast\phi$.
				\item[--] rule 15: For any $x$ such that $\Ldom{x}\supseteq\Lfree{\phi}\cup\Lfree{\psi}\cup\Lfree{\xi}$, $x\models_\cM(\phi\ast\psi)\ast\xi$, then there exists $x^\prime,y,z$ s.t. $x\succeq x^\prime\in y\circ z$, $y\models_\cM\phi\ast\psi$, $z\models_\cM\xi$, then there exists $y^\prime,w,t$ s.t. $y\succeq y^\prime\in w\circ t$, $w\models_\cM\phi$, $t\models_\cM\psi$. Note that $x^\prime\in y\circ z$ and $y\succeq y^\prime\in w\circ t$, by associativity, there exists $s,s^\prime,w^\prime$, such that $s^\prime\succeq s\in t\circ z$ and $x^\prime\succeq w^\prime\in w\circ s^\prime$. So, $s\models_\cM\psi\ast\xi$ and by monotonicity, $s^\prime\models_\cM\psi\ast\xi$, and thus, $x^\prime\models_\cM\phi\ast(\psi\ast\xi)$, again by monotonicity, we conclude $x\models_\cM\phi\ast(\psi\ast\xi)$ as desired.
				\item[--] rule 16: ($\vdash$) For any $x$ such that $\Ldom{x}\supseteq\Lfree{\phi}$, if $x\models_\cM\phi\ast\top$, so there exists $x^\prime,x_1,x_2$ such that $x\succeq x^\prime \in x_1\circ x_2$, $x_1\models_\cM \phi$ and $x_2\models_\cM \top$. Coherence property of $\circ$ ensures that $x\succeq x^\prime\succeq x_1$, and by Proposition \ref{prop BI mon and res}, $x\models_\cM\phi$.
				
				($\dashv$) For any $x$ such that $\Ldom{x}\supseteq\Lfree{\phi}$, if $x\models_\cM\phi$, the existence of unit of $\circ$ ensures that, $x\succeq x\in x\circ e$, and note that $e\models_\cM\top$, so $x\models_\cM \phi\ast\top$.
				\item[--] rule 17: 	For any $x$ such that $x\models_\cM(\phi\ast\psi)\wedge\mu$, there exists $x^\prime,x_1,x_2$ such that $x\succeq x^\prime\in x_1\circ x_2$ such that $x_1\models_\cM\phi$ and $x_2\models_\cM\psi$. Note that, $\Ldom{x_1}\supseteq\Lfree{\phi}\supseteq\Lfree{\mu}$, and $x\succeq x^\prime\succeq x_1$ by coherence of $\circ$, thus $x_1\models_\cM\mu$ by Proposition \ref{prop BI mon and res}, so $x_1\models_\cM\phi\wedge\mu$ which leads to $x\models_\cM(\phi\wedge\mu)\ast\psi$.
			\end{itemize}
			
		\end{proof}
		
		\vspace{0.5cm}
		
		\noindent\textbf{Proof of Proposition \ref{prop useful HR}}
		
		\begin{proposition}[Proposition \ref{prop useful HR}]
			\begin{enumerate}
				\item If $\models\phi\rightarrow\psi$ and $\models\psi\rightarrow\mu$, then $\models\phi\rightarrow\mu$.
				\item $\models (\phi\wedge\psi)\ast\mu\rightarrow(\phi\ast\mu)\wedge(\psi\ast\mu)$, \quad
				$\models (\phi\wedge\psi)\sd\mu\rightarrow(\phi\sd\mu)\wedge(\psi\sd\mu);$
				\item $\models \phi\ast\psi\rightarrow\phi, \quad \models \phi\sd\psi\rightarrow\phi$;
				\item $\models \phi\ast\psi\rightarrow\phi\sd\psi, \quad \models \phi\sd\psi\rightarrow\phi\wedge\psi;$
				\item If $\models\phi\leftrightarrow\psi$ and $\Lfree{\phi} = \Lfree{\psi}$, then for any $\mu$, $\models\mu\leftrightarrow\mu[\phi/\psi]$ where $\mu[\phi/\psi]$ is obtained by replacing all $\psi$ in $\mu$ by $\phi$.
			\end{enumerate}
		\end{proposition}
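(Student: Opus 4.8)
The plan is to dispatch items 1--4 quickly as consequences of the 2-BID frame axioms and the monotonicity/restriction property (Proposition \ref{prop BI mon and res}), and to spend the real effort on item 5, which is a replacement (congruence) theorem proved by induction on $\mu$. Throughout I will use the equivalence (recorded earlier in the deferred proofs) that $\models\phi\rightarrow\psi$ holds iff for every state $x$ with $\Lfree{\phi}\cup\Lfree{\psi}\subseteq\Ldom{x}$ one has $x\models\phi\Rightarrow x\models\psi$. For item 1, given $x$ with $\Lfree{\phi}\cup\Lfree{\mu}\subseteq\Ldom{x}$ and $x\models\phi$, I would invoke the Extension clause of Definition \ref{def BID frame} to obtain $y\succeq x$ with $\Ldom{y}=\Ldom{x}\cup\Lfree{\psi}$, lift to $y\models\phi$ by Proposition \ref{prop BI mon and res}, apply the two hypotheses at $y$ to get $y\models\psi$ and then $y\models\mu$, and finally restrict back: since $x=\rt{y}{\Ldom{x}}$ by uniqueness of restriction (Proposition \ref{pro-unique}) and $\Lfree{\mu}\subseteq\Ldom{x}$, Proposition \ref{prop BI mon and res} yields $x\models\mu$.

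For item 2, I would just unpack witnesses: a state $x$ satisfying $(\phi\wedge\psi)\ast\mu$ supplies $x',x_1,x_2$ with $x\succeq x'\in x_1\circ x_2$, $x_1\models\phi\wedge\psi$, $x_2\models\mu$; the same triple witnesses both $x\models\phi\ast\mu$ and $x\models\psi\ast\mu$, and since $\Lfree{(\phi\wedge\psi)\ast\mu}=\Lfree{(\phi\ast\mu)\wedge(\psi\ast\mu)}$ the domain side-conditions agree, so $x\models(\phi\ast\mu)\wedge(\psi\ast\mu)$; the $\sd$-variant is verbatim with $\circ$ replaced by $\bullet$. For items 3 and 4: $\phi\ast\psi\rightarrow\phi$ and $\phi\sd\psi\rightarrow\phi$ follow from Coherence ($x\succeq x'\succeq x_1$) together with monotonicity and the fact that $\Lfree{\phi}\subseteq\Lfree{\phi\ast\psi}\subseteq\Ldom{x}$; $\phi\ast\psi\rightarrow\phi\sd\psi$ is precisely the Weakening condition relating $\circ$ and $\bullet$ in a 2-BID frame (equivalently, soundness, Theorem \ref{thm sound BID}, of the Conjunction Weakening rule); and $\phi\sd\psi\rightarrow\phi\wedge\psi$ combines Coherence applied to $x_1$, and via Commutativity to $x_2$, with monotonicity. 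Each of these is a two-line argument once the relevant axiom is named.

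The main obstacle is item 5. I would first prove, by a trivial sub-induction on the clauses defining $\Lfree{}$ on compound 2-BID formulas, that $\Lfree{\phi}=\Lfree{\psi}$ implies $\Lfree{\mu}=\Lfree{\mu[\phi/\psi]}$; this is what makes the two formulas defined at exactly the same worlds and keeps all domain side-conditions of Definition \ref{def satisfaction BID} (and its 2-BID extension) in step. The main induction on $\mu$ then has base cases $\mu\equiv\psi$ (where $\mu[\phi/\psi]\equiv\phi$ and the claim is the hypothesis) and $\mu$ an atomic proposition distinct from $\psi$, or $\top$, or $\bot$ (where $\mu[\phi/\psi]\equiv\mu$). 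For the inductive step one needs a congruence lemma for each connective $\triangle\in\{\wedge,\vee,\rightarrow,\ast,\sepimp,\sd,\sdimp\}$: if $\models\alpha\leftrightarrow\alpha'$, $\models\beta\leftrightarrow\beta'$ with $\Lfree{\alpha}=\Lfree{\alpha'}$ and $\Lfree{\beta}=\Lfree{\beta'}$, then $\models(\alpha\triangle\beta)\leftrightarrow(\alpha'\triangle\beta')$. I would check these uniformly from the satisfaction clauses: the truth of $\alpha\triangle\beta$ at a world $x$ of sufficiently large domain is computed by quantifying over auxiliary worlds (the $x'\succeq x$ for $\rightarrow$; the $x',x_1,x_2$ for $\ast$ and $\sd$; the $x',y,z$ for $\sepimp$ and $\sdimp$), and at each such world the definedness presupposition of satisfaction already forces $\Lfree{\alpha}\subseteq\Ldom{\cdot}$, so $\cdot\models\alpha\iff\cdot\models\alpha'$ by hypothesis, and likewise for $\beta$; the frame-side data ($\preceq$, $\circ$, $\bullet$) are untouched, and for $\sepimp$ and $\sdimp$ the restriction $\rt{x}{\Lfree{\alpha\sepimp\beta}}$ occurring in the semantics is literally the same on both sides because $\Lfree{\alpha\sepimp\beta}=\Lfree{\alpha'\sepimp\beta'}$ (and similarly for $\sdimp$). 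The only genuinely delicate point is this last one for the two separating implications, where $\Lfree{}$ is a set difference; everything else is routine domain bookkeeping.
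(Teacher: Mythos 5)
Your proposal is correct, but it takes a different route from the paper for items 1--4: where you verify each validity directly in the model from the frame axioms (Extension, Coherence, Commutativity, Union, Weakening) together with Proposition \ref{prop BI mon and res}, the paper instead derives the corresponding sequents in the Hilbert system for 2-BID (rules 4, 6, 9, 10, $11^\prime$, 16 and Conjunction Weakening) and then appeals to soundness (Theorem \ref{thm sound BID}); for instance, item 2 is obtained from rule $11^\prime$ plus rule 4, and item 3 from $11^\prime$ plus rule 16, rather than by unpacking witnesses $x\succeq x'\in x_1\circ x_2$ as you do. The paper's route is more economical since the semantic bookkeeping (extension, restriction, monotonicity) has already been paid for once in the soundness proof, while your direct argument is self-contained and, for item 1, arguably cleaner: the hypotheses $\models\phi\rightarrow\psi$ and $\models\psi\rightarrow\mu$ are semantic, and since no completeness theorem for BID is available, the paper's citation of rules 6, 9, 10 really stands in for exactly the extend-then-restrict argument you spell out (the same move used in the soundness proofs of rules 5 and 9). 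For item 5 the paper only observes that $\Lfree{\phi}=\Lfree{\psi}$ and $\models\phi\leftrightarrow\psi$ force $\sem{\phi}=\sem{\psi}$ and explicitly omits the induction on $\mu$; your congruence lemma per connective, with the check that the restriction $\rt{x}{\Lfree{\cdot}}$ in the clauses for $\sepimp$ and $\sdimp$ is literally unchanged because the free-variable set differences agree, fills in precisely the step the paper leaves out, so on that item you supply strictly more detail than the published proof.
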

		\begin{proof}	
			\begin{enumerate}
				\item Trivially using rules 6, 9 and 10.
				\item By rule 11$^\prime$, $\models(\phi\wedge\psi)\ast\mu\rightarrow\phi\ast\mu$ and $\models(\phi\wedge\psi)\ast\mu\rightarrow\psi\ast\mu$, then by rule 4, $\models(\phi\wedge\psi)\ast\mu\rightarrow(\phi\ast\mu)\wedge(\psi\ast\mu)$. Same for $\sd$.
				\item By rule 11$^\prime$,  $\phi\sd\psi\models\phi\sd\top$ and by rule 16, $\phi\sd\psi\models\phi$.
				\item $\models \phi\ast\psi\rightarrow\phi\sd\psi$ followed by rule Conjunction Weakening.  By (3) and rule 4, it is straightforward that $\phi\sd\psi\models\phi\wedge\psi$.
				\item Since $\models\phi\leftrightarrow\psi$ and $\Lfree{\phi} = \Lfree{\psi}$, we can realize that $\sem{\phi} = \sem{\psi}$, and thus $\models\mu\leftrightarrow\mu[\phi/\psi]$ is straightforward. Mathematically, it can be proved by induction on the structure of $\mu$ and we omit it here.
			\end{enumerate}
		\end{proof}
		
		Since we fixed the quantum interpretation of 2-BID, we have the following proposition which is convenient for uses.
		\begin{proposition}[Equivalent forms]
			\label{prop alt form} 
			We introduce the following proposition for some satisfaction relations based on the definitions of $\bullet$ and $\circ$ as they are more convenient in use.
			\begin{enumerate}
				\item[${\rm 1.}$] ${\rm (a)}$ $\rho\models\phi\sd\psi$ iff 
				
				${\rm (b)}$ $\Ldom{\rho}\supseteq \Lfree{\phi}\cup\Lfree{\psi}$ and exists disjoint $S_1,S_2\subseteq\Ldom{\rho}$ such that $\rt{\rho}{S_1}\models\phi$ and $\rt{\rho}{S_2}\models\psi$ iff 
				
				${\rm (c)}$ $\Ldom{\rho}\supseteq \Lfree{\phi}\cup\Lfree{\psi}$ and $\Lfree{\phi}\cap\Lfree{\psi} = \emptyset$, $\rho\models\phi$, $\rho\models\psi$.
				
				\item[${\rm 2.}$] ${\rm (a)}$ $\rho\models\phi\ast\psi$ iff
				${\rm (b)}$ $\Ldom{\rho}\supseteq \Lfree{\phi}\cup\Lfree{\psi}$ and $\Lfree{\phi}\cap\Lfree{\psi} = \emptyset$, $\rho\models\phi$, $\rho\models\psi$, $\rho\succeq\rt{\rho}{\Lfree{\phi}}\otimes\rt{\rho}{\Lfree{\psi}}$.
				
				\item[${\rm 3.}$] ${\rm (a)}$ $\rho\models\phi\sdimp\psi$ iff
				${\rm (b)}$ $\forall \rho_1,\rho_2$ such that $\Ldom{\rho_1} = \Lfree{\phi}$ and $\rho_2\in\rt{\rho}{\Lfree{\phi\sdimp\psi}}\bullet\rho_1$, $\rho_1\models\phi$ implies $\rho_2\models\psi$.
				
				\item[${\rm 4.}$] ${\rm (a)}$ $\rho\models\phi\sepimp\psi$ iff
				${\rm (b)}$ $\forall \rho_1$ such that $\Ldom{\rho_1} = \Lfree{\phi}$, $\rho_1\models\phi$ implies $\rt{\rho}{\Lfree{\phi\sdimp\psi}}\otimes\rho_1\models\psi$.
				
				\item[${\rm 5.}$] ${\rm (a)}$ $\rho\models\phi\rightarrow\psi$ iff
				${\rm (b)}$ $\Ldom{\rho} \supseteq \Lfree{\phi\rightarrow\psi}$, and $\rho\models\phi$ implies $\rho\models\psi$.
			\end{enumerate}
		\end{proposition}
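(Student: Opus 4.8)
The plan is to establish each of the five equivalences by unfolding the pertinent definitions and then bridging the ``quantify over arbitrary worlds'' form of the official semantics with the ``canonical-domain'' form stated in items (b)/(c). Concretely, one unfolds the partial operations $\circ$ and $\bullet$ from Definition \ref{def tensor coupling 2BID}, the satisfaction clauses of Definition \ref{def satisfaction BID}, and the 2-BID clauses for $\sd$ and $\sdimp$; the gap is then closed using two structural facts already available: the monotonicity-and-restriction lemma for satisfaction (Proposition \ref{prop BI mon and res}) and the uniqueness and commutation properties of domain restriction (Proposition \ref{pro-unique}, together with the partial-trace identities of Propositions \ref{pro-rt} and \ref{pro-order-rt}). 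The recurring move is: whenever a definition hands us witnesses $x_1,x_2$ (or an extension $x'$), we replace them by their restrictions $\rt{x_i}{\free{\phi_i}}$ to the free variables of the relevant subformulas --- these still satisfy the subformulas by restriction and now carry exactly the canonical domains; conversely, whenever an item (b)/(c) hands us states on canonical domains, we recombine them via $\circ$ or $\bullet$, apply the definition, and lift back to the original $\rho$ by monotonicity.

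For items~1 and 2 this is routine. In item~1, (a)$\Rightarrow$(b) takes $S_i=\dom{x_i}$ and identifies $\rt{\rho}{S_i}$ with $x_i$ via Proposition \ref{pro-unique}(1); (b)$\Rightarrow$(c) uses disjointness of $S_1,S_2$ to force $\free{\phi}\cap\free{\psi}=\emptyset$ and monotonicity to promote $\rt{\rho}{S_i}\models\phi_i$ to $\rho\models\phi_i$; (c)$\Rightarrow$(a) takes $x_i=\rt{\rho}{\free{\phi_i}}$ and $x'=\rt{\rho}{\free{\phi}\cup\free{\psi}}$ and checks $x'\in x_1\bullet x_2$ straight from the partial-trace identities. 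Item~2 runs the same way, with the extra observation that $\circ$ forces $x'=x_1\otimes x_2$, so one additionally records $\rt{\rho}{\free{\phi}\cup\free{\psi}}=\rt{\rho}{\free{\phi}}\otimes\rt{\rho}{\free{\psi}}$ and reads off $\rho\succeq\rt{\rho}{\free{\phi}}\otimes\rt{\rho}{\free{\psi}}$ from the definition of $\preceq$. Item~5 is immediate from the satisfaction clause for $\rightarrow$ and Proposition \ref{prop BI mon and res}: in the nontrivial direction one restricts an arbitrary extension $\rho'$ down to $\rho$, applies the hypothesis, and pushes the conclusion back up by monotonicity.

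The delicate cases are items~3 and 4, the two magic wands $\sdimp$ and $\sepimp$, where the BID semantics already restricts $\rho$ to $\free{\phi\sdimp\psi}$ before quantifying over extensions $x'$ and combinations $x_2\in x'\bullet x_1$. Here (a)$\Rightarrow$(b) is trivial, since (b) is just the instance of (a) with $x'$ the minimal extension $\rt{\rho}{\free{\phi\sdimp\psi}}$ and $x_1$ of domain exactly $\free{\phi}$. The work is (b)$\Rightarrow$(a): given an arbitrary extension $x'$, a witness $x_1\models\phi$, and $x_2\in x'\bullet x_1$, we must manufacture the canonical witnesses demanded by (b). Set $\rho_1:=\rt{x_1}{\free{\phi}}$ (so $\rho_1\models\phi$ by restriction and $\dom{\rho_1}=\free{\phi}$) and, crucially, $\rho_2:=\rt{x_2}{\free{\phi\sdimp\psi}\cup\free{\phi}}$; a short computation with Proposition \ref{pro-unique} gives $\rt{\rho_2}{\free{\phi\sdimp\psi}}=\rt{x'}{\free{\phi\sdimp\psi}}=\rt{\rho}{\free{\phi\sdimp\psi}}$ and $\rt{\rho_2}{\free{\phi}}=\rho_1$, whence $\rho_2\in\rt{\rho}{\free{\phi\sdimp\psi}}\bullet\rho_1$ and (b) yields $\rho_2\models\psi$; since $\free{\psi}\subseteq\free{\phi\sdimp\psi}\cup\free{\phi}=\dom{\rho_2}$ and $\rho_2\preceq x_2$, monotonicity gives $x_2\models\psi$. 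The $\sepimp$ case specializes this, with $\bullet$ collapsing to the deterministic $\circ$, plus the additional use of Proposition \ref{pro-rt}(2) (restriction of a tensor product is the tensor of the restrictions) to see $x'\otimes x_1\succeq\rt{\rho}{\free{\phi\sepimp\psi}}\otimes\rt{x_1}{\free{\phi}}$. I expect the main obstacle here to be purely organizational: keeping the various domains and their intersections straight so that each application of Propositions \ref{pro-unique} and \ref{prop BI mon and res} has its side condition $\free{(\cdot)}\subseteq\dom{(\cdot)}$ verified; the genuinely nontrivial structural input --- existence and uniqueness of restrictions, downward closure --- is entirely packaged in the BID-frame axioms and in Propositions \ref{pro-unique} and \ref{prop BI mon and res}.
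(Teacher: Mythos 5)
Your proof is correct and follows essentially the same route as the paper's: unfold the definitions of $\circ$, $\bullet$ and the satisfaction clauses, pass to canonical restrictions of the witnesses, and close the gap using Proposition \ref{prop BI mon and res} together with the uniqueness and commutation properties of restriction. The only cosmetic difference is in item 3, where the paper invokes the downwards-closed axiom of the frame to obtain the smaller coupling $\rho_2$, while you build it explicitly as $\rt{x_2}{\free{\phi\sdimp\psi}\cup\free{\phi}}$ and verify membership in the $\bullet$-set by partial-trace bookkeeping --- the same step made concrete in the quantum frame.
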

		\begin{remark}
			One might question why we set $\sd$ as a primitive connective in assertion logic since by 1. (c), the domain conditions $\Lfree{\phi}\cap\Lfree{\psi} = \emptyset$ can be checked syntactically and the rest conditions $\rho\models\phi$, $\rho\models\psi$ can be explained by $\rho\models\phi\wedge\psi$. An important consideration for $\sd$ is the possible extension of our logic when subscripting/aliasing is allowed in quantum programming languages, since subscripts and aliases are widely used for large quantum programs in most of the current quantum programming platforms. Similar to the motivation of pointer separation logic, the domain side condition $\Lfree{\phi}\cap\Lfree{\psi} = \emptyset$ can no longer be syntactically checked when subscripting/aliasing is allowed and then the separation conjunction $\sd$ is helpful and necessary: spatial separation can be encoded in $\sd$ and thus some involving domain checking may be handled easier elsewhere.
		\end{remark}
		\begin{proof}
			\begin{enumerate}
				\item[${\rm 1.}$] If (a), then $\Ldom{\rho}\supseteq \Lfree{\phi}\cup\Lfree{\psi}$ and exists 
				$\rho^\prime,\rho_1,\rho_2\ \text{s.t.}\ \rho\succeq \rho^\prime\in \rho_1\bullet \rho_2,\ \rho_1\models \phi_1\ \text{and}\ \rho_2\models\phi_2$. Set $S_1=\Ldom{\rho_1}$ and $S_2=\Ldom{\rho_2}$, so $S_1,S_2\subseteq\Ldom{\rho}$, $S_1\cap S_2 = \emptyset$, $\rt{\rho}{S_1} = \rho_1$, $\rt{\rho}{S_2} = \rho_1$, $\rt{\rho}{S_1}\models\phi$ and $\rt{\rho}{S_2}\models\psi$, which implies (b).
				
				If (b), then $\Ldom{\rho}\supseteq \Lfree{\phi}\cup\Lfree{\psi}$, $S_1\supseteq\Lfree{\phi}$, $S_2\supseteq\Lfree{\psi}$, $\Lfree{\phi}\cap\Lfree{\psi} = \emptyset$, and by Kripke monotonicity, $\rho\models\phi$, $\rho\models\psi$, which is just (c).
				
				If (c), then set $\rho_1 = \rt{\rho}{\Lfree{\phi}}$ and $\rho_2 = \rt{\rho}{\Lfree{\psi}}$, as their domain are disjoint, so $\rho^\prime \triangleq \rt{\rho}{\Lfree{\phi}\cup\Lfree{\psi}}\in\rho_1\bullet\rho_2$ and $\rho\succeq\rho^\prime$, by Proposition \ref{prop BI mon and res}, $\rho_1\models\phi$, $\rho_2\models\psi$, which is (a).
				
				Therefore, (a) iff (b) iff (c).
				
				\item[${\rm 2.}$] If (a), by definition, there exists $\rho^\prime,\rho_1,\rho_2$ such that $\rho\succeq\rho^\prime\in\rho_1\circ\rho_2$ and $\rho_1\models\phi$, $\rho_1\models\psi$. Note that if $\rho_1\circ\rho_2$ is not empty, then $\rho^\prime=\rho_1\otimes\rho_2$ is the only element. By monotonicity, $\rho\models\phi$, $\rho\models\psi$, and downwards closed property, $\rho\succeq\rho^\prime=\rho_1\otimes\rho_2\succeq\rt{\rho}{\Lfree{\phi}}\otimes\rt{\rho}{\Lfree{\psi}}$ if we realize that $\rho\succeq\rho_1\succeq\rt{\rho}{\Lfree{\phi}}$ and $\rho\succeq\rho_2\succeq\rt{\rho}{\Lfree{\psi}}$.
				
				If (b), by Proposition \ref{prop BI mon and res}, $\rt{\rho}{\Lfree{\phi}}\models\phi$ and $\rt{\rho}{\Lfree{\psi}}\models\psi$, and $\rho\succeq\rt{\rho}{\Lfree{\phi}}\otimes\rt{\rho}{\Lfree{\psi}} \in \rt{\rho}{\Lfree{\phi}}\circ\rt{\rho}{\Lfree{\psi}}$. These lead to (a).
				
				Thus, (a) iff (b).
				
				\item[${\rm 3.}$] If (a), then by definition, (b) trivially holds.
				
				If (b), for all $\rho^\prime,\rho_1^\prime, \rho_2^\prime$ such that $\rho^\prime\succeq\rt{\rho}{\Lfree{\phi\sdimp\psi}}$ and $\rho_2^\prime\in\rho^\prime\bullet\rho_1^\prime$, and suppose $\rho_1^\prime\models\phi$. Set $\rho_1 = \rt{\rho_1^\prime}{\Lfree{\phi}}$, and as $\rho_1^\prime\models\phi$, so $\Ldom{\rho_1^\prime}\supseteq\Lfree{\phi}$ and $\rho_1 = \Lfree{\phi}$, and $\rho_1\preceq\rho_1^\prime$, $\Ldom{\rho_1}\models\phi$ by Proposition \ref{prop BI mon and res}. Moreover, by downwards closed property, we know that there exists $\rho_2$ such that $\rho_2\in \rt{\rho}{\Lfree{\phi\sdimp\psi}}\bullet\rho_1$ and $\rho_2\preceq\rho_2^\prime$. By (b), we know that $\rho_2\models\psi$, which leads to $\rho_2^\prime\models\psi$. Therefore, $\rho\models\phi\sdimp\psi$.
				
				In summary, (a) iff (b).
				\item[${\rm 4.}$] Similar to arguments of ${\rm 2}$, and realize the set of $\sigma_1\circ\sigma_2$ is a empty set or singleton (only element $\sigma_1\otimes\sigma_2$).
				
				\item[${\rm 5.}$] Trivial by Proposition \ref{prop BI mon and res}. In fact, once monotonicity and restriction are assumed, the interpretation of $\rightarrow$ in intuitionistic logic are equivalent to its in classical logic.
			\end{enumerate}
		\end{proof}
		
		\vspace{0.5cm}
		
		\noindent\textbf{Remarks for Definition \ref{def sub 2-BID form}}
		
		\begin{remark}
			Indeed, we can use the conventional modification (3) for also $\sdimp \text{and}\ \sepimp$ when $\prog\equiv\qU$. However, (5) is strictly more powerful in the sense that, 1) when $\phi_1[\prog]$ is not defined but $\phi_2[\prog]\downarrow$, (5) gives a valid modification but (3) gives an undefined one and 2) when both $\phi_1[\prog]\downarrow$ and $\phi_2[\prog]\downarrow$, (3) is derivable from (5) by using Proposition \ref{prop equal substitution unitary}.
		\end{remark}

		\begin{remark}
			As implication is considered, to make the proof rule $({\rm Init})$ sound, it is necessary to find the weakest precondition of $\phi$. 
			That is why the modification for $\ast$ is somewhat different: if $\rt{\sem{\qI}(\rho)}{\free{\phi_1}\cup\free{\phi_2}}$ is a tensor product state between $\free{\phi_1}$ and $\free{\phi_2}$ and if $q\in\free{\phi_1}$, then the input state $\rho$ only need to be a tensor product state between $\free{\phi_1}\backslash q$ and $\free{\phi_2}$. In fact, the initialization of $q$ makes $q$ separable from all other variables, so the input $\rho$ is not necessary to be a tensor product state between $\free{\phi_1}$ and $\free{\phi_2}$.
		\end{remark}

		\vspace{0.5cm}
		
		\noindent\textbf{Proof of Proposition \ref{pro modification 2BID}}
		
		\begin{proposition}
			Let $\prog$ be unitary transformation $\qU$ or initialisation $\qI$, and $\phi$ be any 2-BID formula. If its modification $\phi[\prog]$ is defined according to Definition \ref{def sub atomic prop 2BID} and \ref{def sub 2-BID form}, then:
			\begin{enumerate}
				\item $\phi$ and $\phi[\prog]$ have the same domain: $\free{\phi} = \free{\phi[\prog]}$;
				\item for all $\rho\in\cD(\free{\phi}\cup\var(\prog))$, $\rho\models \phi[\prog]$ if and only if $\sem{\prog}(\rho)\models \phi$.
			\end{enumerate}
		\end{proposition}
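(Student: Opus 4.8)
The plan is to prove both clauses by a single induction on the structure of $\phi$, with $\phi[\prog]$ given by Definitions~\ref{def sub atomic prop 2BID} and \ref{def sub 2-BID form}. Clause (1) comes along for free: for $\top,\bot$ and the atomic case it is part of the defining requirement on $p[\prog]$, and for every connective it follows by unfolding the clauses of $\free{\cdot}$ together with the induction hypothesis. For clause (2), $\top,\bot$ are trivial and the atomic case $\phi\equiv p\in\AP$ is exactly the requirement imposed on $p[\prog]$ in Definition~\ref{def sub atomic prop 2BID}; one only has to check that each concrete atomic proposition of Section~\ref{sec free choice of AP 2BID} admits such a witnessing formula, which is done as in Proposition~\ref{pro modification} (using that $\rt{\sem{\prog}(\rho)}{\free{P}}$ is the expected conjugation by $U^{\qbar}$, resp.\ the $|0\>_q\<0|$-tensor of $\rt{\rho}{\free{P}}$, and that $\sem{\prog}$ fixes the maximally mixed state on the relevant registers).

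The classical connectives are routine. For $\wedge,\vee$ one uses the satisfaction clauses directly; for $\rightarrow$ and $\sd$ one appeals to their restrictive readings, valid in every 2-BID model by Proposition~\ref{prop BI mon and res} and recorded in Proposition~\ref{prop alt form}(5) and (1)(c): since $(\phi_1\triangle\phi_2)[\prog]=\phi_1[\prog]\,\triangle\,\phi_2[\prog]$ has the same free variables, $\rho\models\phi_1[\prog]\,\triangle\,\phi_2[\prog]$ reduces to a Boolean combination of $\rho\models\phi_i[\prog]$ and a purely syntactic disjointness condition, which by the induction hypothesis matches the corresponding Boolean combination of $\sem{\prog}(\rho)\models\phi_i$, i.e.\ $\sem{\prog}(\rho)\models\phi_1\,\triangle\,\phi_2$. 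For $\phi\equiv\phi_1\ast\phi_2$ one uses Proposition~\ref{prop alt form}(2): satisfaction of $\ast$ amounts to satisfaction of both conjuncts plus the decomposition $\rho\succeq\rt{\rho}{\free{\phi_1}}\otimes\rt{\rho}{\free{\phi_2}}$. When $\prog\equiv\qU$, Lemma~\ref{lem sound proof 7} transports the decomposition across $\sem{\qU}$ (its side condition $\qbar\subseteq\free{\phi_i}$ or $\qbar\cap\free{\phi_i}=\emptyset$ being the hypothesis of the modification clause), and the induction hypothesis transports the two satisfactions. When $\prog\equiv\qI$ with $q$ in neither $\free{\phi_i}$ the modification is the identity and Lemma~\ref{lem sound proof 1}(1) leaves the relevant reduced states untouched; and when $q\in\free{\phi_1}\cup\free{\phi_2}$, where $\phi[\qI]=(\phi_1[\qI]\sd\phi_2[\qI])\wedge(\id_{\free{\phi_1}\backslash q}\ast\id_{\free{\phi_2}\backslash q})$, Lemma~\ref{lem sound proof 6} says exactly that $\sem{\qI}(\rho)\succeq\rt{\sem{\qI}(\rho)}{\free{\phi_1}}\otimes\rt{\sem{\qI}(\rho)}{\free{\phi_2}}$ iff $\rho\succeq\rt{\rho}{\free{\phi_1}\backslash q}\otimes\rt{\rho}{\free{\phi_2}\backslash q}$, which combined with the induction hypothesis and the already-settled $\sd$ and $\ast$ clauses closes this subcase.

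The real work is in the spatial implications $\phi\equiv\phi_1\triangle\phi_2$ with $\triangle\in\{\sepimp,\sdimp\}$, where the case split of Definition~\ref{def sub 2-BID form}(5) is essential. If $\var(\prog)\cap(\free{\phi_2}\backslash\free{\phi_1})=\emptyset$, then $\phi[\prog]=\phi$ and, since $\free{\phi}=\free{\phi_2}\backslash\free{\phi_1}$ is disjoint from $\var(\prog)$, restriction (Proposition~\ref{prop BI mon and res}) plus Lemma~\ref{lem sound proof 1}(1) gives $\rt{\rho}{\free{\phi}}=\rt{\sem{\prog}(\rho)}{\free{\phi}}$, hence $\rho\models\phi\iff\sem{\prog}(\rho)\models\phi$. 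If instead $\var(\prog)\subseteq\free{\phi_2}\backslash\free{\phi_1}=\free{\phi}$, then $\phi[\prog]=\phi_1\,\triangle\,\phi_2[\prog]$ and one unfolds with Proposition~\ref{prop alt form}(3)--(4). For $\sepimp$ this is clean: the witnesses are forced tensor products, $\rt{\sem{\prog}(\rho)}{\free{\phi}}=\sem{\prog}(\rt{\rho}{\free{\phi}})$ by Lemma~\ref{lem sound proof 1}(2), and $\sem{\prog}(\rt{\rho}{\free{\phi}})\otimes\rho_1=\sem{\prog}(\rt{\rho}{\free{\phi}}\otimes\rho_1)$ by Lemma~\ref{lem sound proof 5} (using $\var(\prog)\subseteq\free{\phi}$, disjoint from $\dom{\rho_1}=\free{\phi_1}$), so the induction hypothesis on $\phi_2$ converts $\rho\models\phi_1\sepimp\phi_2[\prog]$ into $\sem{\prog}(\rho)\models\phi_1\sepimp\phi_2$ and back.

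I expect the $\sdimp$ subcase to be the main obstacle. Here the witnesses in Proposition~\ref{prop alt form}(3) are general couplings, not products, so proving the ``only if'' direction requires, given a coupling $\sigma_2$ of $\rt{\sem{\prog}(\rho)}{\free{\phi}}$ with some $\rho_1\models\phi_1$, producing a coupling $\rho_2$ of $\rt{\rho}{\free{\phi}}$ with $\rho_1$ satisfying $\sem{\prog}(\rho_2)=\sigma_2$; and for $\prog\equiv\qI$ a naive pullback would amount to a quantum extension of marginals, which need not exist. The way around this is to exploit that $q\in\var(\qI)\subseteq\free{\phi}$ forces $\rt{\sem{\qI}(\rho)}{\free{\phi}}=\sem{\qI}(\rt{\rho}{\free{\phi}})$ to carry $q$ in the pure product state $|0\>_q\<0|$, so every such $\sigma_2$ factors as $|0\>_q\<0|\otimes\sigma_2''$; one then recovers $\sigma_2\models\phi_2$ by pulling back through $\sem{\qI}$ when an extension exists, and otherwise by using the always-available product coupling $\rt{\rho}{\free{\phi}}\otimes\rho_1$ as a witness for $\rho\models\phi_1\sdimp\phi_2[\qI]$ and invoking restriction to conclude $\rt{\sigma_2}{\free{\phi_2}}\models\phi_2$. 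Making the domain bookkeeping precise here --- ensuring $\phi_2[\prog]$ is defined with the right free variables, and verifying that restriction together with Lemmas~\ref{lem sound proof 1} and \ref{lem sound proof 5} and the purity of the $q$-marginal really does suffice regardless of how $\free{\phi_1}$ and $\free{\phi_2}$ overlap --- is the technically heaviest part of the induction.
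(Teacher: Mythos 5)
Your induction mirrors the paper's own proof almost clause for clause: the same appeal to Definition \ref{def sub atomic prop 2BID} for atoms, Proposition \ref{prop alt form} for $\wedge,\vee,\rightarrow,\sd,\ast$ and the spatial implications, and Lemmas \ref{lem sound proof 1}, \ref{lem sound proof 5}, \ref{lem sound proof 6}, \ref{lem sound proof 7} for transporting reduced states and product decompositions; through the $\ast$ and $\sepimp$ cases this is sound and matches the paper. You also put your finger on the genuinely delicate point, the $\sdimp$ subcase with $\prog\equiv\qI$ and $q\in\free{\phi_2}\backslash\free{\phi_1}$, where one must pull an arbitrary coupling $\rho_3\in\rt{\sem{\qI}(\rho)}{\free{\phi}}\bullet\rho_1$ back to some $\rho_2\in\rt{\rho}{\free{\phi}}\bullet\rho_1$ with $\sem{\qI}(\rho_2)=\rho_3$; as you say, this is a quantum marginal-extension problem obstructed by monogamy. (For $\qU$ the pullback exists by conjugating with $U^\dag$; the paper's proof of the $\qI$ case simply asserts the pullback ``by Lemma \ref{lem sound proof 5}'', which only states $\sem{\prog}(\rho_0\otimes\rho_1)=\sem{\prog}(\rho_0)\otimes\rho_1$ and does not supply it, so the difficulty you identified is exactly the step the paper glosses over.)

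However, your proposed repair does not close the gap. From the fact that the product coupling $\rt{\rho}{\free{\phi}}\otimes\rho_1$ satisfies $\phi_2[\qI]$, hence its image under $\sem{\qI}$ satisfies $\phi_2$, you cannot ``invoke restriction'' to conclude anything about a \emph{different} coupling $\sigma_2$ with the same marginals: monotonicity and restriction transfer satisfaction only along $\preceq$, i.e.\ between a state and its own extensions or restrictions, never between two distinct couplings. And no argument of this shape can succeed, because the claimed equivalence can genuinely fail in this subcase. Take $\phi_1=\unia[b]$, $\phi_2=\unia[a,b]\wedge\bD[q]$ (so one may take $\phi_2[\qI]=\unia[a,b]\wedge\bD[q]$, and $\free{\phi_1\sdimp\phi_2}=\{q,a\}$), and let $\rho=\Phi^+_{qa}$, the projection onto $\frac{1}{\sqrt{2}}(|0\>_q|0\>_a+|1\>_q|1\>_a)$, so $\dom{\rho}=\free{\phi}\cup\var(\qI)=\{q,a\}$. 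Since $\rho$ is pure, the only coupling of $\rho$ with $\rho_1=\id_b/2$ is $\Phi^+_{qa}\otimes\id_b/2$, whose restriction to $ab$ is $\id_{ab}/4$, so $\rho\models\phi_1\sdimp\phi_2[\qI]$. But $\rt{\sem{\qI}(\rho)}{\{q,a\}}=|0\>_q\<0|\otimes\id_a/2$ admits the coupling $|0\>_q\<0|\otimes\Phi^+_{ab}$ with $\id_b/2$, whose $ab$-restriction is not uniform, so $\sem{\qI}(\rho)\not\models\phi_1\sdimp\phi_2$. So the obstacle is not mere domain bookkeeping: the left-to-right direction for $\sdimp$ under initialisation breaks both your fallback and the paper's cited-lemma step, and it cannot be patched without weakening the statement or changing the modification clause for $\sdimp$.
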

		\begin{proof}
			
			\noindent (1). Induction on the structure of $\phi$.
			
			\vspace{0.2cm}
			
			\noindent (2). 	We will introduce following lemmas which can be realized easily, and set variable set $\vars = \cD(\free{\phi}\cup\var(\prog))$.	
			
			Now we start to prove (2) by following two statements:
			
			{\bf Statement 1:} For any $\rho\in\cD(\vars)$, $\rho\models\phi[\qI]$ if and only if $\sem{\qI}(\rho)\models\phi$.
			\begin{enumerate}
				\item $\phi\in\AP$. By Definition \ref{def sub atomic prop 2BID}.
				
				\item $\phi \equiv \top$ or $\bot$. Trivial.
				
				\item $\phi \equiv \phi_1\wedge\!(\vee)\ \phi_2$. For any $\rho\in\cD(\vars)$, first by induction hypothesis, $\rho\models\phi_i[\qI]\Leftrightarrow\sem{\qI}(\rho)\models\phi_i$ for $i = 1,2$. Thus, 
				\begin{align*}
				&\rho\models\phi[\qI] \equiv \phi_1[\qI]\wedge\!(\vee)\ \phi_2[\qI] \\
				\Longleftrightarrow\ &\rho\models\phi_1[\qI] \text{\ and(or)\ }\rho\models\phi_2[\qI] \\
				\Longleftrightarrow\ &\sem{\qI}(\rho)\models\phi_1 \text{\ and(or)\ }\sem{\qI}(\rho)\models\phi_2 \\
				\Longleftrightarrow\ &\sem{\qI}(\rho)\models\phi_1\wedge\!(\vee)\ \phi_2.
				\end{align*}
				
				\item $\phi \equiv \phi_1\rightarrow\phi_2$. For any $\rho\in\cD(\vars)$, first by induction hypothesis, $\rho\models\phi_i[\qI]\Leftrightarrow\sem{\qI}(\rho)\models\phi_i$ for $i = 1,2$. Thus, 
				\begin{align*}
				&\rho\models\phi[\qI] \equiv \phi_1[\qI]\rightarrow\phi_2[\qI] \\
				\Longleftrightarrow\ &\rho\models\phi_1[\qI] \text{\ implies\ }\rho\models\phi_2[\qI] \\
				\Longleftrightarrow\ &\sem{\qI}(\rho)\models\phi_1 \text{\ implies\ }\sem{\qI}(\rho)\models\phi_2 \\
				\Longleftrightarrow\ &\sem{\qI}(\rho)\models\phi_1\rightarrow\phi_2.
				\end{align*}
				
				\item $\phi \equiv \phi_1\sd\phi_2$. For any $\rho\in\cD(\vars)$, first by induction hypothesis, $\rho\models\phi_i[\qI]\Leftrightarrow\sem{\qI}(\rho)\models\phi_i$ for $i = 1,2$. Thus, by Proposition \ref{prop alt form},
				\begin{align*}
				&\rho\models\phi[\qI] \equiv \phi_1[\qI]\sd\phi_2[\qI] \\
				\Longleftrightarrow\ &\free{\phi_1[\qI]}\cap\free{\phi_2[\qI]} = \emptyset,\ \rho\models\phi_1[\qI]\text{\ and\ } \rho\models\phi_2[\qI] \\
				\Longleftrightarrow\ &\free{\phi_1}\cap\free{\phi_2} = \emptyset,\ \sem{\qI}(\rho)\models\phi_1\text{\ and\ } \sem{\qI}(\rho)\models\phi_2 \\
				\Longleftrightarrow\ &\sem{\qI}(\rho)\models\phi_1\sd\phi_2. 
				\end{align*}
				
				\item $\phi \equiv \phi_1\ast\phi_2$. For any $\rho\in\cD(\vars)$, first by induction hypothesis, $\rho\models\phi_i[\qI]\Leftrightarrow\sem{\qI}(\rho)\models\phi_i$ for $i = 1,2$. 
				\begin{itemize}
					\item[$\cdot$]Case 1: $q\in\free{\phi_1}\cup\free{\phi_2}$ and  $\phi[\qI]\equiv (\phi_1[\qI] \sd \phi_2[\qI])\wedge(\id_{\free{\phi_1}\backslash q}\ast\id_{\free{\phi_2}\backslash q})$. Following by Proposition \ref{prop alt form} and Lemma \ref{lem sound proof 6}, we have :				
					\begin{align*}
					&\rho\models(\phi_1[\qI] \sd \phi_2[\qI])\wedge(\id_{\free{\phi_1}\backslash q}\ast\id_{\free{\phi_2}\backslash q}) \\
					\Longleftrightarrow\ &\free{\phi_1[\qI]}\cap\free{\phi_2[\qI]} = \emptyset,\ \rho\models\phi_1[\qI],\ \rho\models\phi_2[\qI] \text{\ and\ }\\
					& \rho\succeq\rt{\rho}{\free{\phi_1}\backslash q}\otimes\rt{\rho}{\free{\phi_2}\backslash q} \\
					\Longleftrightarrow\ &\free{\phi_1}\cap\free{\phi_2} = \emptyset,\ \sem{\qI}(\rho)\models\phi_1, \sem{\qI}(\rho)\models\phi_2 \text{\ and\ }\\
					&\sem{\qI}(\rho)\succeq\rt{\sem{\qI}(\rho)}{\free{\phi_1}}\otimes\rt{\sem{\qI}(\rho)}{\free{\phi_2}} \\
					\Longleftrightarrow\ &\sem{\qI}(\rho)\models\phi_1\ast\phi_2. 
					\end{align*}
					\item[$\cdot$]Case 2: $q\notin\free{\phi_1}\cup\free{\phi_2}$, and $\phi[\qI]\equiv \phi_1[\qI] \ast \phi_2[\qI]$. So, $\rt{\rho}{\free{\phi_1}\cup\free{\phi_2}} = \rt{\sem{\qI}(\rho)}{\free{\phi_1}\cup\free{\phi_2}}$, then using induction hypothesis we have:
					\begin{align*}
					&\rho\models\phi_1[\qI] \ast \phi_2[\qI] \\
					\Longleftrightarrow\ &\free{\phi_1[\qI]}\cap\free{\phi_2[\qI]} = \emptyset,\ \rho\models\phi_1[\qI],\ \rho\models\phi_2[\qI] \text{\ and\ }\\
					& \rho\succeq\rt{\rho}{\free{\phi_1}}\otimes\rt{\rho}{\free{\phi_2}} \\
					\Longleftrightarrow\ &\free{\phi_1}\cap\free{\phi_2} = \emptyset,\ \sem{\qI}(\rho)\models\phi_1, \sem{\qI}(\rho)\models\phi_2 \text{\ and\ }\\
					&\sem{\qI}(\rho)\succeq\rt{\sem{\qI}(\rho)}{\free{\phi_1}}\otimes\rt{\sem{\qI}(\rho)}{\free{\phi_2}} \\
					\Longleftrightarrow\ &\sem{\qI}(\rho)\models\phi_1\ast\phi_2. 
					\end{align*}				
				\end{itemize}

				\item $\phi \equiv \phi_1\sdimp\phi_2$. For any $\rho\in\cD(\vars)$, we consider following two cases:
				\begin{itemize}
					\item[$\cdot$] Case 1: $q\notin\free{\phi_2}\backslash\free{\phi_1}$, $\phi[\qI]\equiv\phi$. Note that $\free{\phi} = \free{\phi_2}\backslash\free{\phi_1}$, so by Lemma \ref{lem sound proof 1}, $\rt{\rho}{\free{\phi}} = \rt{\sem{\qI}(\rho)}{\free{\phi}}$, and then $\rho\models\phi[\qI]\Leftrightarrow\rt{\rho}{\free{\phi}}\models\phi\Leftrightarrow\rt{\sem{\qI}(\rho)}{\free{\phi}}\models\phi\Leftrightarrow\sem{\qI}(\rho)\models\phi$.
					\item[$\cdot$] Case 2: $q\in\free{\phi_2}\backslash\free{\phi_1}$, $\phi[\qI]\equiv\phi_1\sdimp\phi_2[\qI]$. We proof the following two directions by Proposition \ref{prop alt form}.
					\begin{itemize} 
						\item[-] If $\rho\models\phi[\qI]$, then $\sem{\qI}(\rho)\models\phi$. 
						
						First we have $\forall\rho_1$ s.t. $\dom{\rho_1}=\free{\phi_1}$, $\forall\rho_2\in\rt{\rho}{\free{\phi}}\bullet\rho_1$, $\rho_1\models\phi_1\Rightarrow\rho_2\models\phi_2[\qI]$. 
						
						Then, $\forall\rho_1$ s.t. $\dom{\rho_1}=\free{\phi_1}$, $\forall\rho_3\in\rt{\sem{\qI}(\rho)}{\free{\phi}}\bullet\rho_1$, note that $\rt{\sem{\qI}(\rho)}{\free{\phi}} = \sem{\qI}(\rt{\rho}{\free{\phi}})$, by Lemma \ref{lem sound proof 5}, $\exists\rho_2\in\rt{\rho}{\free{\phi}}\bullet\rho_1$ s.t. $\sem{\qI}(\rho_2) = \rho_3$. If $\rho_1\models\phi_1$, then $\rho_2\models\phi_2[\qI]$, by induction hypothesis, $\sem{\qI}(\rho_2)\models\phi_2$, or equivalently, $\rho_3\models\phi_2$. So, $\sem{\qI}(\rho)\models\phi_1\sdimp\phi_2\equiv\phi$.
						
						\item[-] If $\sem{\qI}(\rho)\models\phi$, then $\rho\models\phi[\qI]$. 
						
						First we have $\forall\rho_1$ s.t. $\dom{\rho_1}=\free{\phi_1}$, $\forall\rho_3\in\rt{\sem{\qI}(\rho)}{\free{\phi}}\bullet\rho_1$, $\rho_1\models\phi_1\Rightarrow\rho_3\models\phi_2$. 
						
						Then, $\forall\rho_1$ s.t. $\dom{\rho_1}=\free{\phi_1}$, $\forall\rho_2\in\rt{\rho}{\free{\phi}}\bullet\rho_1$, by Lemma \ref{lem sound proof 5}, $\exists\rho_3\in\sem{\qI}(\rt{\rho}{\free{\phi}})\bullet\rho_1$ s.t. $\sem{\qI}(\rho_2) = \rho_3$. Note that $\rt{\sem{\qI}(\rho)}{\free{\phi}} = \sem{\qI}(\rt{\rho}{\free{\phi}})$, so  $\rho_3\in\rt{\sem{\qI}(\rho)}{\free{\phi}}\bullet\rho_1$. If $\rho_1\models\phi_1$, then $\rho_3\models\phi_2$, $\sem{\qI}(\rho_2)\models\phi_2$, by induction hypothesis, $\rho_2\models\phi_2[\qI]$. So, $\rho\models\phi_1\sdimp\phi_2[\qI]\equiv\phi[\qI]$.
					\end{itemize}
				\end{itemize}
				
				\item $\phi \equiv \phi_1\sepimp\phi_2$. For any $\rho\in\cD(\vars)$, we consider following two cases:
				\begin{itemize}
					\item[$\cdot$] Case 1: $q\notin\free{\phi_2}\backslash\free{\phi_1}$, $\phi[\qI]\equiv\phi$. Note that $\free{\phi} = \free{\phi_2}\backslash\free{\phi_1}$, so by Lemma \ref{lem sound proof 1}, $\rt{\rho}{\free{\phi}} = \rt{\sem{\qI}(\rho)}{\free{\phi}}$, and then $\rho\models\phi[\qI]\Leftrightarrow\rt{\rho}{\free{\phi}}\models\phi\Leftrightarrow\rt{\sem{\qI}(\rho)}{\free{\phi}}\models\phi\Leftrightarrow\sem{\qI}(\rho)\models\phi$.
					\item[$\cdot$] Case 2: $q\in\free{\phi_2}\backslash\free{\phi_1}$, $\phi[\qI]\equiv\phi_1\sepimp\phi_2[\qI]$. By Proposition \ref{prop alt form} and Lemma \ref{lem sound proof 1}, \ref{lem sound proof 5}, we observe:
					\begin{align*}
					&\rho\models\phi[\qI]\equiv\phi_1\sepimp\phi_2[\qI] \\
					\Longleftrightarrow\ &\forall\rho_1\text{\ s.t.\ }\dom{\rho_1}=\free{\phi_1}, \rho_1\models\phi_1\Rightarrow\rt{\rho}{\free{\phi}}\otimes\rho_1\models\phi_2[\qI] \\
					\Longleftrightarrow\ &\forall\rho_1\text{\ s.t.\ }\dom{\rho_1}=\free{\phi_1}, \rho_1\models\phi_1\Rightarrow\sem{\qI}(\rt{\rho}{\free{\phi}}\otimes\rho_1)\models\phi_2\\
					\Longleftrightarrow\ &\forall\rho_1\text{\ s.t.\ }\dom{\rho_1}=\free{\phi_1}, \rho_1\models\phi_1\Rightarrow\rt{\sem{\qI}(\rho)}{\free{\phi}}\otimes\rho_1\models\phi_2 \\
					\Longleftrightarrow\ &\sem{\qI}(\rho)\models\phi_1\sepimp\phi_2\equiv\phi
					\end{align*}
				\end{itemize}
				
			\end{enumerate}

			{\bf Statement 2:} For any $\rho\in\cD(\vars)$, $\rho\models\phi[\qU]$ if and only if $\sem{\qU}(\rho)\models\phi$.
			\begin{enumerate}
				\item $\phi\in\AP$. By Definition \ref{def sub atomic prop 2BID}.
				
				\item $\phi \equiv \top$ or $\bot$. Trivial.
				
				\item $\phi \equiv \phi_1\wedge\!(\vee)\ \phi_2$. Similar to Statement 1 (3).
				%			For any $\rho\in\cD(\vars)$, first by induction hypothesis, $\rho\models\phi_i[\qU]\Leftrightarrow\sem{\qU}(\rho)\models\phi_i$ for $i = 1,2$. Thus, 
				%			\begin{align*}
				%			&\rho\models\phi[\qU] \equiv \phi_1[\qU]\wedge\!(\vee)\ \phi_2[\qU] \\
				%			\Longleftrightarrow\ &\rho\models\phi_1[\qU] \text{\ and(or)\ }\rho\models\phi_2[\qU] \\
				%			\Longleftrightarrow\ &\sem{\qU}(\rho)\models\phi_1 \text{\ and(or)\ }\sem{\qU}(\rho)\models\phi_2 \\
				%			\Longleftrightarrow\ &\sem{\qU}(\rho)\models\phi_1\wedge\!(\vee)\ \phi_2.
				%			\end{align*}
				
				\item $\phi \equiv \phi_1\rightarrow\phi_2$.  Similar to Statement 1 (4).
				%			For any $\rho\in\cD(\vars)$, first by induction hypothesis, $\rho\models\phi_i[\qU]\Leftrightarrow\sem{\qU}(\rho)\models\phi_i$ for $i = 1,2$. Thus, 
				%			\begin{align*}
				%			&\rho\models\phi[\qU] \equiv \phi_1[\qU]\rightarrow\phi_2[\qU] \\
				%			\Longleftrightarrow\ &\rho\models\phi_1[\qU] \text{\ implies\ }\rho\models\phi_2[\qU] \\
				%			\Longleftrightarrow\ &\sem{\qU}(\rho)\models\phi_1 \text{\ implies\ }\sem{\qU}(\rho)\models\phi_2 \\
				%			\Longleftrightarrow\ &\sem{\qU}(\rho)\models\phi_1\rightarrow\phi_2.
				%			\end{align*}
				
				\item $\phi \equiv \phi_1\sd\phi_2$. Similar to Statement 1 (5).
				%			For any $\rho\in\cD(\vars)$, first by induction hypothesis, $\rho\models\phi_i[\qU]\Leftrightarrow\sem{\qU}(\rho)\models\phi_i$ for $i = 1,2$. Thus, by Proposition \ref{prop alt form},
				%			\begin{align*}
				%			&\rho\models\phi[\qU] \equiv \phi_1[\qU]\sd\phi_2[\qU] \\
				%			\Longleftrightarrow\ &\free{\phi_1[\qU]}\cap\free{\phi_2[\qU]} = \emptyset,\ \rho\models\phi_1[\qU]\text{\ and\ } \rho\models\phi_2[\qU] \\
				%			\Longleftrightarrow\ &\free{\phi_1}\cap\free{\phi_2} = \emptyset,\ \sem{\qU}(\rho)\models\phi_1\text{\ and\ } \sem{\qU}(\rho)\models\phi_2 \\
				%			\Longleftrightarrow\ &\sem{\qU}(\rho)\models\phi_1\sd\phi_2. 
				%			\end{align*}
				
				\item $\phi \equiv \phi_1\ast\phi_2$. By assumption $\qbar\in P_v(\phi)$, either $\qbar\subseteq\free{\phi_1}$ or $\qbar\subseteq\free{\phi_2}$ or $\qbar\cap(\free{\phi_1}\cup\free{\phi_2}) = \emptyset$. So according to Lemma \ref{lem sound proof 7} and induction hypothesis we have:
				\begin{align*}
				&\rho\models\phi_1[\qU] \ast \phi_2[\qU] \\
				\Longleftrightarrow\ &\free{\phi_1[\qU]}\cap\free{\phi_2[\qU]} = \emptyset,\ \rho\models\phi_1[\qU],\ \rho\models\phi_2[\qU] \text{\ and\ }\\
				& \rho\succeq\rt{\rho}{\free{\phi_1[\qU]}}\otimes\rt{\rho}{\free{\phi_2[\qU]}} \\
				\Longleftrightarrow\ &\free{\phi_1}\cap\free{\phi_2} = \emptyset,\ \sem{\qU}(\rho)\models\phi_1, \sem{\qU}(\rho)\models\phi_2 \text{\ and\ }\\
				&\sem{\qU}(\rho)\succeq\rt{\sem{\qU}(\rho)}{\free{\phi_1}}\otimes\rt{\sem{\qU}(\rho)}{\free{\phi_2}} \\
				\Longleftrightarrow\ &\sem{\qU}(\rho)\models\phi_1\ast\phi_2. 
				\end{align*}
				
				\item $\phi \equiv \phi_1\sdimp\phi_2$.  Similar to Statement 1 (7).
				
				\item $\phi \equiv \phi_1\sepimp\phi_2$.  Similar to Statement 1 (8).
			\end{enumerate}

		\end{proof}

		\vspace{0.5cm}
		
		\noindent\textbf{Proof of Proposition \ref{prop equal substitution unitary}}
		
		\begin{proposition}
			\label{prop equal substitution unitary}
			For any $\phi_1,\phi_2$ and any command $\prog\equiv\qU$ and $\phi_1[\prog]\Mexist$ and $\phi_2[\prog]\Mexist$, then:
			\begin{enumerate}
				\item If $\qbar\cap\free{\phi_2} = \emptyset$, $\models (\phi_1\qmimp\phi_2) \leftrightarrow (\phi_1[\prog]\qmimp\phi_2)$; 
				\item If $\qbar\cap\free{\phi_2}\backslash\free{\phi_1} = \emptyset$, $\models (\phi_1\qmimp\phi_2) \leftrightarrow (\phi_1[\prog]\qmimp\phi_2[\prog])$; 
			\end{enumerate}
			where $\qmimp$ stands for $\sdimp$ or $\sepimp$.
		\end{proposition}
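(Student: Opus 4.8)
The plan is to prove both equivalences by expanding the definitions of $\sdimp$ and $\sepimp$ given in Proposition~\ref{prop alt form} (clauses 3 and 4), and then using Proposition~\ref{pro modification 2BID}, which says that modification $\phi_i[\prog]$ preserves domains and satisfies $\rho\models\phi_i[\prog]\iff\sem{\prog}(\rho)\models\phi_i$. The key observation is that since $\prog\equiv\qU$ is a unitary transformation, $\sem{\prog}$ is a bijection on states with inverse $\sem{\qbar:=U^\dag[\qbar]}$, and it commutes with restriction and with the coupling operations $\bullet$ and $\circ$ on registers disjoint from $\qbar$ (these are exactly Lemmas~\ref{lem sound proof 1}, \ref{lem sound proof 5}, \ref{lem sound proof 7}). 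The first step would be to record the domain bookkeeping: by Proposition~\ref{pro modification 2BID}(1), $\free{\phi_i[\prog]}=\free{\phi_i}$, so $\free{\phi_1\qmimp\phi_2}=\free{\phi_1[\prog]\qmimp\phi_2[\prog]}=\free{\phi_2}\setminus\free{\phi_1}$, and in case (1), $\free{\phi_1[\prog]\qmimp\phi_2}$ is the same set; this ensures the satisfaction relations on both sides are defined on the same states.

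For part (2), consider an arbitrary $\rho$ with $\dom{\rho}\supseteq\free{\phi_2}\setminus\free{\phi_1}$. I would take $\qmimp=\sdimp$ first. Unfolding clause 3 of Proposition~\ref{prop alt form}: $\rho\models\phi_1\sdimp\phi_2$ iff for all $\rho_1$ with $\dom{\rho_1}=\free{\phi_1}$ and all $\rho_2\in\rt{\rho}{\free{\phi_1\sdimp\phi_2}}\bullet\rho_1$, $\rho_1\models\phi_1$ implies $\rho_2\models\phi_2$. Since $\qbar\cap(\free{\phi_2}\setminus\free{\phi_1})=\emptyset$, the modification $\prog$ acts only on $\free{\phi_1}$ (inside $\dom{\rho_1}$) plus possibly some variables inside $\free{\phi_1}\cap\free{\phi_2}$; applying $\sem{\prog}$ componentwise to $\rho_1$ and keeping $\rt{\rho}{\free{\phi_1\sdimp\phi_2}}$ fixed gives a bijection between the couplings $\rho_2$ appearing here and the couplings appearing in the condition for $\rho\models\phi_1[\prog]\sdimp\phi_2[\prog]$ — this is where Lemma~\ref{lem sound proof 5} (that $\sem{\prog}$ distributes over $\otimes$/$\bullet$ when the program's variables sit inside one factor) does the work, exactly as in the $\sdimp$ case of the proof of Proposition~\ref{pro modification 2BID}. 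Then using Proposition~\ref{pro modification 2BID}(2) to translate $\rho_1\models\phi_1\iff\sem{\prog}(\rho_1)\models\phi_1[\prog]$ and $\rho_2\models\phi_2\iff\sem{\prog}(\rho_2)\models\phi_2[\prog]$ closes the equivalence. The $\sepimp$ case is the same argument with $\circ$ in place of $\bullet$ and clause 4 in place of clause 3.

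For part (1) the hypothesis is stronger, $\qbar\cap\free{\phi_2}=\emptyset$, so $\phi_2[\prog]=\phi_2$ (by Definition~\ref{def sub 2-BID form} clauses for $\wedge,\vee,\rightarrow,\sd,\ast,\sdimp,\sepimp$, a formula on variables disjoint from $\qbar$ is unchanged by $\prog$ — this needs a short induction, or one can cite it as part of the modification definition). Hence part (1) is the special case of part (2) where $\phi_2[\prog]=\phi_2$, and in fact it is slightly more direct: in the defining condition for $\rho\models\phi_1\qmimp\phi_2$, only the $\rho_1$ quantifier is affected by $\prog$, and $\rho_1\leftrightarrow\sem{\prog}(\rho_1)$ is a bijection on $\{\rho_1:\dom{\rho_1}=\free{\phi_1}\}$ preserving $\phi_1\leftrightarrow\phi_1[\prog]$ (Proposition~\ref{pro modification 2BID}(2)) and preserving all couplings $\rho_2$ over disjoint registers (Lemma~\ref{lem sound proof 5} again, since $\qbar\subseteq\free{\phi_1}=\dom{\rho_1}$ after the modification is applied only there). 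The main obstacle I anticipate is not conceptual but bookkeeping: I must be careful that the modification $\phi_1[\prog]$ may genuinely move variables that lie in $\free{\phi_1}\cap\free{\phi_2}$ (so $\phi_2[\prog]\neq\phi_2$ in part (2)), and I need to check that applying $\sem{\prog}$ to $\rho_1$ and simultaneously to the resulting coupling $\rho_2$ is consistent — i.e., that $\sem{\prog}$ applied to $\rho_2$ restricts to $\sem{\prog}$ applied to $\rho_1$ on the shared variables and to the identity on $\free{\phi_1\sdimp\phi_2}$. This consistency is exactly what Lemma~\ref{lem sound proof 5} guarantees when $\qbar$ lies inside $\dom{\rho_1}$, which holds because $\var(\prog)=\qbar$ and the modification is only defined when $\qbar\subseteq\free{\phi_i}$ or $\qbar\cap\free{\phi_i}=\emptyset$; the residual case analysis on which of these holds for $i=1,2$ is the last routine step.
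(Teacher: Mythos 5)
Your plan is essentially the paper's own argument: unfold the semantics of $\sdimp$ and $\sepimp$ via Proposition~\ref{prop alt form}, translate satisfaction through Proposition~\ref{pro modification 2BID}, and change variables in the quantifiers by conjugating $\rho_1$ and the couplings $\rho_2$ with the (reversible) unitary, splitting into the cases $\qbar\subseteq\free{\phi_1}$ and $\qbar\cap\free{\phi_1}=\emptyset$; this matches the paper's proof, which handles part (1) directly and part (2) by exactly this case analysis. Three small bookkeeping corrections when you write it out: Proposition~\ref{pro modification 2BID}(2) gives $\rho_1\models\phi_1[\prog]\Leftrightarrow\sem{\prog}(\rho_1)\models\phi_1$, so your stated translation ``$\rho_1\models\phi_1\Leftrightarrow\sem{\prog}(\rho_1)\models\phi_1[\prog]$'' is off by one application of $U$ and must be run with the inverse unitary (harmless, since you already note $\sem{\prog}$ is a bijection); the bijection on $\bullet$-couplings needs the fact that conjugation by $U$ on $\qbar\subseteq\dom{\rho_1}$ preserves membership in $\rt{\rho}{\free{\phi_2}\backslash\free{\phi_1}}\bullet(\cdot)$, which follows from reversibility of $U$ and partial-trace properties rather than from Lemma~\ref{lem sound proof 5} (that lemma only covers the tensor-product, i.e.\ $\sepimp$, case); and in part (1) the reduction via ``$\phi_2[\prog]=\phi_2$'' should be phrased semantically (both are satisfied by the same states of domain $\free{\phi_2}$ when $\qbar\cap\free{\phi_2}=\emptyset$), since 2-BID modification of atomic propositions is only specified semantically in Definition~\ref{def sub atomic prop 2BID}.
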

		
		\vspace{0.2cm}
		
		\begin{proof}
			According to Proposition \ref{pro modification 2BID}, we have the following statement: if $\phi[\qU]\downarrow$, then 
			$$\text{{\bf Statement:} For any $\rho\in\cD(\vars)$, $\rho\models\phi[\qU]$ if and only if $\sem{\qU}(\rho)\models\phi$.} $$
			By restriction lemma and the existence of domain extension, we directly have: 
			\begin{itemize}
				\item[--] if $\qbar\subseteq\free{\phi}$, then $\forall\rho\text{\ s.t.\ }\dom{\rho} = \free{\phi},\ \rho\models\phi[\qU]\Leftrightarrow (U^{\qbar}\otimes\id_{\free{\phi}\backslash\qbar})\rho(U^{\qbar\dag}\otimes\id_{\free{\phi}\backslash\qbar})\models\phi.$
				\item[--] if $\qbar\cap\free{\phi}=\emptyset$, then $\forall\rho\text{\ s.t.\ }\dom{\rho} = \free{\phi},\ \rho\models\phi[\qU]\Leftrightarrow \rho\models\phi.$
			\end{itemize}
			
			Now let us start to prove two statements for $\sdimp$, and it is similar for $\sepimp$. For statements $\mathit 1$, using Proposition \ref{prop alt form} we observe:
			\begin{align*}
			&\rho\models\phi_1\sdimp\phi_2 \\
			\Longleftrightarrow\ &\forall\rho_1\text{\ s.t.\ }\dom{\rho_1}=\free{\phi_1}, \forall\rho_2\in\rt{\rho}{\free{\phi_2}\backslash\free{\phi_1}}\bullet\rho_1, \rho_1\models\phi_1\Rightarrow\rho_2\models\phi_2 \\
			\Longleftrightarrow\ &\forall\rho_1\text{\ s.t.\ }\dom{\rho_1}=\free{\phi_1}, \forall\rho_2\in\rt{\rho}{\free{\phi_2}\backslash\free{\phi_1}}\bullet\rho_1, \rho_1\models\phi_1[\qU]\Rightarrow\rho_2\models\phi_2 \\
			\Longleftrightarrow\ &\rho\models\phi_1[\qU]\sdimp\phi_2. 
			\end{align*}
			
			For statement $\mathit 2$, there are two cases:
			\begin{itemize}
				\item[$\cdot$]Case 1: $\qbar\subseteq\free{\phi_1}$. We have:
				\begin{align*}
				&\rho\models\phi_1[\qU]\sdimp\phi_2[\qU] \\
				\Longleftrightarrow\ &\forall\rho_1\text{\ s.t.\ }\dom{\rho_1}=\free{\phi_1}, \forall\rho_2\in\rt{\rho}{\free{\phi_2}\backslash\free{\phi_1}}\bullet\rho_1, \rho_1\models\phi_1[\qU]\Rightarrow\rho_2\models\phi_2[\qU] \\
				\Longleftrightarrow\ &\forall\rho_1^\prime\text{\ s.t.\ }\dom{\rho_1}=\free{\phi_1}, \forall\rho_2^\prime\in\rt{\rho}{\free{\phi_2}\backslash\free{\phi_1}}\bullet\rho_1^\prime, \rho_1^\prime\models\phi_1\Rightarrow\rho_2^\prime\models\phi_2 \\
				\Longleftrightarrow\ &\rho\models\phi_1\sdimp\phi_2. 
				\end{align*}
				by realizing that there is one-to-one correspondence between $\rho_1$ and $\rho_1^\prime\triangleq(U^{\qbar}\otimes\id_{\dom{\rho_1}\backslash\qbar})\rho(U^{\qbar\dag}\otimes\id_{\dom{\rho_1}\backslash\qbar})$, and between $\rho_2$ and $\rho_2^\prime\triangleq(U^{\qbar}\otimes\id_{\dom{\rho_2}\backslash\qbar})\rho(U^{\qbar\dag}\otimes\id_{\dom{\rho_2}\backslash\qbar})$; moreover, $\rho_2\in\rt{\rho}{\free{\phi_2}\backslash\free{\phi_1}}\bullet\rho_1$ if and only if $\rho_2^\prime\in\rt{\rho}{\free{\phi_2}\backslash\free{\phi_1}}\bullet\rho_1^\prime$. These facts come from the reversibility of unitary transformations.
				
				\item[$\cdot$]Case 2: $\qbar\cap\free{\phi_1} = \emptyset$. So $\qbar\cap(\free{\phi_1}\cup\free{\phi_2}) = \emptyset$. Then obviously,
				\begin{align*}
				&\rho\models\phi_1\sdimp\phi_2 \\
				\Longleftrightarrow\ &\forall\rho_1\text{\ s.t.\ }\dom{\rho_1}=\free{\phi_1}, \forall\rho_2\in\rt{\rho}{\free{\phi_2}\backslash\free{\phi_1}}\bullet\rho_1, \rho_1\models\phi_1\Rightarrow\rho_2\models\phi_2 \\
				\Longleftrightarrow\ &\forall\rho_1\text{\ s.t.\ }\dom{\rho_1}=\free{\phi_1}, \forall\rho_2\in\rt{\rho}{\free{\phi_2}\backslash\free{\phi_1}}\bullet\rho_1, \rho_1\models\phi_1[\qU]\Rightarrow\rho_2\models\phi_2[\qU] \\
				\Longleftrightarrow\ &\rho\models\phi_1[\qU]\sdimp\phi_2[\qU]. 
				\end{align*}
			\end{itemize}
		\end{proof}

		\vspace{0.5cm}
		
		\noindent\textbf{Proof of Theorem \ref{thm eq glb var set 2BID}}
		
		\begin{theorem}[Theorem \ref{thm eq glb var set 2BID}]
			For any two sets $\vars$ and $\vars^\prime$ of variables,
			$\vars\models\{\phi\}\prog\{\psi\} \text{\ if\ and\ only\ if\ }\vars^\prime\models\{\phi\}\prog\{\psi\}.$
		\end{theorem}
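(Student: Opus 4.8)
The plan is to prove the statement in exactly the same way as Theorem \ref{thm eq glb var set} for the BI case (whose proof appears in the excerpt): it suffices to treat two sets $\vars \subseteq \vars'$ and show the two implications separately, and the only facts used are (a) that the denotational semantics is a cylindric extension, $\sem{\prog}_{\vars'} = \sem{\prog}_{\vars} \otimes \cI_{\vars' \setminus \vars}$ (Proposition \ref{prop sem qo}), and (b) the monotonicity-and-restriction property of satisfaction, which for $2$-BID formulas is Proposition \ref{prop BI mon and res}. Since the excerpt already establishes that Proposition \ref{prop BI mon and res} (monotonicity and restriction) holds for all $2$-BID formulas, and that all $2$-BID formulas are restrictive, the BI-case argument transports verbatim.

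Concretely, first I would reduce to the case $\vars \subseteq \vars'$ (the general case follows by considering $\vars \cup \vars'$ as an intermediate set). For the extension direction ($\Rightarrow$): take any $\rho \in \cD(\vars')$ with $\rho \models \phi$; since $\free{\phi} \subseteq \vars$, Proposition \ref{prop BI mon and res} gives $\rt{\rho}{\vars} \models \phi$, so by the hypothesis $\vars \models \{\phi\}\prog\{\psi\}$ we get $\sem{\prog}_{\vars}(\rt{\rho}{\vars}) \models \psi$. Then, exactly as in the displayed calculation in the proof of Theorem \ref{thm eq glb var set}, one shows
\[
\sem{\prog}_{\vars}(\rt{\rho}{\vars}) = \rt{\sem{\prog}_{\vars'}(\rho)}{\vars}
\]
using Proposition \ref{prop sem qo} together with the fact that partial trace commutes with the cylindric action of $\cE_\prog$; hence $\sem{\prog}_{\vars}(\rt{\rho}{\vars}) \preceq \sem{\prog}_{\vars'}(\rho)$, and monotonicity of satisfaction yields $\sem{\prog}_{\vars'}(\rho) \models \psi$.

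For the restriction direction ($\Leftarrow$): take any $\rho \in \cD(\vars)$ with $\rho \models \phi$, pick any $\rho' \in \cD(\vars')$ with $\rho = \rt{\rho'}{\vars}$ (such a $\rho'$ exists, e.g.\ $\rho' = \rho \otimes \tau$ for any state $\tau$ on $\vars'\setminus\vars$), so $\rho' \models \phi$ by monotonicity. By hypothesis $\sem{\prog}_{\vars'}(\rho') \models \psi$, and the same identity as above gives $\sem{\prog}_{\vars}(\rho) = \rt{\sem{\prog}_{\vars'}(\rho')}{\vars} \preceq \sem{\prog}_{\vars'}(\rho')$; since $\free{\psi} \subseteq \vars$, Proposition \ref{prop BI mon and res} (restriction) gives $\sem{\prog}_{\vars}(\rho) \models \psi$.

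There is essentially no hard part here, since the structural work has been done once already; the only point requiring a little care is making sure that the monotonicity-and-restriction property is genuinely available for \emph{all} $2$-BID formulas, including the new connectives $\sd$ and $\sdimp$ and the implications $\rightarrow$, $\sepimp$ — but this is precisely what Proposition \ref{prop BI mon and res} asserts (its proof in the excerpt relies on the downwards-closed condition of the $2$-BID frame, which $(\cD,\circ,\bullet,\preceq,1,\types)$ satisfies). So the proof is just: reduce to $\vars\subseteq\vars'$, invoke Proposition \ref{prop sem qo} to relate the two semantic functions by partial trace, and invoke Proposition \ref{prop BI mon and res} in both directions.
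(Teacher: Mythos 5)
Your proof is correct and follows exactly the route the paper takes: the paper's own argument for this theorem is just "same as Theorem \ref{thm eq glb var set}, using Proposition \ref{prop BI mon and res} in place of the BI restriction/monotonicity facts," which is precisely what you spelled out (cylindric extension via Proposition \ref{prop sem qo} plus monotonicity and restriction for all 2-BID formulas).
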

		
		\begin{proof}
			Similar to the proof of Theorem \ref{thm eq glb var set}  by employing Proposition \ref{prop BI mon and res}.
		\end{proof}

		\vspace{0.5cm}
		
		\noindent\textbf{Proof of Proposition \ref{prop CM 2BID}}
		
		\begin{proposition}[Proposition \ref{prop CM 2BID}, Extended Version] The formulas generated by following grammar are ${\rm CM}$. 
			$$
			\phi,\psi ::= p\in\AP\ |\ \top\ |\ \bot\ |\ \phi\wedge\psi\ |\ \phi\sd\psi\ |\ \mu \sepimp \psi\ |\ \phi\in{\rm SP}\ |\ \mu_1\ast\phi
			$$
			where $\mu$ is an arbitrary 2-BID formula, and $\mu_1\in {\rm SP}$.
		\end{proposition}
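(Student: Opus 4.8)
The plan is to prove the claim by structural induction on the grammar, unfolding the definition of closed-under-mixtures (Definition \ref{def CM}): given $\rho,\rho'$ with $\dom{\rho}=\dom{\rho'}$, $\rho\models\phi$, $\rho'\models\phi$, and $\lambda\in[0,1]$, we must show $\sigma\models\phi$ where $\sigma\triangleq\lambda\rho+(1-\lambda)\rho'$; note $\dom{\sigma}=\dom{\rho}=\dom{\rho'}$. For the cases $p\in\cP\cup\cU\cup\cU_p$, $\top$, $\bot$, $\phi_1\wedge\phi_2$, $\phi_1\in{\rm SP}$, and $\mu_1\ast\phi_1$ (with $\mu_1\in{\rm SP}$) the arguments are those already given in the proof of Proposition \ref{prop CM}: support inclusion $\supp(\rt{\sigma}{\free{P}})\subseteq\supp(\rt{\rho}{\free{P}})\sqcup\supp(\rt{\rho'}{\free{P}})\subseteq P$ is preserved, the marginal $\rt{\sigma}{S}=\lambda\rt{\rho}{S}+(1-\lambda)\rt{\rho'}{S}$ is still $\id_S/\dim(S)$ (the class $\cU_p$ being handled exactly like $\cU$), and for ${\rm SP}$ formulas the uniqueness of the minimal witness forces $\rt{\sigma}{\free{\phi_1}}=\rt{\rho}{\free{\phi_1}}=\rt{\rho'}{\free{\phi_1}}$. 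So the genuinely new obligations are the spatial conjunction $\sd$ and the magic wand $\mu\sepimp\psi$, and both reduce to linearity once one passes to the concrete characterisations of Proposition \ref{prop alt form}.

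For $\phi\equiv\phi_1\sd\phi_2$: by Proposition \ref{prop alt form}(1), $\rho\models\phi_1\sd\phi_2$ iff $\free{\phi_1}\cap\free{\phi_2}=\emptyset$, $\dom{\rho}\supseteq\free{\phi_1}\cup\free{\phi_2}$, $\rho\models\phi_1$, and $\rho\models\phi_2$. The first two conditions are, respectively, syntactic and dependent only on $\dom{\rho}$, hence identical for $\rho$, $\rho'$, and $\sigma$; by the induction hypotheses ($\phi_1$ and $\phi_2$ are CM), applied over the common domain, $\sigma\models\phi_1$ and $\sigma\models\phi_2$, whence $\sigma\models\phi_1\sd\phi_2$. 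In effect, for the purposes of CM, $\sd$ behaves just like $\wedge$.

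For $\phi\equiv\mu\sepimp\psi$, where $\psi$ is produced by the grammar (so CM by the induction hypothesis) and $\mu$ is an arbitrary $2$-BID formula: by Proposition \ref{prop alt form}(4), $\rho\models\mu\sepimp\psi$ iff for every $\rho_1$ with $\dom{\rho_1}=\free{\mu}$ and $\rho_1\models\mu$ one has $\rt{\rho}{\free{\mu\sepimp\psi}}\otimes\rho_1\models\psi$. Fix such a $\rho_1$. From $\rho\models\mu\sepimp\psi$ and $\rho'\models\mu\sepimp\psi$ we obtain $\rt{\rho}{\free{\mu\sepimp\psi}}\otimes\rho_1\models\psi$ and $\rt{\rho'}{\free{\mu\sepimp\psi}}\otimes\rho_1\models\psi$. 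Since partial trace is linear and $\otimes$ is linear in its first argument,
\[
\rt{\sigma}{\free{\mu\sepimp\psi}}\otimes\rho_1 = \lambda\,(\rt{\rho}{\free{\mu\sepimp\psi}}\otimes\rho_1) + (1-\lambda)(\rt{\rho'}{\free{\mu\sepimp\psi}}\otimes\rho_1),
\]
and the two summands share the domain $\free{\mu\sepimp\psi}\cup\free{\mu}\supseteq\free{\psi}$, so CM of $\psi$ makes this convex combination satisfy $\psi$. As $\rho_1$ was arbitrary, $\sigma\models\mu\sepimp\psi$ by Proposition \ref{prop alt form}(4). Note that only $\psi$ is fed to the induction hypothesis; $\mu$ never needs to be CM.

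The main obstacle is conceptual rather than computational: the wand clause is, a priori, a universal statement over all witnesses of a composition $\rho'\circ\rho_1$, which need not interact well with convex combinations. The point is that in the concrete quantum $2$-BID frame $(\cD,\circ,\bullet,\preceq,1,\types)$ the operation $\circ$ returns at most the single state $\rho'\otimes\rho_1$, so Proposition \ref{prop alt form}(4) collapses the wand to a single affine test---whether $\rt{\rho}{\free{\mu\sepimp\psi}}\otimes\rho_1\models\psi$---after which linearity of partial trace (Proposition \ref{pro-rt}) and of the tensor product finishes the job. I would also take care with the domain bookkeeping, checking that $\rt{\rho}{\free{\mu\sepimp\psi}}\otimes\rho_1$, $\rt{\rho'}{\free{\mu\sepimp\psi}}\otimes\rho_1$, and their mixture all carry a common domain containing $\free{\psi}$, so that the induction hypothesis for $\psi$ is legitimately applicable.
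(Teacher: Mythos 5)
Your proof is correct and follows essentially the same route as the paper's: structural induction, reusing the BI-level arguments for the atomic, $\wedge$, SP, and $\mu_1\ast\phi$ cases, treating $\sd$ like $\wedge$ via Proposition \ref{prop alt form}(1), and handling $\mu\sepimp\psi$ by fixing a witness $\rho_1\models\mu$, using linearity of partial trace and tensor to express $\rt{\sigma}{\free{\mu\sepimp\psi}}\otimes\rho_1$ as a convex combination, and invoking the induction hypothesis (CM) only for $\psi$. Your explicit use of $\free{\mu\sepimp\psi}=\free{\psi}\backslash\free{\mu}$ and the accompanying domain bookkeeping is in fact slightly more careful than the paper's write-up, which loosely writes $\rt{\rho}{\free{\psi}}$ in that step.
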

		
		\begin{proof}
			\begin{enumerate}
				\item $p\equiv P\in\AP$. Similar to the proof of Proposition \ref{prop CM}.
				\item $\top$ or $\bot$. Trivial.
				\item $\phi\wedge\psi$. Suppose $\rho,\rho^\prime\in\cD$ with same domain and $\rho\models \phi\wedge\psi$ and $\rho^\prime\models \phi\wedge\psi$, then by induction hypothesis, for any $\lambda\in[0,1]$, 
				$$\lambda\rho + (1-\lambda)\rho^\prime \models\phi,\quad \lambda\rho + (1-\lambda)\rho^\prime \models\psi$$
				and thus, $\lambda\rho + (1-\lambda)\rho^\prime \models\phi\wedge\psi$.
				\item $\phi\sd\psi$. Suppose $\rho,\rho^\prime\in\cD$ with same domain and $\rho\models \phi\sd\psi$ and $\rho^\prime\models \phi\sd\psi$, then by induction hypothesis and Proposition \ref{prop useful HR} (3) and \ref{prop alt form} (1.c), $\free{\phi}\cap\free{\psi} = \emptyset$, for any $\lambda\in[0,1]$, 
				$$\lambda\rho + (1-\lambda)\rho^\prime \models\phi,\quad \lambda\rho + (1-\lambda)\rho^\prime \models\psi$$
				and thus, $\lambda\rho + (1-\lambda)\rho^\prime \models\phi\sd\psi$.
				\item $\mu\sepimp\psi$. Suppose $\rho,\rho^\prime\in\cD$ with same domain and $\rho\models \mu\sepimp\psi$ and $\rho^\prime\models \mu\sepimp\psi$, then by induction hypothesis and Proposition \ref{prop alt form} (4), we have for any $\sigma\in\cD(\free{\mu})$ such that $\sigma\models\mu$, and for any $\lambda\in[0,1]$,
				\begin{align*}
				&\sigma\otimes\rt{\rho}{\free{\psi}}\models\psi,\quad \sigma\otimes\rt{\rho^\prime}{\free{\psi}}\models\psi \\
				\Rightarrow\ & \lambda\sigma\otimes\rt{\rho}{\free{\psi}} + (1-\lambda)\sigma\otimes\rt{\rho^\prime}{\free{\psi}} \models\psi \\
				\Rightarrow\ & \sigma\otimes\rt{\left(\lambda\rho + (1-\lambda)\rho^\prime\right)}{\free{\psi}} \models\psi
				\end{align*}
				which implies $\lambda\rho + (1-\lambda)\rho^\prime \models\mu\sepimp\psi$.
				\item $\phi\in{\rm SP}$. If $\sem{\phi} = \emptyset$, then trivially $\phi\in{\rm CM}$. Otherwise, suppose $\sigma$ is the least element of $\sem{\phi}$, and $\rho,\rho^\prime\in\cD$ with same domain and $\rho\models \phi$ and $\rho^\prime\models \phi$, we must have: for any $\lambda\in[0,1]$,
				\begin{align*}
				\rt{\rho}{\free{\phi}} = \rt{\rho^\prime}{\free{\phi}} = \sigma\quad\Rightarrow\quad \rt{\left(\lambda\rho + (1-\lambda)\rho^\prime\right)}{\free{\phi}} = \sigma
				\end{align*}
				and so $\lambda\rho + (1-\lambda)\rho^\prime\models\phi$.
				\item $\mu_1\ast\phi$. Suppose $\sigma$ is the least element of $\sem{\phi}$, $\rho,\rho^\prime\in\cD$ with same domain and $\rho\models \mu_1\ast\phi$ and $\rho^\prime\models \mu_1\ast\phi$, then by induction hypothesis and \ref{prop alt form} (2), $\free{\mu_1}\cap\free{\psi} = \emptyset$, for any $\lambda\in[0,1]$, 
				\begin{align*}
				&\rt{\rho}{\free{\mu_1\ast\phi}} = \sigma \otimes \rt{\rho}{\free{\phi}},\quad \rt{\rho^\prime}{\free{\mu_1\ast\phi}} = \sigma \otimes \rt{\rho^\prime}{\free{\phi}},\quad \rt{\rho}{\free{\phi}},\rt{\rho^\prime}{\free{\phi}}\models\phi \\
				\Rightarrow\ &\rt{\left(\lambda\rho + (1-\lambda)\rho^\prime\right)}{\free{\mu_1\ast\phi}} = \sigma \otimes \left(\lambda\rt{\rho}{\free{\phi}} + (1-\lambda)\rt{\rho^\prime}{\free{\phi}} \right),\quad \lambda\rt{\rho}{\free{\phi}} + (1-\lambda)\rt{\rho^\prime}{\free{\phi}} \models\phi
				\end{align*}
				and thus, $\lambda\rho + (1-\lambda)\rho^\prime\models\mu_1\ast\phi$.
			\end{enumerate}
		\end{proof}
		
		\vspace{0.5cm}
		
		\noindent\textbf{Proof of Proposition \ref{prop SP 2BID}}
		
		\begin{proposition}
			The formulas generated by following grammar are ${\rm SP}$: 
			$$
			\phi,\psi ::= \unia[S]\ |\ p\in\cP {\rm\ of\ rank\ 1}\ |\ \top\ |\ \bot\ |\ \phi\ast\psi\ |\ \mu_1 \sepimp \phi\ |\ \mu_1 \sdimp \phi
			$$
			where $\cP \text{of rank 1}$ consists all rank 1 projections, and $\mu_1$ is formula with non-empty interpretation.
		\end{proposition}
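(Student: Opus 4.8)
The plan is to induct on the structure of $\phi$, reusing the base cases from Proposition~\ref{prop SP} verbatim and supplying only the two new inductive cases for the magic wands. The atomic cases ($p\in\cU$, rank-$1$ projections), $\top$, $\bot$, and $\phi\ast\psi$ are verified exactly as in the proof of Proposition~\ref{prop SP}: the witnesses for the least element are, respectively, the completely mixed state on the domain, the projection $P$ itself (as a pure state), the scalar $1$, and (for $\ast$) the tensor product $\sigma_\phi\otimes\sigma_\psi$ of the least elements of the components. So the real work lies in the cases $\mu_1\sepimp\phi$ and $\mu_1\sdimp\phi$, where $\sem{\mu_1}\neq\emptyset$ and, by the induction hypothesis, $\phi\in{\rm SP}$.

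I would first record the reformulation of ${\rm SP}$ that the argument actually uses: since every $2$-BID formula is restrictive (Proposition~\ref{prop BI mon and res}), a formula $\psi$ is supported iff there is at most one state $\rho$ with $\dom{\rho}=\free{\psi}$ satisfying $\psi$ --- restriction then forces every satisfying state to have that $\rho$ as its restriction onto $\free{\psi}$, hence to dominate it, so $\rho$ is the least element of $\sem{\psi}$ whenever the latter is nonempty. Now put $\psi\equiv\mu_1\sepimp\phi$ and $D\triangleq\free{\psi}=\free{\phi}\backslash\free{\mu_1}$. Suppose $\rho_0$ has domain $D$ and $\rho_0\models\psi$. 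Since $\sem{\mu_1}\neq\emptyset$, restriction yields some $\rho_1\in\sem{\mu_1}$ with $\dom{\rho_1}=\free{\mu_1}$; by the semantics of $\sepimp$ (Proposition~\ref{prop alt form}(4)) we obtain $\rho_0\otimes\rho_1\models\phi$. This state has domain $\free{\phi}\cup\free{\mu_1}\supseteq\free{\phi}$, so $\rt{(\rho_0\otimes\rho_1)}{\free{\phi}}$ satisfies $\phi$ and has domain exactly $\free{\phi}$; as $\phi\in{\rm SP}$ it must equal the least element $\sigma_\phi$ of $\sem{\phi}$. Using $D\subseteq\free{\phi}$, transitivity of partial trace (Proposition~\ref{pro-rt}) and $D\cap\dom{\rho_1}=\emptyset$, we get $\rho_0=\rt{(\rho_0\otimes\rho_1)}{D}=\rt{\sigma_\phi}{D}$, so $\rho_0$ is uniquely determined and $\psi\in{\rm SP}$, with least element $\rt{\sigma_\phi}{D}$ when $\sem{\psi}\neq\emptyset$. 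The case $\psi\equiv\mu_1\sdimp\phi$ runs identically once we instantiate the $\bullet$-quantifier of its semantics (Proposition~\ref{prop alt form}(3)) at the \emph{product} coupling $\rho_0\otimes\rho_1\in\rho_0\bullet\rho_1$; alternatively, invoke $\models(\mu_1\sdimp\phi)\rightarrow(\mu_1\sepimp\phi)$ to reduce to the previous case.

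What is routine here is the domain bookkeeping ($D\subseteq\free{\phi}$, $D\cap\free{\mu_1}=\emptyset$, and that $\rho_0\otimes\rho_1$ restricts back to $\rho_0$ on $D$) and the appeals to monotonicity and restriction. The only point that deserves care --- and the place where the hypothesis ``$\mu_1$ has non-empty interpretation'' is genuinely used --- is the degenerate subcase $\sem{\phi}=\emptyset$: there any candidate $\rho_0$ would force $\rho_0\otimes\rho_1\in\sem{\phi}$, so $\sem{\mu_1\sepimp\phi}$ and $\sem{\mu_1\sdimp\phi}$ are empty and the claim holds vacuously. Beyond that I expect no serious obstacle, only the need to keep the witness $\rho_1$ on domain exactly $\free{\mu_1}$ so the tensor product behaves as stated.
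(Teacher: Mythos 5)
Your proof is correct and follows essentially the same route as the paper's: structural induction with the same least-element witnesses for the base cases and $\ast$, and for the wand cases you identify $\rt{\sigma_\phi}{\free{\phi}\backslash\free{\mu_1}}$ as the least element of $\sem{\mu_1\sepimp\phi}$ (and analogously for $\sdimp$ via the product coupling), which is exactly the witness the paper asserts, with you supplying the uniqueness/restriction details the paper leaves as ``not difficult to realize.'' The only quibble is the closing remark: the hypothesis $\sem{\mu_1}\neq\emptyset$ is also used in your main argument to produce the witness $\rho_1$, not only in the degenerate subcase $\sem{\phi}=\emptyset$.
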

		
		\begin{proof}
			\begin{enumerate}
				\item $\unia[S]$. Trivially, $\frac{I_S}{\dim(S)}$ is the least element of $\sem{\unia[S]}$.
				\item $P\in\cP \text{of rank 1}$. Trivially, $P$ itself (interpreted as a pure quantum state) is the least element of $\sem{P}$.
				\item $\top$. Scalar number $1$ is the least element of $\sem{\top}$.
				\item $\bot$. Trivial.
				\item $\phi\ast\psi$. Suppose $\sigma_\phi$ and $\sigma_\psi$ are the least elements of  $\sem{\phi}$ and  $\sem{\psi}$ respectively, then it is straightforward to show $\sigma_\phi\otimes \sigma_\psi$ is the least element of $\phi\ast\psi$.
				\item $\mu_1 \sepimp \phi$. If $\sem{\mu_1 \sepimp \phi}$ is nonempty, and since $\sem{\mu_1}$ is also nonempty, $\sem{\phi}$ must be nonempty, and suppose $\sigma$ is the least element of $\sem{\phi}$, then it is not difficult to realize that $\rt{\sigma}{\free{\phi}\backslash\free{\mu_1}}$ is the least element of $\sem{\mu_1 \sepimp \phi}$.
				\item $\mu_1 \sdimp \phi$. Similar to (6).
			\end{enumerate}
		\end{proof}
		
		\vspace{0.5cm}
		
		\noindent\textbf{Proof of Proposition \ref{prop modification qo 2BID}}
		
		\begin{proposition}
			\label{prop modification qo 2BID}
			\begin{enumerate}
				\item If $\phi[\cE[\qbar]]\Mexist$, $\free{\phi[\cE[\qbar]]} = \free {\phi}$;
				\item 
				If $\phi[\cE[\qbar]]\Mexist$, then for any state $\rho\in\cD(\free{\phi}\supseteq\qbar)$, $\cE(\rho)\models\phi$ if and only if $\rho\models \phi[\cE[\qbar]]$.
			\end{enumerate}
		\end{proposition}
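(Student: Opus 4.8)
The statement is the "$\cE$-modification" analogue of Proposition~\ref{pro modification 2BID}, now for an arbitrary quantum operation $\cE$ acting on register $\qbar$ rather than a unitary or initialisation, and for the $2$-BID fragment closed under $\{\wedge,\vee,\rightarrow,\sd\}$ (but not $\ast$, $\sepimp$, $\sdimp$, which is consistent with Definition~\ref{def qo modification 2BID}). I will prove both clauses by simultaneous induction on the structure of $\phi$, using the same bookkeeping (domains, monotonicity, restriction) as in the proof of Proposition~\ref{pro modification qo} in the main text and its $2$-BID counterpart Proposition~\ref{pro modification 2BID}. Clause~(1), the domain equality $\free{\phi[\cE[\qbar]]} = \free{\phi}$, is immediate by induction: for atomic $P\in\cP$ it is part of the definition (the operator $\big(\cE^\ast_{\qbar}\otimes\cI_{\free{P}\backslash\qbar}\big)(P^\bot)$ lives on $\free{P}$, hence so does its ortho-complement), for $\top,\bot$ it is trivial, and for the binary connectives $\triangle\in\{\wedge,\vee,\rightarrow,\sd\}$ one has $\free{\phi_1\,\triangle\,\phi_2} = \free{\phi_1}\cup\free{\phi_2}$ on both sides, closing under the induction hypothesis.

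For clause~(2) the base case $\phi\equiv P\in\cP$ is the heart of the argument. When $\qbar\cap\free{P}=\emptyset$, the operation $\cE$ does not touch $\free{P}$, so $\rt{\rho}{\free{P}} = \rt{\cE(\rho)}{\free{P}}$ (this is the analogue of Lemma~\ref{lem sound proof 1}(1) for a general $\cE$), and since $P[\cE[\qbar]]=P$ the equivalence $\rho\models P[\cE[\qbar]]\iff\cE(\rho)\models P$ follows directly from the definition of $\sem{P}$. When $\qbar\subseteq\free{P}$, I would reproduce the calculation already displayed in the proof of Proposition~\ref{prop modification qo 2BID} in the appendix: $\rho\models P[\cE[\qbar]]$ iff $\rho\models\big(\big(\cE^\ast_{\qbar}\otimes\cI_{\free{P}\backslash\qbar}\big)(P^\bot)\big)^\bot$ iff $\tr\big(\rho\cdot(\cE^\ast_{\qbar}\otimes\cI)(P^\bot)\big)=0$ iff (by the defining adjoint relation $\tr(A\,\cE^\ast(B)) = \tr(\cE(A)\,B)$) $\tr\big((\cE_{\qbar}\otimes\cI)(\rho)\,P^\bot\big)=0$ iff $\tr(\cE(\rho)\,P)=\tr(\cE(\rho))$, which with the support characterisation is exactly $\cE(\rho)\models P$. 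The one point needing a word of care is that $\cE$ need only be trace-non-increasing, so "$\tr(\cE(\rho)P)=1$" should be replaced by "$\supp(\cE(\rho))\subseteq P$", i.e. $\<\psi|\cE(\rho)|\psi\>=0$ for all $|\psi\>\in P^\bot$; the chain of equalities above still goes through since it only uses the vanishing of a trace against $P^\bot$.

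The inductive steps for $\wedge,\vee,\top,\bot$ are routine: $\cE(\rho)\models\phi_1\triangle\phi_2$ unfolds to a boolean combination of $\cE(\rho)\models\phi_i$, each equivalent by induction to $\rho\models\phi_i[\cE[\qbar]]$, which recombines to $\rho\models(\phi_1\triangle\phi_2)[\cE[\qbar]]$. For $\sd$ I would use the equivalent form Proposition~\ref{prop alt form}(1.c): $\rho\models\phi_1\sd\phi_2$ iff $\free{\phi_1}\cap\free{\phi_2}=\emptyset$, $\rho\models\phi_1$ and $\rho\models\phi_2$; since the domains are preserved by clause~(1) and $\cE$ acts uniformly, this reduces to the two smaller equivalences. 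For $\rightarrow$ I would invoke Proposition~\ref{prop alt form}(5) (so that $\rightarrow$ is just a pointwise implication between satisfaction relations on states of the right domain) together with the monotonicity/restriction package of Proposition~\ref{prop BI mon and res}, so that quantifying over extensions is harmless; the equivalence then again reduces to the two inductive hypotheses on $\phi_1$ and $\phi_2$. I expect the only genuine obstacle to be the general-$\cE$ version of Lemma~\ref{lem sound proof 1} — i.e. checking that restriction commutes with $\cE$ when $\qbar$ is disjoint from, or contained in, the register in question, and handling the trace-non-increasing case cleanly in the projection base case; everything else is a transcription of the arguments already carried out for unitaries and initialisation.
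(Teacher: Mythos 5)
Your overall plan is correct and, for the inductive cases, coincides with the paper's proof: the paper also argues by structural induction, handling $\wedge,\vee$ directly, $\rightarrow$ via Proposition~\ref{prop alt form}(5), and $\sd$ via Proposition~\ref{prop alt form}(1) together with clause~(1). The divergence is in the atomic base case, where you work much harder than the paper needs to. The statement you are proving is the 2-BID version, and in Definition~\ref{def qo modification 2BID} the modification of an \emph{atomic} proposition is defined semantically: $p[\cE[\qbar]]$ is, by definition, any 2-BID formula $\psi$ with $\free{\psi}=\free{p}$ such that $\rho\models\psi$ iff $\cE(\rho)\models p$ for all $\rho\in\cD(\free{p}\cup\qbar)$. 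Hence the base case of clause~(2) is immediate from the definition (this is exactly what the paper says), and it automatically covers all atomic propositions, including $\bD[S]$ and $\unia[S]$, which your argument does not treat. What you actually prove in your base case is the correctness of the \emph{explicit} projection formula $\big(\big(\cE^\ast_{\qbar}\otimes\cI_{\free{P}\backslash\qbar}\big)(P^\bot)\big)^\bot$ from the BI-version Definition~\ref{def qo modification}, i.e.\ essentially the base case of Proposition~\ref{prop modification qo}; that is not needed here, though it does have the independent value of exhibiting a concrete witness showing the semantic definition is nonvacuous for projections, and your caveat about $\cE$ being only trace-non-increasing (replacing ``$\tr(\cE(\rho)P)=1$'' by the support condition $\tr(\cE(\rho)P^\bot)=0$) is a fair tightening of the paper's own chain of equivalences in that other proof. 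Similarly, your ``general-$\cE$ version of Lemma~\ref{lem sound proof 1}'' for the disjoint case is only needed if one insists on the explicit formula; under the paper's semantic definition no such lemma is required for this proposition.
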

		\begin{proof}
			\noindent	(1). Induction on the structure of $\phi$.
			
			\noindent	(2). We prove it by induction on the structure of $\phi$.
			\begin{enumerate}
				\item[(a)] $\phi\equiv\top$ or $\bot$. Trivial.
				\item[(b)] $\phi\equiv p\in\AP$, trivial by the Definition \ref{def qo modification 2BID} Clause 1.
				\item[(c)] $\phi\wedge\psi$. By induction hypothesis, for any state $\rho\in\cD(\free{\phi\wedge\psi}\cup\qbar)$, $\cE(\rho)\models\phi\wedge\psi$ iff $\cE(\rho)\models\phi$ and $\cE(\rho)\models\psi$ iff $\rho\models \phi[\cE[\qbar]]$ and $\rho\models \psi[\cE[\qbar]]$ iff $\rho\models \phi[\cE[\qbar]]\wedge\psi[\cE[\qbar]]$ iff $\rho\models (\phi\wedge\psi)[\cE[\qbar]]$.
				\item[(d)] $\phi\vee\psi$. Similar to (c).
				\item[(e)] $\phi\rightarrow\psi$. By Proposition \ref{prop alt form} (5) and induction hypothesis, for any state $\rho\in\cD(\free{\phi\wedge\psi}\cup\qbar)$, $\cE(\rho)\models\phi\rightarrow\psi$ iff $\cE(\rho)\models\phi$ implies $\cE(\rho)\models\psi$ iff $\rho\models \phi[\cE[\qbar]]$ implies $\rho\models \psi[\cE[\qbar]]$ iff $\rho\models \phi[\cE[\qbar]]\rightarrow\psi[\cE[\qbar]]$ iff $\rho\models (\phi\rightarrow\psi)[\cE[\qbar]]$.
				\item[(f)] $\phi\sd\psi$. Similar to (c) by using Proposition \ref{prop alt form} (1) and statement (1).
			\end{enumerate}
		\end{proof}

		\vspace{0.5cm}
		
		\noindent\textbf{Proof of Theorem \ref{thm sound QSL 2BID}}

		The global variable set is denoted by $\vars$, which contains all variables of programs and formulas. 
		
		It is sufficient to show that each of the rules shown in Figure \ref{fig proof system 4} is sound, the proof of other rules are the same as in Proof of Theorem \ref{thm sound QSL}.
		
		\noindent -- \textsc{Perm}. Also proved in Proof of Theorem \ref{thm sound QSL}.
		
		\vspace{0.4cm}
		
		\noindent -- \textsc{RLoop$^\prime$}. We here use the notations similar to \cite{Ying16}, Section 3.3. Set quantum operation (and its cylinder extension) $\cE_i(\cdot) = M_i(\cdot) M_i^\dag$ for $i = 0,1$. We first claim:
		$$\textbf{Statement:} \rho\models\phi\ast\id_{\qbar} \text{\ implies\ } \cE_0(\rho)\models\phi\ast M_0,\ \cE_1(\rho)\models\phi\ast M_1, \ \sem{\prog}\circ_{c}\cE_1(\rho)\models\phi\ast I_{\qbar}$$
		by the premises and $\circ_{c}$ denote the composition of quantum operations, i.e., $(\cE\circ_{c}\cF)(\rho) = \cE(\cF(\rho))$.
		Next, by induction and the statement, we have: for all $i\ge 0$:
		$$ \rho\models\phi\ast\id_{\qbar} \text{\ implies\ } \cE_0\circ_{c}(\sem{\prog}\circ_{c}\cE_1)^i(\rho)\models\phi\ast M_0.$$
		Finally, it has been proved that (see \cite{Ying11})
		$$\sem{{\bf while}}(\rho) = \sum_{i=0}^\infty \cE_0\circ_{c}(\sem{\prog}\circ_{c}\cE_1)^i(\rho) $$
		and thus if $\rho\models\phi\ast\id_{\qbar}$, then $\sem{{\bf while}}(\rho)\models\phi$ and $\sem{{\bf while}}(\rho)\models M_0$ since $\phi,M_0\in{\rm CM}$. And note that $\free{\phi}\cap\free{M_0} = \emptyset$, so  $\sem{{\bf while}}(\rho)\models \phi\sd M_0$.	
		
		\vspace{0.4cm}
		
		\noindent -- \textsc{Weak}. By premise $\models (\phi\rightarrow\phi^\prime)\wedge(\psi^{\prime}\rightarrow\psi)$, we know that for any input $\rho\in\cD(\vars)$ that satisfies $\phi$, it must also satisfy $\phi^\prime$. By another premise $\{\phi^{\prime}\}\prog\{\psi^{\prime}\}$, then $\sem{\prog}(\rho)\models\psi^{\prime}$, and thus $\sem{\prog}(\rho)\models\psi$.
		
		\vspace{0.4cm}
		
		\noindent -- \textsc{FrameE}. For any input $\rho\in\cD(\vars)$ such that $\rho\models\phi\sd\mu$, we must have $\rho\models\phi\wedge\mu$ and $\free{\phi}\cap\free{\mu} = \emptyset$. Similar to \textsc{Const}, we have $\sem{\prog}(\rho)\models\psi\wedge\mu$ by first two premises. Moreover, notice that $\free{\psi}\cap\free{\mu} \subseteq (\free{\phi}\cup\var(\prog))\cap\free{\mu} = \emptyset$, thus by Proposition \ref{prop alt form}, $\sem{\prog}(\rho)\models\psi\sd\mu$.

	\end{appendices}

\end{document}